\newtheorem{lem}{Lemma}
\newtheorem{assumption}{Assumption}
\newtheorem{prop}{Proposition}
\newtheorem{thm}{Theorem}
\newtheoremstyle{remark2}{1ex}{1ex}%
      {}
      {}
      {\bf}
      {.}
      {5pt}
      {\thmname{#1}\thmnumber{ #2}\thmnote{ \slshape{(#3)}}} 
\theoremstyle{remark2}
\newtheorem{rem}{Remark}
\newtheorem{example}{Example}
\newtheoremstyle{remark3}{1ex}{1ex}%
      {}
      {}
      {\bf}
      {.}
      {5pt}
      {\thmname{#1}\thmnumber{.#2}\thmnote{ \slshape{(#3)}}} 
\theoremstyle{remark3}
\newtheorem{lemV}{Lemma V}
\newtheorem{lemM}{Lemma M}
\newtheorem{assumptionM}{Assumption LM}
\newcommand{\rank}{\operatorname{rank}}
\newcommand{\1}{\mathds{1}}
\renewenvironment{proof}[1][\bfseries\proofname]{\par
   \pushQED{\qed}%
   \normalfont \topsep6\p@\@plus6\p@\relax
   \trivlist
   \item[\hskip\labelsep
     #1\@addpunct{:}]\ignorespaces
}{%
   \popQED\endtrivlist\@endpefalse
}
\newcommand{\Comments}{1}
\newcommand{\mynote}[2]{\ifnum\Comments=1\textcolor{#1}{#2}\fi}
\newcommand{\mytodo}[2]{\ifnum\Comments=1%
  \todo[linecolor=#1!80!black,backgroundcolor=#1,bordercolor=#1!80!black]{#2}\fi}
\renewcommand\appendixpagename{Appendix}
\newcommand{\RORAC}{\operatorname{RORAC}}
\newcommand{\RC}{\operatorname{RC}}
\newcommand{\MES}{\operatorname{MES}}
\newcommand{\VaR}{\operatorname{VaR}}
\newcommand{\FCI}{\operatorname{FCI}}
\newcommand{\VIX}{\operatorname{VIX}}
\newcommand{\ES}{\operatorname{ES}}
\newcommand{\inter}{\operatorname{int}}
\newcommand{\D}{\,\mathrm{d}}
\newcommand{\lex}{\operatorname{lex}}
\renewcommand{\E}{\mathbb{E}}
\renewcommand{\P}{\mathbb{P}}
\renewcommand{\b}{\beta}
\newcommand{\Vv}{\mathsf{\mathbf{v}}}
\newcommand{\Mm}{\mathsf{\mathbf{m}}}
\begin{document}

\baselineskip18pt
\renewcommand\floatpagefraction{.9}
\renewcommand\topfraction{.9}
\renewcommand\bottomfraction{.9}
\renewcommand\textfraction{.1}
\setcounter{totalnumber}{50}
\setcounter{topnumber}{50}
\setcounter{bottomnumber}{50}
\abovedisplayskip1.5ex plus1ex minus1ex
\belowdisplayskip1.5ex plus1ex minus1ex
\abovedisplayshortskip1.5ex plus1ex minus1ex
\belowdisplayshortskip1.5ex plus1ex minus1ex


\title{Regressions under Adverse Conditions
}

\author{
	Timo Dimitriadis\thanks{Faculty of Economics and Business, Goethe University Frankfurt, 60629 Frankfurt am Main, Germany, and Heidelberg Institute for Theoretical Studies (HITS), \href{mailto:dimitriadis@econ.uni-frankfurt.de}{dimitriadis@econ.uni-frankfurt.de}.}
\and 
	Yannick Hoga\thanks{Faculty of Economics and Business Administration, University of Duisburg-Essen, Universit\"atsstra\ss e 12, D--45117 Essen, Germany, \href{mailto:yannick.hoga@vwl.uni-due.de}{yannick.hoga@vwl.uni-due.de}.}
}

 \date{\today}
\maketitle

\begin{abstract}
	\noindent 
	\singlespacing
	We introduce a new regression method that relates the mean of an outcome variable to covariates, under the ``adverse condition'' that a distress variable falls in its tail.
	This allows to tailor classical mean regressions to adverse scenarios, which receive increasing interest in economics and finance, among many others.
	In the terminology of the systemic risk literature, our method can be interpreted as a regression for the Marginal Expected Shortfall.
	We propose a two-step procedure to estimate the new models, show consistency and asymptotic normality of the estimator, and propose feasible inference under weak conditions that allow for cross-sectional and time series applications.
	Simulations verify the accuracy of the asymptotic approximations of the two-step estimator.
	Two empirical applications show that our regressions under adverse conditions are a valuable tool in such diverse fields as the study of the relation between systemic risk and asset price bubbles, and dissecting macroeconomic growth vulnerabilities into individual components.\\

	\noindent \textbf{Keywords:} Estimation, Growth-at-Risk, Marginal Expected Shortfall, Regression Modeling, Systemic Risk \\	
	\noindent \textbf{JEL classification:} C18, C51, C58, E44
\end{abstract}


\pagebreak 
\section{Introduction}
\label{Motivation}

\onehalfspacing

In this paper, we add to the econometric toolbox by introducing \textit{regressions under adverse conditions}.
While the classical ordinary least squares (OLS) regression relates the mean of an outcome variable to covariates, our new regression relates the mean of an outcome variable to covariates only when some other (distress) variable falls in its tail.
Here, a variable falling in its tail means an exceedance of a large conditional quantile, which is a natural statistical interpretation of an adverse condition going back to \citet{KB78}.
Technically, our methodology can be interpreted as a regression for \citeauthor{Aea17}'s \citeyearpar{Aea17} systemic risk measure Marginal Expected Shortfall (MES), which is precisely defined as the conditional mean of some outcome variable given that a distress variable exceeds a large quantile.
Therefore, we use the terms \textit{regression under adverse conditions} and \textit{MES regression} interchangeably. 

There are various applications where the relation between an outcome variable and explanatory variables is of particular interest if some distress variable experiences an extreme outcome. 
First, in the systemic risk literature, the relationship between a bank's stock performance and bank characteristics is mainly of interest if the financial system is in distress \citep{AB16, Aea17, BRS20, BDP20}.
In this case, the bank's sensitivity to financial turmoil  (i.e., its systemic riskiness) is related to bank-level data, yielding insights into the determinants of systemic risk. 
Our regressions under adverse conditions can deliver such insights.
Second, in the recent ``Growth-at-Risk'' literature \citep{ABG19,BS21,Aea22}, interest focuses on the risk to a country's gross domestic product (GDP) growth.
Our MES regressions allow growth risk---measured with the Expected Shortfall (ES)---to be dissected into different parts that can be attributed to specific sub-components (such as sub-regions), therefore contributing to the study of the interconnections of macroeconomic risks.
While we apply our regressions under adverse conditions to real data in these two fields in Section~\ref{sec:EmpiricalApplication}, we also explore some further uses in portfolio optimization in Appendix~\ref{sec:add appl} of the Online Supplement.

The three main contributions of this paper are as follows.
First, we propose a general modeling framework for the---typically dynamic---conditional MES.
Our models are flexible enough to incorporate possible non-linearities in the relationship between the MES and the regressors. 
Our framework also allows for contemporaneous and/or lagged covariates, which may be serially dependent.
Second, we provide feasible inference tools for MES regressions, including parameter and asymptotic variance estimation. 
This allows practitioners to identify the regressors with significant explanatory content. Third, we illustrate the usefulness of our regressions under adverse conditions in the three diverse contexts mentioned above.

From the technical side, standard (one-step) M-estimation of MES regression models is infeasible as there exists no \textit{scalar} consistent loss function that is uniquely minimized in expectation by the MES \citep{FH24, DFZ_CharMest}.
This contrasts with classical mean regressions, which can be estimated based on the squared error loss, as this loss is minimized (in expectation) by the mean.

To make up for this defect of the MES, we draw on \textit{bivariate} ``multi-objective'' consistent loss functions for the MES (together with the quantile) proposed by \citet[Theorem~4.2]{FH24}.
Following tradition in the risk management literature, we denote quantiles as the Value-at-Risk (VaR).
These bivariate loss functions lead to a \textit{two-step} M-estimator:
In the first step, we estimate the parameters of the VaR model for the distress variable by minimizing the quantile loss, exactly as in quantile regressions.
In the second step, we obtain parameter estimates of the MES model by minimizing the squared error loss only for those observations with a distress variable larger than the (estimated) VaR.
This event is typically called a ``VaR exceedance''.

In developing asymptotic theory for our two-step estimator, we draw on classic results of \citet{NeweyMcFadden1994} for consistency and on invariance principles of \citet{DMR95} for asymptotic normality. 
Note that for asymptotic normality we cannot rely on standard results for two-step M-estimation \citep[as in][Sec.~6]{NeweyMcFadden1994} because our second-step objective function is discontinuous due to the truncation at the VaR.
As expected for two-step estimators, the estimation precision in the second step is influenced by the first-step estimator.
We further propose methods for feasible inference based on consistent estimation of the asymptotic variance-covariance matrix.

We illustrate the good finite-sample performance the two-step M-estimator and the associated inference methods in simulations, covering typical sample sizes and probability levels (for the quantile model) that we employ in our empirical applications.
We further compare our regression method with an \textit{ad hoc} estimator that is used in \citet{BRS20, BDP20}, \citet{Berger2020} and \citet{Karolyi2023} to estimate MES models and show that this method leads to inconsistent parameter estimates together with unrealistically small standard errors.

In Section~\ref{sec:EmpiricalApplication}, we apply our regressions under adverse conditions in the two above sketched fields of finding covariates associated with systemic risk, and in dissecting an economic region's GDP growth vulnerability into contributions of individual countries.
First, we reconsider an analysis similar to \citet[Table~7]{BRS20} to analyze the contemporaneous relation of asset price bubbles and the systemic riskiness (as measured by their MES) of the three systemically most relevant US banks.
In doing so, we compare our MES regression estimator with the ad hoc estimation method used in \citet{BRS20}.
Overall, we find that the MES is negatively affected by bust periods, but the effect of boom periods is not statistically significant, yielding---as in our simulations---different results compared to \citet{BRS20}. 

Second, the recent ``Growth-at-Risk'' (GaR) literature, started by \citet{ABG19}, analyzes downside risks to future GDP growth as a function of current economic and financial conditions.
We apply the GaR methodology to the economic region consisting of Germany, France and the United Kingdom (UK)---the three largest economies in Europe.
Following \citet{ABG19}, we find that financial and economic conditions pose similar risks to future growth of the whole economic region.
Our regressions under adverse conditions now allow us to investigate how these total (economic and financial) risk factors can be attributed to the individual countries.
We obtain the striking finding that---in contrast to France and Germany---the risk to growth in the UK is mainly determined by \textit{financial} conditions, which may be explained by the importance of the financial marketplace London.
Similarly, we find that---in contrast to France and the UK---current \textit{economic} conditions in the joint economic region are the main risk to future German growth, which may be due to Germany's strong export-oriented manufacturing base.

Our regressions under adverse conditions are related to quantile regressions of \citet{KB78} in that they model a tail functional, and as the (preliminary) first part VaR regression is simply a (possibly nonlinear) quantile regression.
In that context we also refer to \citet{EM04} and \citet{CL23} for time series applications of quantile regressions, to \citet{Hog24} for MES forecasting models based on CCC--GARCH-type filters, and to \cite{DH24} for dynamic models for the related systemic risk measure CoVaR.
Our method is more closely related to recently developed regressions for the ES of \citet{DimiBayer2019}, \citet{PZC19}, \citet{GBP21}, \citet{Bar23+} and \citet{FMW23}.
Our and their methods share the property of relying on an ancillary quantile regression. 
However, while ES regressions can be estimated jointly with VaR/quantile regressions (essentially due to the joint elicitability of the pair (VaR, ES) due to \citet{FZ16a}), our MES regressions require a two-step M-estimator (due to (VaR, MES) only being multi-objective elicitable in the sense of \citet{FH24}).
As the MES simplifies to the ES when the distress and outcome variables coincide, our MES regressions may be seen as a multivariate extension of ES regressions, which opens up many new fields of application, as outlined above and as demonstrated in more detail in the empirical applications.

Our MES regressions are further related to threshold models (or also: sample split models), which consider (possibly distinct) regression models for the two subsamples where some threshold variable is above or below a deterministic threshold parameter \citep{Han99,Han00,CH04}.
The main differences between our MES regressions and threshold models are as follows. 
First, in the latter the threshold is a fixed scalar, whereas our models allow the threshold (i.e., the quantile of the distress variable) to be influenced by the user via the quantile level. 
This flexibility is important, e.g., in the portfolio application, where the quantile level (most often 97.5\%) is predetermined by the regulator \citep{BCBS19}. 
Second, while the threshold is deterministic in sample split models, we allow the stochastic threshold (i.e., the VaR) to be modeled dynamically.
Third, the models we consider are more general in that our framework covers non-linear relationships, whereas threshold regressions are mostly confined to linear models.

The remainder of the paper is structured as follows. Section~\ref{Modeling and Estimation} introduces MES regression models and our proposed two-step estimator. Its asymptotic properties and feasible inference methods are presented in Section~\ref{Asymptotic Properties}. The simulations in Section~\ref{sec:Simulations} verify the good finite-sample performance of our estimator.
Section~\ref{sec:EmpiricalApplication} demonstrates the usefulness of MES regressions in two empirical applications and the final Section~\ref{sec:Conclusion} concludes.
The proofs of all technical results are relegated to the Supplementary Material that contains the Appendices~\ref{sec:thm1}--\ref{sec:AddEmpResults}.

\section{Modeling and Estimation}
\label{Modeling and Estimation}

\subsection{The Definition of MES}

Consider some time series $\big\{\mV_t=(\mW_t^\prime, X_t,Y_t)^\prime\big\}_{t\in\mathbb{N}}$, where $\mW_t$ are exogenous variables, $X_t$ denotes the distress or conditioning variable (e.g., a measure of macroeconomic or financial conditions) and $Y_t$ stands for the variable of interest (e.g., economic or financial losses). 
Let $\mathcal{F}_{t}=\sigma(\mW_{t}, \mV_{t-1}, \mV_{t-2}, \ldots)$ be the information set generated by the contemporaneous $\mW_t$ and past $\mV_t$.
The inclusion of $\mW_t$ in $\mathcal{F}_t$ allows us to incorporate contemporaneous variables in MES regressions.
For instance, in studying the relationship between asset price bubbles and systemic risk, \citet{BRS20} use a contemporaneous bubble indicator in an MES-type regression; see also the empirical application in Section~\ref{sec:ApplMESReg}.

For $\b\in(0,1)$ we define the VaR as the generalized inverse $\VaR_{t,\beta}:=\VaR_{\beta}(X_t\mid\mathcal{F}_t):=F_{X_t\mid\mathcal{F}_{t}}^{\leftarrow}(\b)$, where $F_{X_t\mid\mathcal{F}_{t}}$ denotes the conditional cumulative distribution function (c.d.f.) of $X_t$ given $\mathcal{F}_t$.
The distress event that quantifies the ``adverse conditions'' in the definition of the MES is that the stress variable $X_t$ exceeds its VaR, i.e., $\{X_t\geq\VaR_{t,\beta}\}$.
With our orientation of $X_t$ denoting (economic or financial) losses, we commonly consider values for $\b$ close to one.
Left-tail truncations in the sense of  $\{X_t\leq\VaR_{t,\beta}\}$ can be analyzed within our framework by simply considering $-X_t$.
We formally define our regression target under adverse conditions as $\MES_{t,\b}:=\MES_{\b}(Y_t\mid\mathcal{F}_{t}):=\E_t\big[Y_t \mid X_t\geq\VaR_{\beta}(X_t\mid\mathcal{F}_t)\big]$, where $\E_t[\cdot]:=\E[\ \cdot\mid\mathcal{F}_t]$ and we suppress the dependence of the MES on $X_t$ for brevity.

\subsection{Joint Models for VaR and MES}

Let $\mZ_t=(Z_{1,t},\ldots,Z_{k,t})^\prime$ be the observable covariates affecting the VaR and the MES (i.e., $\VaR_{t,\beta}$ and $\MES_{t,\b}$).
We assume that $\mZ_t$ is $\mathcal{F}_{t}$-measurable, such that it can, e.g., be a subset of the contemporaneous $\mW_t$ and past $\mV_t$.
The empirical applications in Section~\ref{sec:EmpiricalApplication} provide some concrete examples for $\mZ_t$.
We introduce the functions $v_t(\vtheta^{v})=v(\mZ_t;\, \vtheta^v)$ and $m_t(\vtheta^{m})=m(\mZ_t;\, \vtheta^m)$, where $\vtheta^v$ ($\vtheta^m$) is a generic vector from some parameter space $\mTheta^v\subset\mathbb{R}^p$ ($\mTheta^m\subset\mathbb{R}^q$), and $v(\,\cdot\,;\vtheta^v)$ and $m(\,\cdot\,;\vtheta^m)$ are measurable functions for all $\vtheta^v\in\mTheta^v$ and $\vtheta^m\in\mTheta^m$, respectively. 
We assume that there exist unique true parameters $\vtheta^v_0 \in \mTheta^{v}$ and $\vtheta^m_0 \in \mTheta^{m}$, such that almost surely (a.s.)
\begin{align}
	\label{eqn:TrueModelParameters}
	\begin{pmatrix}
		\VaR_{\beta}(X_t\mid\mathcal{F}_t)\\
		\MES_{\b}(Y_t\mid\mathcal{F}_{t})
	\end{pmatrix}
	= \begin{pmatrix}
		v_t(\vtheta_{0}^{v})\\
		m_t(\vtheta_{0}^{m})
	\end{pmatrix},\qquad t\in\mathbb{N}.
\end{align}
Thus, $v_t(\vtheta_{0}^{v})$ is simply a (possibly non-linear) quantile regression model that relates the $\beta$-quantile of the distress variable $X_t$ to covariates $\mZ_t$.
This may be seen as an auxiliary regression step that allows to identify the ``adverse condition'' $\{X_t\geq\VaR_{\beta}(X_t\mid\mathcal{F}_t)\}$ in the actual quantity of interest $\MES_{\b}(Y_t\mid\mathcal{F}_{t})=\E_t\big[Y_t \mid X_t\geq\VaR_{t,\beta}\big]$, which is modeled via $m_t(\vtheta_{0}^{m})$ and serves to explain the interrelation between $Y_t$ and $X_t$ (in the form of the MES) with covariates.
Therefore, we call the models $v_t(\vtheta_{0}^{v})$ and $m_t(\vtheta_{0}^{m})$ in \eqref{eqn:TrueModelParameters} a \textit{regression under adverse conditions} or also an \textit{MES regression model}.
The leading special case of \eqref{eqn:TrueModelParameters} is the linear model shown in  Example~\ref{ex:1} below, where $v_t(\vtheta^{v})$ and $m_t(\vtheta^{m})$ are linear in the parameters $\vtheta^v$ and $\vtheta^m$, respectively.

The fact that the conditional-on-$\mathcal{F}_t$ quantities $\VaR_{t,\beta}$ and $\MES_{t,\b}$ only depend on $\mZ_t$ is, of course, a modeling assumption. 
Except for the restriction that $\mZ_t$ be $\mathbb{R}^{k}$-valued, this is quite a flexible assumption. It covers purely predictive regressions (where $\mZ_t$ only contains information available at time $t-h$, say) and also static regressions (where $\mZ_t$ only contains contemporaneous variables) and any combination of the two.

Note that we consider the case of separated parameters in \eqref{eqn:TrueModelParameters}, where the parameters pertain only to the VaR model or the MES model.
This is vital for our two-step M-estimator introduced below.

We also mention that our framework inherently models the interconnectedness between $X_t$ and $Y_t$ through the (parameter-free) definition of the MES.
This contrasts with, e.g., \citet[Section III.B]{AB16}, whose estimation method (for the closely related CoVaR) can only capture linear effects of the VaR of $X_t$ onto $Y_t$.

\begin{example}\label{ex:1}
Here, we provide an example of a model that gives rise to a linear regression under adverse conditions in \eqref{eqn:TrueModelParameters}. Consider the model
\begin{align*}
	X_t &= \mZ_t^{v\prime}\vtheta_0^{v}+\varepsilon_{X,t},\\
	Y_t &= \mZ_t^{m\prime}\vtheta_0^{m}+\varepsilon_{Y,t},
\end{align*}
where $\mZ_t^{v}$ and $\mZ_t^{m}$ are ($p$ and $q$-dimensional) vectors containing some (or all) components of $\mZ_t$,
and we assume that
$\VaR_{\b}(\varepsilon_{X,t}\mid\mathcal{F}_t)=F_{\varepsilon_{X,t}\mid\mathcal{F}_t}^{\leftarrow}(\beta)=0$, and $\MES_{\b}(\varepsilon_{Y,t}  \mid \mathcal{F}_t) = \E_t\big[\varepsilon_{Y,t}\mid \varepsilon_{X,t}\geq \VaR_{\b}(\varepsilon_{X,t}\mid\mathcal{F}_t)\big]=0$. 
Then, exploiting the $\mathcal{F}_t$-measurability of $\mZ_t$, it is easy to check that \eqref{eqn:TrueModelParameters} holds with linear $v_t(\vtheta_{0}^{v})=\mZ_t^{v\prime}\vtheta_0^{v}$ and linear $m_t(\vtheta_{0}^{m})=\mZ_t^{m\prime}\vtheta_0^{m}$. 
Recall that $\VaR_\b(\varepsilon_{X,t}\mid\mathcal{F}_t)=0$ is the standard assumption on the errors in quantile regressions. The requirement that 
$\MES_{\b}(\varepsilon_{Y,t}  \mid \mathcal{F}_t) = 0$ may then be viewed as the analog assumption in our MES regressions.
Figure~\ref{fig:MESillu} graphically illustrates a linear regression under adverse conditions with one covariate.
The top panel shows the quantile regression line $v_t(\vtheta_0^v)$ in blue and the bottom panel the MES regression line $m_t(\vtheta_0^m)$ in red.
The black points in both panels indicate the observations $(X_t,Y_t)^\prime$ with a VaR exceedance, such that $X_t\geq v_t(\vtheta_0^v)$.
It is obvious from this figure that the auxiliary quantile regression in the top plot is key in identifying the relevant points for the MES regression in the bottom plot.
\end{example}

\begin{figure}
	\centering
	\includegraphics[width=\linewidth]{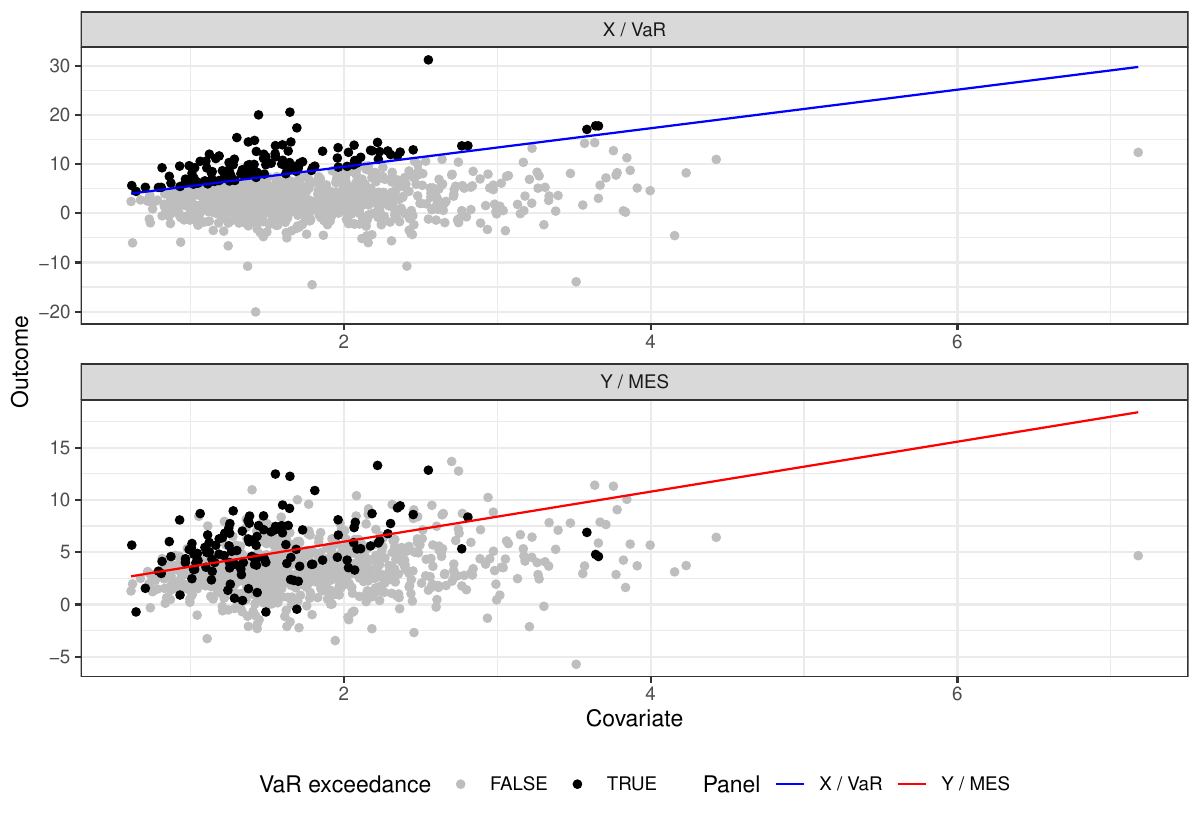}
	\caption{Top panel: Quantile regression line $v_t(\vtheta_0^v)$ in blue. 
	Bottom panel: MES regression line $m_t(\vtheta_0^m)$ in red. 
	Points in black in both panels correspond to observations with a VaR exceedance (i.e., $X_t\geq v_t(\vtheta_0^v)$).}
	\label{fig:MESillu}
\end{figure}

While ``usual'' regressions simply relate some outcome $Y_t$ to covariates, our MES regressions allow us to investigate such a relationship \textit{conditional} on an extreme realization of $X_t$. This is of interest, e.g., when relating a bank's losses $Y_t$ to institution characteristics, where the precise relationship is mainly of interest during times of large system-wide losses $X_t$. We use the model of Example~\ref{ex:1} to investigate this in the empirical application in Section~\ref{sec:ApplMESReg}, but also in the other applications in Section~\ref{sec:GDP_MES} and Appendix~\ref{sec:add appl}.
We discuss the relation of our MES regressions to OLS regressions that use $X_t$ as an explanatory variable in Appendix~\ref{sec:RelationOLSXCovariate}.

\begin{rem}
	\label{rem:MarginalEffect}
	Using the notation $\mZ_t^m = (Z^m_{1,t},\dots, Z^m_{q,t})^\prime$ and $\vtheta^m = (\theta_1^m, \dots, \theta_q^m)^\prime$,
	our models in \eqref{eqn:TrueModelParameters} allow for the definition of a \emph{marginal effect under adverse conditions} with respect to $Z^m_{j,t}$ for any $j \in \{1,\dots,q\}$ via $\frac{\partial}{\partial Z^m_{j,t}} m_t(\vtheta^m)$.
	This captures the impact of a unit change in $Z^m_{j,t}$ on the conditional MES given all other variables remain constant, which is analogous to the classical marginal effect for mean or quantile regressions.
	For the linear models of Example~\ref{ex:1}, the marginal effect under adverse conditions is simply $\theta^m_j$.
\end{rem}

\subsection{Parameter Estimation}
\label{sec:ParameterEstimation}

We now introduce our estimators of the unknown model parameters $\vtheta_{0}^{v}$ and $\vtheta_{0}^{m}$. 
Typically, (one-step) M-estimation requires a to-be-minimized loss function (or also: objective function). Yet, as pointed out in the Introduction, there is no real-valued loss function associated with the pair (VaR, MES). The existence of such a (strictly consistent) loss function is, however, necessary for consistent M-estimation of semiparametric models such as ours \citep[Theorem 1]{DFZ_CharMest}.

To overcome this lack, we draw on \citet{FH24}.
Denoting by $\1_{\{\cdot\}}$ the indicator function, they show that (under some smoothness conditions) the $\mathbb{R}^2$-valued loss function
\begin{equation}\label{eq:loss}
	\mS\Bigg(\begin{pmatrix}v\\ m\end{pmatrix}, \begin{pmatrix}x\\ y\end{pmatrix}\Bigg) = \begin{pmatrix}S^{\VaR}(v,x)\\ S^{\MES}\big((v,m)^\prime, (x,y)^\prime\big) \end{pmatrix}=
	\begin{pmatrix}
		[\1_{\{x\leq v\}}-\b][v-x]\\
		\frac{1}{2}\1_{\{x>v\}}[y-m]^2
	\end{pmatrix}
\end{equation}
is minimized in expectation by the true VaR and MES with respect to the lexicographic order. That is, for all $v,m\in\mathbb{R}$,
\[
	\E\Bigg[\mS\Bigg(\begin{pmatrix}\VaR_\b\\ \MES_\b\end{pmatrix}, \begin{pmatrix}X\\ Y\end{pmatrix}\Bigg)\Bigg]\preceq_{\lex}\E\Bigg[\mS\Bigg(\begin{pmatrix}v\\ m\end{pmatrix}, \begin{pmatrix}X\\ Y\end{pmatrix}\Bigg)\Bigg],
\]
where $(x_1,x_2)^\prime\preceq_{\lex} (y_1,y_2)^\prime$ if $x_1<y_1$ or ($x_1=y_1$ and $x_2\leq y_2$), and $\VaR_\b$ and $\MES_\b$ denote the true VaR and MES of the joint distribution of $(X,Y)^\prime$. Note that $S^{\VaR}(\cdot,\cdot)$ in \eqref{eq:loss} is the standard pinball-loss function known from quantile regression. Clearly, $S^{\MES}(\cdot,\cdot)$ resembles the squared error loss familiar from usual mean regressions, except that the indicator $\1_{\{x>v\}}$ restricts the evaluation to observations with VaR exceedances in the first component. 
The pre-factor $1/2$ is simply a convenient normalization, enabling an easier representation of some subsequent quantities.
In the related literature, loss functions are often also called \textit{scoring functions} \citep{Gne11} and we use these two terms interchangeably.

To introduce our two-step estimator, suppose that we have a sample $\big\{(X_t,Y_t,\mZ_t^\prime)^\prime\big\}_{t=1,\ldots,n}$ of size $n$.
In a first step, the lexicographic order suggests to minimize the expected score of the VaR component. Approximating the expected score by the sample average of the scores, the first-step estimator is
\[
	\widehat{\vtheta}_{n}^{v} = \argmin_{\vtheta^{v}\in\mTheta^{v}}\frac{1}{n}\sum_{t=1}^{n}S^{\VaR}\big(v_t(\vtheta^{v}), X_t\big).
\]
\citet{OH16} study this estimator in their nonlinear quantile regressions for deterministic regressors.

Given $\widehat{\vtheta}_n^v$, the lexicographic order similarly suggests in a second step to minimize the sample average of the MES losses via
\begin{align}
	\label{eq:MES_estimator}
	\widehat{\vtheta}_{n}^{m} = \argmin_{\vtheta^{m}\in\mTheta^{m}}\frac{1}{n}\sum_{t=1}^{n}S^{\MES}\Big(\big(v_t(\widehat{\vtheta}_{n}^{v}), m_t(\vtheta^{m})\big)^\prime, \big(X_t, Y_t\big)^\prime\Big).
\end{align}
For this two-step estimator to be feasible, the requirement that the VaR model does not depend on the MES model is essential. Section~\ref{Asymptotic Properties} shows that the estimation precision of $\widehat{\vtheta}_{n}^{m}$ is affected by relying on $\widehat{\vtheta}_n^v$ in the second step, which is the usual case for two-step estimators \citep[Section 6]{NeweyMcFadden1994}.

\section{Asymptotics for the Two-Step Estimators}\label{Asymptotic Properties}


We now show the joint asymptotic normality of our two-step M-estimators $\widehat{\vtheta}_{n}^{v}$ and $\widehat{\vtheta}_{n}^{m}$ under the regularity conditions in Assumptions~\ref{ass:1}--\ref{ass:7} given below. Before presenting Assumptions~\ref{ass:1}--\ref{ass:7}, we have to introduce some notation. The joint c.d.f.~of $(X_{t}, Y_t)^\prime\mid\mathcal{F}_t$ is denoted by $F_t(\cdot,\cdot)$, and its Lebesgue density (which we assume exists) by $f_t(\cdot,\cdot)$. Similarly, $F_t^{W}(\cdot)$ ($f_t^{W}(\cdot)$) denotes the distribution (density) function of $W_t\mid\mathcal{F}_t$ for $W\in\{X,Y\}$. For sufficiently smooth functions $\mathbb{R}^{p}\ni\vtheta\mapsto f(\vtheta) \in \mathbb{R}$, we denote the $(p\times1)$-gradient by $\nabla f(\vtheta)$, its transpose by $\nabla' f(\vtheta)$ and the $(p\times p)$-Hessian by $\nabla^2 f(\vtheta)$. 
All (in-)equalities involving random quantities are meant to hold almost surely.

The asymptotic variance-covariance matrices of our estimators depend on the following matrices:
\begin{align*}
	\mV&=\b(1-\b)\E\big[\nabla v_t(\vtheta_0^v)\nabla^\prime v_t(\vtheta_0^v)\big]\in\mathbb{R}^{p\times p},\\
	\mLambda & =\E\Big[f_t^{X}\big(v_t(\vtheta_0^v)\big)\nabla v_t(\vtheta_0^v)\nabla^\prime v_t(\vtheta_0^v)\Big]\in\mathbb{R}^{p\times p},\\
	\mM^{\ast}&= \E\Big[\big\{Y_t -m_t(\vtheta_0^m)\big\}^2 \1_{\{X_t>v_t(\vtheta_0^v)\}} \nabla m_t(\vtheta_0^m) \nabla^\prime m_t(\vtheta_0^m)\Big]\in\mathbb{R}^{q\times q},\\
	\mLambda_{(1)} &= (1-\b)\E\big[\nabla m_t(\vtheta_0^m)\nabla^\prime m_t(\vtheta_0^m)\big]\in\mathbb{R}^{q\times q},\\
	\mLambda_{(2)} &=\E\bigg[\Big\{\int_{-\infty}^{\infty} y f_t\big(v_t(\vtheta_0^v),y\big)\D y - m_t(\vtheta_0^m)f_{t}^{X}\big(v_t(\vtheta_0^v)\big)\Big\}\nabla m_t(\vtheta_0^m)\nabla^\prime v_t(\vtheta_0^v)\bigg]\in\mathbb{R}^{q\times p}.
\end{align*}
All these matrices exist by virtue of Assumptions~\ref{ass:1}--\ref{ass:7}, which we state now. For this, let $K<\infty$ be some large universal positive constant, and denote by $\norm{\vx}$ the Euclidean norm when $\vx$ is a vector, and the Frobenius norm when $\vx$ is matrix-valued.

\begin{assumption}\label{ass:1}
\renewcommand{\theenumi}{(\roman{enumi})}
\begin{enumerate}
\item\label{it:1i} There exist unique parameters $\vtheta_0^v$ and $\vtheta_0^m$, such that \eqref{eqn:TrueModelParameters} holds a.s.~for all $t\in\mathbb{N}$.

\item\label{it:1ii}
The parameter space $\mTheta=\mTheta^{v}\times\mTheta^{m}$ is compact, where $\mTheta^{v}\subset\mathbb{R}^p$ and $\mTheta^{m}\subset\mathbb{R}^q$.

\item\label{it:1iii} 
$\vtheta_0^v\in\inter(\mTheta^v)$ and $\vtheta_{0}^{m}\in\inter(\mTheta^m)$, where $\inter(\cdot)$ denotes the interior of a set.

\end{enumerate}
\end{assumption}

\begin{assumption}\label{ass:2}
$\big\{(X_t, Y_t, \mZ_t^\prime)^\prime\big\}_{t\in\mathbb{N}}$ is strictly stationary and $\beta$-mixing with coefficients $\beta(\cdot)$ of size $-r/(r-1)$ for some $r>1$.
\end{assumption}

\begin{assumption}\label{ass:3}
\renewcommand{\theenumi}{(\roman{enumi})}
\begin{enumerate}
	\item\label{it:3i} For all $t\in\mathbb{N}$, $F_t(\cdot,\cdot)$ belongs to a class of distributions on $\mathbb{R}^2$ that possess a positive Lebesgue density $f_t(x,y)$ for all $(x,y)^\prime\in\mathbb{R}^2$ such that $F_t(x,y)\in(0,1)$.

	\item\label{it:3ii} $\big|f_t^{X}(x)-f_t^{X}(x^\prime)\big|\leq K|x-x^\prime|$ and $\sup_{x\in\mathbb{R}}f_t^{X}(x)\leq K$.
	
	\item\label{it:3iii} The conditional density $f_t(x,y)$ is differentiable in its first argument, with partial derivative denoted by $\partial_1 f_t(x,y)$. 
	
	\item\label{it:3iv} $\sup_{x\in\mathbb{R}}\big|\int_{-\infty}^{\infty}y\partial_1 f_t(x,y)\D y\big|\leq F(\mathcal{F}_t)$ and $\sup_{x\in\mathbb{R}}\int_{-\infty}^{\infty}|y| f_t(x,y)\D y\leq F_1(\mathcal{F}_t)$ for $\mathcal{F}_t$-measurable random variables $F(\mathcal{F}_t)$ and $F_1(\mathcal{F}_t)$.

\end{enumerate}
\end{assumption}

\begin{assumption}\label{ass:4}
\renewcommand{\theenumi}{(\roman{enumi})}
\begin{enumerate}
\item\label{it:4i} For all $t\in\mathbb{N}$, $v_t(\cdot)$ and $m_t(\cdot)$ are twice differentiable on $\inter(\mTheta^v)$ and $\inter(\mTheta^m)$ with gradients $\nabla v_t(\cdot)$ and $\nabla m_t(\cdot)$, and Hessians $\nabla^2 v_t(\cdot)$ and $\nabla^2 m_t(\cdot)$.

\item\label{it:4ii} $\sup_{\vtheta^v\in\mTheta^v}\big|v_t(\vtheta^v)\big|\leq V(\mZ_t)$ and $\sup_{\vtheta^m\in\mTheta^m}\big|m_t(\vtheta^m)\big|\leq M(\mZ_t)$ for some measurable functions $V(\cdot)$ and $M(\cdot)$.

\item\label{it:4iii} $\sup_{\vtheta^v\in\mTheta^v}\big\Vert\nabla v_t(\vtheta^v)\big\Vert\leq V_1(\mZ_t)$ and $\sup_{\vtheta^m\in\mTheta^m}\big\Vert\nabla m_t(\vtheta^m)\big\Vert\leq M_1(\mZ_t)$ for some measurable functions $V_1(\cdot)$ and $M_1(\cdot)$.

\item\label{it:4iv} $\sup_{\vtheta^v\in\mTheta^v}\norm{\nabla^2v_t(\vtheta^v)}\leq V_2(\mZ_t)$ and $\sup_{\vtheta^m\in\mTheta^m}\norm{\nabla^2m_t(\vtheta^m)}\leq M_2(\mZ_t)$ for some measurable functions $V_2(\cdot)$ and $M_2(\cdot)$.

\item\label{it:4v} There exist neighborhoods of $\vtheta_0^v$ and $\vtheta_0^m$, such that $\norm{\nabla^2 v_t(\vtau^v) - \nabla^2 v_t(\vtheta^v)}\leq V_3(\mZ_t)\norm{\vtau^v-\vtheta^v}$ and $\norm{\nabla^2 m_t(\vtau^m) - \nabla^2 m_t(\vtheta^m)}\leq M_3(\mZ_t)\norm{\vtau^m-\vtheta^m}$ for all elements $\vtheta^v$, $\vtau^v$ and $\vtheta^m$, $\vtau^m$ of the respective neighborhoods, and measurable functions $V_3(\cdot)$ and $M_3(\cdot)$.

\end{enumerate}
\end{assumption}

\begin{assumption}\label{ass:5}
 For $r>1$ from Assumption~\ref{ass:2} and some $\iota>0$, it holds that $\E\big[V(\mZ_t)\big]\leq K$, $\E\big[V_1^{4r}(\mZ_t)\big]\leq K$, $\E\big[V_2^{2r}(\mZ_t)\big]\leq K$, $\E\big[V_3(\mZ_t)\big]\leq K$, $\E\big[M^{4r+\iota}(\mZ_t)\big]\leq K$, $\E\big[M_1^{4r+\iota}(\mZ_t)\big]\leq K$, $\E\big[M_2^{4r}(\mZ_t)\big]\leq K$, $\E\big[M_3^{4r/(4r-1)}(\mZ_t)\big]\leq K$, $\E\big[F^{4r/(4r-3)}(\mathcal{F}_t)\big]\leq K$, $\E\big[F_1^{4r/(4r-3)}(\mathcal{F}_t)\big]\leq K$, $\E|X_t|\leq K$, $\E|Y_t|^{4r+\iota}\leq K$.
\end{assumption}

\begin{assumption}\label{ass:6}
The matrices  $\mLambda$ and $\mLambda_{(1)}$ are positive definite.
\end{assumption}

\begin{assumption}\label{ass:7}
$\sup_{\vtheta^v\in\mTheta^v}\sum_{t=1}^{n}\1_{\{X_t=v_t(\vtheta^v)\}}\leq K$ a.s.~for all $n\in\mathbb{N}$.
\end{assumption}

Assumption~\ref{ass:1}~\ref{it:1i} ensures identification of the true parameters. 
Compactness in Assumption~\ref{ass:1}~\ref{it:1ii} is a standard requirement in extremum estimation; see \citet{NeweyMcFadden1994}. 
Assumption~\ref{ass:1}~\ref{it:1iii} forces the true parameters to be interior to the respective parameter spaces, which is a key requirement for proving asymptotic normality. Assumption~\ref{ass:2} is a standard stationarity and mixing condition ensuring that suitable laws of large numbers and central limit theorems apply. 
Often, the $\beta$-mixing coefficients decrease exponentially, such that Assumption~\ref{ass:2} holds for \textit{any} $r>1$ arbitrarily close to unity; see, e.g., \citet{CC02} and \citet{Lie05} for ARMA and GARCH processes. 
Assumption~\ref{ass:3}~\ref{it:3i} ensures strict (multi-objective) consistency of the scoring function given in \eqref{eq:loss}; see \citet[Theorem~4.2]{FH24}. 
The final items \ref{it:3ii}--\ref{it:3iv} of Assumption~\ref{ass:3} may be interpreted as smoothness conditions for the conditional density.
Since the VaR and MES only depend on $\mZ_t$ under our modeling framework, it will most often be the case that the conditional density $f_t(\cdot,\cdot)$ will also be $\mZ_t$-measurable; see, e.g., Appendix~\ref{sec:ex verif}. Nonetheless, the random variables $F(\mathcal{F}_{t})$ and $F_1(\mathcal{F}_{t})$ appearing in Assumption~\ref{ass:3}~\ref{it:3iv} are in principle allowed to be $\mathcal{F}_{t}$-measurable.
Assumption~\ref{ass:4} provides smoothness conditions on the VaR and MES model, and
Assumption~\ref{ass:5} collects moment bounds.
When mixing is exponential (corresponding to $r=1$ in Assumption~\ref{ass:2}), the moment bounds of Assumption~\ref{ass:5} are weakest, such that there is the usual moment-memory tradeoff. Note that the moment conditions for the VaR model are weaker than those for the MES model, because the latter is estimated based on a \textit{squared} error-type loss.
Assumption~\ref{ass:6} ensures that the asymptotic variance-covariance matrix in Theorem~\ref{thm:an} is well-defined, which can however be degenerate if $\mV$ or $\mM^\ast$ are singular; also see \citet[footnote 20]{Hansen:82}.
Assumption~\ref{ass:7} is identical to Assumption~2~(G) in \citet{PZC19}. It prevents an infinite number of $X_t$'s to lie on any possible ``quantile regression line'' $v_t(\vtheta^v)$. Appendix~\ref{sec:ex verif} shows that $K=\dim(\vtheta^v)$ for linear models.

\begin{thm}\label{thm:an}
Suppose Assumptions~\ref{ass:1}--\ref{ass:7} hold. Then, as $n\to\infty$, 
\begin{align*}
\sqrt{n}\begin{pmatrix}
\widehat{\vtheta}_n^{v}-\vtheta_0^v\\
\widehat{\vtheta}_n^{m}-\vtheta_0^m
\end{pmatrix}
\overset{d}{\longrightarrow}N\big(\vzeros, \mGamma \mM \mGamma^\prime),
\end{align*}
where 
\[
\mGamma =\begin{pmatrix}
	\mLambda^{-1} & \vzeros\\
	-\mLambda_{(1)}^{-1}\mLambda_{(2)}\mLambda^{-1} & \mLambda_{(1)}^{-1}\end{pmatrix}\in\mathbb{R}^{(p+q) \times (p+q)}\qquad\text{and}\qquad \mM=\begin{pmatrix}\mV &\vzeros \\ \vzeros & \mM^\ast\end{pmatrix}\in\mathbb{R}^{(p+q)\times(p+q)}.
\]
\end{thm}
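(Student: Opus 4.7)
My plan is to derive the familiar two-step M-estimator expansion $\sqrt{n}(\widehat{\vtheta}_n^v - \vtheta_0^v,\; \widehat{\vtheta}_n^m - \vtheta_0^m)^\prime = \mGamma\,\tfrac{1}{\sqrt{n}}\sum_{t=1}^n \vxi_t + o_P(1)$ for an appropriate martingale-difference score vector $\vxi_t$ and then apply a CLT. Consistency of both estimators is a prerequisite and follows from the extremum-estimator template of \citet[Theorem~2.1]{NeweyMcFadden1994}: strict (lex-)consistency of the score in~\eqref{eq:loss} identifies $\vtheta_0^v$ and $\vtheta_0^m$ under Assumptions~\ref{ass:1} and~\ref{ass:3}\ref{it:3i}, the $\b$-mixing Assumption~\ref{ass:2} together with the moment bounds and compactness in Assumptions~\ref{ass:1}\ref{it:1ii},~\ref{ass:4},~\ref{ass:5} gives a uniform LLN, and the same template applied to the second-step objective after plugging in $\widehat{\vtheta}_n^v$ yields $\widehat{\vtheta}_n^m \to_P \vtheta_0^m$ (continuity in $\vtheta^v$ of the expected MES score holds by dominated convergence). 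For the first step I would invoke the nonlinear quantile-regression asymptotics of \citet{OH16}, which under Assumptions~\ref{ass:1}--\ref{ass:6}, with Assumption~\ref{ass:7} ruling out mass at the quantile surface, produce the Bahadur representation
\begin{align*}
	\sqrt{n}(\widehat{\vtheta}_n^v - \vtheta_0^v) = \mLambda^{-1}\,\frac{1}{\sqrt{n}}\sum_{t=1}^n \big[\b - \1_{\{X_t \leq v_t(\vtheta_0^v)\}}\big]\nabla v_t(\vtheta_0^v) + o_P(1),
\end{align*}
with $\mLambda$ arising as the Jacobian of the expected pinball score at $\vtheta_0^v$.

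For the second step, write the sample FOC as $\vzeros = G_n(\widehat{\vtheta}_n^m;\,\widehat{\vtheta}_n^v)$, where
\begin{align*}
	G_n(\vtheta^m;\vtheta^v) = \frac{1}{n}\sum_{t=1}^n \1_{\{X_t > v_t(\vtheta^v)\}}\big[Y_t - m_t(\vtheta^m)\big]\nabla m_t(\vtheta^m).
\end{align*}
Since $G_n(\cdot;\vtheta^v)$ is smooth in $\vtheta^m$ under Assumption~\ref{ass:4}, a mean-value expansion in $\vtheta^m$ combined with a uniform LLN for the Jacobian in $\vtheta^m$ (with bounds from Assumption~\ref{ass:5}) gives $\mLambda_{(1)}(\widehat{\vtheta}_n^m - \vtheta_0^m) = G_n(\vtheta_0^m;\,\widehat{\vtheta}_n^v) + o_P(n^{-1/2})$. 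The Hessian-of-$m$ contribution to this Jacobian vanishes in expectation because the MES identity $\E_t[\1_{\{X_t > v_t(\vtheta_0^v)\}}(Y_t - m_t(\vtheta_0^m))] = (1-\b)[\MES_{t,\b} - m_t(\vtheta_0^m)] = 0$ holds by~\eqref{eqn:TrueModelParameters}, leaving only $-\mLambda_{(1)}$. The main obstacle is to analyze $G_n(\vtheta_0^m;\,\widehat{\vtheta}_n^v) - G_n(\vtheta_0^m;\,\vtheta_0^v)$, because $\vtheta^v\mapsto G_n(\vtheta_0^m;\vtheta^v)$ is \emph{discontinuous} through the indicator, so the standard two-step recipe of \citet[Sec.~6]{NeweyMcFadden1994} does not apply. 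I would split this difference into a deterministic part plus a stochastic remainder $R_n(\vtheta^v)$. The deterministic part is smooth in $\vtheta^v$: differentiating $\int_{v_t(\vtheta^v)}^\infty\int(y-m_t(\vtheta_0^m))f_t(x,y)\,\D y\,\D x$ under the expectation, using Assumption~\ref{ass:3}\ref{it:3ii}--\ref{it:3iv} and dominated convergence, yields exactly $-\mLambda_{(2)}(\widehat{\vtheta}_n^v - \vtheta_0^v) + o_P(n^{-1/2})$ after invoking $\sqrt{n}$-consistency from the first step. For the remainder I would apply the empirical-process invariance principle for $\b$-mixing sequences of \citet{DMR95} to the class $\{(x,y,\mZ_t)\mapsto \1_{\{x > v_t(\vtheta^v)\}}(y - m_t(\vtheta_0^m))\nabla m_t(\vtheta_0^m): \vtheta^v\in\mTheta^v\}$, whose bracketing entropy is controlled via the smoothness in Assumption~\ref{ass:4} and the envelopes in Assumption~\ref{ass:5}, delivering $\sup_{\|\vtheta^v-\vtheta_0^v\|\leq \delta_n}\sqrt{n}\,\|R_n(\vtheta^v)\| = o_P(1)$ for any $\delta_n\downarrow 0$. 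This is the technical heart of the argument.

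Combining the two expansions produces the stacked representation with $\vxi_t = \big([\b - \1_{\{X_t\leq v_t(\vtheta_0^v)\}}]\nabla^\prime v_t(\vtheta_0^v),\;\1_{\{X_t > v_t(\vtheta_0^v)\}}[Y_t - m_t(\vtheta_0^m)]\nabla^\prime m_t(\vtheta_0^m)\big)^\prime$, pre-multiplied by $\mGamma$. Because $\nabla v_t$ and $\nabla m_t$ are $\mathcal{F}_t$-measurable and both blocks have zero $\mathcal{F}_t$-conditional mean (the first by the definition of $\VaR_{t,\b}$, the second by the MES identity above), $\{\vxi_t\}$ is a martingale difference sequence with respect to $\{\mathcal{F}_{t+1}\}$. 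A direct computation of $\E[\vxi_t\vxi_t^\prime]$ produces $\mM$: the diagonal blocks recover $\mV$ and $\mM^\ast$, while the off-diagonal block vanishes because on $\{X_t > v_t(\vtheta_0^v)\}$ we have $\b - \1_{\{X_t\leq v_t\}} = \b$, so the $\mathcal{F}_t$-conditional cross-moment reduces to $\b\,\nabla v_t\,\E_t[\1_{\{X_t>v_t\}}(Y_t - m_t(\vtheta_0^m))]\,\nabla^\prime m_t = \vzeros$. A standard MDS CLT (applicable under the $4r+\iota$ moment in Assumption~\ref{ass:5} via the Lindeberg condition) gives $\tfrac{1}{\sqrt{n}}\sum_{t=1}^n\vxi_t \overset{d}{\to} N(\vzeros,\mM)$, and Slutsky's theorem yields the asserted $N(\vzeros,\mGamma\mM\mGamma^\prime)$ limit.
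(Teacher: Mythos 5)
Your proposal is correct and follows essentially the same route as the paper: the same two-step expansion with $\mGamma$ assembled from the Jacobians $\mLambda,\mLambda_{(1)},\mLambda_{(2)}$, the same use of the \citet{DMR95} invariance principle for $\beta$-mixing sequences to control the discontinuous-in-$\vtheta^v$ second-step score uniformly over shrinking neighborhoods, and the same martingale-difference structure of the stacked score yielding the block-diagonal $\mM$. The only cosmetic differences are that the paper proves the first-step Bahadur representation directly (Lemmas~V.1--V.3, remarking that the conditions in \citet{OH16} are less primitive than Assumptions~\ref{ass:1}--\ref{ass:7}) rather than citing it, and reads the final CLT off the FCLT evaluated at $\vtheta_0$ rather than invoking a separate MDS CLT; note also that the $L^{2r}$-bracketing control of the indicator class relies on the bounded conditional density in Assumption~\ref{ass:3}~\ref{it:3ii}, not only on Assumptions~\ref{ass:4}--\ref{ass:5}.
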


\sloppy
The proof of Theorem~\ref{thm:an} is in Appendix~\ref{sec:thm1} and draws on \citet{DLS23}.
In the special case $\mLambda_{(2)}=\vzeros$, the first-step estimator does not have an impact on the asymptotic variance of the second-step estimator \citep[Section 6.2]{NeweyMcFadden1994}.
To gain some intuition for this, rewrite the curly bracket in  $\mLambda_{(2)}$ as $f_t^X(v_t(\vtheta_0^v))  \left( \mathbb{E} \big[ Y_t \big| X_t = v_t(\vtheta_0^v) \big] - \mathbb{E} \big[ Y_t \big| X_t \ge v_t(\vtheta_0^v) \big] \right)$, which measures how the expectation of $Y_t$ reacts upon moving from $\{X_t = v_t(\vtheta_0^v)\}$ further to the tail through $\{ X_t \ge v_t(\vtheta_0^v)\}$.
Hence, $\mLambda_{(2)}=\vzeros$ can be interpreted as a form of ``conditional tail uncorrelatedness'' of $X_t$ and $Y_t$, as $\mLambda_{(2)}$ is a weighted expectation of the above quantity.

\begin{rem}[Expected Shortfall Regressions]
	\label{rem:ESreg}
	For $X_t = Y_t$, the definition of the MES simplifies to the Expected Shortfall, $\ES_{t,\b} := \ES_{\b}(X_t\mid\mathcal{F}_t) := \E_t\big[X_t \mid X_t \geq \VaR_{t,\beta}\big]$.
	For the ES, associated regression models (jointly with the conditional VaR) have been proposed recently. 
	E.g., \citet{DimiBayer2019} and \citet{PZC19} consider \emph{joint} M-estimation with a VaR model, which is feasible as (VaR, ES) are \emph{jointly elicitable}, as opposed to the pair (VaR, MES), which is merely multi-objective elicitable.
	One can also use a two-step estimator for VaR and ES models, whose second step is essentially given by setting $Y_t = X_t$ in the right-hand side of \eqref{eq:MES_estimator}.
	Its asymptotic variance-covariance matrix resembles the one presented in Theorem \ref{thm:an} by again setting $Y_t = X_t$ in all matrices but $\mLambda_{(2)}$, where an ES-specific version is 
	\begin{align*}
		\mLambda_{(2)}^\text{ES} =\E\Big[ \big\{v_t(\vtheta_0^v)- m_t(\vtheta_0^m) \big\} f_{t}^{X}\big(v_t(\vtheta_0^v)\big) \nabla m_t(\vtheta_0^m)\nabla^\prime v_t(\vtheta_0^v)\Big].
	\end{align*}
	This arises as $\int_{-\infty}^{\infty} y f_t\big(v_t(\vtheta_0^v),y\big) \D y =  f_{t}^{X}\big(v_t(\vtheta_0^v)\big) \mathbb{E}_t \big[ Y_t \mid X_t = v_t(\vtheta^v_0) \big]$, and the latter conditional expectation equals $v_t(\vtheta^v_0)$ if $Y_t = X_t$.
	Notice, however, that the joint distribution of $(Y_t, X_t)^\prime$ is degenerate if $Y_t = X_t$, such that our Assumption \ref{ass:3} is violated. 
	Therefore, Assumption \ref{ass:3} is replaced in ES models by the requirement that the (univariate) conditional distribution of $X_t$ is sufficiently smooth around its $\beta$-quantile; see, e.g., \citet[Assumption 2 (B)]{PZC19}.
\end{rem}

For feasible inference, we have to estimate the asymptotic variance-covariance matrix of Theorem~\ref{thm:an} consistently.
In Appendix~\ref{sec:thm3}, we propose suitable estimators and show their consistency in Theorem~\ref{thm:lrv} under the additional Assumptions~\ref{ass:9}--\ref{ass:8}.

Appendix~\ref{sec:ex verif} verifies Assumptions~\ref{ass:1}--\ref{ass:7} and also Assumptions~\ref{ass:9}--\ref{ass:11} for the linear MES regression model of Example~\ref{ex:1} subject to some regularity conditions.

\section{Simulations}
\label{sec:Simulations}

\subsection{The Data-Generating Process}

We consider a time series regression as a data-generating process (DGP), which exhibits the desirable features of a  time-varying conditional mean and covariance matrix and at the same time results in a correctly specified regression for the MES (and the VaR) that allows for a comparison with the estimation used in \cite{BRS20, BDP20} in Section \ref{sec:BrunnermeierMESReg}.

Using the notation of Example \ref{ex:1}, the covariates $\mZ^v_{t} = \mZ^m_{t} =(1, Z_{1,t}, Z_{2,t})^\prime$ follow the (transformed) auto-regressions
\begin{align*}
	Z_{1,t} &= 0.3 + 0.4 \exp(\xi_t)  \qquad \text{with } \qquad \xi_t = 0.6 \xi_{t-1} + \varepsilon_{\xi,t}, \\
	Z_{2,t} &= 0.75 Z_{2,t-1} + \varepsilon_{Z,t},
\end{align*}
where $\varepsilon_{Z,t}, \varepsilon_{\xi,t} \stackrel{\text{i.i.d.}}{\sim} {N}(0,1)$, independently of each other for all $t=1,\ldots,n$.
The exponential transformation guarantees positivity of $Z_{1,t}$, which is required to introduce heteroskedasticity in the process
\begin{align}
	\label{eqn:DGPCrossSectional}
	\big(X_t, Y_t \big)'
	&= (\gamma_1,\gamma_1)^\prime + (\gamma_2, \gamma_2)^\prime  Z_{1,t} + (\gamma_3,\gamma_3)^\prime  Z_{2,t} + \big(\gamma_4 + \gamma_5  Z_{1,t}  \big) \boldsymbol{\varepsilon}_t, \qquad t=1,\ldots,n.
\end{align}
The multivariate $t$-distributed innovations $\boldsymbol{\varepsilon}_t =(\varepsilon_{1,t}, \varepsilon_{2,t})^\prime\stackrel{\text{i.i.d.}}{\sim} t_{6} \big( \boldsymbol{0}, \boldsymbol{\Sigma} =  \begin{psmallmatrix}
	1 & 1.2 \\
	1.2 & 4	
\end{psmallmatrix}\big)$
%
are independent of the $\{\varepsilon_{Z,t}\}$ and $\{\varepsilon_{\xi,t}\}$.
We choose $\vgamma=(\gamma_1, \ldots, \gamma_5)^\prime =(1,\ 1.5,\ 2,\ 0.25,\ 0.5)^\prime$ as the true values. 
For simplicity, our DGP in \eqref{eqn:DGPCrossSectional} implies that $X_t$ and $Y_t$ are driven by the same factors and only the heteroskedastic shock differs between the variables.
This DGP results in the linear (VaR, MES) model 
\begin{align*}
	v_t(\vtheta^v) = \theta^v_1  + \theta^v_2 Z_{1,t} +  \theta^v_3 Z_{2,t}
	\qquad \text{ and } \qquad 
	m_t(\vtheta^m) = \theta^m_1  + \theta^m_2 Z_{1,t} +  \theta^m_3 Z_{2,t}.
\end{align*}
The true parameter values $\vtheta_0^v=(\theta_{1,0}^v,\, \theta_{2,0}^v,\, \theta_{3,0}^v)^\prime$ and $\vtheta_0^m=(\theta_{1,0}^m,\, \theta_{2,0}^m,\, \theta_{3,0}^m)^\prime$ are given by
\begin{align*}
	\vtheta^v_{0} = 
	\big(\gamma_1 + \gamma_4 \tilde q_\beta, \; 
	\gamma_2 + \gamma_5 \tilde q_\beta,  \; 
	\gamma_3 \big)^\prime
	\qquad \text{and} \qquad
	\vtheta^m_{0} = 
	\big(\gamma_1 + \gamma_4 \tilde m_\beta, \;
	\gamma_2 + \gamma_5 \tilde m_\beta, \;
	\gamma_3  \big)^\prime,
\end{align*}
%
where $\tilde m_\beta$ is the $\beta$-MES of the $t_{6} \big( \boldsymbol{0}, \boldsymbol{\Sigma} \big)$-distribution, and $\tilde q_\beta$ is the $\beta$-quantile of its first component. 
While the \emph{true} parameters for $\theta_{3,0}^v$ and  $\theta_{3,0}^m$ coincide, this specification does not violate the separated parameter condition in \eqref{eqn:TrueModelParameters} as we do not impose the condition $\theta_{3}^v = \theta_{3}^m$ in the estimation.

\subsection{Simulation Results}

We estimate the correctly specified six-parameter (VaR, MES) regression model based on the joint covariates $\mZ^v_{t} = \mZ^m_{t}$ given above.
Table~\ref{tab:SimResultsCrossSectionalReg} shows simulation results for the estimated parameters and their standard deviations. 
Results are based on $5000$ Monte Carlo replications of the DGP in \eqref{eqn:DGPCrossSectional} for the probability levels  $\beta \in \{0.9,\ 0.95,\ 0.975\}$, which we also use in our two applications in Sections \ref{sec:ApplMESReg}--\ref{sec:GDP_MES}.
We consider the typical sample sizes $n \in \{500,\ 1000,\ 2000,\ 4000\}$.
The asymptotic variances are estimated as detailed in Appendix~\ref{sec:thm3}.
A formal description of the table columns is given in the table caption.

\begin{table}[tb]
	\centering
	\scriptsize
	\begin{tabular}{rr lrrrr lrrrr lrrrr}
		\toprule
		\multicolumn{2}{l}{\textbf{VaR}} & &  \multicolumn{4}{c}{$\theta_1^v$} & & \multicolumn{4}{c}{$\theta_2^v$} & & \multicolumn{4}{c}{$\theta_3^v$}  \\
		\cmidrule{4-7}  	\cmidrule{9-12} 	 \cmidrule{14-17} 
		$\beta$ & $n$ & & 	
		Bias & $\hat \sigma_\text{emp}$ & $\hat \sigma_\text{asy}$ & CI  & &
		Bias & $\hat \sigma_\text{emp}$ & $\hat \sigma_\text{asy}$ & CI  & &
		Bias & $\hat \sigma_\text{emp}$ & $\hat \sigma_\text{asy}$ & CI   \\
		\midrule
		\multirow{4}{*}{0.9}  & 500 &  & 0.049 & 0.581 & 0.585 & 0.92 &  & $-$0.033 & 0.489 & 0.503 & 0.91 &  & $-$0.003 & 0.163 & 0.159 & 0.93 \\ 
		& 1000 &  & 0.023 & 0.410 & 0.409 & 0.93 &  & $-$0.015 & 0.346 & 0.352 & 0.92 &  & $-$0.000 & 0.114 & 0.112 & 0.94 \\ 
		& 2000 &  & 0.012 & 0.288 & 0.287 & 0.93 &  & $-$0.007 & 0.244 & 0.247 & 0.93 &  & $-$0.000 & 0.081 & 0.079 & 0.94 \\ 
		& 4000 &  & 0.009 & 0.208 & 0.203 & 0.93 &  & $-$0.007 & 0.176 & 0.174 & 0.94 &  & $-$0.001 & 0.057 & 0.056 & 0.94 \\ 
		\addlinespace
		\multirow{4}{*}{0.95} & 500 &  & 0.104 & 0.824 & 0.850 & 0.89 &  & $-$0.073 & 0.687 & 0.730 & 0.87 &  & $-$0.001 & 0.232 & 0.230 & 0.91 \\ 
		& 1000 &  & 0.051 & 0.593 & 0.586 & 0.90 &  & $-$0.031 & 0.500 & 0.504 & 0.88 &  & 0.000 & 0.164 & 0.159 & 0.91 \\ 
		& 2000 &  & 0.034 & 0.412 & 0.410 & 0.92 &  & $-$0.024 & 0.346 & 0.353 & 0.91 &  & $-$0.001 & 0.115 & 0.112 & 0.93 \\ 
		& 4000 &  & 0.015 & 0.294 & 0.290 & 0.93 &  & $-$0.008 & 0.248 & 0.249 & 0.93 &  & 0.000 & 0.079 & 0.079 & 0.94 \\ 
		\addlinespace
		\multirow{4}{*}{0.975}& 500 &  & 0.214 & 1.210 & 1.425 & 0.87 &  & $-$0.130 & 1.007 & 1.179 & 0.82 &  & $-$0.001 & 0.338 & 0.377 & 0.89 \\ 
		& 1000 &  & 0.108 & 0.860 & 0.936 & 0.88 &  & $-$0.064 & 0.720 & 0.797 & 0.85 &  & 0.003 & 0.236 & 0.249 & 0.90 \\ 
		& 2000 &  & 0.061 & 0.604 & 0.626 & 0.90 &  & $-$0.035 & 0.516 & 0.536 & 0.88 &  & $-$0.003 & 0.169 & 0.169 & 0.92 \\ 
		& 4000 &  & 0.025 & 0.427 & 0.437 & 0.91 &  & $-$0.015 & 0.361 & 0.375 & 0.90 &  & 0.001 & 0.117 & 0.118 & 0.93 \\ 
		\midrule
		\midrule
		\\
		\multicolumn{2}{l}{\textbf{MES}} & &  \multicolumn{4}{c}{$\theta_1^m$} & & \multicolumn{4}{c}{$\theta_2^m$} & & \multicolumn{4}{c}{$\theta_3^m$}  \\
		\cmidrule{4-7}  	\cmidrule{9-12} 	 \cmidrule{14-17} 
		$\beta$ & $n$ & & 	
		Bias & $\hat \sigma_\text{emp}$ & $\hat \sigma_\text{asy}$ & CI  & &
		Bias & $\hat \sigma_\text{emp}$ & $\hat \sigma_\text{asy}$ & CI  & &
		Bias & $\hat \sigma_\text{emp}$ & $\hat \sigma_\text{asy}$ & CI   \\
		\midrule
		\multirow{4}{*}{0.9}& 500 &  & 0.030 & 0.822 & 0.763 & 0.89 &  & $-$0.031 & 0.629 & 0.575 & 0.87 &  & 0.003 & 0.207 & 0.201 & 0.93 \\ 
		& 1000 &  & 0.018 & 0.614 & 0.573 & 0.90 &  & $-$0.018 & 0.466 & 0.433 & 0.88 &  & 0.005 & 0.144 & 0.141 & 0.94 \\ 
		& 2000 &  & 0.010 & 0.454 & 0.434 & 0.92 &  & $-$0.009 & 0.343 & 0.327 & 0.91 &  & $-$0.001 & 0.099 & 0.100 & 0.95 \\ 
		& 4000 &  & 0.006 & 0.325 & 0.317 & 0.93 &  & $-$0.006 & 0.243 & 0.239 & 0.93 &  & $-$0.000 & 0.070 & 0.071 & 0.95 \\ 
		\addlinespace
		\multirow{4}{*}{0.95}& 500 &  & 0.045 & 1.211 & 1.100 & 0.86 &  & $-$0.060 & 0.935 & 0.815 & 0.81 &  & $-$0.002 & 0.323 & 0.303 & 0.91 \\ 
		& 1000 &  & 0.039 & 0.906 & 0.849 & 0.88 &  & $-$0.035 & 0.698 & 0.640 & 0.86 &  & 0.001 & 0.220 & 0.218 & 0.94 \\ 
		& 2000 &  & 0.029 & 0.657 & 0.656 & 0.91 &  & $-$0.027 & 0.500 & 0.494 & 0.90 &  & $-$0.002 & 0.158 & 0.155 & 0.94 \\ 
		& 4000 &  & 0.010 & 0.493 & 0.487 & 0.92 &  & $-$0.010 & 0.373 & 0.366 & 0.92 &  & 0.001 & 0.112 & 0.111 & 0.94 \\ 
		\addlinespace
		\multirow{4}{*}{0.975} & 500 &  & 0.073 & 1.891 & 1.762 & 0.84 &  & $-$0.099 & 1.443 & 1.261 & 0.76 &  & $-$0.001 & 0.543 & 0.478 & 0.88 \\ 
		& 1000 &  & 0.062 & 1.334 & 1.256 & 0.87 &  & $-$0.061 & 1.019 & 0.931 & 0.82 &  & $-$0.005 & 0.360 & 0.336 & 0.90 \\ 
		& 2000 &  & 0.016 & 0.967 & 0.930 & 0.89 &  & $-$0.026 & 0.742 & 0.702 & 0.87 &  & $-$0.003 & 0.239 & 0.241 & 0.94 \\ 
		& 4000 &  & 0.028 & 0.741 & 0.734 & 0.91 &  & $-$0.026 & 0.565 & 0.552 & 0.90 &  & $-$0.004 & 0.174 & 0.173 & 0.94 \\ 
		\bottomrule
	\end{tabular}
	\caption{Simulation results for the parameter estimates of a (correctly specified) linear MES regression model based on the DGP in \eqref{eqn:DGPCrossSectional} and $5000$ simulation replications.
		The columns ``Bias'' show the average bias of the parameter estimates,  ``$\hat \sigma_\text{emp}$'' report the empirical standard deviation of the parameter estimates, ``$\hat \sigma_\text{asy}$'' the mean of the estimated standard deviations (computed from the diagonal elements of our estimate of $\mGamma\mM\mGamma^\prime$) and the columns ``CI'' show the coverage rates of $95\%$-confidence intervals.}
	\label{tab:SimResultsCrossSectionalReg}
\end{table}

We find that both the VaR and MES parameters are estimated consistently as the empirical bias and the empirical standard deviation shrink as $n$ grows in all settings.
As expected, the estimates are more accurate for moderate (smaller) probability levels $\beta$, but even the largest choice $\beta = 0.975$ leads to fairly accurate estimates in large samples.
Overall, the MES parameters are estimated with similar accuracy as the corresponding VaR parameters, and so are their asymptotic variances.
These results are confirmed by the coverage rates of the resulting confidence intervals, where the empirical coverage rates approach the nominal rate of $95\%$.
We observe some undercoverage in finite samples that, however, vanishes with decreasing $\beta$ and increasing sample size $n$.

\subsection{Comparison with Existing Approaches to MES Regressions}
\label{sec:BrunnermeierMESReg}

\cite{BRS20, BDP20}, \citet{Berger2020} and \citet{Karolyi2023} among others use the following approach to MES regressions. 
Define the empirical MES estimate over a rolling window of $S \in \mathbb{N}$ (typically, $S=250$) past observations, i.e.,
\begin{align}
	\label{eq:Y_MES_transform}
	Y_t^\ast = 	\bigg(\sum_{s=t-S}^{t} \mathds{1}_{\{ X_s \ge \widehat{Q}_\beta(X_{(t-S):t})\}} \bigg)^{-1}  \sum_{s=t-S}^{t}  Y_s \mathds{1}_{\{ X_s \ge \widehat{Q}_\beta(X_{(t-S):t}) \}},
\end{align}
where $\widehat{Q}_\beta(X_{(t-S):t})$ denotes the sample quantile of $\{X_{t-S},\dots,X_t\}$ in the rolling window.
\cite{BRS20, BDP20} then relate the transformed response $Y_t^\ast$ to covariates $\mZ_t^m$ in a standard OLS mean regression; see \citet[equations (2) and (3)]{BRS20} and \citet[Section 2.3]{BDP20} for details.
We henceforth call this method the ``ad hoc MES regression''.
A mean regression of $Y_t^\ast$ on $\mZ_t^m$ estimates the conditional expectation $ \mathbb{E} \big[ Y_t^\ast  \mid \mZ_{t}^m \big]$, whose interpretation is unclear and which is in general different from the \textit{de facto} target of an MES regression, that is, $\MES_{t,\b}=\E\big[Y_t \mid X_t\geq\VaR_{t,\beta}, \mZ_{t}^m \big]$.
A further drawback of the ad hoc MES regressions is that---in contrast to our MES regression---the covariates $\mZ_t^m$ cannot contain past values of $X_t$ or $Y_t$ as these would influence both the left-hand and right-hand sides of the associated OLS regression, as $Y_t^\ast$ is a rolling average over past values of $Y_t$ (truncated by using lagged values of $X_t$).

To illustrate in a simpler context, such an ad hoc regression for the $\beta$-quantile would relate the moving window sample quantile $Y_t^{\dagger}  = \widehat{Q}_\beta(X_{(t-S):t})$ to covariates. 
This contrasts with quantile regression of \citet{KB78}, which relates $Y_t$ directly to the regressors by using the quantile-specific check loss function given in the first row of \eqref{eq:loss}.

\begin{figure}
	\centering
	\includegraphics[width=\linewidth]{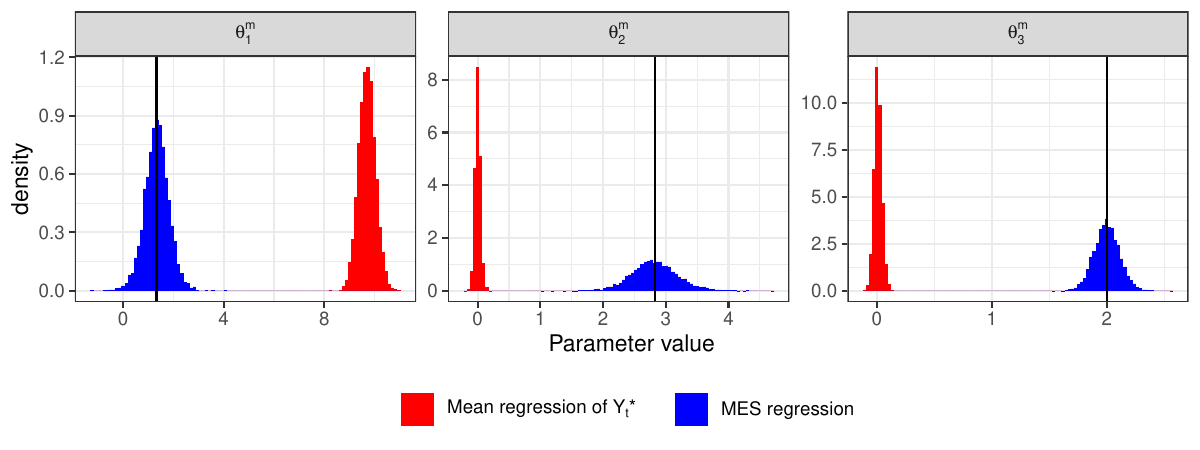}
	\caption{Histograms (over the $M=5000$ simulation replications) of our MES regression parameter estimates in blue and the estimates using the technique of \cite{BRS20,BDP20} in red.
	We fix $\beta = 0.95$ and $n=4000$.
	The true values are marked by the vertical black lines.}
	\label{fig:ParamHist_Brunnermeier}
\end{figure}

Figure \ref{fig:ParamHist_Brunnermeier} illustrates how severe the discrepancy between our and the ad hoc version of the MES regression is for the DGP in \eqref{eqn:DGPCrossSectional} for $\beta = 0.95$.
We do so by estimating a mean regression of the target $Y_t^\ast$ on the same covariates $(1,Z_{1,t}, Z_{2,t})'$ and by computing our MES estimate $\widehat{\vtheta}_n^m$.
While the parameter estimates of our MES regression are normally distributed around the true parameter values (given by the black lines), the estimates based on the ad hoc estimation method are far from the true values.
Most strikingly, the ad hoc estimates of the two slope parameters associated with the covariates $Z_{1,t}$ and $Z_{2,t}$ vary around zero because the contemporaneous effect of the covariates on the \textit{rolling window} quantity $Y_t^\ast$ is negligible. 
This illustrates that employing the ad hoc MES regression procedure does not capture the effect of covariates on the MES functional, but rather for some functional that is inherently hard to interpret; see \eqref{eq:Y_MES_transform}.

\section{Empirical Applications}
\label{sec:EmpiricalApplication}

We illustrate the versatility of our MES regressions in three applications. 
These concern explanatory variables of systemic risk in the banking sector in Section~\ref{sec:ApplMESReg} and dissecting GDP growth vulnerabilities among the three biggest European economies in Section~\ref{sec:GDP_MES}.
A further application concerning (equal risk contribution) portfolios is deferred to Appendix~\ref{sec:add appl}.


\subsection{Systemic Risk Regressions}
\label{sec:ApplMESReg}

With the introduction of RiskMetrics \citep{RM96}, financial risk management---based mainly on the VaR---was beginning to be firmly established in the 1990s. 
Subsequently, the importance of adequately managing individual financial risks was also reflected in official regulations by the \citet{BCBS96}. 
As these regulations did not prevent the financial crisis of 2007--08, more attention (of regulators as well as industry) has been paid to the \textit{systemic} nature of financial risks. This means that more scrutiny was applied in studying the interconnectedness of individual institutions (in the context of the financial system as a whole) or the interconnectedness of individual trading desks (in the context of managing bank-wide risks). 
As a consequence of this development, a huge literature on systemic risk and its determinants has emerged \citep[e.g.,][]{ABT12,AB16,Aea17,BE17,BRS20,BDP20}.

For instance, \citet{BRS20} investigate the contemporaneous relation of asset price bubbles and systemic risk by, among others, considering the (conditional) MES in their Section 4.2.
However, our simulations in Section \ref{sec:BrunnermeierMESReg} show that the ad hoc estimation method for MES regressions---as applied by the aforementioned authors---does not necessarily provide consistent parameter estimates nor allows for valid inference.
Therefore, we now consider a simplified form of their analysis for the conditional MES by using our regressions under adverse conditions.
Here, we mainly aim at identifying which covariates drive the conditional MES, and as in  \citet{BRS20}, we particularly focus on stock market boom and bust indicators.

Specifically, we focus on the three US banks that are classified as systemically most risky  according to the \citet{FSB22} such that $Y_t$ equals the daily log-losses  (i.e., the negative log-returns) of either the Bank of America Corporation (BAC), Citigroup (C) or JPMorgan Chase (JPM).
We provide corresponding results for the other five US banks listed as a global systemically important bank (G-SIB) by the \citet{FSB22} in Appendix~\ref{sec:AddEmpResults}.
Throughout, we use the daily log-losses of the S\&P~500 Financials for $X_t$ and set $\beta = 0.95$.
Hence, the MES measures the mean loss of the bank \textit{conditional} on the financial system being in distress.

We estimate the joint linear model  $ \big( v_t(\vtheta^v), m_t(\vtheta^m) \big)' = \big( \mZ_{t}^{v\,\prime} \vtheta^v, \mZ_{t}^{m\,\prime} \vtheta^m\big)'$, where $ \vtheta^v,  \vtheta^m \in \mathbb{R}^6$. For reasons of data availability and concerns of non-stationarity of some regressors, we use a restricted set of variables in $\mZ_{t}^{v} = \mZ_{t}^{m} = \mZ_t = (1, Z_{t,1},\ldots,Z_{t,5})^\prime$,
which contain an intercept, the change in spread between Moody's Baa-rated bonds and the 10-year Treasury bill rate (\textit{Change Spread}), the spread between the 3-month LIBOR and 3-month Treasury bills (\textit{TED Spread}), the VIX index as a forward-looking measure of market volatility, and separate indicator variables for stock market booms (\textit{SM Boom}) and busts (\textit{SM Bust}) from \citet{BRS20}.
To study the contemporaneous relation between systemic risk and bubbles, we follow \citet[Eq.~(3)]{BRS20} by lagging all explanatory variables by one time period, except for the boom and bust indicators for which we use contemporaneous values.

To allow for an immediate comparison of the estimation methods, we also estimate the ``ad hoc MES regression'' described in Section~\ref{sec:BrunnermeierMESReg}, which is a classical OLS  regression of the transformed target variable defined in \eqref{eq:Y_MES_transform}.
We estimate the models based on daily data and use the largest common available sample ranging from May 5, 1993 until December 31, 2015, yielding a total of $n=5,555$ trading days.

\begin{table}[tb]
	\centering
	\footnotesize
	\resizebox{\columnwidth}{!}{
	\begin{tabular}{c c l c rrr c rrr c rrr}
		\toprule
		&&&& \multicolumn{3}{c}{VaR Model} &&  \multicolumn{3}{c}{MES Model} &&  \multicolumn{3}{c}{Ad Hoc MES Regression} \\
		\cmidrule{5-7}   	\cmidrule{9-11}   	\cmidrule{13-15}  
		Bank & & Covariate && Est. & SE & $p$-val  &&  Est. & SE & $p$-val &&  Est. & SE & $p$-val   \\
		\midrule
		\addlinespace		
		\multirow{6}{*}{BAC}  &  & Intercept &  & $-$1.734 & 0.185 & 0.000 &  & $-$5.535 & 1.438 & 0.000 &  & $-$1.325 & 0.156 & 0.000 \\ 
		 &  & Change Spread &  & 0.591 & 0.107 & 0.000 &  & 2.184 & 0.752 & 0.004 &  & 1.965 & 0.090 & 0.000 \\ 
		 &  & TED Spread &  & 1.751 & 0.182 & 0.000 &  & 2.985 & 1.027 & 0.004 &  & $-$1.865 & 0.124 & 0.000 \\ 
		 &  & VIX &  & 0.092 & 0.015 & 0.000 &  & 0.147 & 0.058 & 0.011 &  & 0.068 & 0.009 & 0.000 \\ 
		 &  & ST Boom &  & $-$0.194 & 0.125 & 0.120 &  & 0.140 & 0.611 & 0.819 &  & 0.566 & 0.131 & 0.000 \\ 
		 &  & ST Bust &  & $-$0.571 & 0.169 & 0.001 &  & $-$1.483 & 0.604 & 0.014 &  & $-$0.280 & 0.181 & 0.121 \\ 
		\midrule
		\addlinespace		 
		\multirow{6}{*}{C}  &  & Intercept &  & $-$1.734 & 0.185 & 0.000 &  & $-$4.654 & 1.880 & 0.013 &  & $-$1.386 & 0.120 & 0.000 \\ 
		 &  & Change Spread &  & 0.591 & 0.107 & 0.000 &  & 1.141 & 0.920 & 0.215 &  & 1.791 & 0.070 & 0.000 \\ 
		 &  & TED Spread &  & 1.751 & 0.182 & 0.000 &  & 2.110 & 1.106 & 0.057 &  & $-$1.642 & 0.095 & 0.000 \\ 
		 &  & VIX &  & 0.092 & 0.015 & 0.000 &  & 0.263 & 0.063 & 0.000 &  & 0.092 & 0.007 & 0.000 \\ 
		 &  & ST Boom &  & $-$0.194 & 0.125 & 0.120 &  & $-$0.321 & 0.635 & 0.613 &  & 0.687 & 0.101 & 0.000 \\ 
		 &  & ST Bust &  & $-$0.571 & 0.169 & 0.001 &  & $-$1.476 & 0.608 & 0.015 &  & $-$0.328 & 0.139 & 0.018 \\ 
		\midrule
		\addlinespace		
		\multirow{6}{*}{JPM} &  & Intercept &  & $-$1.734 & 0.185 & 0.000 &  & $-$2.601 & 0.873 & 0.003 &  & $-$0.234 & 0.084 & 0.005 \\ 
		 &  & Change Spread &  & 0.591 & 0.107 & 0.000 &  & 0.868 & 0.431 & 0.044 &  & 1.208 & 0.049 & 0.000 \\ 
		 &  & TED Spread &  & 1.751 & 0.182 & 0.000 &  & 2.455 & 0.927 & 0.008 &  & $-$1.668 & 0.067 & 0.000 \\ 
		 &  & VIX &  & 0.092 & 0.015 & 0.000 &  & 0.151 & 0.049 & 0.002 &  & 0.090 & 0.005 & 0.000 \\ 
		 &  & ST Boom &  & $-$0.194 & 0.125 & 0.120 &  & $-$0.151 & 0.444 & 0.734 &  & 0.422 & 0.070 & 0.000 \\ 
		 &  & ST Bust &  & $-$0.571 & 0.169 & 0.001 &  & $-$0.025 & 0.713 & 0.972 &  & 0.821 & 0.097 & 0.000 \\ 
		\bottomrule 
	\end{tabular}
	}
	\caption{VaR and MES regression results for $\beta = 0.95$ for $X_t$ being the log-losses of the S\&P~500 Financials and $Y_t$ being the log-losses of Bank of America Corporation (BAC), Citigroup (C) and JPMorgan Chase (JPM) in the respective panels. The vertical panel entitled ``Ad Hoc MES Regression'' reports the results of the method described around  \eqref{eq:Y_MES_transform}. The respective parameter estimates are given in the columns ``Est.'', their standard errors in the columns ``SE'', and associated $t$-test $p$-values in the columns ``$p$-val''.}
	\label{tab:PredictiveRegression}
\end{table}

Table~\ref{tab:PredictiveRegression} displays the parameter estimates and the appertaining standard errors and $t$-test $p$-values for our MES regression using the inference methods developed in Section \ref{Asymptotic Properties}. Also shown are the corresponding results of the ad hoc MES regression using standard inference methods for the OLS estimator.
As in our simulations, we see that the parameter estimates and the standard errors of our MES regression and the ad hoc method based on a mean regression of the ``empirically filtered'' MES differ substantially.
While the latter suggests a clearly significant influence of the bubble indicators in five out of six cases, our MES regression shows a different picture indicating that only the bust indicators have a (negative) significant effect for BAC and C, but not for JPM.
Hence, given the functional of interest is the conditional MES, the ad hoc estimation method (likely) falsely classifies the boom indicators as being significant.
Closely related, the substantially smaller standard errors of the ad hoc MES regression (caused by the OLS regression that omits the necessary ``truncation'' in the lower row of \eqref{eq:loss}) are especially concerning in providing overconfident and possibly false conclusions on the drivers of systemic risk.

\begin{figure}[tb]
	\centering
	\includegraphics[width=\linewidth]{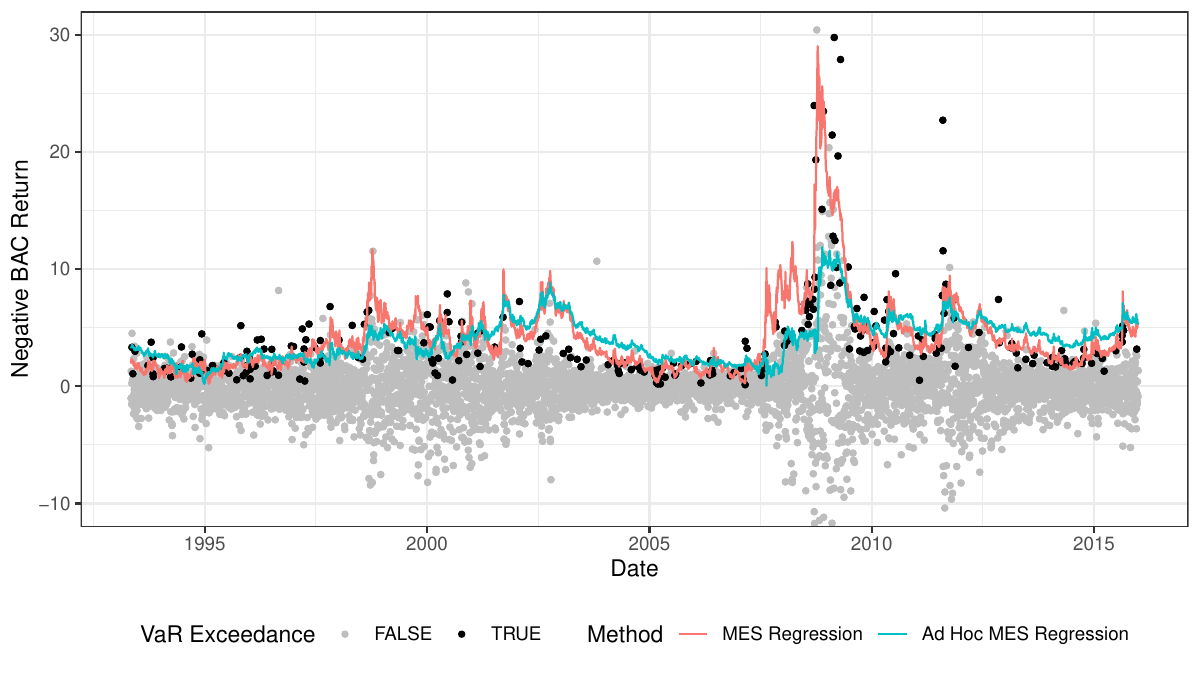}
	\caption{Estimated sample paths of our MES regression and the ``ad hoc MES regression'' employed in \citet{BRS20} for $Y_t$ being negative returns of the Bank of America (BAC), $X_t$  negative returns of the S\&P\,500 Financials and $\beta = 0.95$.
	Returns on days with an exceedance of the estimated conditional VaR are displayed in black while all other returns are shown in gray.}
	\label{fig:MES_BRS_regressions}
\end{figure}

While MES regressions estimate the conditional MES as quantified in Theorem~\ref{thm:an}, the ad hoc MES regression estimates the unconventional target functional of a conditional mean (through an OLS regression) of a rolling window MES estimate.
A comparison of the respective model predictions in Figure~\ref{fig:MES_BRS_regressions} shows that these can differ quite drastically, especially in turbulent times as in the year 2008: 
In contrast to our method, the ad hoc MES regression fails to quickly adapt to the rising levels of (systemic) risk due to its inherent dependence on the distant past through the rolling-window MES estimate in \eqref{eq:Y_MES_transform}.
Hence, we suggest to clearly motivate the functional of interest and choose the appropriate methodological framework, which---when effects on the conditional MES are of interest---is provided by our regressions under adverse conditions.

The appropriateness of our (linear) regression under adverse conditions for the data is reinforced by in-sample model diagnostics.
In classical correctly specified \emph{mean} regressions, the expectation of the classical model residuals is zero conditional on known information, such as time or the model predictions.
This is often analyzed by non-parametrically regressing the residuals on these quantities and plotting the results.
In standard mean regressions, the classical model residuals $V(m,y)=m-y$ can be obtained as the derivative with respect to the model predictions $m$ of the (scaled) squared error loss $S(m,y)=\frac{1}{2}(m-y)^2$.

For our more complicated target functional (VaR, MES), we follow \citet{Pohle_GenResid} and replace the model residuals by so-called ``generalized model residuals'', whose conditional expectation---as above---equals zero for correctly specified models.
The generalized model residuals correspond to the identification functions (known from forecast evaluation) evaluated at the model predictions and the outcomes.
An identification function for the pair (VaR, MES) is given by $V\big( (v,m)', (x,y)' \big) = \big(\1_{\{x\leq v\}}-\b, \, \1_{\{x>v\}}(m-y) \big)'$, which arises (almost everywhere) as the derivative of the lexicographical loss in \eqref{eq:loss} with respect to $v$ and $m$ \citep{FH24}.
This construction parallels the derivative of the squared error mentioned above. 
In the following, we non-parametrically regress the generalized model residuals on time and the model predictions.
Plots of the regression curve that deviate from the zero line are indicative of model misspecification (because then some combination of the covariates could predict the generalized residual---in violation of correct specification).

\begin{figure}[tb]
	\centering
	\includegraphics[width=\linewidth]{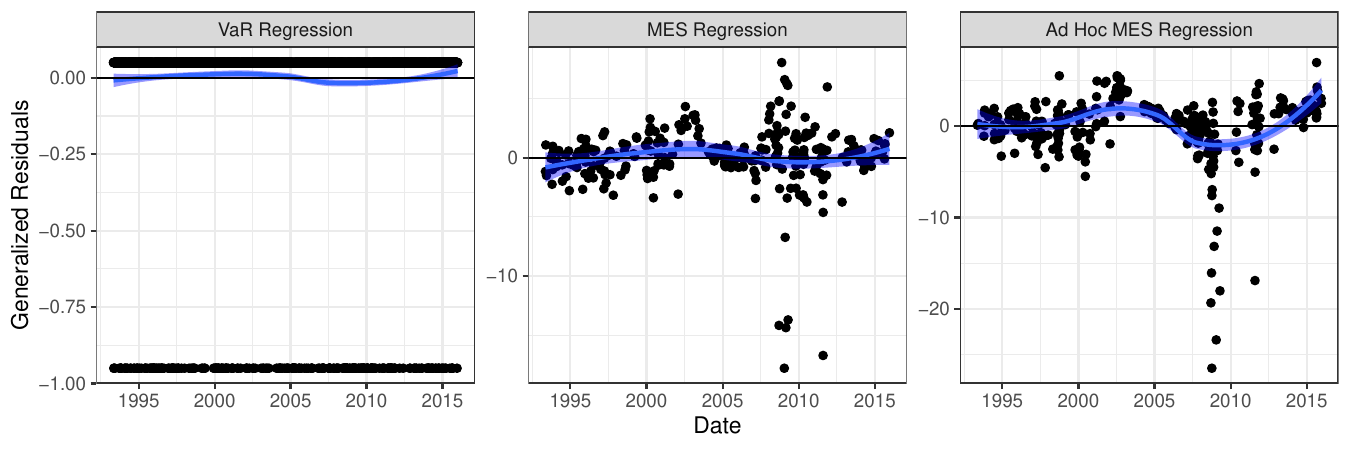}
	\includegraphics[width=\linewidth]{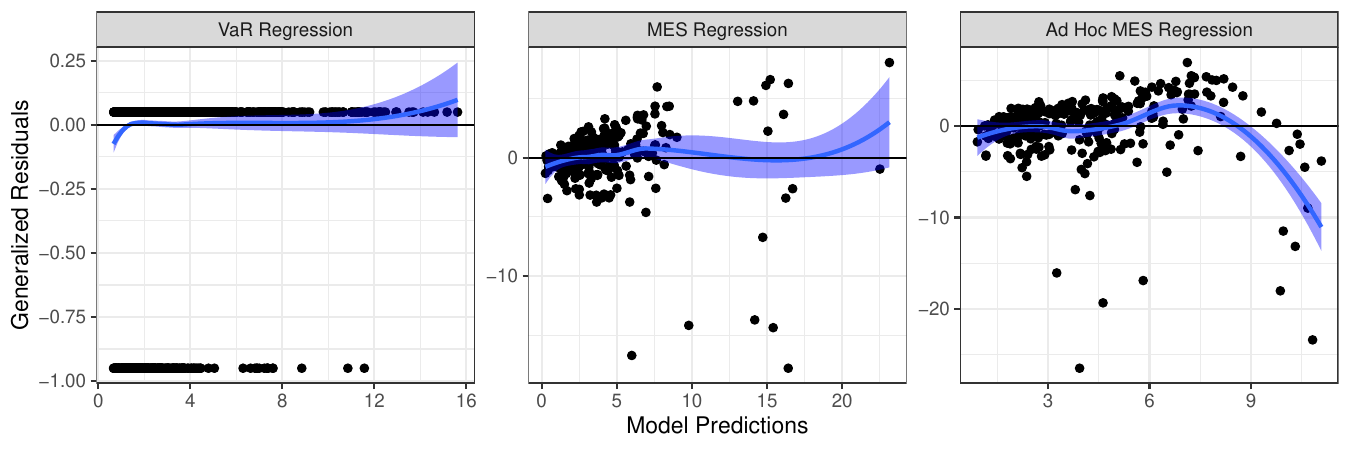}
	\caption{In-sample model diagnostics of the VaR and MES predictions of our regression and the ``ad hoc MES regression'' of \citet{BRS20, BDP20} for $Y_t$ being negative returns of the Bank of America (BAC), $X_t$  negative returns of the S\&P\,500 Financials and $\beta = 0.95$. 
	The generalized residuals are plotted and non-parametrically regressed (in blue) against time (upper row) and the model predictions (lower row).}
	\label{fig:ModelDiagnostics_MES_BRS_regressions}
\end{figure}

Figure~\ref{fig:ModelDiagnostics_MES_BRS_regressions} shows generalized residuals of the VaR and MES parts of our regression, plotted against time and the model predictions, respectively.
Also shown are the corresponding generalized MES residuals from the ad hoc MES regression. 
The blue line and its pointwise 95\%-confidence band stem from a polynomial (mean) regression with automated parameter choices in the \texttt{loess} function of the statistical software \texttt{R}.
Further details on the implementation are provided in Appendix~\ref{sec:AddEmpResults}.
The zero line is almost always contained in the blue band for the predictions from our MES regression, implying no evidence against model misspecification.
In contrast, the ad hoc MES regression clearly violates this condition. 
Most troubling from a risk management perspective, the violations are most apparent during the financial crisis and for large model predictions, i.e., when accuracy of the method is most important.

\subsection{Dissecting GDP Growth Vulnerabilities}
\label{sec:GDP_MES}

In recent years, many researchers have quantified macroeconomic risks via the ES \citep{ABG19,Pea20,Aea21,DDP24}.
For instance, in their influential paper, \citet[Section II.B]{ABG19} use the ES to measure US GDP growth vulnerability.
Using an associated regression method, they then relate future ES to current macroeconomic and financial conditions by using US GDP growth and the US national financial conditions index (NFCI) as covariates.

In this section, we first introduce a slight (methodological) variation of \citeauthor{ABG19}'s \citeyearpar{ABG19} procedure by estimating a joint VaR and ES regression.
Then, we detail how our MES regressions can be used to gain additional insights on the decomposition of growth vulnerabilities in Europe.
Here, our interest lies in both, understanding the driving forces of the conditional MES (see Table \ref{tab:GDP_MES}) and in providing an accurate model fit (see Figure~\ref{fig:MES_GDP}).

Specifically, we denote by $X_t$ the negative  quarterly year-on-year GDP growth of the economic area consisting of the three biggest European economies, Germany, France and the United Kingdom (UK).
Formally, we estimate the parameters  $\vtheta^v = (\theta_1^v, \theta_2^v, \theta_3^v)^\prime$ and  $\vtheta^e = (\theta_1^e, \theta_2^e, \theta_3^e)^\prime$ of the linear (VaR, ES) regression
\begin{align}
		\VaR_{\beta}(X_t \mid \mathcal{F}_{t}) &= \theta_1^v + \theta_2^v \FCI_{t-1} + \theta_3^v X_{t-1}, \label{eq:VaR GaR}\\
		\ES_{\beta}(X_t \mid \mathcal{F}_{t}) = \E_t [X_t\mid X_t\geq\VaR_{t,\beta}] &= \theta_1^e + \theta_2^e \FCI_{t-1} + \theta_3^e X_{t-1},\label{eq:ES GaR}
\end{align}
where $\FCI_{t-1}$ denotes the lagged ``equally weighted'' financial conditions index for advanced economies of \citet{Arrigoni2022} and $\mathcal{F}_t$ only contains time-$(t-1)$ information in the form of past $\FCI_{t-1}$ and $X_{t-1}$.
\citet[Section II.B]{ABG19} estimate the ES regression in \eqref{eq:ES GaR} by averaging over quantile regressions for multiple levels.
In contrast, we employ the two-step M-estimator for the VaR and ES regression parameters that arises by setting $Y_t = X_t$ in \eqref{eq:MES_estimator}; cf.~Remark \ref{rem:ESreg}.
In the numerical optimization, we additionally employ the \emph{non-crossing constraint} that $\ES_{\b}(X_t\mid \mathcal{F}_{t}) \ge \VaR_{\beta}(X_t\mid \mathcal{F}_{t})$ for all $t$.

\begin{table}[tb]
	\centering
	\begin{tabular}{ll c rr c rr c rr} 
		\toprule
		&&& \multicolumn{2}{c}{$\theta_{1}$} && \multicolumn{2}{c}{$\theta_{2}$} && \multicolumn{2}{c}{$\theta_{3}$} \\
		\cmidrule{4-5}    \cmidrule{7-8}    \cmidrule{10-11}   
		Country & Risk Measure &&   Est. & SE &&  Est. & SE  &&  Est. & SE  \\
		\midrule
		\multirow{2}{*}{Joint Region} & VaR && 0.602 & 0.212  &&   0.673 & 	0.097  && 0.884 & 0.100 \\
		& ES	&& 0.794 &  && 0.663 &  && 0.856 &  \\
		\midrule
		Germany 	& MES	 &&  2.160 & 0.261 && 0.352 & 0.160 && 1.350     &   0.077 \\
		France 			& MES	&&  $-0.557$ & 0.222 && 0.818  & 0.226  && 0.315    &  0.211 \\
		United Kingdom 	 & MES	&&   $-0.723$ & 0.335 && 1.370  & 0.312 && 0.122     &  0.255 \\
		\bottomrule
	\end{tabular}	
	\caption{VaR, ES and MES regression parameter estimates at the level $\beta=0.9$ for the models in \eqref{eq:VaR GaR}--\eqref{eq:ES GaR} and \eqref{eq:GDPmodel} for $n=99$ observations of negative quarterly GDP growth between  1995 Q1 and 2019 Q4 for Germany, France, the United Kingdom (as $Y_{t,d}$) and their joint economic region (as $X_t$). The columns ``Est.'' report the parameter estimates and ``SE'' the estimated standard errors.
	The generic parameter names $\theta_{j}$, $j=1,2,3$ in the column headings refer to $\theta^v_{j}$, $\theta^e_{j}$ and $\theta^m_{j,d}$ from \eqref{eq:VaR GaR}, \eqref{eq:ES GaR} and \eqref{eq:GDPmodel}, respectively.}
	\label{tab:GDP_MES}
\end{table}

The upper panel of Table~\ref{tab:GDP_MES} (labeled ``Joint Region'') shows the parameter estimates together with their standard errors.
Notice that we do not report standard errors for the ES parameters in this panel, because no valid inference methods exist for a constrained two-step estimation of the conditional ES (as opposed to the unconstrained case discussed in Remark~\ref{rem:ESreg}).
The estimates are based on a sample of $n=99$ quarterly observations from 1995 Q1 until 2019 Q4, which (almost) corresponds to the available sample of the financial conditions index of \citet{Arrigoni2022}.
We find that the slope coefficients for both the VaR and ES model are positive and (for the VaR) statistically significant  at any commonly used significance level.
This confirms the result of \citet{ABG19} that future growth risk loads roughly equally on both financial and macroeconomic conditions (as captured by $\FCI_{t-1}$ and $X_{t-1}$ in \eqref{eq:VaR GaR}--\eqref{eq:ES GaR}).

We now demonstrate how our MES regressions can be used to provide a more refined picture of growth vulnerabilities. 
Specifically, as we consider an entire economic region, the question naturally arises how the driving forces of growth vulnerabilities are distributed among the individual countries.

Formally, the common negative GDP growth $X_t = \sum_{d=1}^D w_{t,d} Y_{t,d}$ can be dissected into negative GDP growth, $Y_{t,d}$, of the $D=3$ individual countries. 
The time-varying weights $\vw_t = (w_{t,1}, \dots, w_{t,D})^\prime$ are given by the countries' relative share of GDP in the entire economic region, such that $w_{t,d}\in(0,1)$ and $\sum_{d=1}^{D}w_{t,d}=1$.
We partition the ES risk of $X_t$ from \eqref{eq:ES GaR} into its systemic MES components by writing
\begin{equation}\label{eq:ES decomp}
\E_t[X_t\mid X_t\geq\VaR_{t,\beta}] = \sum_{d=1}^{D} w_{t,d} \, \E_t[Y_{t,d}\mid X_t\geq\VaR_{t,\beta}].
\end{equation}
Then, we model the individual components of the right-hand side with our MES regression
\begin{align}
	\label{eq:GDPmodel}
	\E_t [Y_{t,d}\mid X_t\geq\VaR_{t,\beta}] &= \theta_{1,d}^m + \theta_{2,d}^m  \FCI_{t-1} + \theta_{3,d}^m X_{t-1},
\end{align}
jointly with the VaR regression in \eqref{eq:VaR GaR}.
The conditional MES $\E_t[Y_{t,d}\mid X_t\geq\VaR_{t,\beta}]$ quantifies the expected (negative) GDP growth of country $d$ given that the economic region $X_t$ is in distress and, hence, measures the systemic vulnerability of economy $d$. 
The MES regression in \eqref{eq:GDPmodel} therefore assesses how this systemic vulnerability is affected by past economic and financial conditions. 
By virtue of \eqref{eq:ES decomp}, the three MES regressions in \eqref{eq:GDPmodel} provide a complete dissection of the aggregated growth vulnerabilities captured by the \citet{ABG19} regression in \eqref{eq:ES GaR}.

\begin{figure}[tb]
	\centering
	\includegraphics[width=\linewidth]{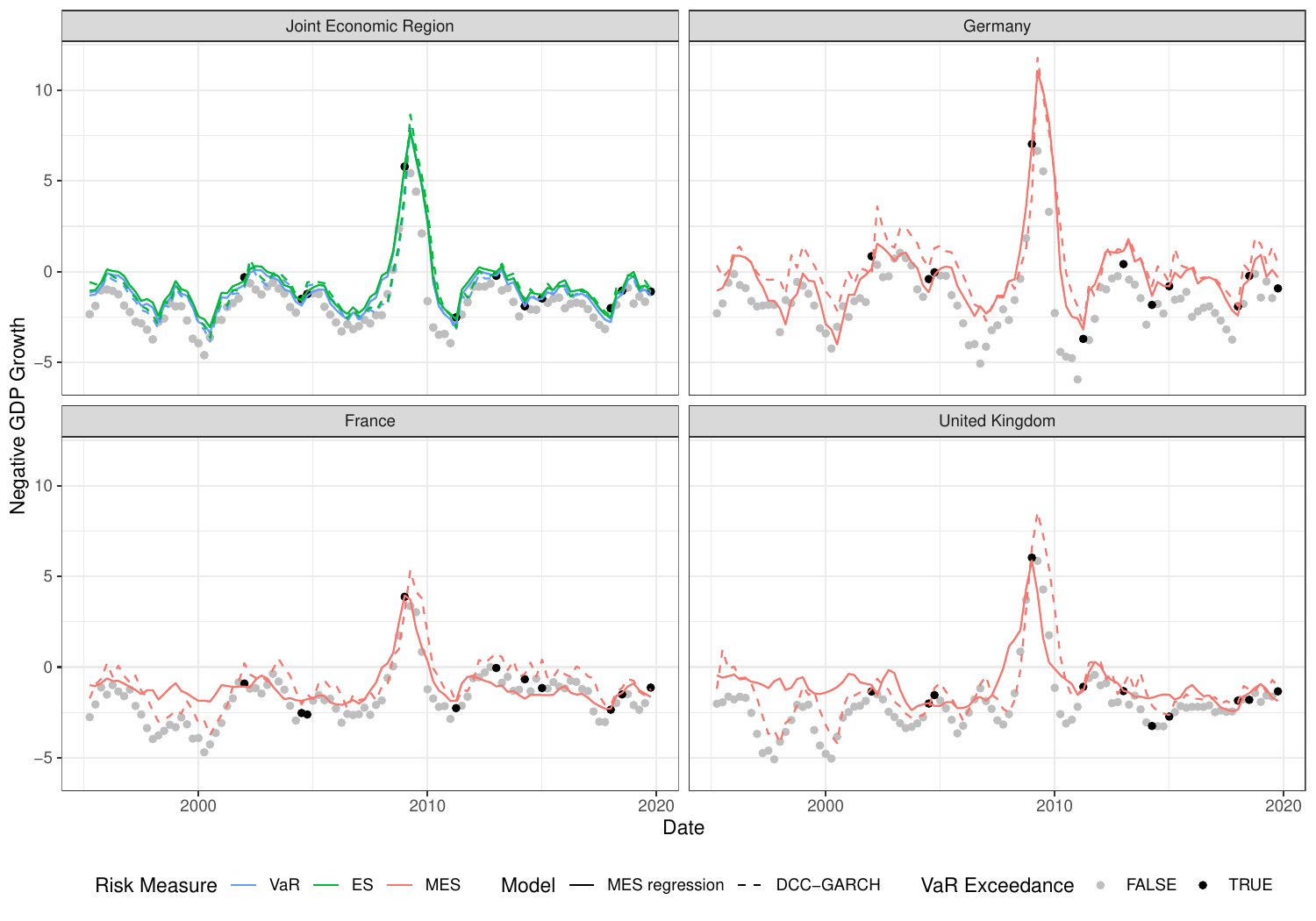}
	\caption{Estimated VaR and ES sample paths at level $\beta=0.9$ for our regression (solid lines) and a Gaussian DCC--GARCH model (dashed lines)  for the negative GDP growth of the joint economic region together with the estimated MES for Germany, France and the United Kingdom in the respective panels. 	
	Returns on quarters with a exceedance of the estimated conditional VaR (of our regression) are displayed as black points while all other returns are shown in gray color.}
	\label{fig:MES_GDP}
\end{figure}

The lower panel of Table~\ref{tab:GDP_MES} displays the MES parameter estimates together with their standard errors.
Four out of six MES slope parameters, which equal the marginal effect under adverse conditions (see Remark~\ref{rem:MarginalEffect}), are significant at the $5\%$-level and we find the striking result that the explanatory power of the different variables varies strongly between the countries.
This is in stark contrast to the classic (VaR, ES) regression of \citet{ABG19}, where financial and economic conditions are equally important determinants of growth risk.
For instance, the MES of the UK mainly loads on the financial conditions indicator, which may be explained by a high dependence on the financial sector of the UK with London as one of the most important financial hubs in the world.
In contrast, the systemic vulnerability of the German economy is primarily driven by past economic conditions, which may be due to its strong export-oriented manufacturing sector.

For weights $\vw_t $ being constant over time, the true ES regression parameters are given by the weighted average of the true MES regression parameters; see \eqref{eq:ES decomp}.
While this relationship is blurred in finite samples and due to the time-varying weights in this application (the weights for Germany vary between 0.4 and 0.45 and the ones for France and the UK between 0.26 and 0.31 in our sample),
it can still be recognized in the parameter estimates in Table \ref{tab:GDP_MES}. 
E.g., for the slope coefficient pertaining to $\FCI$, we have $0.663 \approx 0.7757 = 0.426 \cdot 0.352 + 0.291 \cdot 0.818 + 0.283 \cdot 1.370$, where 0.426, 0.291, and 0.283 are the average weights for Germany, France, and the UK.

Figure~\ref{fig:MES_GDP} plots the implied VaR, ES and MES estimates for the regressions in \eqref{eq:VaR GaR}--\eqref{eq:ES GaR} and \eqref{eq:GDPmodel} for the joint economic region together with the three individual economies with solid lines.
For comparison, the dashed lines show the implied VaR, ES and MES estimates of a four-dimensional DCC--GARCH model with Gaussian innovations \citep{Eng02}.
The display shows very plausible model fits of both methods, which is especially remarkable given that we estimate risk measure models in the tails with only $n=99$ observations.

\begin{figure}[tb]
	\centering
	\includegraphics[width=\linewidth]{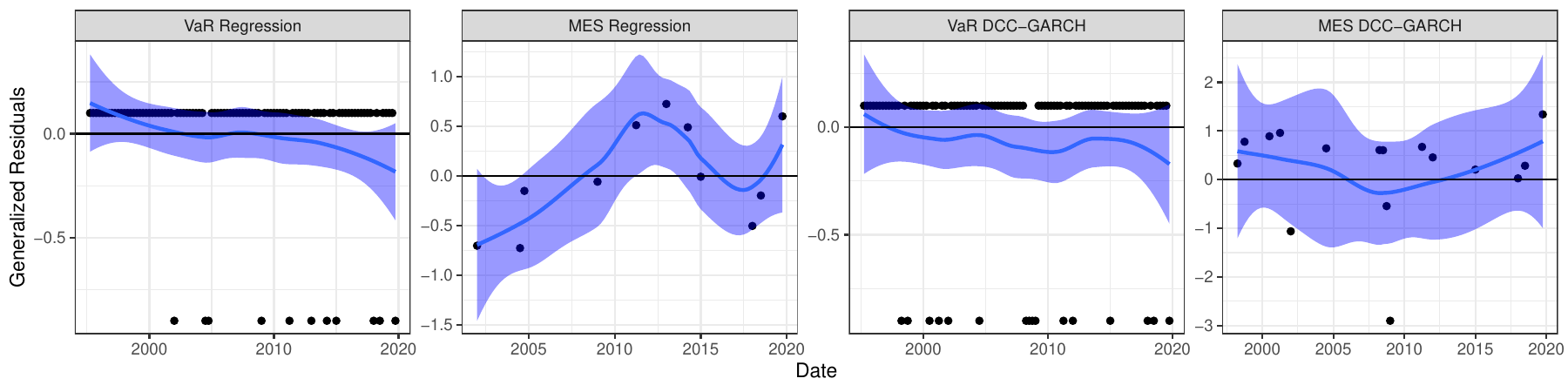}
	\includegraphics[width=\linewidth]{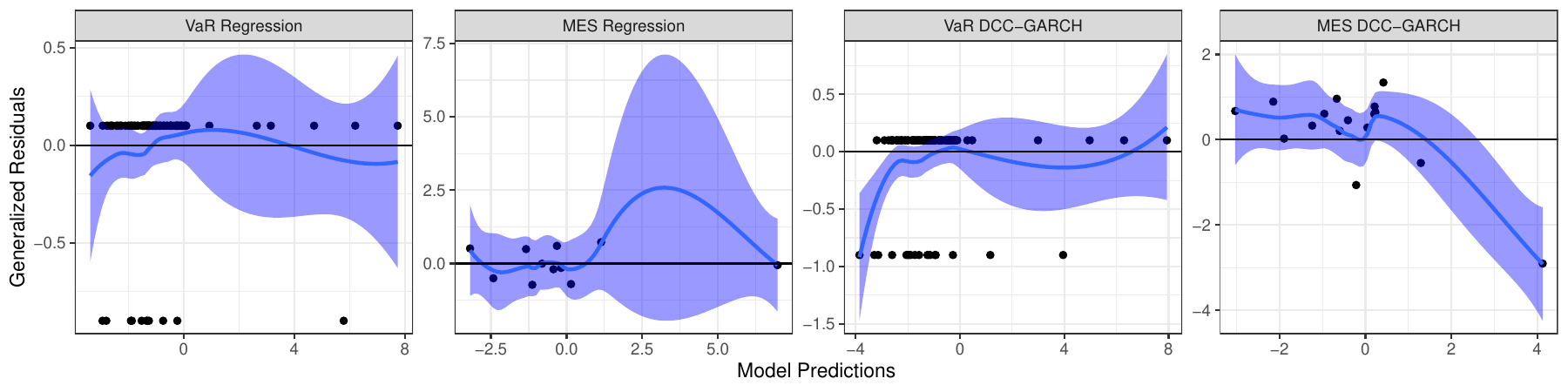}
	\caption{In-sample model diagnostics of the VaR and MES predictions of our regression and a Gaussian DCC--GARCH model for $\beta = 0.9$. Here, $Y_t$ is the negative GDP growth of Germany and $X_t$ the joint economic region.
	The generalized residuals are plotted and non-parametrically regressed (in blue) against time (upper row) and the model predictions (lower row).}
	\label{fig:ModelDiagnostics_GDP_MES_DCC}
\end{figure}

Figure~\ref{fig:ModelDiagnostics_GDP_MES_DCC} further illustrates the good in-sample model fit by plotting (VaR and MES) generalized model residuals (see Section~\ref{sec:ApplMESReg}) against time and the model predictions themselves.
As the zero line is barely excluded from the confidence bands, we obtain (almost) no evidence against model misspecification for all models.
However, due to the limited sample size, this was to be expected.

To summarize, while in the standard GaR literature growth vulnerabilities are the main object of study, our MES regressions allow to decompose these into \textit{systemic} vulnerabilities of the different countries, therefore allowing a more fine-grained picture to emerge.
We mention that our MES regressions may also be fruitfully used for single countries, such as the US. 
In such cases, one may decompose total growth into different economic sectors (e.g., manufacturing, service and agricultural) to shed more light on growth vulnerabilities in these different parts of the economy.
We leave such explorations for future study.

\section{Conclusion}
\label{sec:Conclusion}

Regressions typically model some functional of the \textit{scalar} outcome $Y_t$ (say, the mean in mean regressions or quantiles in quantile regressions) as a function of covariates. 
In contrast, in this paper we model a functional of the \textit{bivariate} outcome $(X_t,Y_t)^\prime$ (viz.~the MES) as a function of covariates.
This leads to what we term \emph{regressions under adverse conditions}, where the mean of $Y_t$, given the adverse condition that a distress variable $X_t$ exceeds its conditional quantile, is related to covariates.
We develop a two-step estimator and prove its asymptotic normality under general mixing conditions allowing for cross-sectional and time series applications, where we overcome the technical difficulty of having a discontinuous second-step objective function.
We also propose feasible inference that works well as shown in our simulations.
Our estimator particularly improves upon a recently proposed \textit{ad hoc} estimation method of \citet{BRS20, BDP20}, which we illustrate in simulations and an empirical application.


While our regression target---the conditional expectation of $Y_t$ given that $X_t$ exceeds its conditional quantile---is originally known as the MES in the (financial) systemic risk literature, we illustrate its value in three diverse applications focusing on the relation between systemic risk and asset price bubbles \citep{BRS20}, dissecting GDP growth risks among countries \citep{ABG19}, and in Appendix \ref{sec:add appl} by analyzing portfolio risk allocation \citep{Bea18}.

Moreover, we envision a multitude of further possible fields of applications such as in sensitivity analysis \citep[Sec.~1.2]{Asimit2019} or micro-econometric applications, where the MES could, e.g., be interpreted as the mean (personal or firm) income given the stress event of high inflation rates, or other challenging economic settings.
We also refer to \citet{HTZ23}, \citet{WTH24} and the references therein for the growing interest in quantile-truncated expectations in the micro-econometric literature.

A particular advantage of the MES as our regression target is its additivity \citep{CIM13} that---similar to our Section \ref{sec:GDP_MES}---allows for ES risk decompositions into different (additive) sub-components such as economic sectors.
In a similar fashion, recently analyzed inflation risks \citep{LL24} could be decomposed into product categories.
As such disaggregations are only possible by the additivity of the ES and MES risk measures as conditional expectations, these considerations provide further arguments for use of the ES over the VaR (and the MES over the CoVaR) in the recent debate of which risk measure is most suitable in practice \citep{EKT15, BCBS19, WangZitikis2021}.

\section*{Supplementary Materials}

The Supplementary Material contains the proof of Theorem~\ref{thm:an}, and shows how to estimate the asymptotic variance-covariance matrix of Theorem~\ref{thm:an} consistently.
It also verifies our main assumptions for the linear model of Example~\ref{ex:1} and provides an additional empirical application to portfolio optimization.

Replication material for the simulations and the applications in Section \ref{sec:EmpiricalApplication} (together with the data) is available under \href{https://github.com/TimoDimi/replication_MES}{https://github.com/TimoDimi/replication\_MES}.
It draws on the corresponding open source package \texttt{SystemicRisk} \citep{package_SystemicRisk} implemented in the statistical software \texttt{R} \citep{R2022}. 


\section*{Acknowledgments}

We would like to thank the AE, two anonymous referees, Markus Brunnermeier, Tobias Fissler, Christoph Hanck, Simon Rother and Isabel Schnabel for their insightful comments that significantly improved the quality of the paper.
We further thank Simon Rother for providing us with the bubble indicators used in \citet{BRS20}.

%
%
%
%
%

\section*{Funding}

The authors gratefully acknowledge support of the Deutsche Forschungsgemeinschaft (DFG, German Research Foundation) through grants 502572912 (Timo Dimitriadis), 460479886 and 531866675 (Yannick Hoga).

\pagebreak 
\appendix
\renewcommand\appendixpagename{Supplementary Material}
\appendixpage

In this Supplementary Material, we prove Theorem~\ref{thm:an} in Section~\ref{sec:thm1}.
In Section~\ref{sec:thm3}, we propose an estimator of the asymptotic variance-covariance matrix of Theorem~\ref{thm:an} and show its consistency.
Section~\ref{sec:ex verif} verifies the main Assumptions~\ref{ass:1}--\ref{ass:11} for the linear model of Example~\ref{ex:1}.
Section~\ref{sec:RelationOLSXCovariate} discusses the relation of our MES regressions to OLS regressions that use $X_t$ as an explanatory variable.
Section~\ref{sec:add appl} illustrates the use of MES regressions in real-world portfolio optimization and Section~\ref{sec:AddEmpResults} provides additional results for the applications in Section \ref{sec:EmpiricalApplication} of the main article.\\

\section*{Notation}

We use the following notational conventions throughout this supplement. The probability space where all random variables are defined on is $\big(\Omega, \mathcal{F}, \P\big)$. We denote by $C>0$ a large positive constant that may change from line to line. If not specified otherwise, all convergences are to be understood with respect to $n\to\infty$. We also write $\E_{t}[\cdot]=\E[\cdot\mid\mathcal{F}_t]$ and $\P_{t}\{\cdot\}=\P\{\cdot\mid\mathcal{F}_t\}$ for short. We exploit without further mention that the Frobenius norm is submultiplicative, such that $\norm{\mA\mB}\leq\norm{\mA}\cdot\norm{\mB}$ for conformable matrices $\mA$ and $\mB$. As is common, we define $\norm{X}_r=\big\{\E|X|^{r}\big\}^{1/r}$ to be the $L^{r}$-norm for $r\geq1$. For a real-valued, differentiable function $f(\cdot)$, we denote the $j$-th element of the gradient $\nabla f(\cdot)$ by $\nabla_j f(\cdot)$.

\section{Proof of Theorem~\ref{thm:an}}
\label{sec:thm1}

We split the proof of Theorem~\ref{thm:an} into four parts presented in Sections~\ref{Consistency of VaR and MES Parameter Estimators}--\ref{Joint Asymptotic Normality of the VaR and MES Parameter Estimator}.
Section~\ref{Consistency of VaR and MES Parameter Estimators} proves the preliminary results of consistency of $\widehat{\vtheta}_n^v$ and $\widehat{\vtheta}_n^m$. Using this, Section~\ref{Asymptotic Normality of the VaR Parameter Estimator} shows the asymptotic normality of $\widehat{\vtheta}_n^v$ and Section~\ref{Asymptotic Normality of the MES Parameter Estimator} that of $\widehat{\vtheta}_n^m$. Finally, Section~\ref{Joint Asymptotic Normality of the VaR and MES Parameter Estimator} combines these two results to derive the joint asymptotic normality of $\widehat{\vtheta}_n^{v}$ and $\widehat{\vtheta}_n^{m}$. 

\subsection{Consistency of VaR and MES Parameter Estimators}\label{Consistency of VaR and MES Parameter Estimators}

\renewcommand{\theequation}{A.\arabic{equation}}	
\setcounter{equation}{0}

The purpose of this subsection is to establish the consistency of $\widehat{\vtheta}_{n}^{v}$ and $\widehat{\vtheta}_{n}^{m}$. 
We will use these consistency results in the asymptotic normality proofs in the subsequent sections.

\begin{prop}\label{prop:cons}
	Suppose Assumptions~\ref{ass:1}--\ref{ass:7} hold. Then, as $n\to\infty$, $\widehat{\vtheta}_{n}^{v}\overset{\P}{\longrightarrow}\vtheta_0^{v}$ and $\widehat{\vtheta}_{n}^{m}\overset{\P}{\longrightarrow}\vtheta_0^{m}$.
\end{prop}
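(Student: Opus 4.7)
The plan is to apply the standard extremum-estimator consistency template of \citet{NeweyMcFadden1994} (Theorem 2.1) separately to each step, chaining the two results via the uniform convergence of the second-step criterion. Throughout, Assumption~\ref{ass:1}~\ref{it:1ii} gives compactness, Assumption~\ref{ass:4}~\ref{it:4i}--\ref{it:4ii} delivers continuity of $v_t(\cdot)$ and $m_t(\cdot)$ in the parameters, and a uniform law of large numbers for stationary $\beta$-mixing sequences (Assumptions~\ref{ass:2} and~\ref{ass:5}) supplies the probabilistic backbone.

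First I would handle $\widehat{\vtheta}_{n}^{v}$. Setting $Q_n^{v}(\vtheta^v)=\tfrac{1}{n}\sum_{t=1}^{n}S^{\VaR}\bigl(v_t(\vtheta^v),X_t\bigr)$ and $Q^{v}(\vtheta^v)=\E[S^{\VaR}(v_t(\vtheta^v),X_t)]$, I verify (a) continuity of $Q^{v}$ on $\mTheta^v$ via dominated convergence using the envelope $V(\mZ_t)$ of Assumption~\ref{ass:4}~\ref{it:4ii} combined with the integrability of $X_t$ and $V(\mZ_t)$ from Assumption~\ref{ass:5}; (b) uniform convergence $\sup_{\vtheta^v\in\mTheta^v}|Q_n^{v}(\vtheta^v)-Q^{v}(\vtheta^v)|\toP 0$, which follows from a bracketing/ULLN argument for $\beta$-mixing sequences (the pinball loss is Lipschitz in $v$ with envelope controlled by $V(\mZ_t)$ and the class $\{\1_{\{X_t\le v_t(\vtheta^v)\}}\}_{\vtheta^v}$ indexed by a Euclidean parameter is of bounded VC-type under Assumption~\ref{ass:4}); and (c) identification, i.e., $\vtheta_0^v$ is the unique minimizer of $Q^{v}$. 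For the latter, writing $Q^{v}(\vtheta^v)-Q^{v}(\vtheta_0^v)=\E\bigl[\int_{v_t(\vtheta_0^v)}^{v_t(\vtheta^v)}\bigl(F_t^{X}(u)-\beta\bigr)\,\D u\bigr]$ and using the positivity of $f_t^{X}$ on the relevant range (Assumption~\ref{ass:3}~\ref{it:3i}--\ref{it:3ii}) together with the uniqueness clause of Assumption~\ref{ass:1}~\ref{it:1i} yields strict positivity for $\vtheta^v\neq\vtheta_0^v$.

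Next, for $\widehat{\vtheta}_{n}^{m}$ I work with the \emph{joint} criterion $Q_n^{m}(\vtheta^v,\vtheta^m)=\tfrac{1}{n}\sum_{t=1}^{n}\tfrac{1}{2}\1_{\{X_t>v_t(\vtheta^v)\}}[Y_t-m_t(\vtheta^m)]^2$ and its mean $Q^{m}(\vtheta^v,\vtheta^m)$. The key step is to establish $\sup_{(\vtheta^v,\vtheta^m)\in\mTheta^v\times\mTheta^m}|Q_n^{m}(\vtheta^v,\vtheta^m)-Q^{m}(\vtheta^v,\vtheta^m)|\toP 0$; the envelope $[|Y_t|+M(\mZ_t)]^{2}$ is integrable by Assumption~\ref{ass:5} (with the moment-memory tradeoff covering the $\beta$-mixing rate), and the triangular class spanned by the truncation indicator remains Euclidean in $\vtheta^v$. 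Combined with the first-step consistency $\widehat{\vtheta}_n^v\toP\vtheta_0^v$ and the continuity of $Q^m$ in $\vtheta^v$ (which uses Assumption~\ref{ass:3}~\ref{it:3iii}--\ref{it:3iv} to bound the effect of shifting the truncation level), this gives $Q_n^{m}(\widehat{\vtheta}_n^v,\cdot)\toP Q^{m}(\vtheta_0^v,\cdot)$ uniformly on $\mTheta^m$. Identification then follows from a completion-of-squares decomposition: conditioning on $\mathcal{F}_t$, the cross-term vanishes by the very definition $\E_t[Y_t\1_{\{X_t>v_t(\vtheta_0^v)\}}]=(1-\beta)m_t(\vtheta_0^m)$, leaving
\begin{equation*}
  Q^{m}(\vtheta_0^v,\vtheta^m)-Q^{m}(\vtheta_0^v,\vtheta_0^m)=\tfrac{1}{2}\,\E\bigl[\1_{\{X_t>v_t(\vtheta_0^v)\}}\bigl(m_t(\vtheta_0^m)-m_t(\vtheta^m)\bigr)^{2}\bigr]\ge 0,
\end{equation*}
which is strictly positive for $\vtheta^m\neq\vtheta_0^m$ by Assumption~\ref{ass:1}~\ref{it:1i} (together with the positive-density hypothesis of Assumption~\ref{ass:3}~\ref{it:3i}, which makes $\{X_t>v_t(\vtheta_0^v)\}$ a conditioning event with probability $1-\beta>0$). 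A standard argmin-continuity/consistency lemma then delivers $\widehat{\vtheta}_n^m\toP\vtheta_0^m$.

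The main obstacle is the second-step uniform convergence, because the truncation indicator $\1_{\{X_t>v_t(\vtheta^v)\}}$ is discontinuous in $\vtheta^v$, so the ordinary smoothness arguments of \citet[Lem.~2.4]{NeweyMcFadden1994} do not apply directly. I would resolve this either by a bracketing-entropy bound for the product class $\{\1_{\{X_t>v_t(\vtheta^v)\}}[Y_t-m_t(\vtheta^m)]^2\}_{(\vtheta^v,\vtheta^m)}$ using the Lipschitz-in-$v$ structure of $F_t^{X}$ from Assumption~\ref{ass:3}~\ref{it:3ii}, or by a two-step approach that first bounds $|Q_n^{m}(\widehat{\vtheta}_n^v,\vtheta^m)-Q_n^{m}(\vtheta_0^v,\vtheta^m)|$ uniformly through a change-of-variables controlled by $V_1(\mZ_t)$ together with Assumption~\ref{ass:7} (which limits the number of observations on any candidate quantile line), and then uses the standard ULLN only at $\vtheta^v=\vtheta_0^v$.
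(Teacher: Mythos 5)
Your proposal is correct and follows essentially the same route as the paper's proof: Newey--McFadden's Theorem~2.1 applied to each step, with the second-step argument split into a uniform law of large numbers over the joint parameter space plus continuity of the population criterion in $\vtheta^v$ evaluated at the plugged-in first-step estimate (the paper's $A_{1n}+B_{1n}$ decomposition). The only cosmetic differences are that you derive identification directly (Knight's identity for the quantile part, completion of squares for the MES part) where the paper cites the strict multi-objective consistency result of Fissler and Hoga, and you invoke bracketing/VC-type entropy for the uniform convergence where the paper uses Andrews' type IV classes together with the Doukhan--Massart--Rio invariance principle.
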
	

The proof of Proposition~\ref{prop:cons} relies on Lemmas~\ref{lem:se VaR}--\ref{lem:Type IV}, which can be found in Section~\ref{SR for cons} (including their proofs). A careful reading of the proof of Proposition~\ref{prop:cons} shows that consistency actually holds under a strict subset of the conditions of that proposition. However, for the sake of brevity, we refrain from a more precise statement. We prove both convergences in Proposition~\ref{prop:cons} by verifying conditions (i)--(iv) of Theorem~2.1 in \citet{NeweyMcFadden1994}. First, we prove $\widehat{\vtheta}_{n}^{v}\overset{\P}{\longrightarrow}\vtheta_0^{v}$:

\begin{proof}[{\textbf{Proof of Proposition~\ref{prop:cons} ($\widehat{\vtheta}_n^v\overset{\P}{\longrightarrow}\vtheta_0^{v}$):}}]
	Define
	\begin{align*}
		Q_0^{v}(\vtheta^v) &= \E\Big[S^{\VaR}\big(v_t(\vtheta^v), X_t\big)\Big],\\
		\widehat{Q}_n^v(\vtheta^v)& = \frac{1}{n}\sum_{t=1}^{n} S^{\VaR}\big(v_t(\vtheta^v), X_t\big),
	\end{align*}
	where $Q_0^{v}(\vtheta^v)$ does not depend on $t$ due to stationarity of $\big\{(X_t,\mZ_t^\prime)^\prime\big\}$ (see Assumption~\ref{ass:2}).
	
	For condition (i) of Theorem~2.1 in \citet{NeweyMcFadden1994}, we have to show that $Q_0^{v}(\vtheta^v)$ is uniquely minimized at $\vtheta^v=\vtheta_0^v$. 
	By Assumption~\ref{ass:3}~\ref{it:3i} and from \citet[Theorem~4.2]{FH24} it follows that $\E_{t}\big[S^{\VaR}\big(\cdot, X_t\big)\big]$ is uniquely minimized at $\VaR_{t,\b}$, which equals $v_t(\vtheta_0^v)$ under correct specification. 
	By Assumption~\ref{ass:1}~\ref{it:1i}, $\vtheta_0^v$ is then the unique minimizer of $\E_{t}\big[S^{\VaR}\big(v_t(\cdot), X_t\big)\big]$. By the law of the iterated expectations (LIE) we have that
	\[
	\E\Big[S^{\VaR}\big(v_t(\vtheta^v), X_t\big)\Big] = \E\bigg\{\E_{t}\Big[S^{\VaR}\big(v_t(\vtheta^v), X_t\big)\Big]\bigg\},
	\]
	which implies that $\vtheta_0^v$ is also the unique minimizer of $Q_0^{v}(\cdot)$, as desired.
	
	The compactness requirement of condition (ii) of Theorem~2.1 in \citet{NeweyMcFadden1994} is immediate from Assumption~\ref{ass:1}~\ref{it:1ii}.
	
	Condition (iii), i.e., the continuity of $\vtheta^v\mapsto Q_0^{v}(\vtheta^v)$, holds because for any $\vtheta^v\in\mTheta^v$,
	\begin{align*}
		\lim_{\vtheta\to\vtheta^v}Q_0^v(\vtheta) &= \lim_{\vtheta\to\vtheta^v}\E\Big[S^{\VaR}\big(v_t(\vtheta), X_t\big)\Big]\\
		&= \E\Big[\lim_{\vtheta\to\vtheta^v}S^{\VaR}\big(v_t(\vtheta), X_t\big)\Big]\\
		&= \E\Big[S^{\VaR}\big(v_t(\vtheta^v), X_t\big)\Big]\\
		& =Q_0^v(\vtheta^v),
	\end{align*}
	where the second step follows from the dominated convergence theorem (DCT), and the third step from the continuity of the map $v\mapsto S^{\VaR}(v,x)$ together with the continuity of $v_t(\cdot)$ (by Assumption~\ref{ass:4}~\ref{it:4i}). Thus, $\vtheta^v\mapsto Q_0^v(\vtheta^v)$ is continuous (by the sequential criterion). We may apply the DCT here, because $S^{\VaR}\big(v_t(\vtheta), X_t\big)$ is dominated by 
	\begin{equation}\label{eq:A2}
		\big|S^{\VaR}\big(v_t(\vtheta), X_t\big)\big|\leq\big|X_t - v_t(\vtheta)\big|\leq \big|X_t\big| + \big|v_t(\vtheta)\big|\leq |X_t| + V(\mZ_t),
	\end{equation}
	where the final term is integrable due to Assumption~\ref{ass:5}.
	
	It remains to verify condition (iv) of Theorem~2.1, i.e., the uniform law of large numbers (ULLN)
	\begin{equation}\label{eq:(A.2++)}
		\sup_{\vtheta^v\in\mTheta^v}\big|\widehat{Q}_n^v(\vtheta^v)-Q_0^{v}(\vtheta^v)\big|=o_{\P}(1).
	\end{equation}
	To task this, we use Theorem~21.9 of \citet{Dav94}, which shows that the following conditions are sufficient for \eqref{eq:(A.2++)}:
	\begin{enumerate}
		\item[(a)] $\widehat{Q}_n^v(\vtheta^v)-Q_0^{v}(\vtheta^v)=o_{\P}(1)$ for all $\vtheta^v\in\mTheta^v$;
		
		\item[(b)] $\big\{\widehat{Q}_n^v(\cdot)-Q_0^{v}(\cdot)\big\}$ is stochastically equicontinuous (s.e.).
	\end{enumerate}
	We first prove (a), for which we verify the conditions of the law of large numbers (LLN) of \citet[Theorem 3.34]{Whi01}. First, by \citet[Eq.~(1.11)]{Bra05} and Assumption~\ref{ass:2}, $(X_t,\mZ_t^\prime)^\prime$ is $\alpha$-mixing of size $-r/(r-1)$. As a measurable function of $(X_t,\mZ_t^\prime)^\prime$, $S^{\VaR}\big(v_t(\vtheta^v), X_t\big)$ is also $\alpha$-mixing of size $-r/(r-1)$ \citep[Theorem~3.49]{Whi01}. Furthermore, $S^{\VaR}\big(v_t(\vtheta^v), X_t\big)$ is stationary as a measurable function of the stationary $(X_t,\mZ_t^\prime)^\prime$. Combining these two facts with \citet[Proposition 3.44]{Whi01} implies ergodicity of $S^{\VaR}\big(v_t(\vtheta^v), X_t\big)$.
	
	Second, by \eqref{eq:A2},
	\begin{equation*}
		\E\big|S^{\VaR}\big(v_t(\vtheta^v), X_t\big)\big| \leq \E\big|X_t\big| + \E\big| V(\mZ_t)\big| \leq 2K<\infty.
	\end{equation*}
	Combining this with the ergodicity of $S^{\VaR}\big(v_t(\vtheta^v), X_t\big)$, the pointwise LLN of (a) follows from the ergodic theorem \citep[Theorem 3.34]{Whi01}.
	
	The s.e.~requirement of (b) directly follows from Lemma~\ref{lem:se VaR} (in Subsection~\ref{SR for cons} below) and \citet[Lemma~2 (a)]{And92}, such that the desired ULLN follows.
	
	Having verified all conditions, $\widehat{\vtheta}_{n}^{v}\overset{\P}{\longrightarrow}\vtheta_0^{v}$ now follows from Theorem~2.1 in \citet{NeweyMcFadden1994}.
\end{proof}

\begin{proof}[{\textbf{Proof of Proposition~\ref{prop:cons} ($\widehat{\vtheta}_n^m\overset{\P}{\longrightarrow}\vtheta_0^{m}$):}}]
	We now show that $\widehat{\vtheta}_{n}^{m}\overset{\P}{\longrightarrow}\vtheta_0^{m}$, which follows along similar lines as the proof of $\widehat{\vtheta}_{n}^{v}\overset{\P}{\longrightarrow}\vtheta_0^{v}$. 
	Define
	\begin{align*}
		Q_0^{m}(\vtheta^m) &= \E\Big[S^{\MES}\big((v_t(\vtheta_0^v), m_t(\vtheta^m))^\prime, (X_t, Y_t)^\prime\big)\Big],\\
		\widehat{Q}_n^m(\vtheta^m)& = \frac{1}{n}\sum_{t=1}^{n} S^{\MES}\big((v_t(\widehat{\vtheta}_n^v), m_t(\vtheta^m))^\prime, (X_t, Y_t)^\prime\big),
	\end{align*}
	where we suppress the dependence on the VaR parameter for notational brevity.
	Once again, $Q_0^{m}(\vtheta^m)$ does not depend on $t$ by Assumption~\ref{ass:2}.
	
	For condition (i) of Theorem~2.1, we have to show that $Q_0^{m}(\vtheta^m)$ is uniquely minimized at $\vtheta^m=\vtheta_0^m$. 
	By Assumption~\ref{ass:3}~\ref{it:3i} and from \citet[Theorem~4.2]{FH24} it follows that $\E_{t}\big[S^{\MES}\big((v_t(\vtheta_0^v), \cdot)^\prime, (X_t, Y_t)^\prime\big)\big]$ is uniquely minimized at $\MES_{t,\b}$, which equals $m_t(\vtheta_0^m)$ under correct specification. 
	By Assumption~\ref{ass:1}~\ref{it:1i}, $\vtheta_0^m$ is then the unique minimizer of $\E_{t}\big[S^{\MES}\big((v_t(\vtheta_0^v), m_t(\cdot))^\prime, (X_t, Y_t)^\prime\big)\big]$. By the LIE we have that
	\[
	\E\Big[S^{\MES}\big((v_t(\vtheta_0^v), m_t(\vtheta^m))^\prime, (X_t, Y_t)^\prime\big)\Big] = \E\bigg\{\E_{t}\Big[S^{\MES}\big((v_t(\vtheta_0^v), m_t(\vtheta^m))^\prime, (X_t, Y_t)^\prime\big)\Big]\bigg\},
	\]
	which implies that $\vtheta_0^m$ is also the unique minimizer of $Q_0^{m}(\cdot)$, as desired.
	
	The compactness requirement of condition (ii) of Theorem~2.1 in \citet{NeweyMcFadden1994} is immediate from Assumption~\ref{ass:1}~\ref{it:1ii}.
	
	Condition (iii), i.e., the continuity of $\vtheta^m\mapsto Q_0^{m}(\vtheta^m)$, holds because for any $\vtheta^m\in\mTheta^m$,
	\begin{align*}
		\lim_{\vtheta\to\vtheta^m}Q_0^m(\vtheta) &= \lim_{\vtheta\to\vtheta^m}\E\Big[S^{\MES}\big((v_t(\vtheta_0^v), m_t(\vtheta))^\prime, (X_t, Y_t)^\prime\big)\Big]\\
		&= \E\Big[\lim_{\vtheta\to\vtheta^m}S^{\MES}\big((v_t(\vtheta_0^v), m_t(\vtheta))^\prime, (X_t, Y_t)^\prime\big)\Big]\\
		&= \E\Big[S^{\MES}\big((v_t(\vtheta_0^v), m_t(\vtheta^m))^\prime, (X_t, Y_t)^\prime\big)\Big]\\
		& =Q_0^m(\vtheta^m),
	\end{align*}
	where the second step follows from the DCT, and the third step from the continuity of the map $m\mapsto S^{\MES}\big((v,m)^\prime,(x, y)^\prime\big)$ together with continuity of $m_t(\cdot)$ (by Assumption~\ref{ass:4}~\ref{it:4i}). Thus, $\vtheta^m\mapsto Q_0^m(\vtheta^m)$ is continuous (by the sequential criterion). Note that we may apply the DCT, because $S^{\MES}\big((v_t(\vtheta_0^v), m_t(\vtheta^m))^\prime, (X_t, Y_t)^\prime\big)$ is dominated by 
	\begin{align}
		\big|S^{\MES}\big((v_t(\vtheta_0^v), m_t(\vtheta^m))^\prime, (X_t, Y_t)^\prime\big)\big|&=\frac{1}{2}\big|\1_{\{X_t> v_t(\vtheta_0^v)\}}\big(Y_t-m_t(\vtheta^m)\big)^2\big|\notag\\
		&\leq |Y_t|^2 +  |m_t(\vtheta^m)|^2\notag\\
		&\leq  Y_t^2  +  M^2(\mZ_t),\label{eq:(8.1)}
	\end{align}
	where the first inequality follows from the $c_r$-inequality \citep[e.g.,][Theorem~9.28]{Dav94}, and the integrability of the variables in \eqref{eq:(8.1)} follows from Assumption~\ref{ass:5}.
	
	It remains to verify condition (iv) of Theorem~2.1, i.e., the ULLN
	\[
	\sup_{\vtheta^m\in\mTheta^m}\big|\widehat{Q}_n^m(\vtheta^m)-Q_0^{m}(\vtheta^m)\big|=o_{\P}(1).
	\]
	Since $Q_0^m(\vtheta^m)=\frac{1}{n}\sum_{t=1}^{n}\E\Big[S^{\MES}\big((v_t(\vtheta_0^v), m_t(\vtheta^m))^\prime, (X_t, Y_t)^\prime\big)\Big]$ (from Assumption~\ref{ass:2}), we obtain the bound
	\begin{align*}
		&\sup_{\vtheta^m\in\mTheta^m}\big|\widehat{Q}_n^m(\vtheta^m)-Q_0^{m}(\vtheta^m)\big| \\
		&\leq \sup_{\vtheta^m\in\mTheta^m}\bigg|\frac{1}{n}\sum_{t=1}^{n}S^{\MES}\big((v_t(\widehat{\vtheta}_n^v), m_t(\vtheta^m))^\prime, (X_t, Y_t)^\prime\big) - \E\Big[S^{\MES}\big((v_t(\widehat{\vtheta}_n^v), m_t(\vtheta^m))^\prime, (X_t, Y_t)^\prime\big)\Big]\bigg|\\
		&\hspace{0.5cm} + \sup_{\vtheta^m\in\mTheta^m}\bigg|\frac{1}{n}\sum_{t=1}^{n}\E\Big[S^{\MES}\big((v_t(\widehat{\vtheta}_n^v), m_t(\vtheta^m))^\prime, (X_t, Y_t)^\prime\big)\Big] - \E\Big[S^{\MES}\big((v_t(\vtheta_0^v), m_t(\vtheta^m))^\prime, (X_t, Y_t)^\prime\big)\Big]\bigg|\\
		&=: A_{1n} + B_{1n}.
	\end{align*}
	Note that 
	\[
	\Big|S^{\MES}\big((v_t(\widehat{\vtheta}_n^v), m_t(\vtheta^m))^\prime, (X_t, Y_t)^\prime\big)\Big|\leq Y_t^2  +  M^2(\mZ_t)
	\]
	by similar arguments leading to \eqref{eq:(8.1)}, such that $\E\Big[S^{\MES}\big((v_t(\widehat{\vtheta}_n^v), m_t(\vtheta^m))^\prime, (X_t, Y_t)^\prime\big)\Big]$ exists by virtue of Assumption~\ref{ass:5}.
	We show in turn that $A_{1n}=o_{\P}(1)$ and $B_{1n}=o_{\P}(1)$. First, we show that $A_{1n}=o_{\P}(1)$. From $\widehat{\vtheta}_n^v\overset{\P}{\longrightarrow}\vtheta_0^v$ (from the first part of this proof) and Assumption~\ref{ass:1} \ref{it:1ii}, $(\widehat{\vtheta}_n^v,\vtheta^m)^\prime\in\mTheta$ with probability approaching one (w.p.a.~1) as $n\to\infty$, such that w.p.a.~1,
	\[
	A_{1n}\leq\sup_{\vtheta\in\mTheta}\bigg|\frac{1}{n}\sum_{t=1}^{n}\Big\{S_t^{\MES}(\vtheta) - \E\big[S_t^{\MES}(\vtheta)\big]\Big\}\bigg|,
	\]
	where $S_t^{\MES}(\vtheta)$ is defined in Lemma~\ref{lem:Type IV} in Section~\ref{SR for cons} below.
	The arguments used to show that the right-hand side is $o_{\P}(1)$ resemble those of \citet[Proof of Lemma~4]{DLS23}. By Lemma~\ref{lem:Type IV}, the class of functions $S_t^{\MES}(\vtheta)$ forms a type IV class with index $p=2r$ and, hence, satisfies ``Ossiander's $L^{2r}$-entropy condition'' by Theorem~5 of \citet{And94} (with ``$L^{2r}$-envelope'' given by the supremum). 
	
	Moreover, by the $c_r$-inequality and Assumption~\ref{ass:5},
	\begin{align*}
		\E\bigg[\sup_{\vtheta\in\mTheta}\big|S_t^{\MES}(\vtheta)\big|^{2r}\bigg] &\leq C \E\bigg[\sup_{\vtheta\in\mTheta}\big|\1_{\{X_t> v_t(\vtheta^v)\}}\big\{Y_t-m_t(\vtheta^m)\big\}^{4r}\big|\bigg] \\
		&\leq C\E|Y_t|^{4r} + C\E\Big[\sup_{\vtheta^m\in\mTheta^m}m_t^{4r}(\vtheta^m)\Big]\\
		&\leq C\E|Y_t|^{4r} + C\E\big[M^{4r}(\mZ_t)\big]\\
		&\leq C<\infty.
	\end{align*}
	Also note from Assumption~\ref{ass:2} that $\sum_{t=1}^{\infty}t^{1/(r-1)}\beta(t)<\infty$ for the $\beta$-mixing coefficients $\beta(\cdot)$ of the $\{(X_t, Y_t, \mZ_t^\prime)^\prime\}$ and, hence, the same holds for the $\beta$-mixing coefficients of the $\{S_t^{\MES}(\vtheta)\}$, which are bounded by those of $\{(X_t, Y_t, \mZ_t^\prime)^\prime\}$ (by standard mixing inequalities used before).
	
	Therefore, Theorem~1 and Application~1 of \citet{DMR95} imply that a functional central limit theorem (FCLT) holds for $S_t^{\MES}(\vtheta)$, such that 
	\begin{equation}\label{eq:FCLT Thm 1}
		\sup_{\vtheta\in\mTheta}\bigg|\frac{1}{\sqrt{n}}\sum_{t=1}^{n}\Big\{S_t^{\MES}(\vtheta) - \E\big[S_t^{\MES}(\vtheta)\big]\Big\} - B^{(n)}(\vtheta)\bigg|=o_{\P}(1)
	\end{equation}
	for some sequence of Gaussian processes $B^{(n)}(\cdot)$ with a.s.~continuous sample paths and some covariance function $\Gamma(\cdot,\cdot)$. In particular, this implies that
	\[
	\sup_{\vtheta\in\mTheta}\bigg|\frac{1}{n}\sum_{t=1}^{n}\Big\{S_t^{\MES}(\vtheta) - \E\big[S_t^{\MES}(\vtheta)\big]\Big\}\bigg|=o_{\P}(1),
	\]
	whence $A_{1n}=o_{\P}(1)$ follows. 
	
	Next, we show that $B_{1n}=o_{\P}(1)$. For any $\varepsilon>0$ and $\delta>0$ it holds that
	\begin{align}
		\P\big\{B_{1n}>\varepsilon\big\} & \leq \P\Big\{B_{1n}>\varepsilon,\ \big\lVert\widehat{\vtheta}_n^{v}-\vtheta_0^v\big\rVert\leq\delta\Big\} + \P\Big\{\big\lVert\widehat{\vtheta}_n^{v}-\vtheta_0^v\big\rVert>\delta\Big\}\notag\\
		&\leq \P\Bigg\{\frac{1}{n}\sum_{t=1}^{n}\E\bigg[\sup_{\substack{\vtheta^{m}\in\mTheta^{m}\\ \Vert\vtheta^v-\vtheta_0^v\Vert\leq\delta}}\Big|S^{\MES}\big((v_t(\vtheta^v), m_t(\vtheta^m))^\prime, (X_t, Y_t)^\prime\big)\notag\\
		&\hspace{4cm}-S^{\MES}\big((v_t(\vtheta_0^v), m_t(\vtheta^m))^\prime, (X_t, Y_t)^\prime\big)\Big|\bigg]>\varepsilon\Bigg\} + o(1).\label{eq:tbb}
	\end{align}
	Define the $\mZ_{t}$-measurable quantities
	\begin{align*}
		\underline{\vtheta}^v &= \argmin_{\Vert\vtheta^v-\vtheta_0^v\Vert\leq\delta}v_t(\vtheta^v),\\
		\overline{\vtheta}^v  &= \argmax_{\Vert\vtheta^v-\vtheta_0^v\Vert\leq\delta}v_t(\vtheta^v),
	\end{align*}
	which exist by continuity of $v_t(\cdot)$ (see Assumption~\ref{ass:4}~\ref{it:4i}).
	For notational brevity and because it does not affect any of the subsequent arguments, we suppress the dependence of $\underline{\vtheta}^v$ and $\overline{\vtheta}^v$ on $t$.
	Use the $c_r$-inequality to derive that
	\begin{align*}
		&\sup_{\substack{\vtheta^{m}\in\mTheta^{m}\\ \Vert\vtheta^v-\vtheta_0^v\Vert\leq\delta}}\Big|S^{\MES}\big((v_t(\vtheta^v), m_t(\vtheta^m))^\prime, (X_t, X_t)^\prime\big)-S^{\MES}\big((v_t(\vtheta_0^v), m_t(\vtheta^m))^\prime, (X_t, X_t)^\prime\big)\Big|\\
		&=\frac{1}{2}\sup_{\substack{\vtheta^{m}\in\mTheta^{m}\\ \Vert\vtheta^v-\vtheta_0^v\Vert\leq\delta}}\big|\1_{\{X_t>v_t(\vtheta^v)\}}-\1_{\{X_t>v_t(\vtheta_0^v)\}}\big|\big\{Y_t - m_t(\vtheta^m)\big\}^2\\
		&\leq \sup_{\substack{\vtheta^{m}\in\mTheta^{m}\\ \Vert\vtheta^v-\vtheta_0^v\Vert\leq\delta}}\big|\1_{\{X_t>v_t(\vtheta^v)\}}-\1_{\{X_t>v_t(\vtheta_0^v)\}}\big|m_t^2(\vtheta^m) + \sup_{\substack{\vtheta^{m}\in\mTheta^{m}\\ \Vert\vtheta^v-\vtheta_0^v\Vert\leq\delta}}\big|\1_{\{X_t>v_t(\vtheta^v)\}}-\1_{\{X_t>v_t(\vtheta_0^v)\}}\big|Y_t^2\\
		&\leq \big|\1_{\{X_t>v_t(\overline{\vtheta}^v)\}}-\1_{\{X_t>v_t(\underline{\vtheta}^v)\}}\big|\sup_{\vtheta^{m}\in\mTheta^{m}} m_t^2(\vtheta^m) + \big|\1_{\{X_t>v_t(\overline{\vtheta}^v)\}}-\1_{\{X_t>v_t(\underline{\vtheta}^v)\}}\big|Y_t^2.
	\end{align*}
	Therefore,
	\begin{align*}
		\E&\Bigg[\sup_{\substack{\vtheta^{m}\in\mTheta^{m}\\ \Vert\vtheta^v-\vtheta_0^v\Vert\leq\delta}}\Big|S^{\MES}\big((v_t(\vtheta^v), m_t(\vtheta^m))^\prime, (X_t, X_t)^\prime\big)-S^{\MES}\big((v_t(\vtheta_0^v), m_t(\vtheta^m))^\prime, (X_t, X_t)^\prime\big)\Big|\Bigg]\\
		&\leq \E\Big[\big|\1_{\{X_t>v_t(\overline{\vtheta}^v)\}}-\1_{\{X_t>v_t(\underline{\vtheta}^v)\}}\big|M^2(\mZ_t)\Big] + \E\Big[\big|\1_{\{X_t>v_t(\overline{\vtheta}^v)\}}-\1_{\{X_t>v_t(\underline{\vtheta}^v)\}}\big|Y_t^2\Big]\\	
		&=:B_{11t} + B_{12t}.
	\end{align*}
	For $B_{11t}$, we obtain using the LIE (in the first step), Assumption~\ref{ass:3} \ref{it:3ii} (in the third step), the mean value theorem (in the fourth step) and Assumptions~\ref{ass:4}--\ref{ass:5} (in the fifth and sixth step) that
	\begin{align}
		B_{11t}	&= \E\bigg\{M^2(\mZ_t)\E_{t}\Big[\big|\1_{\{X_t>v_t(\overline{\vtheta}^v)\}}-\1_{\{X_t>v_t(\underline{\vtheta}^v)\}}\big|\Big]\bigg\}\notag\\
		&\leq \E\bigg\{M^2(\mZ_t)\Big|\int_{v_t(\underline{\vtheta}^v)}^{v_t(\overline{\vtheta}^v)}f_t^{X}(x)\D x\Big|\bigg\}\notag\\
		&\leq \E\Big\{M^2(\mZ_t) K \big|v_t(\overline{\vtheta}^v)-v_t(\underline{\vtheta}^v)\big|\Big\}\notag\\
		&\leq K\E\Big\{M^2(\mZ_t)  \norm{\nabla v_{t}(\vtheta^{\ast})}\cdot\big\Vert\overline{\vtheta}^v-\underline{\vtheta}^v\big\Vert\Big\}\notag\\
		&\leq K\E\Big\{M^2(\mZ_t)  V_1(\mZ_t)\Big\}\delta\notag\\
		&\leq K \norm{M(\mZ_t)}_4^2 \norm{V_1(\mZ_t)}_2\delta\notag\\
		&\leq C\delta,\notag
	\end{align}
	where $\vtheta^{\ast}$ is some mean value between $\underline{\vtheta}^v$ and $\overline{\vtheta}^v$. Hence, choosing $\delta>0$ sufficiently small, we can ensure that $B_{11t}<\varepsilon/2$ for any fixed $\varepsilon>0$.

	For $B_{12t}$, we get from Hölder's inequality that for $p=2r/(2r-1)$ and $q=2r$,
	\begin{equation}\label{eq:B12t h}
		B_{12t} \leq \Big\{\E\big|\1_{\{X_t>v_t(\overline{\vtheta}^v)\}}-\1_{\{X_t>v_t(\underline{\vtheta}^v)\}}\big|^{p}\Big\}^{1/p}\Big\{\E|Y_t|^{2q}\Big\}^{1/q}.
	\end{equation}
	By the LIE and the mean value theorem (MVT),
	\begin{align}
		\E\big|\1_{\{X_t>v_t(\overline{\vtheta}^v)\}}-\1_{\{X_t>v_t(\underline{\vtheta}^v)\}}\big|^{p}	&= \E\Big[\E_{t}\big|\1_{\{X_t>v_t(\overline{\vtheta}^v)\}}-\1_{\{X_t>v_t(\underline{\vtheta}^v)\}}\big|^{p}\Big]\notag\\
		&= \E\Big[\E_{t}\big|\1_{\{X_t>v_t(\overline{\vtheta}^v)\}}-\1_{\{X_t>v_t(\underline{\vtheta}^v)\}}\big|\Big]\notag\\
		&=\E\Big|F_t^{X}\big(v_t(\overline{\vtheta}^v)\big) - F_t^{X}\big(v_t(\underline{\vtheta}^v)\big)\Big|\notag\\
		&=\E\Big[f_t^{X}\big(v_t(\vtheta^\ast)\big)\big|v_t(\overline{\vtheta}^v) - v_t(\underline{\vtheta}^v)\big|\Big]\notag\\
		&\leq \E\Big[K\big|\nabla v_t(\vtheta^\ast)(\overline{\vtheta}^v - \underline{\vtheta}^v)\big|\Big]\notag\\
		&\leq \E\Big[KV_1(\mZ_{t})\big\Vert\overline{\vtheta}^v - \underline{\vtheta}^v\big\Vert\Big]\notag\\
		&\leq K\E\big[V_1(\mZ_{t})\big]2\delta,\label{eq:help ind1}
	\end{align}
	where $\vtheta^\ast$ (which may change from line to line) lies on the line connecting $\underline{\vtheta}^v$ and $\overline{\vtheta}^v$, and the penultimate step follows from Assumption~\ref{ass:4}~\ref{it:4iii}.
	Plugging this into \eqref{eq:B12t h} gives
	\begin{equation}\label{eq:B12t}
		B_{12t}\leq C\Big\{\E\big[V_1(\mZ_{t})\big]\Big\}^{1/p}\Big\{\E|Y_t|^{2q}\Big\}^{1/q}\delta^{1/p}.
	\end{equation}
	Once again, this and Assumption~\ref{ass:5} allow us to conclude that $B_{12t}<\varepsilon/2$ by choosing $\delta>0$ sufficiently small. Therefore, for a suitable choice of $\delta>0$, the first right-hand side term in \eqref{eq:tbb} can be shown to equal zero, such that $B_{1n}=o_{\P}(1)$ follows. This establishes condition (iv). That $\widehat{\vtheta}_n^m\overset{\P}{\longrightarrow}\vtheta_0^m$ now follows from Theorem~2.1 in \citet{NeweyMcFadden1994}.
\end{proof}

\subsubsection{Supplementary Results for Consistency}\label{SR for cons}

\begin{lem}\label{lem:se VaR}
	Suppose Assumptions~\ref{ass:1}--\ref{ass:7} hold. Then, for all $\vtheta^v, \widetilde{\vtheta}^v\in\mTheta^v$,
	\[
	\big|S^{\VaR}\big(v_t(\vtheta^v), X_t\big) - S^{\VaR}\big(v_t(\widetilde{\vtheta}^v), X_t\big)\big|\leq B_t \big\Vert\vtheta^v - \widetilde{\vtheta}^v\big\Vert
	\]
	a.s.~for some random variables $B_t>0$, not depending on $\vtheta^v$ and $\widetilde{\vtheta}^v$, with $\sup_{n\in\mathbb{N}}\frac{1}{n}\sum_{t=1}^{n}\E\big[B_t\big]<\infty$.
\end{lem}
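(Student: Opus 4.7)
The plan is to establish a global Lipschitz bound for $v \mapsto S^{\VaR}(v, x)$ that is uniform in $x$, and then compose it with a Lipschitz bound for $\vtheta^v \mapsto v_t(\vtheta^v)$ obtained from the mean value theorem. The random variable $B_t$ will then be identified with the uniform gradient envelope $V_1(\mZ_t)$ from Assumption~\ref{ass:4}~\ref{it:4iii}.

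For the first bound, I would exploit the convex decomposition of the pinball loss,
\[
S^{\VaR}(v, x) \;=\; (1-\beta)\,(v - x)_+ + \beta\,(x - v)_+,
\]
where $(\,\cdot\,)_+$ denotes the positive part. Since both maps $v \mapsto (v-x)_+$ and $v \mapsto (x-v)_+$ are $1$-Lipschitz on $\mathbb{R}$ (uniformly in $x$), a convex combination yields the uniform bound $|S^{\VaR}(v, x) - S^{\VaR}(\widetilde v, x)| \le \max(\beta, 1-\beta)\,|v - \widetilde v| \le |v - \widetilde v|$. No smoothness of $S^{\VaR}(\cdot, x)$ is required, which is crucial because of the kink at $v = x$.

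For the second bound, I would apply the mean value theorem to $v_t(\cdot)$ along the line segment joining $\vtheta^v$ and $\widetilde{\vtheta}^v$, which lies in $\mTheta^v$ under the (standard) convexity of the parameter space; if $\mTheta^v$ is merely compact, one may instead extend the differentiability to an open convex neighborhood of $\mTheta^v$, or cover $\mTheta^v$ by finitely many convex pieces using compactness. By Assumption~\ref{ass:4}~\ref{it:4i} and the envelope in Assumption~\ref{ass:4}~\ref{it:4iii},
\[
\big|v_t(\vtheta^v) - v_t(\widetilde{\vtheta}^v)\big| \;\le\; \sup_{\vtheta \in \mTheta^v} \big\|\nabla v_t(\vtheta)\big\| \cdot \big\|\vtheta^v - \widetilde{\vtheta}^v\big\| \;\le\; V_1(\mZ_t)\,\big\|\vtheta^v - \widetilde{\vtheta}^v\big\|.
\]
Chaining this with the pinball Lipschitz bound gives the desired estimate with $B_t := V_1(\mZ_t)$.

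Finally, for the integrability condition, I would invoke strict stationarity (Assumption~\ref{ass:2}) to reduce $\frac{1}{n}\sum_{t=1}^n \E[B_t]$ to the single moment $\E[V_1(\mZ_1)]$. Jensen's inequality combined with the moment bound $\E[V_1^{4r}(\mZ_t)] \le K$ in Assumption~\ref{ass:5} yields $\E[V_1(\mZ_1)] \le K^{1/(4r)} < \infty$, so the supremum over $n$ is trivially finite. The only non-routine step in the argument is the Lipschitz control of the non-smooth pinball loss, and even that becomes transparent via the $(\cdot)_+$ decomposition above.
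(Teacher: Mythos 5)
Your proposal is correct and follows essentially the same route as the paper's proof: both establish that $v\mapsto S^{\VaR}(v,x)$ is Lipschitz uniformly in $x$ (the paper via the identity $\1_{\{x\leq v\}}(v-x)=\tfrac{1}{2}[(v-x)+|v-x|]$ after splitting off the $\b(v-x)$ term, you via the convex decomposition into positive parts), and then chain this with the mean value theorem and the gradient envelope $V_1(\mZ_t)$ from Assumption~\ref{ass:4}~\ref{it:4iii}. Your version is marginally cleaner---it yields the constant $V_1(\mZ_t)$ rather than the paper's $(1+\b)V_1(\mZ_t)$, and your explicit appeal to Jensen's inequality to deduce $\E[V_1(\mZ_t)]<\infty$ from $\E[V_1^{4r}(\mZ_t)]\leq K$ is actually more careful than the paper's direct citation of Assumption~\ref{ass:5}---but these differences are cosmetic.
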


\begin{proof}
	Write
	\begin{align}
		& \big|S^{\VaR}\big(v_t(\vtheta^v), X_t\big) - S^{\VaR}\big(v_t(\widetilde{\vtheta}^v), X_t\big)\big|\notag\\
		& =\Big|\big[\1_{\{X_t\leq v_t(\vtheta^v)\}}-\b\big]\big[v_t(\vtheta^v) - X_t\big] - \big[\1_{\{X_t\leq v_t(\widetilde{\vtheta}^v)\}}-\b\big]\big[v_t(\widetilde{\vtheta}^v) - X_t\big]\Big|\notag\\
		& = \Big|\1_{\{X_t\leq v_t(\vtheta^v)\}}\big[v_t(\vtheta^v) - X_t\big] - \1_{\{X_t\leq v_t(\widetilde{\vtheta}^v)\}}\big[v_t(\widetilde{\vtheta}^v) - X_t\big] + \b\big[v_t(\widetilde{\vtheta}^v) - v_t(\vtheta^v)\big]\Big|\notag\\
		& \leq \Big|\1_{\{X_t\leq v_t(\vtheta^v)\}}\big[v_t(\vtheta^v) - X_t\big] - \1_{\{X_t\leq v_t(\widetilde{\vtheta}^v)\}}\big[v_t(\widetilde{\vtheta}^v) - X_t\big]\Big| + \b\big|v_t(\widetilde{\vtheta}^v) - v_t(\vtheta^v)\big|\notag\\
		&= A_{2t} + B_{2t}.\label{eq:(5.1)}
	\end{align}
	
	Consider the two terms separately. First, by the MVT and Assumption~\ref{ass:4}~\ref{it:4iii},
	\[
	B_{2t} \leq \b\sup_{\vtheta^v\in\mTheta^v}\big\Vert\nabla v_t(\vtheta^v)\big\Vert\cdot \big\Vert\vtheta^v-\widetilde{\vtheta}^v\big\Vert\leq \b V_1(\mZ_t) \big\Vert\vtheta^v-\widetilde{\vtheta}^v\big\Vert.
	\]
	Second,
	\begin{align*}
		A_{2t} &= \frac{1}{2}\Big|v_t(\vtheta^v) - X_t + \big|v_t(\vtheta^v) - X_t\big| - \big\{v_t(\widetilde{\vtheta}^v) - X_t + \big|v_t(\widetilde{\vtheta}^v) - X_t\big|\big\}\Big|\\
		& \leq \frac{1}{2}\Big|v_t(\vtheta^v) - X_t - \big[v_t(\widetilde{\vtheta}^v) - X_t\big]\Big| + \frac{1}{2}\Big| \big|v_t(\vtheta^v) - X_t\big| - \big|v_t(\widetilde{\vtheta}^v) - X_t\big|\Big|\\
		& \leq \big|v_t(\vtheta^v) - v_t(\widetilde{\vtheta}^v)\big|\\
		&\leq \sup_{\vtheta^v\in\mTheta^v}\big\Vert\nabla v_t(\vtheta^v)\big\Vert\cdot \big\Vert\vtheta^v-\widetilde{\vtheta}^v\big\Vert\\
		&\leq V_1(\mZ_t)\big\Vert\vtheta^v-\widetilde{\vtheta}^v\big\Vert,
	\end{align*}
	where we used that $\big||a|-|b|\big|\leq|a-b|$ in the third step and Assumption~\ref{ass:4} \ref{it:4iii} combined with the MVT in the final two steps.
	
	Therefore, we obtain from \eqref{eq:(5.1)} that
	\begin{align*}
		\Big|S^{\VaR}\big(v_t(\vtheta^v), X_t\big) - S^{\VaR}\big(v_t(\widetilde{\vtheta}^v), X_t\big)\Big| & \leq(1+\b)V_1(\mZ_t) \Vert\vtheta^v-\widetilde{\vtheta}^v\Vert\\
		&=:B_t\Vert\vtheta^v-\widetilde{\vtheta}^v\Vert.
	\end{align*}
	From Assumption~\ref{ass:5},
	\[
	\frac{1}{n}\sum_{t=1}^{n}\E\big[B_t\big]= (1+\b)\frac{1}{n}\sum_{t=1}^{n}\E\big[V_1(\mZ_t)\big]\leq (1+\b)C<\infty,
	\]
	such that the conclusion follows.
\end{proof}

\begin{lem}\label{lem:Type IV}
	Suppose Assumptions~\ref{ass:1}--\ref{ass:7} hold. Then, the class of functions $S_t^{\MES}\big(\vtheta=(\vtheta^{v\prime}, \vtheta^{m\prime})^\prime\big):=S^{\MES}\big((v_t(\vtheta^v), m_t(\vtheta^m))^\prime, (X_t, Y_t)^\prime\big)$ forms a \textit{type IV class} in the sense of \citet[p.~2278]{And94} with index $p=2r$ (where $r>1$ is from Assumption~\ref{ass:2}).
\end{lem}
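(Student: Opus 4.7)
The plan is to verify Andrews's (1994) Type IV class condition directly, which requires exhibiting for each $\vtheta_0 \in \mTheta$ a local envelope $U_t^{\ast}(\vtheta_0, \delta)$ satisfying $\sup_{\vtau \in B(\vtheta_0,\delta)} |S_t^{\MES}(\vtau) - S_t^{\MES}(\vtheta_0)| \leq U_t^{\ast}(\vtheta_0,\delta)$ together with a moment bound of the form $\|U_t^{\ast}(\vtheta_0,\delta)\|_{2r} \leq C \delta^{s}$ for some $s>0$. This mirrors the two pieces $B_{11t}$ and $B_{12t}$ that already appeared in the proof of Proposition~\ref{prop:cons}, but here the moment has to be taken at the stronger order $2r$ rather than $1$.

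First I would split $|S_t^{\MES}(\vtau) - S_t^{\MES}(\vtheta_0)|$ into two summands by inserting a mixed term, namely $\frac{1}{2}\1_{\{X_t>v_t(\vtau^v)\}}[Y_t-m_t(\vtheta_0^m)]^2$:
\[
|S_t^{\MES}(\vtau) - S_t^{\MES}(\vtheta_0)| \leq \tfrac{1}{2}\big|\1_{\{X_t>v_t(\vtau^v)\}}-\1_{\{X_t>v_t(\vtheta_0^v)\}}\big|[Y_t-m_t(\vtheta_0^m)]^2 + \tfrac{1}{2}\1_{\{X_t>v_t(\vtau^v)\}}\big|[Y_t-m_t(\vtau^m)]^2 - [Y_t-m_t(\vtheta_0^m)]^2\big|.
\]
For the second summand I would factor $a^2-b^2=(a-b)(a+b)$, use the MVT on $m_t(\cdot)$ together with Assumption~\ref{ass:4}~\ref{it:4iii} to bound $|m_t(\vtau^m)-m_t(\vtheta_0^m)|\leq M_1(\mZ_t)\|\vtau^m-\vtheta_0^m\|$, and dominate the factor $|2Y_t-m_t(\vtau^m)-m_t(\vtheta_0^m)|$ by $2|Y_t|+2M(\mZ_t)$. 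Hölder's inequality (with exponents $2r$ on each factor of the product) together with the bounds $\E[M_1^{4r+\iota}]\leq K$, $\E[M^{4r+\iota}]\leq K$ and $\E|Y_t|^{4r+\iota}\leq K$ from Assumption~\ref{ass:5} then yield an $L^{2r}$-bound of order $\delta$.

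The main obstacle is the first summand, since the indicator is not continuous in $\vtau^v$. Introducing the $\mZ_t$-measurable quantities $\underline{\vtheta}^v=\argmin_{\vtau^v\in B(\vtheta_0^v,\delta)}v_t(\vtau^v)$ and $\overline{\vtheta}^v=\argmax_{\vtau^v\in B(\vtheta_0^v,\delta)}v_t(\vtau^v)$, the sup over the neighborhood is bounded by $\1_{\{v_t(\underline{\vtheta}^v)<X_t\leq v_t(\overline{\vtheta}^v)\}}[Y_t-m_t(\vtheta_0^m)]^2$. Raising this to the $2r$-th power and conditioning on $\mathcal{F}_t$, the indicator collapses and I would use the conditional density bound $\sup_x f_t^{X}(x)\leq K$ from Assumption~\ref{ass:3}~\ref{it:3ii} to get $\P_t\{v_t(\underline{\vtheta}^v)<X_t\leq v_t(\overline{\vtheta}^v)\}\leq K|v_t(\overline{\vtheta}^v)-v_t(\underline{\vtheta}^v)|\leq KV_1(\mZ_t)\cdot 2\delta$ via the MVT and Assumption~\ref{ass:4}~\ref{it:4iii}. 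Pulling out the $\mathcal{F}_t$-measurable factor $[Y_t-m_t(\vtheta_0^m)]^{4r}$ by the tower property and applying Assumption~\ref{ass:3}~\ref{it:3iv} (which gives $\E_t[Y_t^{4r}\1_{\{v_t(\underline{\vtheta}^v)<X_t\leq v_t(\overline{\vtheta}^v)\}}]$ an analogous $\delta$-dependent bound via the conditional density of $(X_t,Y_t)$), and then Hölder's inequality using the uniform moment bounds on $V_1(\mZ_t)$, $M(\mZ_t)$ and $Y_t$ from Assumption~\ref{ass:5}, yields an $L^{2r}$-bound of order $\delta^{1/(2r)}$.

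Combining both pieces, the envelope satisfies $\|U_t^{\ast}(\vtheta_0,\delta)\|_{2r}\leq C\delta^{1/(2r)}$, which is the Type IV property with index $p=2r$. The constant $C$ does not depend on $t$ by stationarity (Assumption~\ref{ass:2}) and the uniformity of the moment bounds in Assumption~\ref{ass:5}. The delicate point where I expect the most care is matching the exponents coming from Hölder's inequality with the available moment orders $4r+\iota$ on $Y_t$, $M(\mZ_t)$ and $M_1(\mZ_t)$, and the orders on $V_1(\mZ_t)$; the slack $\iota>0$ in Assumption~\ref{ass:5} is precisely what makes these Hölder estimates work at the required index $2r$.
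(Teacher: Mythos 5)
Your overall decomposition and your treatment of the smooth part are the same as the paper's: you insert a mixed term to separate the indicator difference from the difference of squared residuals, and for the latter you factor $a^2-b^2$, apply the MVT and H\"{o}lder with the $4r+\iota$ moments from Assumption~\ref{ass:5}; this matches the paper's term $A_{3t}$ and gives the required $O(\delta)$ bound in $L^{2r}$.

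However, your handling of the indicator-difference term contains a genuine gap. You propose to raise $\1_{\{v_t(\underline{\vtheta}^v)<X_t\leq v_t(\overline{\vtheta}^v)\}}\big[Y_t-m_t(\vtheta_0^m)\big]^2$ to the $2r$-th power, condition on $\mathcal{F}_t$, and ``pull out the $\mathcal{F}_t$-measurable factor $[Y_t-m_t(\vtheta_0^m)]^{4r}$ by the tower property.'' But $Y_t$ is \emph{not} $\mathcal{F}_t$-measurable ($\mathcal{F}_t=\sigma(\mW_t,\mV_{t-1},\dots)$ excludes the contemporaneous $Y_t$), so this factor cannot be pulled out of $\E_t[\,\cdot\,]$, and the conditional probability bound $\P_t\{v_t(\underline{\vtheta}^v)<X_t\leq v_t(\overline{\vtheta}^v)\}\leq KV_1(\mZ_t)2\delta$ alone does not control the term. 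Your fallback via Assumption~\ref{ass:3}~\ref{it:3iv} also fails: that assumption bounds only $\sup_x\int|y|\,f_t(x,y)\D y$, i.e., the conditional \emph{first} absolute moment, whereas your route would need $\sup_x\int|y|^{4r}f_t(x,y)\D y$ to be controlled, which is not available under Assumptions~\ref{ass:1}--\ref{ass:7}. The working argument (the one the paper uses for its terms $B_{3t}$ and $C_{3t}$) stays unconditional: apply H\"{o}lder's inequality at the level of the $L^{2r}$ norm with exponents $p=2r(4r+\iota)/\iota$ and $q=2r+\iota/2$ (so $1/p+1/q=1/(2r)$), note that the $\{0,1\}$-valued indicator difference satisfies $\E|\1_{\{X_t>v_t(\overline{\vtheta}^v)\}}-\1_{\{X_t>v_t(\underline{\vtheta}^v)\}}|^p=\E|\1_{\{X_t>v_t(\overline{\vtheta}^v)\}}-\1_{\{X_t>v_t(\underline{\vtheta}^v)\}}|\leq C\delta$ by the LIE, the density bound and the MVT, and put $Y_t^2$ (and $M^2(\mZ_t)$) into $L^{q}$ using $\E|Y_t|^{4r+\iota}\leq K$. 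This yields the exponent $\psi=1/p=\iota/(2r(4r+\iota))$ rather than your claimed $1/(2r)$, but any positive exponent suffices for the Type~IV property, so the lemma still follows once the indicator term is handled this way.
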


\begin{proof}
	To prove the lemma, we must show that there exist constants $C>0$ and $\psi>0$, such that
	\begin{equation}\label{eq:(11.0)}
		\sup_{t\in\mathbb{N}}\bigg\Vert\sup_{\Vert\widetilde{\vtheta} - \vtheta\Vert\leq\delta}\Big|S_t^{\MES}(\widetilde{\vtheta}) - S_t^{\MES}(\vtheta)\Big|\bigg\Vert_{2r}\leq C\delta^\psi
	\end{equation}
	for all $\vtheta=(\vtheta^{v\prime}, \vtheta^{m\prime})^\prime\in\mTheta$ and all $\delta>0$ in some neighborhood of 0.
	
	Fix $\vtheta\in\mTheta$. By definition of $S_t^{\MES}(\cdot)$ we have that 
	\begin{align}
		\Big|S_t^{\MES}(\widetilde{\vtheta}) - S_t^{\MES}(\vtheta)\Big| &= \frac{1}{2}\Big|\1_{\{X_t> v_t(\vtheta^v)\}}\big[Y_t - m_t(\vtheta^m)\big]^2 - \1_{\{X_t> v_t(\widetilde{\vtheta}^v)\}}\big[Y_t - m_t(\widetilde{\vtheta}^m)\big]^2\Big|\notag\\
		&=\frac{1}{2}\Big|\1_{\{X_t> v_t(\vtheta^v)\}}\Big[\big\{Y_t - m_t(\vtheta^m)\big\}^2 - \big\{Y_t - m_t(\widetilde{\vtheta}^m)\big\}^2\Big] \notag\\
		&\hspace{2cm} + \Big[\1_{\{X_t> v_t(\vtheta^v)\}} - \1_{\{X_t> v_t(\widetilde{\vtheta}^v)\}}\Big]\big\{Y_t - m_t(\widetilde{\vtheta}^m)\big\}^2\Big|\notag\\
		&\leq\frac{1}{2}\Big|\1_{\{X_t> v_t(\vtheta^v)\}}\Big[\big\{Y_t - m_t(\vtheta^m)\big\}^2 - \big\{Y_t - m_t(\widetilde{\vtheta}^m)\big\}^2\Big]\Big| \notag\\
		&\hspace{2cm} + \frac{1}{2}\Big|\Big[\1_{\{X_t> v_t(\vtheta^v)\}} - \1_{\{X_t> v_t(\widetilde{\vtheta}^v)\}}\Big]\big\{Y_t - m_t(\widetilde{\vtheta}^m)\big\}^2\Big|\notag\\
		&\leq \Big|\1_{\{X_t> v_t(\vtheta^v)\}}\Big[Y_t\big\{m_t(\widetilde{\vtheta}^m) - m_t(\vtheta^m)\big\} + \frac{1}{2}\big\{m_t^2(\vtheta^m) - m_t^2(\widetilde{\vtheta}^m)\big\}\Big]\Big| \notag\\
		&\hspace{2cm} + \big|\1_{\{X_t> v_t(\vtheta^v)\}} - \1_{\{X_t> v_t(\widetilde{\vtheta}^v)\}}\big|Y_t^2\notag\\
		&\hspace{2cm} + \big|\1_{\{X_t> v_t(\vtheta^v)\}} - \1_{\{X_t> v_t(\widetilde{\vtheta}^v)\}}\big|m_t^2(\widetilde{\vtheta}^m)\notag\\
		&=:A_{3t} + B_{3t} + C_{3t}.\label{eq:(11.1)}
	\end{align}
	
	We treat these three terms separately. Define the $\mZ_t$-measurable quantities
	\begin{align}
		\underline{\vtheta}^{v} & =\argmin_{\Vert\widetilde{\vtheta} - \vtheta\Vert\leq\delta} v_t(\widetilde{\vtheta}^v),&& \overline{\vtheta}^{v}  =\argmax_{\Vert\widetilde{\vtheta} - \vtheta\Vert\leq\delta} v_t(\widetilde{\vtheta}^v),\label{eq:theta v ast}\\
		\underline{\vtheta}^{m} & =\argmin_{\Vert\widetilde{\vtheta} - \vtheta\Vert\leq\delta} m_t(\widetilde{\vtheta}^m),&& 	\overline{\vtheta}^{m} =\argmax_{\Vert\widetilde{\vtheta} - \vtheta\Vert\leq\delta} m_t(\widetilde{\vtheta}^m),\\
		\underline{\vtheta}^{m^2} & =\argmin_{\Vert\widetilde{\vtheta} - \vtheta\Vert\leq\delta} m_t^2(\widetilde{\vtheta}^m),&& 	\overline{\vtheta}^{m^2} =\argmax_{\Vert\widetilde{\vtheta} - \vtheta\Vert\leq\delta} m_t^2(\widetilde{\vtheta}^m),\notag
	\end{align}
	where we again suppress the dependence of the defined quantities on $t$.
	By continuity of $v_t(\cdot)$ and $m_t(\cdot)$, these quantities are well-defined.

	First, use Minkowski's inequality to write
	\begin{align}
		\bigg\Vert\sup_{\Vert\widetilde{\vtheta} - \vtheta\Vert\leq\delta} A_{3t}\bigg\Vert_{2r} &= \bigg\Vert\sup_{\Vert\widetilde{\vtheta} - \vtheta\Vert\leq\delta}\Big|Y_t\big\{m_t(\widetilde{\vtheta}^m) - m_t(\vtheta^m)\big\} + \frac{1}{2}\big\{ m_t^2(\vtheta^m) - m_t^2(\widetilde{\vtheta}^m)\big\}\Big|\bigg\Vert_{2r}\notag\\
		&\leq \bigg\Vert Y_t \sup_{\Vert\widetilde{\vtheta} - \vtheta\Vert\leq\delta}\big|m_t(\widetilde{\vtheta}^m) - m_t(\vtheta^m)\big|\bigg\Vert_{2r} + \frac{1}{2}\bigg\Vert \sup_{\Vert\widetilde{\vtheta} - \vtheta\Vert\leq\delta}\big|m_t^2(\widetilde{\vtheta}^m) - m_t^2(\vtheta^m)\big|\bigg\Vert_{2r}\notag\\
		&\leq \Big\Vert Y_t\big[m_t(\overline{\vtheta}^{m}) - m_t(\underline{\vtheta}^m)\big]\Big\Vert_{2r} + \frac{1}{2}\Big\Vert m_t^2(\overline{\vtheta}^{m^2}) - m_t^2(\underline{\vtheta}^{m^2})\Big\Vert_{2r}\notag\\
		&= \Big\Vert Y_t \nabla m_t(\vtheta^{\ast})(\overline{\vtheta}^{m} - \underline{\vtheta}^m)\Big\Vert_{2r} + \frac{1}{2}\Big\Vert 2m_t(\vtheta^{\ast})\nabla m_t(\vtheta^{\ast}) (\overline{\vtheta}^{m^2} - \underline{\vtheta}^{m^2}) \Big\Vert_{2r}\notag\\
		&\leq \big\Vert Y_t M_1(\mZ_t)\big\Vert_{2r}2\delta + \big\Vert M(\mZ_t)M_1(\mZ_t)\big\Vert_{2r}2\delta\notag\\
		&\leq \big\Vert Y_t\big\Vert_{4r} \big\Vert M_1(\mZ_t)\big\Vert_{4r} 2\delta + \big\Vert M(\mZ_t)\big\Vert_{4r} \big\Vert M_1(\mZ_t)\big\Vert_{4r}2\delta\notag\\
		&\leq  C\delta,\label{eq:A3t}
	\end{align}
	where $\vtheta^{\ast}$ is some mean value that may change from appearance to appearance.
	
	Furthermore, by similar arguments used to derive \eqref{eq:B12t},
	\begin{align}
		\bigg\Vert \sup_{\Vert\widetilde{\vtheta} - \vtheta\Vert\leq\delta} B_{3t}\bigg\Vert_{2r} &= \bigg\Vert \sup_{\Vert\widetilde{\vtheta} - \vtheta\Vert\leq\delta}\Big|\1_{\{X_t> v_t(\vtheta^v)\}} - \1_{\{X_t> v_t(\widetilde{\vtheta}^v)\}}\Big|Y_t^2\bigg\Vert_{2r}\notag\\
		&\leq \big\Vert \1_{\{X_t> v_t(\overline{\vtheta}^v)\}} - \1_{\{X_t> v_t(\underline{\vtheta}^v)\}} \big\Vert_{p} \big\Vert Y_t^2 \big\Vert_{q}\notag\\
		&\leq C\delta^{1/p}\label{eq:B3t}
	\end{align}
	for $p=2r(4r+\iota)/\iota$ and $q=2r+\iota/2$ (such that $1/p+1/q=1/(2r)$).

	Now, turn to $C_{3t}$. 
	Using similar arguments to those used to obtain \eqref{eq:B3t}, we get that
	\begin{align}
		\bigg\Vert \sup_{\Vert\widetilde{\vtheta} - \vtheta\Vert\leq\delta} C_{3t}\bigg\Vert_{2r} &= \bigg\Vert \sup_{\Vert\widetilde{\vtheta} - \vtheta\Vert\leq\delta}\Big|\1_{\{X_t> v_t(\vtheta^v)\}} - \1_{\{X_t> v_t(\widetilde{\vtheta}^v)\}}\Big|m_t^{2}(\widetilde{\vtheta}^m)\bigg\Vert_{2r} \notag\\
		&\leq \big\Vert \1_{\{X_t> v_t(\overline{\vtheta}^v)\}} - \1_{\{X_t> v_t(\underline{\vtheta}^v)\}} \big\Vert_{p} \big\Vert M^2(\mZ_t) \big\Vert_{q}\notag\\
		&\leq C\delta^{1/p}\label{eq:C3t}
	\end{align}
	for the above $p$ and $q$.
	Combining \eqref{eq:A3t}--\eqref{eq:C3t} with \eqref{eq:(11.1)} implies that \eqref{eq:(11.0)} holds with $\psi=1/p$.
\end{proof}

\subsection{Asymptotic Normality of the VaR Parameter Estimator}\label{Asymptotic Normality of the VaR Parameter Estimator}

In proving asymptotic normality of $\widehat{\vtheta}_n^v$, we use standard arguments from the extremum estimation literature. We present it here nonetheless since some of the following results are needed in establishing the joint asymptotic normality of $(\widehat{\vtheta}_n^{v\prime}, \widehat{\vtheta}_n^{m\prime})^\prime$ in Sections~\ref{Asymptotic Normality of the MES Parameter Estimator}--\ref{Joint Asymptotic Normality of the VaR and MES Parameter Estimator}. We mention that the results given here seem to require somewhat more primitive conditions than those given in the related literature on nonlinear quantile regression under weak dependence \citep{OH16}.

Before giving the formal proof, we collect some results that will be used in the sequel. 
Define
\begin{equation}\label{eq:gt}
	\Vv_t(\vtheta^v) := \nabla v_t(\vtheta^v)\big[\1_{\{X_t\leq v_t(\vtheta^v)\}}-\b\big].
\end{equation}
This quantity may be interpreted as the derivative of the VaR score with respect to the model parameters, i.e., $\frac{\partial}{\partial \vtheta^v}S^{\VaR}\big(v_t(\vtheta^v),X_t\big)$.
To see this, note that for $v\neq x$ it holds that 
\begin{equation*}
	\frac{\partial}{\partial v}S^{\VaR}(v,x) = \1_{\{x\leq v\}}-\b.
\end{equation*}
Thus, by the chain rule, we have for $v_t(\vtheta^v)\neq X_t$ that
\begin{equation*}
	\frac{\partial}{\partial \vtheta^v}S^{\VaR}\big(v_t(\vtheta^v),X_t\big)
	= \nabla v_t(\vtheta^v)\big[\1_{\{X_t\leq v_t(\vtheta^v)\}}-\b\big]=\Vv_t(\vtheta^v).
\end{equation*}
Note that $v_t(\vtheta^v)\neq X_t$ holds a.s., because, as $\varepsilon\downarrow0$,
\begin{align*}
	0 &\leq \P\big\{X_t=v_t(\vtheta^v)\big\} \\
	&\leq \P\big\{|X_t-v_t(\vtheta^v)|\leq\varepsilon\big\}\\
	&= \P\big\{v_t(\vtheta^v) - \varepsilon\leq X_t\leq v_t(\vtheta^v) + \varepsilon\big\}\\
	&=\E\big[\1_{\{v_t(\vtheta^v) - \varepsilon\leq X_t\leq v_t(\vtheta^v) + \varepsilon\}}\big]\\
	&=\E\Big[\E_{t}\big\{\1_{\{v_t(\vtheta^v) - \varepsilon\leq X_t\leq v_t(\vtheta^v) + \varepsilon\}}\big\}\Big]\\
	&=\E\Big[\P_{t}\big\{v_t(\vtheta^v) - \varepsilon\leq X_t\leq v_t(\vtheta^v) + \varepsilon\big\}\Big]\\
	&= \E\bigg[\int_{v_t(\vtheta^v) - \varepsilon}^{v_t(\vtheta^v) + \varepsilon}f_t^{X}(x)\D x\bigg]\\
	&\leq \E[2K\varepsilon]\longrightarrow0,
\end{align*}
where the final inequality follows from Assumption~\ref{ass:3}~\ref{it:3ii}.
By \eqref{eq:gt}, we have
\begin{align}
	\E\big[\Vv_{t}(\vtheta^v)\big] &=\E\bigg\{\nabla v_t(\vtheta^v)\E_{t}\Big[\1_{\{X_t\leq v_t(\vtheta^v)\}}-\b\Big]\bigg\}\notag\\
	&= \E\bigg\{\nabla v_t(\vtheta^v)\Big[F_t^{X}\big(v_t(\vtheta^v)\big)-\b\Big]\bigg\}.\label{eq:(A.18+)}
\end{align}
Finally, Assumptions~\ref{ass:4}--\ref{ass:5} and the DCT allow us to interchange differentiation and expectation to yield that
\begin{equation}
	\frac{\partial}{\partial \vtheta^v}\E\big[\Vv_{t}(\vtheta^v)\big] =\E\bigg\{\nabla^2 v_t(\vtheta^v)\Big[F_t^{X}\big(v_t(\vtheta^v)\big)-\b\Big] + \nabla v_t(\vtheta^v)\nabla^\prime v_t(\vtheta^v)f_t^{X}\big(v_t(\vtheta^v)\big)\bigg\}.\label{eq:Lambda v}
\end{equation}
Evaluating this quantity at the true parameters gives
\begin{align}
	\mLambda &= \frac{\partial}{\partial \vtheta^v}\E\big[\Vv_{t}(\vtheta^v)\big]\Big\vert_{\vtheta^v=\vtheta_0^v} =\E\Big\{\nabla v_t(\vtheta_0^v)\nabla^\prime v_t(\vtheta_0^v)f_t^{X}\big(v_t(\vtheta_0^v)\big)\Big\},
\end{align}
since $F_t^{X}\big(v_t(\vtheta_0^v)\big)=\b$ by correct specification.
By virtue of Assumption~\ref{ass:2}, $\mLambda$ does not depend on $t$.

Similarly as in \citet{DLS23}, the key step in the proof is to apply the FCLT of \citet{DMR95}. Define
\begin{align*}
	\widehat{\vlambda}_n(\vtheta^v) &= \frac{1}{n}\sum_{t=1}^{n}\Vv_{t}(\vtheta^v),\\
	\vlambda(\vtheta^v) &= \E\big[\Vv_{t}(\vtheta^v)\big],\\
	\mLambda(\vtheta^\ast) &= \frac{\partial}{\partial\vtheta^v}\E\big[\Vv_{t}(\vtheta^v)\big]\Big\vert_{\vtheta^v=\vtheta^\ast},
\end{align*}
and note that $\mLambda=\mLambda(\vtheta_0^v)$.
Clearly, $\vlambda(\vtheta^v)$ and $\mLambda(\vtheta^\ast)$ do not depend on $t$ by Assumption~\ref{ass:2}.

We first give a general outline of the proof of the asymptotic normality of $\widehat{\vtheta}_n^v$, filling in some details in the subsequent Lemmas~V.\ref{lem:1+}--V.\ref{lem:3} in Section~\ref{Supplementary Results for VaR Parameter Estimator}.

\begin{proof}[{\textbf{Proof of Theorem~\ref{thm:an} (Asymptotic normality of $\widehat{\vtheta}_n^{v}$):}}]
	In this first part of the proof we have to show that, as $n\to\infty$,
	\begin{equation}\label{eq:AsNorVaR}
		\sqrt{n}\big(\widehat{\vtheta}_n^{v}-\vtheta_0^v\big)\overset{d}{\longrightarrow}N\big(\vzeros,\mLambda^{-1}\mV \mLambda^{-1}\big).
	\end{equation}
	Under the stated assumptions, the MVT implies that for all $i=1,\ldots,p$,
	\[
	\lambda^{(i)}(\widehat{\vtheta}_n^{v}) = \lambda^{(i)}(\vtheta_0^{v}) + \mLambda^{(i,\cdot)}(\vtheta^{\ast}_{i}) (\widehat{\vtheta}_n^v- \vtheta_0^v),
	\]
	where $\vlambda(\cdot)=\big(\lambda^{(1)}(\cdot),\ldots,\lambda^{(p)}(\cdot)\big)^\prime$, $\mLambda^{(i,\cdot)}(\cdot)$ denotes the $i$-th row of $\mLambda(\cdot)$ and $\vtheta_{i}^{\ast}$ lies on the line connecting $\vtheta_0^v$ and $\widehat{\vtheta}_n^v$. To economize on notation, we shall slightly abuse notation (here and elsewhere) by writing this as
	\begin{equation}\label{eq:NE MVT}
		\vlambda(\widehat{\vtheta}_n^{v}) = \vlambda(\vtheta_0^{v}) + \mLambda(\vtheta^{\ast}) (\widehat{\vtheta}_n^v- \vtheta_0^v)
	\end{equation}
	for some $\vtheta^{\ast}$ between $\widehat{\vtheta}_n^{v}$ and $\vtheta_0^{v}$.
	In doing so, keep in mind that the value of $\vtheta^{\ast}$ is in fact different from row to row in $\mLambda(\vtheta^{\ast})$. 
	However, this does not change any of the subsequent arguments.
	
	We have by \eqref{eq:(A.18+)} that
	\begin{equation}\label{eq:lambda nought}
		\vlambda(\vtheta_0^{v}) = \E\bigg\{\nabla v_t(\vtheta_0^{v})\Big[F_t^{X}\big(v_t(\vtheta_0^{v})\big)-\b\Big]\bigg\} =\vzeros,
	\end{equation}
	since $F_t^{X}\big(v_t(\vtheta_0^{v})\big)=\b$ by correct specification. Plugging this into \eqref{eq:NE MVT} gives
	\begin{equation*}
		\vlambda(\widehat{\vtheta}_n^{v}) = \mLambda(\vtheta^{\ast}) (\widehat{\vtheta}_n^v- \vtheta_0^v).
	\end{equation*}
	Expand the left-hand side of the above equation to obtain that
	\begin{equation}\label{eq:exp VaR}
		\widehat{\vlambda}_n(\widehat{\vtheta}_n^{v}) - \Big\{\big[\widehat{\vlambda}_n(\widehat{\vtheta}_n^{v}) - \vlambda(\widehat{\vtheta}_n^{v})\big] - \big[\widehat{\vlambda}_n(\vtheta_0^{v}) - \vlambda(\vtheta_0^{v})\big]\Big\} - \big[\widehat{\vlambda}_n(\vtheta_0^{v}) - \vlambda(\vtheta_0^{v})\big] = \mLambda(\vtheta^{\ast}) (\widehat{\vtheta}_n^v- \vtheta_0^v).
	\end{equation}
	To establish asymptotic normality of the VaR parameter estimator in \eqref{eq:AsNorVaR}, we therefore have to show that
	\begin{enumerate}
		\item[(i)] $\mLambda^{-1}(\vtheta^{\ast})=\mLambda^{-1} + o_{\P}(1)$;
		\item[(ii)] $\sqrt{n}\widehat{\vlambda}_n(\widehat{\vtheta}_n^{v})=o_{\P}(1)$;
		\item[(iii)] $\sqrt{n}\Big\{\big[\widehat{\vlambda}_n(\widehat{\vtheta}_n^{v}) - \vlambda(\widehat{\vtheta}_n^{v})\big] - \big[\widehat{\vlambda}_n(\vtheta_0^{v}) - \vlambda(\vtheta_0^{v})\big]\Big\}=o_{\P}(1)$;
		\item[(iv)] $\sqrt{n}\big[\widehat{\vlambda}_n(\vtheta_0^{v}) - \vlambda(\vtheta_0^{v})\big]\overset{d}{\longrightarrow}N(\vzeros, \mV)$, as $n\to\infty$.
	\end{enumerate}

	Claim (i) is verified in Lemma~V.\ref{lem:1+}. For this, note that since $\vtheta^{\ast}$ is a mean value between $\widehat{\vtheta}_n^v$ and $\vtheta_0^v$, and $\widehat{\vtheta}_n^v\overset{\P}{\longrightarrow}\vtheta_0^v$ (from Proposition~\ref{prop:cons}), it also follows that $\vtheta^{\ast}\overset{\P}{\longrightarrow}\vtheta_0^v$.
	
	Claim (ii) is verified in Lemma~V.\ref{lem:2}
	
	Finally, claims (iii) and (iv) follow from the FCLT of Lemma~V.\ref{lem:3}. Specifically, (iii) can be derived as follows. Since $\P\big\{\Vert\widehat{\vtheta}_n^v - \vtheta_0^v\Vert\leq\delta\big\}\underset{(n\to\infty)}{\longrightarrow}1$ for any $\delta>0$ due to Proposition~\ref{prop:cons}, we have w.p.a.~1,
	\begin{align}
		&\bigg\Vert\sqrt{n}\Big\{\big[\widehat{\vlambda}_n(\widehat{\vtheta}_n^{v}) - \vlambda(\widehat{\vtheta}_n^{v})\big] - \big[\widehat{\vlambda}_n(\vtheta_0^{v}) - \vlambda(\vtheta_0^{v})\big]\Big\}\bigg\Vert \notag\\
		&\leq \sup_{\Vert\vtheta^v-\vtheta_0^v\Vert\leq\delta} \bigg\Vert \sqrt{n}\Big\{\big[\widehat{\vlambda}_n(\vtheta^{v}) - \vlambda(\vtheta^{v})\big] -\big[\widehat{\vlambda}_n(\vtheta_0^{v}) - \vlambda(\vtheta_0^{v})\big]\Big\}\bigg\Vert\notag\\
		&\leq \sup_{\Vert\vtheta^v-\vtheta_0^v\Vert\leq\delta} \Big\Vert \sqrt{n}\big[\widehat{\vlambda}_n(\vtheta^{v}) - \vlambda(\vtheta^{v})\big] - \mB^{(n)}(\vtheta^v)\Big\Vert\notag\\
		& \hspace{2cm} + \sup_{\Vert\vtheta^v-\vtheta_0^v\Vert\leq\delta} \Big\Vert\sqrt{n}\big[\widehat{\vlambda}_n(\vtheta_0^{v}) - \vlambda(\vtheta_0^{v})\big] - \mB^{(n)}(\vtheta_0^v) \Big\Vert\notag\\
		&\hspace{2cm} + \sup_{\Vert\vtheta^v-\vtheta_0^v\Vert\leq\delta} \big\Vert\mB^{(n)}(\vtheta^v) - \mB^{(n)}(\vtheta_0^v)\big\Vert\notag\\
		&=o_{\P}(1) + o_{\P}(1) + \sup_{\Vert\vtheta^v-\vtheta_0^v\Vert\leq\delta} \big\Vert\mB^{(n)}(\vtheta^v) - \mB^{(n)}(\vtheta_0^v)\big\Vert,\label{eq:(p.16)}
	\end{align}
	where the final step follows from Lemma~V.\ref{lem:3}. By (a.s.) continuity of the sample paths of $\mB^{(n)}(\cdot)$ and the fact that $\delta>0$ can be chosen arbitrarily small, it also follows that $\sup_{\Vert\vtheta^v-\vtheta_0^v\Vert\leq\delta} \big\Vert\mB^{(n)}(\vtheta^v) - \mB^{(n)}(\vtheta_0^v)\big\Vert=o_{\P}(1)$, as $\delta\downarrow0$. Hence, (iii) follows.
	
	Claim (iv) directly follows from Lemma~V.\ref{lem:3}, which implies that
	\[
	\sqrt{n}\big[\widehat{\vlambda}_n(\vtheta_0^{v}) - \vlambda(\vtheta_0^{v})\big]-\mB^{(n)}(\vtheta_0^v)=o_{\P}(1),
	\]
	where the variance-covariance matrix of $\mB^{(n)}(\vtheta_0^v)$ is given by
	\[
	\mGamma(\vtheta_0^v,\vtheta_0^v)=\E\big[\Vv_1(\vtheta_0^v)\Vv_1^\prime(\vtheta_0^v)\big] + \sum_{j=1}^{\infty}\Big\{\E\big[\Vv_1(\vtheta_0^v)\Vv_{1+j}^\prime(\vtheta_0^v)\big] + \E\big[\Vv_{1+j}(\vtheta_0^v)\Vv_1^\prime(\vtheta_0^v)\big]\Big\}.
	\]
	Now, from \eqref{eq:gt} and the LIE
	\begin{align}
		\E\big[\Vv_1(\vtheta_0^v)\Vv_1^\prime(\vtheta_0^v)\big] &= \E\Big[\nabla v_1(\vtheta_0^v)\nabla^\prime v_1(\vtheta_0^v)\big(\1_{\{X_1\leq v_1(\vtheta^v)\}}-\b\big)^2\Big] \notag\\
		&= \E\Big[\nabla v_1(\vtheta_0^v)\nabla^\prime v_1(\vtheta_0^v)\E_1\big(\1_{\{X_1\leq v_1(\vtheta_0^v)\}}-2\b\1_{\{X_1\leq v_1(\vtheta_0^v)\}} +\b^2\big)\Big] \notag\\
		&= \E\Big[\nabla v_1(\vtheta_0^v)\nabla^\prime v_1(\vtheta_0^v)\big(\b-2\b^2 +\b^2\big)\Big] \notag\\
		&= \b(1-\b)\E\big[\nabla v_1(\vtheta_0^v)\nabla^\prime v_1(\vtheta_0^v)\big]=\mV\label{eq:V lrv 1}
	\end{align}
	and
	\begin{align}
		\E\big[\Vv_1(\vtheta_0^v)\Vv_{1+j}^\prime(\vtheta_0^v)\big] &= \E\Big\{\Vv_1(\vtheta_0^v)\E_{1+j}\big[\Vv_{1+j}^\prime(\vtheta_0^v)\big]\Big\}\notag\\
		&=\E\Big\{\Vv_1(\vtheta_0^v)\vzeros\Big\}=\vzeros,\label{eq:V lrv 2}
	\end{align}
	such that $\mGamma(\vtheta_0^v, \vtheta_0^v) = \mV$.
	
	In sum, the desired result follows.
\end{proof}

\subsubsection{Supplementary Results for VaR Parameter Estimator}\label{Supplementary Results for VaR Parameter Estimator}

\begin{lemV}\label{lem:1+}
	Suppose Assumptions~\ref{ass:1}--\ref{ass:7} hold. Then, as $n\to\infty$, $\mLambda^{-1}(\vtheta_n^v)\overset{\P}{\longrightarrow}\mLambda^{-1}$ for any $\vtheta_n^v$ with $\vtheta_n^v\overset{\P}{\longrightarrow}\vtheta_0^v$.
\end{lemV}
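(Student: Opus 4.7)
The plan is to reduce the claim to a continuous-mapping argument: first show that $\mLambda_n(\cdot)$ is actually a deterministic, $n$-independent and continuous function of its argument near $\vtheta_0^v$, and then invoke the CMT twice (once for the function value, once for matrix inversion).

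First, by strict stationarity (Assumption~\ref{ass:2}), the summands defining $\mLambda_n(\vtheta)$ all share a common distribution, so $\mLambda_n(\vtheta)= D(\vtheta)$ for every $n$ and every $\vtheta$, where
\[
D(\vtheta) = \E\bigg\{\nabla^2 v_t(\vtheta)\Big[F_t^{X}\big(v_t(\vtheta)\big)-\b\Big] + \nabla v_t(\vtheta)\nabla^\prime v_t(\vtheta)f_t^{X}\big(v_t(\vtheta)\big)\bigg\}.
\]
In particular, $D(\vtheta_0^v)=\mLambda$ since $F_t^X(v_t(\vtheta_0^v))=\b$ under correct specification (Assumption~\ref{ass:1}\ref{it:1i}).

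Next, I would verify continuity of $D(\cdot)$ at $\vtheta_0^v$ via the dominated convergence theorem. The integrand is pointwise continuous in $\vtheta$: $v_t$ and $\nabla v_t$ are continuous by Assumption~\ref{ass:4}\ref{it:4i}; $f_t^X$ is (Lipschitz) continuous by Assumption~\ref{ass:3}\ref{it:3ii}; and $F_t^X$ is continuous since $f_t^X$ exists (Assumption~\ref{ass:3}\ref{it:3i}). For domination, I use the bounds from Assumption~\ref{ass:4} together with $|F_t^X-\b|\le 1$ and $f_t^X\le K$ from Assumption~\ref{ass:3}\ref{it:3ii}:
\begin{align*}
\sup_{\vtheta\in\mTheta^v}\Big\|\nabla^2 v_t(\vtheta)\big[F_t^X(v_t(\vtheta))-\b\big]\Big\| &\le V_2(\mZ_t),\\
\sup_{\vtheta\in\mTheta^v}\Big\|\nabla v_t(\vtheta)\nabla^\prime v_t(\vtheta)f_t^X(v_t(\vtheta))\Big\| &\le K\,V_1^2(\mZ_t),
\end{align*}
and both $V_2(\mZ_t)$ and $V_1^2(\mZ_t)$ are integrable by Assumption~\ref{ass:5} (the latter because $\E[V_1^{4r}]\le K$ with $r>1$). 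Hence DCT applies and $D(\cdot)$ is continuous at $\vtheta_0^v$.

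Finally, since $\overline{\vtheta}\overset{\P}{\longrightarrow}\vtheta_0^v$, the CMT yields $\mLambda_n(\overline{\vtheta}) = D(\overline{\vtheta})\overset{\P}{\longrightarrow}D(\vtheta_0^v)=\mLambda$. By Assumption~\ref{ass:6}, $\mLambda$ is positive definite and thus invertible, so matrix inversion is continuous at $\mLambda$; a second application of the CMT gives $\mLambda_n^{-1}(\overline{\vtheta})\overset{\P}{\longrightarrow}\mLambda^{-1}$, as required. There is no serious obstacle here: the only point requiring care is choosing the right moment bounds so that DCT applies uniformly over a neighborhood of $\vtheta_0^v$, and Assumption~\ref{ass:5} is tailor-made for exactly this.
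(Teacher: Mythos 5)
Your proposal is correct, but it takes a genuinely different route from the paper. The paper does not pass through a pointwise-continuity/DCT argument; instead it establishes the quantitative local Lipschitz bound $\norm{\mLambda_{n}(\vtau) - \mLambda_{n}(\vtheta)} \leq C \norm{\vtau - \vtheta}$ on a neighborhood of $\vtheta_0^v$ by decomposing the difference into four terms and controlling each via mean-value expansions, the Lipschitz continuity of $f_t^X$ (Assumption~\ref{ass:3}~\ref{it:3ii}), the Lipschitz continuity of the Hessian (Assumption~\ref{ass:4}~\ref{it:4v}), and H\"older's inequality on the moment bounds; it then converts this deterministic modulus of continuity into convergence in probability by an $\varepsilon$--$\delta$ argument. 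Your approach exploits stationarity to observe that $\mLambda_n(\cdot)=D(\cdot)$ is $n$-free, needs only \emph{continuity} of $D$ at $\vtheta_0^v$ (obtained by dominated convergence with the envelopes $V_2(\mZ_t)$ and $K V_1^2(\mZ_t)$), and finishes with two applications of the CMT. This is shorter and nominally weaker in what it demands of the model (continuity rather than a Lipschitz modulus), while the paper's route yields an explicit rate that keeps the argument structurally parallel to the analogous MES lemma, where only a H\"older-type bound $C\delta^{1/p}$ is available. One small point to tighten: continuity of the first summand of the integrand requires continuity of $\nabla^2 v_t(\cdot)$ at $\vtheta_0^v$, which does not follow from twice differentiability in Assumption~\ref{ass:4}~\ref{it:4i} alone; you should invoke the local Lipschitz condition on the Hessian in Assumption~\ref{ass:4}~\ref{it:4v} for this (which is available on exactly the neighborhood you need). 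With that citation added, the argument is complete: positive definiteness of $\mLambda$ from Assumption~\ref{ass:6} makes matrix inversion continuous at the limit, so the final CMT step goes through as you describe.
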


\begin{proof}
	We first show that $\norm{\mLambda(\vtau) - \mLambda(\vtheta)} \leq C \norm{\vtau - \vtheta}$ for all $\vtau,\vtheta\in\mathcal{N}(\vtheta_0^v)$, where $\mathcal{N}(\vtheta_0^v)$ is some neighborhood of $\vtheta_0^v$ such that Assumption~\ref{ass:4}~\ref{it:4v} holds.
	%
	Use \eqref{eq:Lambda v} to write
	\begin{align}
		\big\Vert& \mLambda(\vtau) - \mLambda(\vtheta)\big\Vert\notag\\
		&= \bigg\Vert \E\Big\{\nabla^2 v_t(\vtau)\Big[F_t^{X}\big(v_t(\vtau)\big)-\b\Big] - \nabla^2 v_t(\vtheta)\Big[F_t^{X}\big(v_t(\vtheta)\big)-\b\Big]\notag\\
		&	\hspace{0.5cm} + \nabla v_t(\vtau)\nabla^\prime v_t(\vtau)f_t^{X}\big(v_t(\vtau)\big) - \nabla v_t(\vtheta)\nabla^\prime v_t(\vtheta)f_t^{X}\big(v_t(\vtheta)\big)\Big\}\bigg\Vert\notag\\
		&\leq  \E\Big\Vert\nabla^2 v_t(\vtau)\Big[F_t^{X}\big(v_t(\vtau)\big)-F_t^{X}\big(v_t(\vtheta)\big)\Big]\Big\Vert + \E\Big\Vert F_t^{X}\big(v_t(\vtheta)\big)\big[\nabla^2 v_t(\vtau) - \nabla^2 v_t(\vtheta)\big]\Big\Vert \notag\\
		& \hspace{0.5cm} +\b \E\Big\Vert \nabla^2 v_t(\vtau) - \nabla^2 v_t(\vtheta)\Big\Vert + \E\Big\Vert\nabla v_t(\vtau)\nabla^\prime v_t(\vtau)f_t^{X}\big(v_t(\vtau)\big)- \nabla v_t(\vtheta)\nabla^\prime v_t(\vtheta)f_t^{X}\big(v_t(\vtheta)\big)\Big\Vert\notag\\
		&=:A_{4} + B_{4} + C_{4} + D_{4}.\label{eq:decomp Lambda}
	\end{align}
	
	A mean value expansion around $\vtheta_0^v$ and Assumptions~\ref{ass:4}--\ref{ass:5} imply that
	\begin{align}
		A_{4} &=\E\norm{\nabla^2 v_t(\vtau)\Big[ F_t^{X}\big(v_t(\vtau)\big)-F_t^{X}\big(v_t(\vtheta)\big)\Big]} \\
		&\leq \E\norm{V_2(\mZ_t)f_t^{X}\big(v_t(\vtheta^{\ast})\big)\nabla^\prime v_t(\vtheta^{\ast})(\vtau-\vtheta)}\notag\\
		&\leq K \E\big[V_1(\mZ_t)V_2(\mZ_t)\big]\norm{\vtau-\vtheta}\notag\\
		&\leq K \big\Vert V_1(\mZ_t)\big\Vert_{2} \big\Vert V_2(\mZ_t)\big\Vert_{2}\norm{\vtau-\vtheta}\notag\\
		&\leq C \norm{\vtau-\vtheta},\label{eq:first term}
	\end{align}
	where $\vtheta^{\ast}$ is some value on the line connecting $\vtau$ and $\vtheta$, and the penultimate step uses H\"{o}lder's inequality.
	
	By Assumption~\ref{ass:4}~\ref{it:4v}, we have for $B_{4}$ and $C_{4}$ that
	\begin{align}
		B_{4} &= \E\Big\Vert F_t^{X}\big(v_t(\vtheta)\big)\big[\nabla^2 v_t(\vtau) - \nabla^2 v_t(\vtheta)\big]\Big\Vert \leq \E\big\Vert \nabla^2 v_t(\vtau) - \nabla^2 v_t(\vtheta)\big\Vert\leq C \big\Vert \vtau-\vtheta\big\Vert,\label{eq:(A.28)}\\
		C_{4} &=	\b \E\big\Vert \nabla^2 v_t(\vtau) - \nabla^2 v_t(\vtheta)\big\Vert \leq \b \E\big[V_3(\mZ_t)\big]\norm{\vtau - \vtheta}\leq C \norm{\vtau - \vtheta}.\label{eq:nought term}
	\end{align}
	
	For $D_{4}$, we use a mean value expansion around $\vtheta$ to obtain for some $\vtheta^{\ast}$ between $\vtau$ and $\vtheta$ (where $\vtheta^{\ast}$ may vary from line to line) that
	\begin{align}
		D_{4}&=\E\Big\Vert\nabla v_t(\vtau)\nabla^\prime v_t(\vtau)f_t^{X}\big(v_t(\vtau)\big) - \nabla v_t(\vtheta)\nabla^\prime v_t(\vtheta)f_t^{X}\big(v_t(\vtheta)\big)\Big\Vert\notag\\
		&=\E\Big\Vert\nabla v_t(\vtau)\nabla^\prime v_t(\vtau)f_t^{X}\big(v_t(\vtau)\big) - \nabla v_t(\vtheta)\nabla^\prime v_t(\vtau)f_t^{X}\big(v_t(\vtau)\big)\notag\\
		&\hspace{1.8cm} + \nabla v_t(\vtheta)\nabla^\prime v_t(\vtau)f_t^{X}\big(v_t(\vtau)\big) - \nabla v_t(\vtheta)\nabla^\prime v_t(\vtheta)f_t^{X}\big(v_t(\vtau)\big)\notag\\
		&\hspace{1.8cm} + \nabla v_t(\vtheta)\nabla^\prime v_t(\vtheta)f_t^{X}\big(v_t(\vtau)\big) - \nabla v_t(\vtheta)\nabla^\prime v_t(\vtheta)f_t^{X}\big(v_t(\vtheta)\big) \Big\Vert\notag\\
		&= \E\Big\Vert\nabla^2 v_t(\vtheta^{\ast})(\vtau - \vtheta)\nabla^\prime v_t(\vtau)f_t^{X}\big(v_t(\vtau)\big)\notag\\
		&\hspace{1.8cm} + \nabla v_t(\vtheta)(\vtau - \vtheta)^\prime [\nabla^2 v_t(\vtheta^{\ast})]^\prime f_t^{X}\big(v_t(\vtau)\big) \notag\\
		&\hspace{1.8cm} + \nabla v_t(\vtheta)\nabla^\prime v_t(\vtheta)\big\{f_t^{X}\big(v_t(\vtau)\big) - f_t^{X}\big(v_t(\vtheta)\big)\big\} \Big\Vert\notag\\
		&\leq \E\big[KV_2(\mZ_t)V_1(\mZ_t) + KV_1(\mZ_t)V_2(\mZ_t) + K V_1^3(\mZ_t) \big]\norm{\vtau - \vtheta}\notag\\
		&\leq K\Big\{2\big\Vert V_1(\mZ_t) \big\Vert_2\big\Vert V_2(\mZ_t) \big\Vert_{2}  + \E\big[V_1^3(\mZ_t)\big]\Big\}\norm{\vtau - \vtheta}\notag\\
		&\leq C\norm{\vtau - \vtheta},\label{eq:second term}
	\end{align}
	where we used Assumptions~\ref{ass:4}--\ref{ass:5}.
	
	Plugging \eqref{eq:first term}--\eqref{eq:second term} into \eqref{eq:decomp Lambda} yields for all $\vtau,\vtheta\in\mathcal{N}(\vtheta_0^v)$ that
	\begin{equation}\label{eq:cty Lambda_n}
		\norm{\mLambda(\vtau) - \mLambda(\vtheta)} \leq C \norm{\vtau - \vtheta}.
	\end{equation}
	Using this, we obtain for $\delta>0$ with $\big\{\vtheta\in\mTheta^v\colon\norm{\vtheta-\vtheta_0^v}\leq\delta\big\}\subset\mathcal{N}(\vtheta_0^v)$ that
	\begin{align*}
		\P\Big\{\big\Vert\mLambda(\vtheta_n^v) -\mLambda \big\Vert >\varepsilon\Big\}&=\P\Big\{\big\Vert\mLambda(\vtheta_n^v) -\mLambda(\vtheta_0^v)\big\Vert >\varepsilon\Big\}\\
		&\leq \P\Big\{\big\Vert\mLambda(\vtheta_n^v) -\mLambda(\vtheta_0^v)\big\Vert >\varepsilon,\ \big\Vert\vtheta_n^v - \vtheta_0^v\big\Vert\leq\delta\Big\} + \P\Big\{\big\Vert\vtheta_n^v - \vtheta_0^v\big\Vert>\delta\Big\}\\
		&\leq \P\bigg\{\sup_{\Vert\vtheta-\vtheta_0^v\Vert\leq\delta}\big\Vert\mLambda(\vtheta) -\mLambda(\vtheta_0^v)\big\Vert >\varepsilon\bigg\} + o(1)\\
		&\leq \P\bigg\{\sup_{\Vert\vtheta-\vtheta_0^v\Vert\leq\delta}C\big\Vert\vtheta -\vtheta_0^v\big\Vert >\varepsilon\bigg\} + o(1)\\
		&=0+o(1),
	\end{align*}
	where the final line additionally requires $\delta<\varepsilon/C$. This shows that $\mLambda(\vtheta_n^v)\overset{\P}{\longrightarrow}\mLambda$. By Assumption~\ref{ass:6} and continuity of $\mLambda(\cdot)$ (from \eqref{eq:cty Lambda_n}), $\mLambda(\vtheta^v)$ is non-singular in a neighborhood of $\vtheta_0^v$. Therefore, the continuous mapping theorem (CMT) applied to $\mLambda(\vtheta_n^v)\overset{\P}{\longrightarrow}\mLambda$ implies that $\mLambda^{-1}(\vtheta_n^v)\overset{\P}{\longrightarrow}\mLambda^{-1}$, as desired.
\end{proof}

\begin{lemV}\label{lem:2}
	Suppose Assumptions~\ref{ass:1}--\ref{ass:7} hold. Then, as $n\to\infty$,
	\[
	\sqrt{n}\widehat{\vlambda}_n(\widehat{\vtheta}_n^v)=o_{\P}(1).
	\]
\end{lemV}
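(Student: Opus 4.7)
The obstacle is that $\widehat{Q}_n^v(\vtheta^v) := n^{-1}\sum_{t=1}^n S^{\VaR}(v_t(\vtheta^v), X_t)$ is not differentiable at any $\vtheta^v$ for which $X_t = v_t(\vtheta^v)$ for some $t$, so the usual first-order condition $\widehat{\vlambda}_n(\widehat{\vtheta}_n^v) = \vzeros$ cannot be invoked. My plan is to quantify the ``defect'' of this FOC via a one-sided directional derivative computation, and then show that the defect is $o_{\P}(n^{-1/2})$ using Assumption~\ref{ass:7} together with a standard maximal inequality applied to $V_1(\mZ_t)$.

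By Proposition~\ref{prop:cons} and Assumption~\ref{ass:1}~\ref{it:1iii}, $\widehat{\vtheta}_n^v \in \inter(\mTheta^v)$ w.p.a.~1, so for every $\vd \in \mathbb{R}^p$ the right-derivative of $\widehat{Q}_n^v$ at $\widehat{\vtheta}_n^v$ in direction $\vd$ is nonnegative w.p.a.~1. For each $t$ with $X_t \neq v_t(\widehat{\vtheta}_n^v)$, $S^{\VaR}(v_t(\cdot), X_t)$ is smooth in a neighborhood of $\widehat{\vtheta}_n^v$ and contributes $[\1_{\{X_t \leq v_t(\widehat{\vtheta}_n^v)\}}-\b]\nabla' v_t(\widehat{\vtheta}_n^v)\vd$, matching the corresponding term in $\vd'\widehat{\vlambda}_n(\widehat{\vtheta}_n^v)$. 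For each $t$ with $X_t = v_t(\widehat{\vtheta}_n^v)$ (using the convention $\1_{\{X_t \leq v_t\}} = 1$ at equality), a direct calculation shows that the right-derivative equals $(1-\b)\nabla' v_t(\widehat{\vtheta}_n^v)\vd$ if $\nabla' v_t(\widehat{\vtheta}_n^v)\vd \geq 0$ (no indicator flip) and $-\b\nabla' v_t(\widehat{\vtheta}_n^v)\vd$ if $\nabla' v_t(\widehat{\vtheta}_n^v)\vd < 0$ (indicator flips from $1$ to $0$). Combining the two cases, the right-derivative equals
\[
\vd'\widehat{\vlambda}_n(\widehat{\vtheta}_n^v) \;-\; \frac{1}{n}\sum_{\substack{t:\,X_t=v_t(\widehat{\vtheta}_n^v),\\ \nabla' v_t(\widehat{\vtheta}_n^v)\vd < 0}}\nabla' v_t(\widehat{\vtheta}_n^v)\vd.
\]
Applying nonnegativity to both $+\vd$ and $-\vd$ then yields the two-sided bound
\[
\big|\vd'\widehat{\vlambda}_n(\widehat{\vtheta}_n^v)\big| \;\leq\; \frac{1}{n}\sum_{t:\,X_t=v_t(\widehat{\vtheta}_n^v)}\big|\nabla' v_t(\widehat{\vtheta}_n^v)\vd\big| \;\leq\; \frac{K}{n}\max_{1\leq t\leq n}V_1(\mZ_t)\cdot \Vert\vd\Vert,
\]
where the last step uses Assumption~\ref{ass:7} to cap the number of ``tie'' observations by $K$, and Assumption~\ref{ass:4}~\ref{it:4iii} to bound $\Vert\nabla v_t(\widehat{\vtheta}_n^v)\Vert \leq V_1(\mZ_t)$. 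Taking the supremum over unit $\vd$ gives $\Vert\widehat{\vlambda}_n(\widehat{\vtheta}_n^v)\Vert \leq (K/n)\max_{t\leq n}V_1(\mZ_t)$.

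The remaining step is a routine maximal inequality: by Markov applied to the $4r$-th moment bound $\E[V_1^{4r}(\mZ_t)] \leq K$ from Assumption~\ref{ass:5}, one obtains $\max_{t\leq n}V_1(\mZ_t) = O_{\P}(n^{1/(4r)})$, hence
\[
\sqrt{n}\,\big\Vert\widehat{\vlambda}_n(\widehat{\vtheta}_n^v)\big\Vert \;=\; O_{\P}\big(n^{-1/2+1/(4r)}\big) \;=\; o_{\P}(1),
\]
since $r > 1$ forces $1/(4r) < 1/2$. The main conceptual hurdle is the kink accounting in the second paragraph; once the contribution of the tie observations is correctly identified and Assumption~\ref{ass:7} used to cap its size, the rest is bookkeeping plus a standard tail bound on the maximum of a moment-controlled sequence.
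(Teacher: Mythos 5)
Your proof is correct and follows essentially the same route as the paper's, namely the Ruppert--Carroll-style argument: bound the defect in the first-order condition by the one-sided directional derivatives at the minimizer, cap the number of tie observations $\{X_t=v_t(\widehat{\vtheta}_n^v)\}$ by $K$ via Assumption~\ref{ass:7}, and kill the remaining $n^{-1/2}\max_{t\le n}V_1(\mZ_t)$ factor with a union bound plus Markov's inequality. The only differences are cosmetic: you work with arbitrary directions $\vd$ and a $4r$-th-moment tail bound, whereas the paper sandwiches the coordinate-wise right derivative via $G_{j,n}(\xi)-G_{j,n}(-\xi)$ and uses a third-moment bound.
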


\begin{proof}
	We adopt the proof strategy of \citet[Proof of Lemma~A.2]{RC80}, which was subsequently used by, e.g., \citet[Proof of Theorem~2]{EM04} and \citet[Proof of Lemma~2]{PZC19}.
	
	Recall from Assumption~\ref{ass:1}~\ref{it:1ii} that $\mTheta^{v}\subset\mathbb{R}^{p}$, such that $\vtheta^v$ is a $p$-dimensional parameter vector. Let $\ve_1,\ldots,\ve_p$ denote the standard basis of $\mathbb{R}^{p}$ and define for $a\in\mathbb{R}$
	\[
	S_{j,n}^{\VaR}(a) := \frac{1}{\sqrt{n}}\sum_{t=1}^{n}S^{\VaR}\big(v_t(\widehat{\vtheta}_n^{v}+a \ve_j), X_t\big),\qquad j=1,\ldots,p.
	\]
	Let $G_{j,n}(a)$ be the right partial derivative of $S_{j,n}^{\VaR}(a)$, such that (see \eqref{eq:gt})
	\[
	G_{j,n}(a)=\frac{1}{\sqrt{n}}\sum_{t=1}^{n}\nabla_j v_t(\widehat{\vtheta}_n^{v}+a \ve_j)\big[\1_{\{X_t\leq v_t(\widehat{\vtheta}_n^{v}+a \ve_j)\}}-\b\big],
	\]
	where $\nabla_j v_t(\cdot)$ is the $j$-th component of $\nabla v_t(\cdot)$. Then, $G_{j,n}(0)=\lim_{\xi\downarrow0}G_{j,n}(\xi)$ is the right partial derivative of
	\[
	S_n^{\VaR}(\vtheta^{v}) := \frac{1}{\sqrt{n}}\sum_{t=1}^{n}S^{\VaR}\big(v_t(\vtheta^v), X_t\big)
	\]
	at $\widehat{\vtheta}_n^{v}$ in the direction $\theta_j^{v}$, where $\vtheta^{v}=(\theta_1^{v},\ldots,\theta_p^{v})^\prime$. 
	Correspondingly, $\lim_{\xi\downarrow0}G_{j,n}(-\xi)$ is the left partial derivative. 
	Because $S_n^{\VaR}(\cdot)$ achieves its minimum at $\widehat{\vtheta}_n^{v}$, the left derivative must be non-positive and the right derivative must be non-negative, such that for any sufficiently small $\xi>0$,
	\[
	G_{j,n}(-\xi) \leq G_{j,n}(0)\leq G_{j,n}(\xi).
	\]
	Thus,
	\begin{align}
		\big|G_{j,n}(0)\big| &\leq G_{j,n}(\xi) - G_{j,n}(-\xi)\notag\\
		&= \frac{1}{\sqrt{n}}\sum_{t=1}^{n}\nabla_j v_t(\widehat{\vtheta}_n^{v}+\xi \ve_j)\big[\1_{\{X_t\leq v_t(\widehat{\vtheta}_n^{v}+\xi \ve_j)\}}-\b\big]\notag\\
		&\hspace{2cm} - \frac{1}{\sqrt{n}}\sum_{t=1}^{n}\nabla_j v_t(\widehat{\vtheta}_n^{v}-\xi \ve_j)\big[\1_{\{X_t\leq v_t(\widehat{\vtheta}_n^{v}-\xi \ve_j)\}}-\b\big]\notag\\
		&=\frac{1}{\sqrt{n}}\sum_{t=1}^{n}\Big[\nabla_j v_t(\widehat{\vtheta}_n^{v}+\xi \ve_j)\1_{\{X_t\leq v_t(\widehat{\vtheta}_n^{v}+\xi \ve_j)\}}-\nabla_j v_t(\widehat{\vtheta}_n^{v}-\xi \ve_j)\1_{\{X_t\leq v_t(\widehat{\vtheta}_n^{v}-\xi \ve_j)\}}\Big]\label{eq:(A.33m)}\\
		&\hspace{2cm} - \frac{\b}{\sqrt{n}}\sum_{t=1}^{n}\big[\nabla_j v_t(\widehat{\vtheta}_n^{v}+\xi \ve_j) - \nabla_j v_t(\widehat{\vtheta}_n^{v}-\xi \ve_j)\big].\label{eq:second sum}
	\end{align}
	By continuity of $\nabla v_t(\cdot)$ (see Assumption~\ref{ass:4} \ref{it:4i}) it follows for \eqref{eq:second sum} that a.s.
	\begin{equation}\label{eq:fi re}
		\lim_{\xi\downarrow0}\frac{\b}{\sqrt{n}}\sum_{t=1}^{n}\big[\nabla_j v_t(\widehat{\vtheta}_n^{v}+\xi \ve_j) - \nabla_j v_t(\widehat{\vtheta}_n^{v}-\xi \ve_j)\big]=0.
	\end{equation}
	For the term in \eqref{eq:(A.33m)}, we consider the three cases $X_t<v_t(\widehat{\vtheta}_n^v)$, $X_t=v_t(\widehat{\vtheta}_n^v)$ and $X_t>v_t(\widehat{\vtheta}_n^v)$ for each $t=1,\ldots,n$.
	In case $X_t<v_t(\widehat{\vtheta}_n^v)$ both indicators equal one for sufficiently small $\xi>0$ (by continuity of $v_t(\cdot)$), such that
	\begin{multline}\label{eq:1f}
		\Big[\nabla_j v_t(\widehat{\vtheta}_n^{v}+\xi \ve_j)\1_{\{X_t\leq v_t(\widehat{\vtheta}_n^{v}+\xi \ve_j)\}}-\nabla_j v_t(\widehat{\vtheta}_n^{v}-\xi \ve_j)\1_{\{X_t\leq v_t(\widehat{\vtheta}_n^{v}-\xi \ve_j)\}}\Big]\\
		=\big[\nabla_j v_t(\widehat{\vtheta}_n^{v}+\xi \ve_j)-\nabla_j v_t(\widehat{\vtheta}_n^{v}-\xi \ve_j)\big]\overset{\text{a.s.}}{\underset{(\xi\downarrow0)}{\longrightarrow}}0.
	\end{multline}
	When $X_t>v_t(\widehat{\vtheta}_n^v)$ both indicators equal zero for sufficiently small $\xi>0$, such that
	\begin{equation}\label{eq:2f}
		\Big[\nabla_j v_t(\widehat{\vtheta}_n^{v}+\xi \ve_j)\1_{\{X_t\leq v_t(\widehat{\vtheta}_n^{v}+\xi \ve_j)\}}-\nabla_j v_t(\widehat{\vtheta}_n^{v}-\xi \ve_j)\1_{\{X_t\leq v_t(\widehat{\vtheta}_n^{v}-\xi \ve_j)\}}\Big]\overset{\text{a.s.}}{\underset{(\xi\downarrow0)}{\longrightarrow}}0.
	\end{equation}
	If $X_t=v_t(\widehat{\vtheta}_n^v)$, then it may still happen that both indicators have equal values in the limit as $\xi\downarrow0$ (namely when $v_t(\cdot)$ has a local extremum in $\widehat{\vtheta}_n^v$), in which case the above arguments show that
	\begin{equation}\label{eq:3f}
		\Big[\nabla_j v_t(\widehat{\vtheta}_n^{v}+\xi \ve_j)\1_{\{X_t\leq v_t(\widehat{\vtheta}_n^{v}+\xi \ve_j)\}}-\nabla_j v_t(\widehat{\vtheta}_n^{v}-\xi \ve_j)\1_{\{X_t\leq v_t(\widehat{\vtheta}_n^{v}-\xi \ve_j)\}}\Big]\overset{\text{a.s.}}{\underset{(\xi\downarrow0)}{\longrightarrow}}0.
	\end{equation}
	So it remains to consider the case where the indicators differ from each other in the limit for $\xi\downarrow0$ (i.e., are equal to 0 and 1).
	Then,
	\begin{multline}\label{eq:4f}
		\lim_{\xi\downarrow0}\Big|\nabla_j v_t(\widehat{\vtheta}_n^{v}+\xi \ve_j)\1_{\{X_t\leq v_t(\widehat{\vtheta}_n^{v}+\xi \ve_j)\}}-\nabla_j v_t(\widehat{\vtheta}_n^{v}-\xi \ve_j)\1_{\{X_t\leq v_t(\widehat{\vtheta}_n^{v}-\xi \ve_j)\}}\Big|\\
		\leq \big|\nabla_j v_t(\widehat{\vtheta}_n^{v})\big| \1_{\{X_t= v_t(\widehat{\vtheta}_n^{v})\}}.
	\end{multline}
	Therefore, for the term in \eqref{eq:(A.33m)} it follows from \eqref{eq:1f}--\eqref{eq:4f} that a.s.
	\begin{multline}\label{eq:se re}
		\lim_{\xi\downarrow0}\frac{1}{\sqrt{n}}\sum_{t=1}^{n}\Big[\nabla_j v_t(\widehat{\vtheta}_n^{v}+\xi \ve_j)\1_{\{X_t\leq v_t(\widehat{\vtheta}_n^{v}+\xi \ve_j)\}}-\nabla_j v_t(\widehat{\vtheta}_n^{v}-\xi \ve_j)\1_{\{X_t\leq v_t(\widehat{\vtheta}_n^{v}-\xi \ve_j)\}}\Big]\\
		\leq \frac{1}{\sqrt{n}}\sum_{t=1}^{n}\big|\nabla_j v_t(\widehat{\vtheta}_n^{v})\big|\1_{\{X_t= v_t(\widehat{\vtheta}_n^{v})\}}.
	\end{multline}
	
	Combining \eqref{eq:fi re} and \eqref{eq:se re}, we obtain upon letting $\xi\downarrow0$ that a.s.
	\begin{align}
		\big|G_{j,n}(0)\big| &\leq \frac{1}{\sqrt{n}}\sum_{t=1}^{n}\big|\nabla_j v_t(\widehat{\vtheta}_n^{v})\big|\1_{\{X_t= v_t(\widehat{\vtheta}_n^{v})\}}\notag\\
		&\leq \frac{1}{\sqrt{n}}\Big[\max_{t=1,\ldots,n}V_1(\mZ_t)\Big]\sum_{t=1}^{n}\1_{\{X_t= v_t(\widehat{\vtheta}_n^{v})\}}\notag\\
		&\leq \frac{1}{\sqrt{n}}\Big[\max_{t=1,\ldots,n}V_1(\mZ_t)\Big]\sup_{\vtheta^v\in\mTheta^v}\sum_{t=1}^{n}\1_{\{X_t= v_t(\vtheta^{v})\}}.\label{eq:(N.7.1)}
	\end{align}
	Subadditivity, Markov's inequality and Assumption~\ref{ass:5} imply in turn that
	\begin{align*}
		\P\Big\{n^{-1/2}\max_{t=1,\ldots,n}V_1(\mZ_t)>\varepsilon\Big\} &\leq \sum_{t=1}^{n}\P\Big\{V_1(\mZ_t)>\varepsilon n^{1/2}\Big\}\\
		&\leq \sum_{t=1}^{n}\varepsilon^{-3}n^{-3/2}\E\big[V_1^3(\mZ_t)\big]\\
		& =O(n^{-1/2})\\
		&=o(1),
	\end{align*}
	such that $n^{-1/2}\max_{t=1,\ldots,n}V_1(\mZ_t)=o_{\P}(1)$.
	Plugging this and Assumption~\ref{ass:7} into \eqref{eq:(N.7.1)} yields that
	\[
	\big|G_{j,n}(0)\big| = o_{\P}(1)O(1)=o_{\P}(1).
	\]
	As this holds for every $j=1,\ldots,p$, we get that 
	\[
	\frac{1}{\sqrt{n}}\sum_{t=1}^{n}\Vv_{t}(\widehat{\vtheta}_n^{v})=\frac{1}{\sqrt{n}}\sum_{t=1}^{n}\nabla v_t(\widehat{\vtheta}_n^{v})\big[\1_{\{X_t\leq v_t(\widehat{\vtheta}_n^{v})\}} - \b\big]=o_{\P}(1),
	\]
	which is just the conclusion.
\end{proof}

For the proof of Lemma~V.\ref{lem:3} below, we require the following preliminary result.

\begin{lem}\label{lem:type IV Vv}
	Suppose Assumptions~\ref{ass:1}--\ref{ass:7} hold. Then, $\Vv_t(\cdot)$ forms a type IV class in the sense of \citet[p.~2278]{And94} with index $p=2r$ (where $r>1$ is from Assumption~\ref{ass:2}).
\end{lem}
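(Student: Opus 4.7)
The plan is to verify the type IV condition in the sense of \citet[p.~2278]{And94}, that is, to exhibit constants $C>0$ and $\psi>0$ with
\[
\sup_{t\in\mathbb{N}}\bigg\Vert\sup_{\Vert\widetilde{\vtheta}^v-\vtheta^v\Vert\leq\delta}\big\Vert\Vv_t(\widetilde{\vtheta}^v)-\Vv_t(\vtheta^v)\big\Vert\bigg\Vert_{2r}\leq C\delta^{\psi},
\]
uniformly for $\vtheta^v\in\mTheta^v$ and all sufficiently small $\delta>0$. I would start from the add-and-subtract decomposition
\[
\Vv_t(\widetilde{\vtheta}^v)-\Vv_t(\vtheta^v) = \big[\nabla v_t(\widetilde{\vtheta}^v)-\nabla v_t(\vtheta^v)\big]\big[\1_{\{X_t\leq v_t(\widetilde{\vtheta}^v)\}}-\b\big] + \nabla v_t(\vtheta^v)\big[\1_{\{X_t\leq v_t(\widetilde{\vtheta}^v)\}}-\1_{\{X_t\leq v_t(\vtheta^v)\}}\big],
\]
and treat the two pieces separately.

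For the smooth piece, the mean value theorem and Assumption~\ref{ass:4}~\ref{it:4iv} give a Lipschitz bound $\Vert\nabla v_t(\widetilde{\vtheta}^v)-\nabla v_t(\vtheta^v)\Vert\leq V_2(\mZ_t)\Vert\widetilde{\vtheta}^v-\vtheta^v\Vert$, while the indicator-minus-$\b$ factor is bounded by $1+\b$. Taking the supremum over a $\delta$-ball and then the $L^{2r}$-norm, Assumption~\ref{ass:5} (which gives $\E[V_2^{2r}(\mZ_t)]\leq K$) yields a contribution of order $\delta$.

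For the jump piece, I would localize the indicators by introducing the $\mZ_t$-measurable $\underline{\vtheta}^v, \overline{\vtheta}^v$ from \eqref{eq:theta v ast} in the proof of Lemma~\ref{lem:Type IV}, so that
\[
\sup_{\Vert\widetilde{\vtheta}^v-\vtheta^v\Vert\leq\delta}\big|\1_{\{X_t\leq v_t(\widetilde{\vtheta}^v)\}}-\1_{\{X_t\leq v_t(\vtheta^v)\}}\big|\leq \big|\1_{\{X_t\leq v_t(\overline{\vtheta}^v)\}}-\1_{\{X_t\leq v_t(\underline{\vtheta}^v)\}}\big|.
\]
The resulting bound factors as $V_1(\mZ_t)$ times this indicator difference, and I would apply H\"older's inequality with dual exponents $(2,2)$ in the $L^{2r}$-norm, giving $\Vert V_1(\mZ_t)\Vert_{4r}\cdot\Vert\1_{\{X_t\leq v_t(\overline{\vtheta}^v)\}}-\1_{\{X_t\leq v_t(\underline{\vtheta}^v)\}}\Vert_{4r}$. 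The first factor is uniformly bounded by Assumption~\ref{ass:5}, and since the indicator difference takes values in $\{-1,0,1\}$, its $L^{4r}$-norm reduces to an $L^{1}$ calculation, which by the conditioning-on-$\mathcal{F}_t$ argument used in \eqref{eq:help ind1} (invoking Assumptions~\ref{ass:3}~\ref{it:3ii} and \ref{ass:4}~\ref{it:4iii}) is bounded by $C\delta$. Raising to the $1/(4r)$ power gives a contribution of order $\delta^{1/(4r)}$.

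Combining the two pieces yields the claim with $\psi = 1/(4r)$, which is the main obstacle; the rest is essentially bookkeeping with H\"older's inequality and the moment bounds of Assumption~\ref{ass:5}. I expect the non-trivial part to be the precise allocation of exponents in H\"older so that the moment bounds on $V_1$ in Assumption~\ref{ass:5} suffice; choosing the symmetric split $(4r, 4r)$ works because the $4r$-th moment of $V_1$ is finite and the $4r$-th power of a $\{0,\pm 1\}$-valued indicator difference collapses to its expectation.
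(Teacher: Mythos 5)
Your proposal is correct and follows essentially the same route as the paper's proof: the same add-and-subtract decomposition into a smooth gradient-difference piece (handled via the mean value theorem, $V_2$, and Assumption~\ref{ass:5}, giving order $\delta$) and an indicator-jump piece (localized via $\underline{\vtheta}^v,\overline{\vtheta}^v$, then H\"older with the symmetric $(4r,4r)$ split and the $L^1$-collapse of the indicator difference via \eqref{eq:help ind1}, giving order $\delta^{1/(4r)}$), arriving at the same exponent $\psi=1/(4r)$. The only cosmetic difference is that the paper argues componentwise for $\mathsf{v}_t^{(i)}$ while you work with the vector norm directly, which is equivalent.
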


\begin{proof}
	Write $\Vv_t(\cdot)=\big(\mathsf{v}_t^{(1)}(\cdot), \ldots, \mathsf{v}_t^{(p)}(\cdot)\big)^\prime$.
	To prove the lemma, we have to show that there exist constants $C>0$ and $\psi>0$, such that
	\begin{equation}\label{eq:TSL3}
		\sup_{t\in\mathbb{N}}\bigg\Vert\sup_{\Vert\widetilde{\vtheta}^v - \vtheta^v\Vert\leq\delta}\big|\mathsf{v}_t^{(i)}(\widetilde{\vtheta}^v) - \mathsf{v}_t^{(i)}(\vtheta^v)\big|\bigg\Vert_{2r}\leq C\delta^\psi
	\end{equation}
	for all $i=1,\ldots,p$, all $\vtheta^v\in\mTheta^v$ and all $\delta>0$ in some neighborhood of 0.
	Fix $i\in\{1,\ldots,p\}$, $\vtheta^v\in\mTheta^v$ and $\delta>0$  such that $\{\widetilde{\vtheta}^v\colon \Vert\widetilde{\vtheta}^v-\vtheta^v\Vert\leq\delta\}\subset\mTheta^v$. Then,
	\begin{align*}
		\bigg\Vert & \sup_{\Vert\widetilde{\vtheta}^v - \vtheta^v\Vert\leq\delta}\big|\mathsf{v}_t^{(i)}(\widetilde{\vtheta}^v) - \mathsf{v}_t^{(i)}(\vtheta^v)\big|\bigg\Vert_{2r} \\
		&=  \bigg\Vert\sup_{\Vert\widetilde{\vtheta}^v - \vtheta^v\Vert\leq\delta}\Big|\nabla_i v_t(\widetilde{\vtheta}^v)\big[\1_{\{X_t\leq v_t(\widetilde{\vtheta}^v)\}} - \b\big] - \nabla_i v_t(\vtheta^v)\big[\1_{\{X_t\leq v_t(\vtheta^v)\}} - \b\big]\Big|\bigg\Vert_{2r} \\
		&= \bigg\Vert\sup_{\Vert\widetilde{\vtheta}^v - \vtheta^v\Vert\leq\delta}\Big|\big[\nabla_i v_t(\widetilde{\vtheta}^v) - \nabla_i v_t(\vtheta^v)\big]\big[\1_{\{X_t\leq v_t(\widetilde{\vtheta}^v)\}} - \b\big] \\
		&\hspace{7cm} + \nabla_i v_t(\vtheta^v)\big[\1_{\{X_t\leq v_t(\widetilde{\vtheta}^v)\}} - \1_{\{X_t\leq v_t(\vtheta^v)\}}\big]\Big|\bigg\Vert_{2r} \\
		&\leq \bigg\Vert\sup_{\Vert\widetilde{\vtheta}^v - \vtheta^v\Vert\leq\delta}\Big|\big[\nabla_i v_t(\widetilde{\vtheta}^v) - \nabla_i v_t(\vtheta^v)\big]\big[\1_{\{X_t\leq v_t(\widetilde{\vtheta}^v)\}} - \b\big]\Big|\bigg\Vert_{2r}\\
		&\hspace{4cm} + \bigg\Vert\sup_{\Vert\widetilde{\vtheta}^v - \vtheta^v\Vert\leq\delta}\Big|\nabla_i v_t(\vtheta^v)\big[\1_{\{X_t\leq v_t(\widetilde{\vtheta}^v)\}} - \1_{\{X_t\leq v_t(\vtheta^v)\}}\big]\Big|\bigg\Vert_{2r}\\
		&=:A_{5t} + B_{5t},
	\end{align*}
	where the third step follows from Minkowski's inequality.
	
	Consider the two terms separately. Use the MVT and Assumptions~\ref{ass:4}~\ref{it:4iv} and \ref{ass:5} to write
	\begin{align*}
		A_{5t} &\leq \bigg\Vert\sup_{\Vert\widetilde{\vtheta}^v - \vtheta^v\Vert\leq\delta}\Big|\nabla_{(i,\cdot)}^{2} v_t(\vtheta^{\ast})\big(\widetilde{\vtheta}^v-\vtheta^v\big)\Big|\bigg\Vert_{2r}\\
		& \leq \bigg\Vert\sup_{\vtheta^v\in\mTheta^v}\big|\nabla_{(i,\cdot)}^{2} v_t(\vtheta^v)\big|\bigg\Vert_{2r}\delta\\
		&\leq \big\Vert V_2(\mZ_t)\big\Vert_{2r}\delta\\
		&\leq C\delta,
	\end{align*}
	where $\nabla_{(i,\cdot)}^{2}$ denotes the $i$-th row of the Hessian.
	Furthermore, arguing similarly as for \eqref{eq:C3t},
	\begin{align*}
		B_{5t} &\leq \Big\Vert V_1(\mZ_t)\big[\1_{\{X_t\leq v_t(\overline{\vtheta}^v)\}} - \1_{\{X_t\leq v_t(\underline{\vtheta}^v)\}}\big]\Big\Vert_{2r}\\
		&\leq \big\Vert V_1(\mZ_t) \big\Vert_{4r} \big\Vert \1_{\{X_t\leq v_t(\overline{\vtheta}^v)\}} - \1_{\{X_t\leq v_t(\underline{\vtheta}^v)\}}\big\Vert_{4r}\\
		&\leq C\delta^{1/(4r)}.
	\end{align*}
	Overall, \eqref{eq:TSL3} follows with $\psi=1/(4r)$.
\end{proof}

\begin{lemV}\label{lem:3}
	Suppose Assumptions~\ref{ass:1}--\ref{ass:7} hold. Then, (on a possibly enlarged probability space) there exists a sequence of zero-mean Gaussian processes $\mB^{(n)}(\cdot)$ with a.s.~continuous sample paths and covariance function $\mGamma(\cdot,\cdot)$, such that
	\begin{equation*}
		\sup_{\vtheta^v\in\mTheta^v}\Big\Vert\sqrt{n} \big[\widehat{\vlambda}_n(\vtheta^v) - \vlambda(\vtheta^v)\big]- \mB^{(n)}(\vtheta^v)\Big\Vert=o_{\P}(1),
	\end{equation*}
	where the covariance function satisfies
	\[
	\mGamma(\vtheta^v,\vtheta^v)=\E\big[\Vv_1(\vtheta^v)\Vv_1^\prime(\vtheta^v)\big] + \sum_{j=1}^{\infty}\Big\{\E\big[\Vv_1(\vtheta^v)\Vv_{1+j}^\prime(\vtheta^v)\big] + \E\big[\Vv_{1+j}(\vtheta^v)\Vv_1^\prime(\vtheta^v)\big]\Big\}.
	\]
\end{lemV}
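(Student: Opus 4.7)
The plan is to apply the functional central limit theorem of \citet{DMR95} (their Theorem~1 together with Application~1) to the centered empirical process
\[
\sqrt{n}\big[\widehat{\vlambda}_n(\vtheta^v) - \vlambda_n(\vtheta^v)\big] = \frac{1}{\sqrt{n}}\sum_{t=1}^{n}\Big\{\Vv_t(\vtheta^v) - \E\big[\Vv_t(\vtheta^v)\big]\Big\},\qquad \vtheta^v\in\mTheta^v,
\]
in exact parallel to the argument that yields \eqref{eq:FCLT Thm 1} for the MES score in the proof of Proposition~\ref{prop:cons}. The bulk of the work is in verifying three hypotheses of that FCLT: (i) stationarity and $\beta$-mixing with suitable summability of the mixing coefficients, (ii) a moment bound on an envelope of the class $\{\Vv_t(\vtheta^v)\colon \vtheta^v\in\mTheta^v\}$, and (iii) Ossiander's $L^{2r}$-entropy condition.

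For (i), $\{\Vv_t(\vtheta^v)\}_{t\in\mathbb{N}}$ is stationary and $\beta$-mixing of size $-r/(r-1)$ because it is a measurable function of $(X_t,\mZ_t^\prime)^\prime$; see Assumption~\ref{ass:2} together with \citet[Theorem~3.49]{Whi01}. In particular, $\sum_{t=1}^{\infty}t^{1/(r-1)}\beta(t)<\infty$. For (ii), using $\big|\1_{\{X_t\leq v_t(\vtheta^v)\}}-\b\big|\leq 1$ together with Assumptions~\ref{ass:4}~\ref{it:4iii} and \ref{ass:5}, the envelope of $\Vv_t(\cdot)$ satisfies
\[
\E\Big[\sup_{\vtheta^v\in\mTheta^v}\big\Vert\Vv_t(\vtheta^v)\big\Vert^{2r}\Big]\leq \E\big[V_1^{2r}(\mZ_t)\big]\leq \E\big[V_1^{4r}(\mZ_t)\big]^{1/2}\leq C<\infty.
\]
For (iii), Lemma~\ref{lem:type IV Vv} shows that each component of $\Vv_t(\cdot)$ forms a type IV class with index $p=2r$, so that \citet[Theorem~5]{And94} delivers Ossiander's $L^{2r}$-entropy condition with $L^{2r}$-envelope given by the supremum displayed above.

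Applying Theorem~1 and Application~1 of \citet{DMR95} componentwise then produces, on a possibly enlarged probability space, a sequence of Gaussian processes $\mB^{(n)}(\cdot)$ with a.s.~continuous sample paths such that
\[
\sup_{\vtheta^v\in\mTheta^v}\Big\Vert\sqrt{n}\big[\widehat{\vlambda}_n(\vtheta^v) - \vlambda_n(\vtheta^v)\big]-\mB^{(n)}(\vtheta^v)\Big\Vert=o_{\P}(1).
\]
The covariance function is then identified by computing the long-run variance of the stationary sequence $\{\Vv_t(\vtheta^v)\}$: by stationarity and absolute summability of the autocovariances---which follows from the $\beta$-mixing condition combined with the $2r$-th moment bound above via a standard covariance inequality of Davydov type---one obtains precisely the expression for $\mGamma(\vtheta^v,\vtheta^v)$ stated in the lemma. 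I do not expect a genuinely new obstacle here, as the technical heart of the argument (controlling the discontinuity of $\Vv_t(\vtheta^v)$ induced by the indicator $\1_{\{X_t\leq v_t(\vtheta^v)\}}$ uniformly in $\vtheta^v$) has already been absorbed into the type IV verification carried out in Lemma~\ref{lem:type IV Vv}; everything else is bookkeeping of the DMR95 hypotheses.
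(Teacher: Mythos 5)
Your proposal is correct and follows essentially the same route as the paper's own proof: both verify the mixing/summability condition from Assumption~\ref{ass:2}, bound the $L^{2r}$-envelope via $V_1(\mZ_t)$ and Assumption~\ref{ass:5}, invoke Lemma~\ref{lem:type IV Vv} together with Theorem~5 of \citet{And94} for Ossiander's entropy condition, and then apply Theorem~1 and Application~1 of \citet{DMR95}. The only cosmetic difference is that you make the Lyapunov step $\E[V_1^{2r}(\mZ_t)]\leq\E[V_1^{4r}(\mZ_t)]^{1/2}$ explicit, which the paper leaves implicit.
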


\begin{proof}
	The arguments resemble those used to show \eqref{eq:FCLT Thm 1} in the proof of Proposition~\ref{prop:cons}. Note that $\Vv_t(\cdot)$ has uniformly bounded $2r$-th moments, because
	\begin{equation*}
		\E\bigg[\sup_{\vtheta^v\in\mTheta^v}\big\Vert\Vv_t(\vtheta^v)\big\Vert^{2r}\bigg] \leq \E\big[V_1^{2r}(\mZ_t)\big]\leq C <\infty.
	\end{equation*}
	
	From Lemma~\ref{lem:type IV Vv} and Theorem~5 of \citet{And94} it follows that $\Vv_t(\cdot)$ satisfies ``Ossiander's $L^{2r}$-entropy condition'' (with ``$L^{2r}$-envelope'' given by the supremum $\sup_{\vtheta^v\in\mTheta^v}\Vert\Vv_t(\vtheta^v)\Vert$). 
	
	Moreover, the mixing condition of Theorem~1 in \citet{DMR95} is satisfied due to $\sum_{t=1}^{\infty}t^{1/(r-1)}\beta(t)<\infty$ from Assumption~\ref{ass:2} (and standard mixing inequalities from, e.g., \citet{Bra05}).
	
	Therefore, the desired FCLT follows from Theorem~1 and Application~1 of \citet{DMR95}.
\end{proof}

\subsection{Asymptotic Normality of the MES Parameter Estimator}\label{Asymptotic Normality of the MES Parameter Estimator}

Proving asymptotic normality of $\widehat{\vtheta}_n^m$ requires some further notation and lemmas. To make the analogy to the proof of the asymptotic normality of $\widehat{\vtheta}_n^v$ more explicit, we label the lemmas as Lemma~M.\ref{lem:1+ tilde}--M.\ref{lem:3 tilde}, which play the analogous roles of Lemmas~V.\ref{lem:1+}--V.\ref{lem:3}.

Define
\begin{equation}\label{eq:(12+)}
	\Mm_{t}(\vtheta^m, \vtheta^v):= \1_{\{X_t> v_t(\vtheta^v)\}}\nabla m_t(\vtheta^m)\big[m_t(\vtheta^m)-Y_t\big].
\end{equation}
Similarly as $\Vv_t(\vtheta^v)$ from \eqref{eq:gt} in Section~\ref{Asymptotic Normality of the VaR Parameter Estimator}, this quantity has the interpretation of the MES score with respect to the MES parameters, i.e., $\frac{\partial}{\partial \vtheta^m}S^{\MES}\big((v_t(\vtheta^v), m_t(\vtheta^m))^\prime,(X_t,Y_t)^\prime\big)$.
To see this, observe that
\begin{equation*}
	\frac{\partial}{\partial m}S^{\MES}\big((v,m)^\prime,(x,y)^\prime\big) = \1_{\{x> v\}}\big[m-y\big],
\end{equation*}
such that by the chain rule,
\begin{equation*}
	\frac{\partial}{\partial \vtheta^m}S^{\MES}\big((v_t(\vtheta^v), m_t(\vtheta^m))^\prime,(X_t,Y_t)^\prime\big) = \1_{\{X_t> v_t(\vtheta^v)\}}\nabla m_t(\vtheta^m)\big[m_t(\vtheta^m)-Y_t\big]=\Mm_{t}(\vtheta^m, \vtheta^v).
\end{equation*}
By the LIE, \eqref{eq:(12+)} implies that
\begin{align*}
	\E\big[\Mm_{t}(\vtheta^m, \vtheta^v)\big] &= \E\Big[\1_{\{X_t> v_t(\vtheta^v)\}}\nabla m_t(\vtheta^m)\big\{m_t(\vtheta^m)-Y_t\big\}\Big]\\
	&= \E\bigg[\nabla m_t(\vtheta^m)\Big\{m_t(\vtheta^m)\E_{t}\big[\1_{\{X_t> v_t(\vtheta^v)\}}\big] - \E_{t}\big[\1_{\{X_t> v_t(\vtheta^v)\}}Y_t\big]\Big\} \bigg]\\
	&= \E\bigg[\nabla m_t(\vtheta^m)\Big\{ m_t(\vtheta^m)\overline{F}_{t}^{X}\big(v_t(\vtheta^v)\big) - \int_{v_t(\vtheta^v)}^{\infty}\int_{-\infty}^{\infty} y f_t(x,y)\D y\D x\Big\}\bigg],
\end{align*}
where $\overline{F}_{t}^{X}(\cdot)=1-F_{t}^{X}(\cdot)$ denotes the survivor function.
Finally, Assumptions~\ref{ass:4}--\ref{ass:5} and the DCT allow us to interchange differentiation and expectation to yield that
\begin{align}
	\frac{\partial}{\partial \vtheta^m}\E\big[\Mm_{t}(\vtheta^m, \vtheta^v)\big] &= \E\Big[\big\{\nabla^2 m_t(\vtheta^m)m_t(\vtheta^m) + \nabla m_t(\vtheta^m)\nabla^\prime m_t(\vtheta^m)\big\}\overline{F}_{t}^{X}\big(v_t(\vtheta^v)\big)\Big] \notag\\
	& \hspace{2cm}- \E\bigg[\nabla^2 m_t(\vtheta^m)\int_{v_t(\vtheta^v)}^{\infty}\int_{-\infty}^{\infty} y f_t(x,y)\D y\D x\bigg],\label{eq:(p.15)}\\
	\frac{\partial}{\partial \vtheta^v}\E\big[\Mm_{t}(\vtheta^m, \vtheta^v)\big] &= \E\bigg[\nabla m_t(\vtheta^m)\nabla^\prime v_t(\vtheta^v)\Big\{\int_{-\infty}^{\infty} y f_t\big(v_t(\vtheta^v),y\big)\D y - m_t(\vtheta^m) f_{t}^{X}\big(v_t(\vtheta^v)\big)\Big\}\bigg].\label{eq:(p.16)}
\end{align}
Evaluating these quantities at the true parameters gives
\begin{align*}
	\mLambda_{(1)} &= \frac{\partial}{\partial \vtheta^m}\E\big[\Mm_{t}(\vtheta^m, \vtheta^v)\big]\Big\vert_{\substack{\vtheta^m=\vtheta_{0}^{m}\\ \vtheta^v=\vtheta_0^v}} \\
	&= (1-\b)\E\Big[\nabla m_t(\vtheta_0^m)\nabla^\prime m_t(\vtheta_0^m)\big\}\Big],\\
	\mLambda_{(2)} &= \frac{\partial}{\partial \vtheta^v}\E\big[\Mm_{t}(\vtheta^m, \vtheta^v)\big]\Big\vert_{\substack{\vtheta^m=\vtheta_{0}^{m}\\ \vtheta^v=\vtheta_0^v}}\\
	&= \E\bigg[\nabla m_t(\vtheta_0^m)\nabla^\prime v_t(\vtheta_0^v) \Big\{\int_{-\infty}^{\infty} y f_t\big(v_t(\vtheta_0^v),y\big)\D y - m_t(\vtheta_0^m) f_{t}^{X}\big(v_t(\vtheta_0^v)\big)\Big\}\bigg],
\end{align*}
where we used for $\mLambda_{(1)}$ that $\overline{F}_t^{X}\big(v_t(\vtheta_0^v)\big)=1-\b$ and 
\[
\int_{v_t(\vtheta_0^v)}^{\infty}\int_{-\infty}^{\infty} y f_t(x,y)\D y\D x=\E_t\big[Y_t\mid X_t\geq v_t(\vtheta_0^v)\big](1-\b)=m_t(\vtheta_0^m)(1-\b).
\]
By virtue of Assumption~\ref{ass:2}, $\mLambda_{(1)}$ and $\mLambda_{(2)}$ do not depend on $t$.

The crucial step in the proof is once again to apply the FCLT of \citet{DMR95}. For this, we require some additional notation:
\begin{align*}
	\widehat{\vlambda}_n(\vtheta^m,\vtheta^v) &= \frac{1}{n}\sum_{t=1}^{n}\Mm_{t}(\vtheta^m,\vtheta^v),\\
	\vlambda(\vtheta^m,\vtheta^v) &= \E\big[\Mm_{t}(\vtheta^m,\vtheta^v)\big],\\
	\mLambda_{(1)}(\vtheta^{m\ast},\vtheta^{v\ast}) &= \frac{\partial}{\partial\vtheta^m}\E\big[\Mm_{t}(\vtheta^m,\vtheta^v)\big]\Big\vert_{\substack{\vtheta^m=\vtheta^{m\ast}\\ \vtheta^v=\vtheta^{v\ast}}},\\
	\mLambda_{(2)}(\vtheta^{m\ast},\vtheta^{v\ast}) &= \frac{\partial}{\partial\vtheta^v}\E\big[\Mm_{t}(\vtheta^m,\vtheta^v)\big]\Big\vert_{\substack{\vtheta^m=\vtheta^{m\ast}\\  \vtheta^v=\vtheta^{v\ast}}},
\end{align*}
and note that $\mLambda_{(1)}=\mLambda_{(1)}(\vtheta_0^{m},\vtheta_0^{v})$ and $\mLambda_{(2)}=\mLambda_{(2)}(\vtheta_0^{m},\vtheta_0^{v})$.

We again give a general outline of the proof with the required Lemmas~M.\ref{lem:1+ tilde}--M.\ref{lem:3 tilde} to be found in Section~\ref{Supplementary Results for MES Parameter Estimator}.

\begin{proof}[{\textbf{Proof of Theorem~\ref{thm:an} (Asymptotic normality of $\widehat{\vtheta}_n^m$):}}]
	The MVT (applied around $\vtheta_0^m$ in the first line and around $\vtheta_0^v$ in the second line) and $\vlambda(\vtheta_0^{m}, \vtheta_0^{v})=\vzeros$ imply that
	\begin{align*}
		\vlambda(\widehat{\vtheta}_n^{m}, \widehat{\vtheta}_n^{v}) &= \vlambda(\vtheta_0^{m}, \widehat{\vtheta}_n^{v}) + \mLambda_{(1)}(\vtheta^{m\ast}, \widehat{\vtheta}_n^{v}) (\widehat{\vtheta}_n^m- \vtheta_0^m)\\
		&=\vlambda(\vtheta_0^{m}, \vtheta_0^{v}) + \mLambda_{(2)}(\vtheta_0^m,\vtheta^{v\ast})(\widehat{\vtheta}_n^v- \vtheta_0^v) + \mLambda_{(1)}(\vtheta^{m\ast}, \widehat{\vtheta}_n^{v}) (\widehat{\vtheta}_n^m- \vtheta_0^m)\\
		&=\mLambda_{(2)}(\vtheta_0^m,\vtheta^{v\ast})(\widehat{\vtheta}_n^v- \vtheta_0^v) + \mLambda_{(1)}(\vtheta^{m\ast}, \widehat{\vtheta}_n^{v}) (\widehat{\vtheta}_n^m- \vtheta_0^m)
	\end{align*}
	for some $\vtheta^{m\ast}$ ($\vtheta^{v\ast}$) on the line connecting $\widehat{\vtheta}_n^{m}$ and $\vtheta_0^{m}$ ($\widehat{\vtheta}_n^{v}$ and $\vtheta_0^{v}$). 
	(Recall our convention to slightly abuse notation when applying the mean value theorem to multivariate functions. Again, the argument goes through componentwise, yet the notation would be more complicated.) 
	Expanding the left-hand side of the above display gives that
	\begin{align}
		&\widehat{\vlambda}_n(\widehat{\vtheta}_n^{m}, \widehat{\vtheta}_n^{v}) - \Big\{\big[\widehat{\vlambda}_n(\widehat{\vtheta}_n^{m}, \widehat{\vtheta}_n^{v}) - \vlambda(\widehat{\vtheta}_n^{m}, \widehat{\vtheta}_n^{v})\big] - \big[\widehat{\vlambda}_n(\vtheta_0^{m}, \vtheta_0^{v}) - \vlambda(\vtheta_0^{m}, \vtheta_0^{v})\big]\Big\} \notag\\
		&\hspace{6cm}- \big[\widehat{\vlambda}_n(\vtheta_0^{m}, \vtheta_0^{v}) - \vlambda(\vtheta_0^{m}, \vtheta_0^{v})\big] \notag\\
		&=\mLambda_{(2)}(\vtheta_0^m,\vtheta^{v\ast})(\widehat{\vtheta}_n^v- \vtheta_0^v) + \mLambda_{(1)}(\vtheta^{m\ast}, \widehat{\vtheta}_n^{v}) (\widehat{\vtheta}_n^m- \vtheta_0^m)\notag\\
		&=\mLambda_{(2)}(\vtheta_0^m,\vtheta^{v\ast})\big[-\mLambda^{-1}+o_{\P}(1)\big]\Big[\widehat{\vlambda}_n(\vtheta_0^v) - \vlambda(\vtheta_0^v) + o_{\P}(1/\sqrt{n})\Big] + \mLambda_{(1)}(\vtheta^{m\ast}, \widehat{\vtheta}_n^{v}) (\widehat{\vtheta}_n^m- \vtheta_0^m),\label{eq:expan theta hat m}
	\end{align} 
	where we used \eqref{eq:exp VaR} and the subsequent items (i)--(iv) in the final step (and recalling that $\vlambda(\vtheta_0^v)=\vzeros$).

	To establish the asymptotic normality of $\widehat{\vtheta}_n^m$, we therefore show that
	\begin{enumerate}
		\item[(i)] $\mLambda_{(1)}^{-1}(\vtheta^{m\ast}, \widehat{\vtheta}_n^{v})=\mLambda_{(1)}^{-1} + o_{\P}(1)$;
		\item[(ii)] $\mLambda_{(2)}(\vtheta_{0}^{m},\vtheta^{v\ast})=\mLambda_{(2)} + o_{\P}(1)$;
		\item[(iii)] $\sqrt{n}\widehat{\vlambda}_n(\widehat{\vtheta}_n^{m}, \widehat{\vtheta}_n^{v})=o_{\P}(1)$;
		\item[(iv)] $\sqrt{n}\Big\{\big[\widehat{\vlambda}_n(\widehat{\vtheta}_n^{m}, \widehat{\vtheta}_n^{v}) - \vlambda(\widehat{\vtheta}_n^{m}, \widehat{\vtheta}_n^{v})\big] - \big[\widehat{\vlambda}_n(\vtheta_0^{m}, \vtheta_0^{v}) - \vlambda(\vtheta_0^{m}, \vtheta_0^{v})\big]\Big\}=o_{\P}(1)$;
		\item[(v)] As $n\to\infty$,
		\begin{equation}\label{eq:joint lambda}
			\sqrt{n}\begin{pmatrix}
				\widehat{\vlambda}_n(\vtheta_0^{v}) - \vlambda(\vtheta_0^{v})\\
				\widehat{\vlambda}_n(\vtheta_0^{m}, \vtheta_0^{v}) - \vlambda(\vtheta_0^{m}, \vtheta_0^{v})
			\end{pmatrix}
			\overset{d}{\longrightarrow}N(\vzeros, \mM).
		\end{equation}
	\end{enumerate}
	
	Items (i) and (ii) are established in Lemma~M.\ref{lem:1+ tilde}. For this, note that $\vtheta^{m\ast}\overset{\P}{\longrightarrow}\vtheta_0^m$ ($\vtheta^{v\ast}\overset{\P}{\longrightarrow}\vtheta_0^v$), because $\vtheta^{m\ast}$ ($\vtheta^{v\ast}$) lies on the line connecting $\widehat{\vtheta}_n^{m}$ and $\vtheta_0^{m}$ ($\widehat{\vtheta}_n^{v}$ and $\vtheta_0^{v}$), and $\widehat{\vtheta}_n^{m}\overset{\P}{\longrightarrow}\vtheta_0^m$ ($\widehat{\vtheta}_n^{v}\overset{\P}{\longrightarrow}\vtheta_0^v$) from Proposition~\ref{prop:cons}. 
	Claim (iii) is immediate from Lemma~M.\ref{lem:2 tilde} and claim (iv) follows from the FCLT of Lemma~M.\ref{lem:3 tilde} (similarly as \eqref{eq:(p.16)} followed from the FCLT of Lemma~V.\ref{lem:3}). 
	
	It remains to show (v). 
	From Lemmas~\ref{lem:type IV Vv} and \ref{lem:type IV Mm} it follows that $\big(\Vv_t^\prime(\cdot), \Mm_t^\prime(\cdot,\cdot)\big)^\prime$ forms a type IV class with index $p=2r$. 
	Then, arguing as for Lemma~V.\ref{lem:3} or Lemma~M.\ref{lem:3 tilde}, an FCLT holds for
	\[
	\sqrt{n}\begin{pmatrix} \widehat{\vlambda}_n(\cdot) - \vlambda(\cdot)\\
		\widehat{\vlambda}_n(\cdot,\cdot) - \vlambda(\cdot,\cdot)\end{pmatrix}.
	\]
	In particular,
	\begin{equation}\label{eq:in part}
		\Bigg\Vert\sqrt{n}\begin{pmatrix} \widehat{\vlambda}_n(\vtheta_0^v) - \vlambda(\vtheta_0^v)\\
			\widehat{\vlambda}_n(\vtheta_0^m,\vtheta_0^v) - \vlambda(\vtheta_0^m,\vtheta_0^v)\end{pmatrix}
		-\overline{\mB}^{(n)}(\vtheta_0^m,\vtheta_0^v) \Bigg\Vert=o_{\P}(1),
	\end{equation}
	where the covariance function of the zero-mean Gaussian process 
	\[
	\overline{\mB}^{(n)}(\vtheta_0^m,\vtheta_0^v)=\begin{pmatrix}\mB^{(n)}(\vtheta_0^v)\\ \mB^{(n)}(\vtheta_0^m,\vtheta_0^v)\end{pmatrix}
	\]
	is given by
	\begin{align*}
		\overline{\mGamma}\big((\vtheta_0^{m\prime}, \vtheta_0^{v\prime})^\prime, (\vtheta_0^{m\prime}, \vtheta_0^{v\prime})^\prime\big)	& = \begin{pmatrix}
			\mGamma_{11} & \mGamma_{12}\\
			\mGamma^{\prime}_{12}  & \mGamma_{22}
		\end{pmatrix}\\	
		&= \E\Bigg[\begin{pmatrix}\Vv_1(\vtheta_0^v)\\ \Mm_1(\vtheta_0^m,\vtheta_0^v)\end{pmatrix}\begin{pmatrix}\Vv_1^\prime(\vtheta_0^v) & \Mm_1^\prime(\vtheta_0^m,\vtheta_0^v)\end{pmatrix}\Bigg]\\
		&\hspace{1cm}+ \sum_{j=1}^{\infty}\E\Bigg[\begin{pmatrix}\Vv_1(\vtheta_0^v)\\ \Mm_1(\vtheta_0^m,\vtheta_0^v)\end{pmatrix}\begin{pmatrix}\Vv_{1+j}^\prime(\vtheta_0^v) & \Mm_{1+j}^\prime(\vtheta_0^m,\vtheta_0^v)\end{pmatrix}\Bigg]\\
		&\hspace{1.6cm}+ \E\Bigg[\begin{pmatrix}\Vv_{1+j}(\vtheta_0^v)\\ \Mm_{1+j}(\vtheta_0^m,\vtheta_0^v)\end{pmatrix}\begin{pmatrix}\Vv_{1}^\prime(\vtheta_0^v) & \Mm_{1}^\prime(\vtheta_0^m,\vtheta_0^v)\end{pmatrix}\Bigg].
	\end{align*}
	We now derive explicit expressions for $\mGamma_{11}$, $\mGamma_{12}$ and $\mGamma_{22}$.
	It follows from \eqref{eq:(12+)} that
	\begin{align*}
		\E\big[\Mm_1(\vtheta_0^m, \vtheta_0^v)\Mm_1^\prime(\vtheta_0^m, \vtheta_0^v)\big] &= \E\Big[\1_{\{X_1> v_1(\vtheta_0^v)\}}\big\{Y_1 - m_1(\vtheta_0^m)\big\}^2\nabla m_1(\vtheta_0^m)\nabla^\prime m_1(\vtheta_0^m) \Big]\\
		&=\mM^{\ast}
	\end{align*}
	and, for $j>0$,
	\begin{align*}
		\E\big[\Mm_1(\vtheta_0^m, \vtheta_0^v)\Mm_{1+j}^\prime(\vtheta_0^m, \vtheta_0^v)\big] &= \E\Big[\Mm_1(\vtheta_0^m, \vtheta_0^v)\E_{1+j}\big\{\Mm_{1+j}^\prime(\vtheta_0^m, \vtheta_0^v)\big\}\Big]\\
		&=\E\big[\Mm_1(\vtheta_0^m, \vtheta_0^v)\vzeros\big]=\vzeros,
	\end{align*}
	such that $\mGamma_{22}=\mM^{\ast}$. 
	
	By \eqref{eq:V lrv 1}--\eqref{eq:V lrv 2}, we obtain that $\mGamma_{11}=\mV$.  
	
	It remains to compute $\mGamma_{12}$. 
	To do so, use the LIE, \eqref{eq:gt} and \eqref{eq:(12+)} to write
	\begin{align*}
		\E&\big[\Vv_1(\vtheta_0^v)\Mm_1^\prime(\vtheta_0^m,\vtheta_0^v)\big]\\
		&= \E\Big[\nabla v_1(\vtheta_0^v) \big(\1_{\{X_1\leq v_1(\vtheta_0^v)\}}-\beta\big)\1_{\{X_1> v_1(\vtheta_0^v)\}}\nabla^\prime m_1(\vtheta_0^m)\big(m_1(\vtheta_0^m)-Y_1\big)\Big]\\
		&= \E\Big[\nabla v_1(\vtheta_0^v) \nabla^\prime m_1(\vtheta_0^m) \E_1\Big\{\big(\1_{\{X_1\leq v_1(\vtheta_0^v)\}}-\beta\big)\1_{\{X_1> v_1(\vtheta_0^v)\}}\big(m_1(\vtheta_0^m)-Y_1\big)\Big\}\Big].
	\end{align*}
	Since
	\begin{align*}
		\E_1&\Big\{\big(\1_{\{X_1\leq v_1(\vtheta_0^v)\}}-\beta\big)\1_{\{X_1> v_1(\vtheta_0^v)\}}\big(m_1(\vtheta_0^m)-Y_1\big)\Big\}\\
		&=-\beta\E_1\Big\{\1_{\{X_1> v_1(\vtheta_0^v)\}}\big(m_1(\vtheta_0^m)-Y_1\big)\Big\}\\
		&= -\beta \Big[m_1(\vtheta_0^m)\E_1\big\{\1_{\{X_1> v_1(\vtheta_0^v)\}}\big\} - \E_1\big\{\1_{\{X_1> v_1(\vtheta_0^v)\}}Y_1\big\}  \Big]\\
		&=-\beta \Big[m_1(\vtheta_0^m)(1-\beta) - (1-\beta)m_1(\vtheta_0^m)  \Big]\\
		&=0,
	\end{align*}
	we have that
	\[
	\E\big[\Vv_1(\vtheta_0^v)\Mm_1^\prime(\vtheta_0^m,\vtheta_0^v)\big]=\vzeros.
	\]
	Furthermore, by the LIE,
	\begin{align*}
		\E\big[\Vv_{1+j}(\vtheta_0^v)\Mm_1^\prime(\vtheta_0^m,\vtheta_0^v)\big] &= \E\Big[\E_{1+j}\big\{\Vv_{1+j}(\vtheta_0^v)\big\}\Mm_1^\prime(\vtheta_0^m,\vtheta_0^v)\Big]\\
		&=\E\Big[\vzeros\cdot\Mm_1^\prime(\vtheta_0^m,\vtheta_0^v)\Big]\\
		&=\vzeros,
	\end{align*}
	such that overall $\mGamma_{12}=\vzeros$.
	
	Taken together we have that $\overline{\mGamma}\big(\vtheta_0=(\vtheta_0^{m\prime}, \vtheta_0^{v\prime})^\prime, \vtheta_0=(\vtheta_0^{m\prime}, \vtheta_0^{v\prime})^\prime\big) = \mM$. Combining this with \eqref{eq:in part}, item (v) easily follows.
\end{proof}

\subsubsection{Supplementary Results for MES Parameter Estimator}\label{Supplementary Results for MES Parameter Estimator}

\begin{lemM}\label{lem:1+ tilde}
	Suppose Assumptions~\ref{ass:1}--\ref{ass:7} hold. Then, as $n\to\infty$, 
	\begin{align*}
		\mLambda_{(1)}^{-1}(\vtheta_n^m,\vtheta_n^v)	&\overset{\P}{\longrightarrow}\mLambda_{(1)}^{-1}\qquad\text{for any}\ \vtheta_n^m\overset{\P}{\longrightarrow}\vtheta_0^m\quad \text{and}\quad \vtheta_n^v\overset{\P}{\longrightarrow}\vtheta_0^v, \\
		\mLambda_{(2)}(\vtheta_0^m,\vtheta_n^v)	&\overset{\P}{\longrightarrow}\mLambda_{(2)}\qquad\text{for any}\ \vtheta_n^v\overset{\P}{\longrightarrow}\vtheta_0^v.
	\end{align*}
\end{lemM}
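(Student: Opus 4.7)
The plan is to mirror the strategy of Lemma~V.\ref{lem:1+}: establish a Lipschitz-type bound for $\mLambda_{n,(1)}(\vtheta^m,\vtheta^v)$ and $\mLambda_{n,(2)}(\vtheta^m,\vtheta^v)$ on a neighborhood $\mathcal{N}(\vtheta_0^m)\times\mathcal{N}(\vtheta_0^v)$ of the true parameters, and then deduce convergence in probability via the standard $\varepsilon$-$\delta$ argument, using that $(\vtheta^{m\ast},\widehat{\vtheta}_n^v)\overset{\P}{\longrightarrow}(\vtheta_0^m,\vtheta_0^v)$ (respectively $\vtheta^\ast\overset{\P}{\longrightarrow}\vtheta_0^v$) by Proposition~\ref{prop:cons}. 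The inverse in the first statement then follows from Assumption~\ref{ass:6} and the CMT, exactly as in Lemma~V.\ref{lem:1+}.

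For $\mLambda_{n,(1)}$, start from \eqref{eq:(p.15)} and decompose $\big\Vert\mLambda_{n,(1)}(\vtau^m,\vtau^v)-\mLambda_{n,(1)}(\vtheta^m,\vtheta^v)\big\Vert$ by varying one argument at a time. The contributions from $\vtheta^m\to\vtau^m$ involve $\nabla^2 m_t(\cdot)m_t(\cdot)+\nabla m_t(\cdot)\nabla^\prime m_t(\cdot)$ and are handled verbatim as the terms $B_{4n}$--$D_{4n}$ of Lemma~V.\ref{lem:1+}, using Assumption~\ref{ass:4}~\ref{it:4ii}--\ref{it:4v} and the moment bounds on $M, M_1, M_2, M_3$ from Assumption~\ref{ass:5} (with H\"older to control the products). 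The contributions from $\vtheta^v\to\vtau^v$ involve the differences $\overline{F}_t^X(v_t(\vtau^v))-\overline{F}_t^X(v_t(\vtheta^v))$ and $\int_{v_t(\vtau^v)}^{\infty}\int y\,f_t(x,y)\D y\D x-\int_{v_t(\vtheta^v)}^{\infty}\int y\,f_t(x,y)\D y\D x$, which by the MVT together with $\sup_x f_t^X(x)\leq K$ (Assumption~\ref{ass:3}~\ref{it:3ii}) and $\sup_x\int|y|f_t(x,y)\D y\leq F_1(\mathcal{F}_t)$ (Assumption~\ref{ass:3}~\ref{it:3iv}) are bounded by $CV_1(\mZ_t)\{1+F_1(\mathcal{F}_t)\}\Vert\vtau^v-\vtheta^v\Vert$ times bounded factors in $\vtheta^m$. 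Taking expectations and applying H\"older with the moment bounds of Assumption~\ref{ass:5} (and~\ref{ass:11} is \emph{not} needed here, only the first-stage bounds) yields
\[
	\big\Vert\mLambda_{n,(1)}(\vtau^m,\vtau^v)-\mLambda_{n,(1)}(\vtheta^m,\vtheta^v)\big\Vert \leq C\big(\Vert\vtau^m-\vtheta^m\Vert+\Vert\vtau^v-\vtheta^v\Vert\big)
\]
for all $(\vtau^m,\vtau^v),(\vtheta^m,\vtheta^v)\in\mathcal{N}(\vtheta_0^m)\times\mathcal{N}(\vtheta_0^v)$.

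For $\mLambda_{n,(2)}$, fix $\vtheta^m=\vtheta_0^m$ and proceed similarly from \eqref{eq:(p.16)}. Writing the integrand as $\nabla m_t(\vtheta_0^m)\nabla^\prime v_t(\vtheta^v)\big\{\int y f_t(v_t(\vtheta^v),y)\D y-m_t(\vtheta_0^m)f_t^X(v_t(\vtheta^v))\big\}$, the Lipschitz continuity in $\vtheta^v$ is established by splitting the difference along the three factors. The key step is to show that $\vtheta^v\mapsto\int y f_t(v_t(\vtheta^v),y)\D y$ is Lipschitz: by the MVT and Assumption~\ref{ass:3}~\ref{it:3iii},
\[
	\Big|\int y\big[f_t(v_t(\vtau^v),y)-f_t(v_t(\vtheta^v),y)\big]\D y\Big|\leq F(\mathcal{F}_t)V_1(\mZ_t)\Vert\vtau^v-\vtheta^v\Vert.
\]
The factor $m_t(\vtheta_0^m)f_t^X(v_t(\vtheta^v))$ is Lipschitz in $\vtheta^v$ via Assumption~\ref{ass:3}~\ref{it:3ii} and $\nabla^\prime v_t(\vtheta^v)$ via Assumption~\ref{ass:4}~\ref{it:4iv}. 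The resulting moment products ($M_1 V_1 F$, $M_1 V_1 M$, $M_1 V_2$) are all integrable by H\"older combined with the bounds on $M,M_1,V_1,V_2,F$ in Assumption~\ref{ass:5}, which gives the corresponding Lipschitz bound for $\mLambda_{n,(2)}(\vtheta_0^m,\vtheta^v)$.

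The main obstacle is bookkeeping: checking that every cross-product arising from the Leibniz-style decompositions falls within the moment hierarchy of Assumption~\ref{ass:5} after one H\"older split. Once the Lipschitz bounds are in hand, the conclusion is routine: for any $\varepsilon>0$ choose $\delta>0$ with $C\delta<\varepsilon$ and $\{\vtheta\colon\Vert\vtheta-\vtheta_0\Vert\leq\delta\}\subset\mathcal{N}(\vtheta_0)$; since $(\vtheta^{m\ast},\widehat{\vtheta}_n^v)\in\mathcal{N}(\vtheta_0^m)\times\mathcal{N}(\vtheta_0^v)$ (and $\vtheta^\ast\in\mathcal{N}(\vtheta_0^v)$) with probability tending to one, the bounds imply $\mLambda_{n,(1)}(\vtheta^{m\ast},\widehat{\vtheta}_n^v)\overset{\P}{\longrightarrow}\mLambda_{(1)}$ and $\mLambda_{n,(2)}(\vtheta_0^m,\vtheta^\ast)\overset{\P}{\longrightarrow}\mLambda_{(2)}$. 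Finally, Assumption~\ref{ass:6} makes $\mLambda_{(1)}$ nonsingular, so continuity of matrix inversion and the CMT give $\mLambda_{n,(1)}^{-1}(\vtheta^{m\ast},\widehat{\vtheta}_n^v)\overset{\P}{\longrightarrow}\mLambda_{(1)}^{-1}$.
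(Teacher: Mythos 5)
Your proposal is correct and follows essentially the same route as the paper: a local Lipschitz/continuity bound for $\mLambda_{n,(1)}(\cdot,\cdot)$ and $\mLambda_{n,(2)}(\vtheta_0^m,\cdot)$ near the true parameters via the MVT, Assumptions~\ref{ass:3}--\ref{ass:5} and H\"{o}lder's inequality, followed by the standard $\varepsilon$-$\delta$ argument and the CMT for the inverse using Assumption~\ref{ass:6}. The only substantive variation is your treatment of the increment of $\int_{v_t(\vtheta^v)}^{\infty}\int_{-\infty}^{\infty} y f_t(x,y)\D y\D x$ paired with $\nabla^2 m_t(\vtheta_0^m)$: you bound it by $F_1(\mathcal{F}_t)V_1(\mZ_t)\Vert\vtau^v-\vtheta^v\Vert$ via Assumption~\ref{ass:3}~\ref{it:3iv}, yielding a genuine Lipschitz modulus, whereas the paper converts it to an indicator difference and applies H\"{o}lder, settling for a $C\delta^{1/p}$ modulus --- both are sufficient, and the required moment product $\E[M_2(\mZ_t)V_1(\mZ_t)F_1(\mathcal{F}_t)]$ is indeed finite under Assumption~\ref{ass:5} by the generalized H\"{o}lder inequality with exponents $4r$, $4r$, $4r/(4r-3)$.
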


\begin{proof}
	We begin with the second statement. Our first goal is to show that
	\[
	\big\Vert\mLambda_{(2)}(\vtheta_0^m,\vtheta) - \mLambda_{(2)}\big\Vert \leq C \norm{\vtheta - \vtheta_0^v}
	\]
	for all $\vtheta\in\mTheta^v$.
	Exploit \eqref{eq:(p.16)} to write 
	\begin{align*}
		&\big\Vert\mLambda_{(2)}(\vtheta_0^m,\vtheta) - \mLambda_{(2)}\big\Vert =\big\Vert\mLambda_{(2)}(\vtheta_0^m,\vtheta) - \mLambda_{(2)}(\vtheta_0^m, \vtheta_0^v)\big\Vert\\
		&= \bigg\Vert \E\bigg[\nabla m_t(\vtheta_0^m)\Big\{\nabla^\prime v_t(\vtheta)\int_{-\infty}^{\infty} y f_t\big(v_t(\vtheta),y\big)\D y - \nabla^\prime v_t(\vtheta_0^v)\int_{-\infty}^{\infty} y f_t\big(v_t(\vtheta_0^v),y\big)\D y\Big\}\bigg]\\
		& \hspace{2cm} -\E\bigg[\nabla m_t(\vtheta_0^m)m_t(\vtheta_0^m) \Big\{ \nabla^\prime v_t(\vtheta) f_{t}^{X}\big(v_t(\vtheta)\big)- \nabla^\prime v_t(\vtheta_0^v) f_{t}^{X}\big(v_t(\vtheta_0^v)\big) \Big\}\bigg]\bigg\Vert\\
		&\leq \E\bigg\Vert\nabla m_t(\vtheta_0^m)\Big\{\nabla^\prime v_t(\vtheta)\int_{-\infty}^{\infty} y f_t\big(v_t(\vtheta),y\big)\D y - \nabla^\prime v_t(\vtheta_0^v)\int_{-\infty}^{\infty} y f_t\big(v_t(\vtheta_0^v),y\big)\D y\Big\}\bigg\Vert\\
		&\hspace{1.27cm}+ \E\bigg\Vert\nabla m_t(\vtheta_0^m)m_t(\vtheta_0^m) \Big\{ \nabla^\prime v_t(\vtheta) f_{t}^{X}\big(v_t(\vtheta)\big)- \nabla^\prime v_t(\vtheta_0^v) f_{t}^{X}\big(v_t(\vtheta_0^v)\big) \Big\}\bigg\Vert\\
		&=:A_{6} + B_{6}.
	\end{align*}
	First, consider $B_{6}$. A mean value expansion around $\vtheta_0^v$ implies
	\begin{align*}
		B_{6} &=  \E\bigg\Vert\nabla m_t(\vtheta_0^m) m_t(\vtheta_0^m)\big[\nabla^\prime v_t(\vtheta) - \nabla^\prime v_t(\vtheta_0^v)\big] f_t^X\big(v_t(\vtheta)\big) \\
		& \hspace{3cm} + \nabla m_t(\vtheta_0^m) m_t(\vtheta_0^m)\nabla^\prime v_t(\vtheta_0^v)\Big[f_t^X\big(v_t(\vtheta) - f_t^X\big(v_t(\vtheta_0^v)\big)\Big]\bigg\Vert\\
		&\leq \E\Big\Vert\nabla m_t(\vtheta_0^m) m_t(\vtheta_0^m)\big[\nabla^\prime v_t(\vtheta) - \nabla^\prime v_t(\vtheta_0^v)\big] f_t^X\big(v_t(\vtheta)\big)\Big\Vert \\
		& \hspace{3cm} + K\E\Big\Vert\nabla m_t(\vtheta_0^m) m_t(\vtheta_0^m)\nabla^\prime v_t(\vtheta_0^v)\big[v_t(\vtheta) - v_t(\vtheta_0^v)\big]\Big\Vert\\
		&\leq  K\E\big[M_1(\mZ_t)M(\mZ_t) V_2(\mZ_t)\big] \big\Vert\vtheta - \vtheta_0^v\big\Vert \\
		& \hspace{3cm} + K\E\big[M_1(\mZ_t)M(\mZ_t) V_1^2(\mZ_t)\big] \big\Vert\vtheta - \vtheta_0^v\big\Vert\\
		&\leq K\Vert M_1(\mZ_t)\Vert_{4}\Vert M(\mZ_t)\Vert_{4} \Vert V_2(\mZ_t)\Vert_{2} \big\Vert\vtheta - \vtheta_0^v\big\Vert \\
		& \hspace{3cm} + K\Vert M_1(\mZ_t)\Vert_{4} \Vert M(\mZ_t)\Vert_{4} \Vert V_1^2(\mZ_t)\Vert_{2} \big\Vert\vtheta - \vtheta_0^v\big\Vert\\
		&\leq C \norm{\vtheta - \vtheta_0^v},
	\end{align*}
	where we used (the generalized) H\"{o}lder's inequality in the penultimate step, and Assumption~\ref{ass:5} in the last step.
	
	Using similar arguments, we may also deduce that 
	\begin{align*}
		A_{6} &=  \E\bigg\Vert\nabla m_t(\vtheta_0^m)\big[\nabla^\prime v_t(\vtheta)-\nabla^\prime v_t(\vtheta_0^v)\big]\int_{-\infty}^{\infty} y f_t\big(v_t(\vtheta),y\big)\D y \\
		&\hspace{3cm} + \nabla m_t(\vtheta_0^m)\nabla^\prime v_t(\vtheta_0^v)\Big[\int_{-\infty}^{\infty} y f_t\big(v_t(\vtheta^{\ast}),y\big)\D y - \int_{-\infty}^{\infty} y f_t\big(v_t(\vtheta_0^v),y\big)\D y\Big]\bigg\Vert\\
		&\leq  \E\bigg\Vert\nabla m_t(\vtheta_0^m)\big[\nabla^\prime v_t(\vtheta)-\nabla^\prime v_t(\vtheta_0^v)\big]\int_{-\infty}^{\infty} y f_t\big(v_t(\vtheta),y\big)\D y \bigg\Vert\\
		&\hspace{3cm} + \E\bigg\Vert\nabla m_t(\vtheta_0^m)\nabla^\prime v_t(\vtheta_0^v)\sup_{x\in\mathbb{R}}\Big|\int_{-\infty}^{\infty} y \partial_1 f_t\big(x,y\big)\D y\Big|\cdot\big[v_t(\vtheta) - v_t(\vtheta_0^v)\big]\bigg\Vert\\
		&\leq  \E\big[ M_1(\mZ_t)V_2(\mZ_t)F_{1}(\mathcal{F}_t)\big]\big\Vert\vtheta^{\ast}-\vtheta_0^v\big\Vert \\
		&\hspace{3cm} + \E\big[M_1(\mZ_t)V_1^2(\mZ_t)F(\mathcal{F}_t)\big] \big\Vert \vtheta - \vtheta_0^v\big\Vert\\
		&\leq  \big\Vert M_1(\mZ_t)\big\Vert_{4r}\big\Vert V_2(\mZ_t)\big\Vert_{2r}\big\Vert F_{1}(\mathcal{F}_t)\big\Vert_{4r/(4r-3)}\big\Vert\vtheta^{\ast}-\vtheta_0^v\big\Vert \\
		&\hspace{3cm} + \big\Vert M_1(\mZ_t)\big\Vert_{4r} \big\Vert V_1^2(\mZ_t)\big\Vert_{2r} \big\Vert F(\mathcal{F}_t)\big\Vert_{4r/(4r-3)}\big\Vert \vtheta - \vtheta_0^v\big\Vert\\
		&\leq C\big\Vert \vtheta - \vtheta_0^v\big\Vert,
	\end{align*}
	where $\vtheta^\ast$ is a mean value on the line connecting $\vtheta$ and $\vtheta_0^v$.
	Hence, 
	\begin{equation}\label{eq:(2.11)}
		\big\Vert\mLambda_{(2)}(\vtheta_0^m,\vtheta_n^v) - \mLambda_{(2)}\big\Vert \leq C \norm{\vtheta_n^v - \vtheta_0^v} = o_{\P}(1)
	\end{equation}
	by the assumption that $\vtheta_n^v \overset{\P}{\longrightarrow} \vtheta_0^v$. The second statement of the lemma follows.
	
	To prove the first claim, consider the neighborhoods
	\[
	\mathcal{N}(\vtheta_0^v):=\big\{\vtheta\in\mTheta^v\colon\big\Vert\vtheta-\vtheta_0^v\big\Vert\leq\delta\big\}\qquad\text{and}\qquad \mathcal{N}(\vtheta_0^m):=\big\{\vtheta\in\mTheta^m\colon\big\Vert\vtheta-\vtheta_0^m\big\Vert\leq\delta\big\}
	\]
	with $\delta>0$ chosen sufficiently small, such that Assumption~\ref{ass:4}~\ref{it:4v} holds. Let $\vtheta^v\in\mathcal{N}(\vtheta_0^v)$ and $\vtheta^m\in\mathcal{N}(\vtheta_0^m)$. Use the triangle inequality for the bound
	\begin{align}
		\big\Vert\mLambda_{(1)} - \mLambda_{(1)}(\vtheta^{m},\vtheta^{v})\big\Vert &\leq \big\Vert\mLambda_{(1)} - \mLambda_{(1)}(\vtheta_0^m,\vtheta^{v})\big\Vert + \big\Vert\mLambda_{(1)}(\vtheta_0^m,\vtheta^{v}) - \mLambda_{(1)}(\vtheta^{m},\vtheta^{v})\big\Vert\notag\\
		&=:A_{7} + B_{7}. \label{eq:(3.1)}
	\end{align}
	First, consider $A_{7}$. Use \eqref{eq:(p.15)} to write
	\begin{align*}
		A_{7} &= \big\Vert \mLambda_{(1)}(\vtheta_{0}^m,\vtheta_0^v) - \mLambda_{(1)}(\vtheta_{0}^m,\vtheta^{v})\big\Vert\\
		&= \bigg\Vert \E\Big[\big\{\nabla^2 m_t(\vtheta_{0}^m)m_t(\vtheta_{0}^m) + \nabla m_t(\vtheta_{0}^m)\nabla^\prime m_t(\vtheta_{0}^m)\big\}\overline{F}_{t}^{X}\big(v_t(\vtheta_{0}^v)\big)\Big]\\
		&\hspace{1.3cm} - \E\Big[\big\{\nabla^2 m_t(\vtheta_{0}^m)m_t(\vtheta_{0}^m) + \nabla m_t(\vtheta_{0}^m)\nabla^\prime m_t(\vtheta_{0}^m)\big\}\overline{F}_{t}^{X}\big(v_t(\vtheta^{v})\big)\Big]\\
		&\hspace{1.3cm} + \E\bigg[\nabla^2 m_t(\vtheta_0^m)\int_{v_t(\vtheta^{v})}^{\infty}\int_{-\infty}^{\infty} y f_t(x,y)\D y\D x\bigg]\\
		&\hspace{1.3cm} - \E\bigg[\nabla^2 m_t(\vtheta_0^m)\int_{v_t(\vtheta_0^v)}^{\infty}\int_{-\infty}^{\infty} y f_t(x,y)\D y\D x\bigg]\bigg\Vert\\
		&= \bigg\Vert \E\Big[\nabla^2 m_t(\vtheta_{0}^m)m_t(\vtheta_{0}^m)\big\{F_{t}^{X}\big(v_t(\vtheta^{v})\big) - F_{t}^{X}\big(v_t(\vtheta_{0}^v)\big)\big\}\Big]\\
		&\hspace{1.3cm} + \E\Big[\nabla m_t(\vtheta_{0}^m)\nabla^\prime m_t(\vtheta_{0}^m)\big\{F_{t}^{X}\big(v_t(\vtheta^{v})\big) - F_{t}^{X}\big(v_t(\vtheta_{0}^v)\big)\big\}\Big]\\
		&\hspace{1.3cm}+ \E\bigg[\nabla^2 m_t(\vtheta_0^m)\int_{v_t(\vtheta_0^v)}^{v_t(\vtheta^{v})}\int_{-\infty}^{\infty} y f_t(x,y)\D y\D x\bigg] \bigg\Vert.
	\end{align*}
	
	\textbf{First term ($A_{7}$):} A mean value expansion around $\vtheta_0^v$ gives
	\begin{align*}
		\bigg\Vert\E&\Big[\nabla^2 m_t(\vtheta_{0}^m)m_t(\vtheta_{0}^m)\big\{F_{t}^{X}\big(v_t(\vtheta^{v})\big) - F_{t}^{X}\big(v_t(\vtheta_{0}^v)\big)\big\}\Big]\bigg\Vert\\
		&\leq \E\norm{\nabla^2 m_t(\vtheta_{0}^m)m_t(\vtheta_{0}^m)f_{t}^{X}\big(v_t(\vtheta^{v\ast})\big)\nabla v_t(\vtheta^{v\ast})}\big\Vert\vtheta^{v} - \vtheta_{0}^v\big\Vert\\
		&\leq  K\E\big[M_2(\mZ_t) M(\mZ_t)V_1(\mZ_t)\big] \big\Vert\vtheta^{v} - \vtheta_0^v\big\Vert\\
		&\leq K \big\Vert M_2(\mZ_t)\big\Vert_{3} \big\Vert M(\mZ_t)\big\Vert_{3} \big\Vert V_1(\mZ_t)\big\Vert_{3} \big\Vert\vtheta^{v} - \vtheta_0^v\big\Vert\\
		&\leq C \big\Vert\vtheta^{v} - \vtheta_0^v\big\Vert\\
		&\leq C\delta,
	\end{align*}
	where $\vtheta^{v\ast}$ is some value on the line connecting $\vtheta_0^v$ and $\vtheta^{v}$, and the third step follows from (the generalized) H\"{o}lder's inequality.

	\textbf{Second term ($A_{7}$):} A mean value expansion around $\vtheta_0^v$ implies
	\begin{align*}
		\bigg\Vert\E&\Big[\nabla m_t(\vtheta_{0}^m)\nabla^\prime m_t(\vtheta_{0}^m)\big\{F_{t}^{X}\big(v_t(\vtheta^{v})\big) - F_{t}^{X}\big(v_t(\vtheta_{0}^v)\big)\big\}\Big]\bigg\Vert\\
		&\leq \E\Big\Vert\nabla m_t(\vtheta_{0}^m)\nabla^\prime m_t(\vtheta_{0}^m)f_{t}^{X}\big(v_t(\vtheta^{v\ast})\big)\nabla v_t(\vtheta^{v\ast})\Big\Vert\big\Vert\vtheta^{v} - \vtheta_{0}^v\big\Vert\\
		&\leq K\E\big[M_1^2(\mZ_t) V_1(\mZ_t)\big] \big\Vert\vtheta^{v} - \vtheta_0^v\big\Vert\\
		&\leq K\big\Vert M_1^2(\mZ_t)\big\Vert_{2} \big\Vert V_1(\mZ_t)\big\Vert_{2}\big\Vert\vtheta^{v} - \vtheta_0^v\big\Vert\\
		&\leq C \big\Vert\vtheta^{v} - \vtheta_0^v\big\Vert\\
		&\leq C\delta,
	\end{align*}
	where $\vtheta^{v\ast}$ is some value on the line connecting $\vtheta_0^v$ and $\vtheta^{v}$, and the third step follows from H\"{o}lder's inequality.

	\textbf{Third term ($A_{7}$):} Define
	\begin{align*}
		\underline{\vtheta}^v &=\argmin_{\Vert\vtheta^v-\vtheta_0^v\Vert\leq \delta}v_t(\vtheta^v),\\
		\overline{\vtheta}^v &=\argmax_{\Vert\vtheta^v-\vtheta_0^v\Vert\leq \delta}v_t(\vtheta^v).
	\end{align*}
	Note that $\underline{\vtheta}^v\in\mathcal{N}(\vtheta_0^v)$ and $\overline{\vtheta}^v\in\mathcal{N}(\vtheta_0^v)$ by construction.
	We have 
	\begin{align*}
		\bigg\Vert\E&\bigg[\nabla^2 m_t(\vtheta_0^m)\int_{v_t(\vtheta_0^v)}^{v_t(\vtheta^{v})}\int_{-\infty}^{\infty} y f_t(x,y)\D y\D x\bigg]\bigg\Vert\\
		&\leq \E\bigg[\norm{\nabla^2 m_t(\vtheta_0^m)}\int_{v_t(\vtheta_0^v)}^{v_t(\vtheta^{v})}\int_{-\infty}^{\infty} |y| f_t(x,y)\D y\D x\bigg]\\
		&\leq \E\bigg[M_2(\mZ_t)\int_{v_t(\underline{\vtheta}^v)}^{v_t(\overline{\vtheta}^v)}\int_{-\infty}^{\infty} |y| f_t(x,y)\D y\D x\bigg]\\
		&= \E\bigg[M_2(\mZ_t)\E_t\Big\{|Y_t|\big(\1_{\{X_t\leq v_t(\overline{\vtheta}^v)\}} - \1_{\{X_t\leq v_t(\underline{\vtheta}^v)\}}\big)\Big\}\bigg]\\
		&= \E\bigg[M_2(\mZ_t)|Y_t|\big(\1_{\{X_t\leq v_t(\overline{\vtheta}^v)\}} - \1_{\{X_t\leq v_t(\underline{\vtheta}^v)\}}\big)\bigg]\\
		&\leq \big\Vert M_2(\mZ_t)\big\Vert_{4r}\big\Vert Y_t\big\Vert_{4r}\big\Vert \1_{\{X_t\leq v_t(\overline{\vtheta}^v)\}} - \1_{\{X_t\leq v_t(\underline{\vtheta}^v)\}}\big\Vert_p,
	\end{align*}
	where $p=2r/(2r-1)$. Now, we may use \eqref{eq:help ind1} to deduce that
	\[
	\norm{\E\bigg[\nabla^2 m_t(\vtheta_0^m)\int_{v_t(\vtheta_0^v)}^{v_t(\vtheta^{v})}\int_{-\infty}^{\infty} y f_t(x,y)\D y\D x\bigg]}\leq C \delta^{1/p}.
	\]
	
	Combining the results for the different terms, we conclude that
	\begin{equation}\label{eq:A34}
		A_{7}\leq C\delta^{1/p}.
	\end{equation}
	
	Now, consider $B_{7}$. For this, use \eqref{eq:(p.15)} to write
	\begin{align*}
		& \mLambda_{(1)}(\vtheta^{m},\vtheta^{v}) - \mLambda_{(1)}(\vtheta_{0}^m,\vtheta^{v}) \\
		&= \E\Big[\big\{\nabla^2 m_t(\vtheta^{m})m_t(\vtheta^{m}) - \nabla^2 m_t(\vtheta_{0}^m)m_t(\vtheta_{0}^m)\big\}\overline{F}_{t}^{X}\big(v_t(\vtheta^{v})\big)\Big]\\
		&\hspace{1.3cm} + \E\Big[\big\{\nabla m_t(\vtheta^{m})\nabla^\prime m_t(\vtheta^{m}) - \nabla m_t(\vtheta_{0}^m)\nabla^\prime m_t(\vtheta_{0}^m)\big\}\overline{F}_{t}^{X}\big(v_t(\vtheta^{v})\big)\Big]\\
		&\hspace{1.3cm} - \E\bigg[\big\{\nabla^2 m_t(\vtheta^{m}) - \nabla^2 m_t(\vtheta_0^m)\big\}\int_{v_t(\vtheta^{v})}^{\infty}\int_{-\infty}^{\infty} y f_t(x,y)\D y\D x\bigg].
	\end{align*}

	\textbf{First term ($B_{7}$):} Using a mean value expansion around $\vtheta_{0}^m$ and Assumptions~\ref{ass:4}--\ref{ass:5}, we obtain that
	\begin{align*}
		&\bigg\Vert \E\Big[\big\{\nabla^2 m_t(\vtheta^{m})m_t(\vtheta^{m}) - \nabla^2 m_t(\vtheta_{0}^m)m_t(\vtheta_{0}^m)\big\}\overline{F}_{t}^{X}\big(v_t(\vtheta^{v})\big)\Big] \bigg\Vert\\
		&\leq \E\norm{\big[\nabla^2 m_t(\vtheta^{m}) - \nabla^2 m_t(\vtheta_0^m)\big]m_t(\vtheta^{m}) \overline{F}_{t}^{X}\big(v_t(\vtheta^{v})\big) }\\
		& \hspace{2cm} + \E\norm{\nabla^2 m_t(\vtheta_0^m)\big[m_t(\vtheta^{m})- m_t(\vtheta_0^m)\big] \overline{F}_{t}^{X}\big(v_t(\vtheta^{v})\big)}\\
		& \leq \E\Big[M_3(\mZ_t)M(\mZ_t) \overline{F}_{t}^{X}\big(v_t(\vtheta^{v})\big) \Big]\norm{\vtheta^{m} - \vtheta_0^m}\\
		& \hspace{2cm} + \E\Big[M_2(\mZ_t)\nabla m_t(\vtheta^{m\ast}) \overline{F}_{t}^{X}\big(v_t(\vtheta^{v})\big)\Big]\norm{\vtheta^{m} - \vtheta_0^m}\\
		& \leq \E\big[M_3(\mZ_t)M(\mZ_t)\big]\norm{\vtheta^{m} - \vtheta_0^m}\\
		& \hspace{2cm} + \E\big[M_2(\mZ_t)M_1(\mZ_t)\big]\norm{\vtheta^{m} - \vtheta_0^m}\\
		&\leq \big\Vert M_3(\mZ_t)\big\Vert_{4r/(4r-1)}\big\Vert M(\mZ_t)\big\Vert_{4r}\norm{\vtheta^{m} - \vtheta_0^m}\\
		& \hspace{2cm} + \big\Vert M_2(\mZ_t)\big\Vert_2 \big\Vert M_1(\mZ_t)\big\Vert_2 \norm{\vtheta^{m} - \vtheta_0^m}\\
		&\leq C\norm{\vtheta^{m} - \vtheta_0^m}\\
		&\leq C\delta,
	\end{align*}
	where $\vtheta^{m\ast}$ is some mean value between $\vtheta^{m}$ and $\vtheta_{0}^m$.

	\textbf{Second term ($B_{7}$):} Using by now familiar arguments, we expand
	\begin{align*}
		&\bigg\Vert \E\Big[\big\{\nabla m_t(\vtheta^{m})\nabla^\prime m_t(\vtheta^{m}) - \nabla m_t(\vtheta_{0}^m)\nabla^\prime m_t(\vtheta_{0}^m)\big\}\overline{F}_{t}^{X}\big(v_t(\vtheta^{v})\big)\Big] \bigg\Vert\\
		&\leq \E\norm{\big[\nabla m_t(\vtheta^{m}) - \nabla m_t(\vtheta_0^m)\big]\nabla^\prime m_t(\vtheta^{m}) \overline{F}_{t}^{X}\big(v_t(\vtheta^{v})\big) }\\
		& \hspace{2cm} + \E\norm{\nabla m_t(\vtheta_0^m)\big[\nabla^\prime m_t(\vtheta^{m})- \nabla^\prime m_t(\vtheta_0^m)\big] \overline{F}_{t}^{X}\big(v_t(\vtheta^{v})\big)}\\
		& \leq \E\norm{\nabla^2 m_t(\vtheta^{m\ast})\big[\vtheta^{m} - \vtheta_0^m\big]\nabla^\prime m_t(\vtheta^{m}) \overline{F}_{t}^{X}\big(v_t(\vtheta^{v})\big) }\\
		& \hspace{2cm} + \E\norm{\nabla m_t(\vtheta_0^m) \big[\vtheta^{m} - \vtheta_0^m\big]^\prime \nabla^{2\prime} m_t(\vtheta^{m\ast}) \overline{F}_{t}^{X}\big(v_t(\vtheta^{v})\big)}\\
		& \leq \E\big[M_2(\mZ_t)M_1(\mZ_t)\big]\norm{\vtheta^{m} - \vtheta_0^m}\\
		& \hspace{2cm} + \E\big[M_2(\mZ_t)M_1(\mZ_t)\big]\norm{\vtheta^{m} - \vtheta_0^m}\\
		&\leq C\norm{\vtheta^{m} - \vtheta_0^m}\\
		&\leq C\delta,
	\end{align*}
	where $\vtheta^{m\ast}$ is some mean value between $\vtheta^{m}$ and $\vtheta_{0}^{m}$ that may be different in different places.
	
	\textbf{Third term ($B_{7}$):} Use the moment bounds of Assumption~\ref{ass:5} to conclude that
	\begin{align*}
		& \Bigg\Vert\E\bigg[\big\{\nabla^2 m_t(\vtheta^{m}) - \nabla^2 m_t(\vtheta_0^m)\big\}\int_{v_t(\vtheta^{v})}^{\infty}\int_{-\infty}^{\infty} y f_t(x,y)\D y\D x\bigg]\Bigg\Vert\\
		&\leq \E\bigg[\big\Vert\nabla^2 m_t(\vtheta^{m}) - \nabla^2 m_t(\vtheta_0^m)\big\Vert\int_{v_t(\vtheta^{v})}^{\infty}\int_{-\infty}^{\infty} |y| f_t(x,y)\D y\D x\bigg]\\
		&\leq  \E\bigg[M_3(\mZ_t)\int_{v_t(\vtheta^{v})}^{\infty}\int_{-\infty}^{\infty} |y| f_t(x,y)\D y\D x\bigg]\norm{\vtheta^{m} - \vtheta_0^m}\\
		&\leq \E\big[M_3(\mZ_t)|Y_t|\big] \norm{\vtheta^{m} - \vtheta_0^m}\\
		&\leq \big\Vert M_3(\mZ_t)\big\Vert_{4r/(4r-1)} \big\Vert Y_t\big\Vert_{4r} \norm{\vtheta^{m} - \vtheta_0^m}\\
		&\leq C\norm{\vtheta^{m} - \vtheta_0^m}\\
		&\leq C\delta,
	\end{align*}
	where we used the LIE and that
	\begin{align*}
		\int_{v_t(\vtheta^{v})}^{\infty}\int_{-\infty}^{\infty} |y| f_t(x,y)\D y\D x &\leq \int_{-\infty}^{\infty}|y|\Big\{\int_{-\infty}^{\infty}  f_t(x,y)\D x\Big\}\D y\\
		&= \int_{-\infty}^{\infty}|y|f_t^{Y}(y)\D y\\
		&=\E_t\big[|Y_t|\big]
	\end{align*}
	in the third step.

	Thus, we have shown that
	\begin{equation}\label{eq:B3}
		B_{7}=\big\Vert\mLambda_{(1)}(\vtheta^{m},\vtheta^{v}) - \mLambda_{(1)}(\vtheta_{0}^m,\vtheta^{v})\big\Vert\leq C \big\Vert\vtheta^{m} - \vtheta_{0}^m\big\Vert\leq C\delta.
	\end{equation}
	
	Plugging \eqref{eq:A34} and \eqref{eq:B3} into \eqref{eq:(3.1)} gives
	\begin{equation}\label{eq:(A.43)}
		\big\Vert\mLambda_{(1)} - \mLambda_{(1)}(\vtheta^{m},\vtheta^{v})\big\Vert\leq C\delta^{1/p}.
	\end{equation}
	Using consistency of the parameter estimators, proves that $\mLambda_{(1)}(\vtheta_n^m,\vtheta_n^v)\overset{\P}{\longrightarrow}\mLambda_{(1)}$. By positive definiteness of $\mLambda_{(1)}=\mLambda_{(1)}(\vtheta_0^m, \vtheta_0^v)$ and continuity of $\mLambda_{(1)}(\cdot,\cdot)$ (in a neighborhood of the true parameters; see \eqref{eq:(A.43)}), $\mLambda_{(1)}^{-1}(\vtheta_n^m,\vtheta_n^v)\overset{\P}{\longrightarrow}\mLambda_{(1)}^{-1}$ follows from this by the CMT.
\end{proof}

\begin{lemM}\label{lem:2 tilde}
	Suppose Assumptions~\ref{ass:1}--\ref{ass:7} hold. Then, as $n\to\infty$,
	\[
	\sqrt{n}\widehat{\vlambda}_n(\widehat{\vtheta}_n^{m}, \widehat{\vtheta}_n^{v})=o_{\P}(1).
	\]
\end{lemM}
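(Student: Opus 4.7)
The plan is to exploit the key observation that, unlike the quantile objective used in Lemma V.2 (which has an indicator $\1_{\{X_t \le v_t(\vtheta^v)\}}$ depending discontinuously on the parameter $\vtheta^v$), the MES objective function
\[
	Q_n(\vtheta^m) := \frac{1}{n}\sum_{t=1}^{n} S^{\MES}\big((v_t(\widehat{\vtheta}_n^v), m_t(\vtheta^m))^\prime, (X_t, Y_t)^\prime\big) = \frac{1}{2n}\sum_{t=1}^{n}\1_{\{X_t > v_t(\widehat{\vtheta}_n^v)\}}\big[Y_t - m_t(\vtheta^m)\big]^2
\]
is a smooth function of $\vtheta^m$. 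Indeed, for a given realization, the indicator $\1_{\{X_t > v_t(\widehat{\vtheta}_n^v)\}}$ does not depend on $\vtheta^m$ at all, so by Assumption~\ref{ass:4}~\ref{it:4i} (differentiability of $m_t(\cdot)$ on $\inter(\mTheta^m)$), $Q_n(\cdot)$ is differentiable on $\inter(\mTheta^m)$ with gradient precisely
\[
	\nabla Q_n(\vtheta^m) = \frac{1}{n}\sum_{t=1}^{n}\1_{\{X_t > v_t(\widehat{\vtheta}_n^v)\}}\nabla m_t(\vtheta^m)\big[m_t(\vtheta^m) - Y_t\big] = \frac{1}{n}\sum_{t=1}^{n}\Mm_{t}(\vtheta^m, \widehat{\vtheta}_n^v),
\]
by the definition \eqref{eq:(12+)} of $\Mm_t$.

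Next, I invoke the consistency result. By Proposition~\ref{prop:cons}, $\widehat{\vtheta}_n^m \overset{\P}{\to} \vtheta_0^m$, and by Assumption~\ref{ass:1}~\ref{it:1iii}, $\vtheta_0^m \in \inter(\mTheta^m)$. Combining these two facts shows that $\widehat{\vtheta}_n^m \in \inter(\mTheta^m)$ with probability approaching one as $n \to \infty$. On this event, $\widehat{\vtheta}_n^m$ is an interior minimizer of the smooth function $Q_n(\cdot)$, so the standard first-order condition applies and yields $\nabla Q_n(\widehat{\vtheta}_n^m) = \vzeros$. That is,
\[
	\frac{1}{n}\sum_{t=1}^{n}\Mm_{t}(\widehat{\vtheta}_n^m, \widehat{\vtheta}_n^v) = \vzeros \qquad\text{w.p.a.~1},
\]
whence $\frac{1}{\sqrt{n}}\sum_{t=1}^{n}\Mm_{t}(\widehat{\vtheta}_n^m, \widehat{\vtheta}_n^v) = o_{\P}(1)$.

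There is no real obstacle here. The only thing to be careful about is ensuring that interior-optimum arguments are actually available, which is why Proposition~\ref{prop:cons} together with Assumption~\ref{ass:1}~\ref{it:1iii} is essential; without consistency, one would have to worry about $\widehat{\vtheta}_n^m$ lying on the boundary of $\mTheta^m$, which would necessitate a much more delicate argument (as in Lemma~V.\ref{lem:2}, where the VaR objective is not smooth and an $O_\P(1/\sqrt{n})$ upper bound on the subgradient had to be established via Assumption~\ref{ass:7}). This simplification reflects the fact that the discontinuity in the bivariate loss \eqref{eq:loss} pertains exclusively to the VaR component.
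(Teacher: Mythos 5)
Your proof is correct and rests on the same essential observation as the paper's: the second-step objective is differentiable in $\vtheta^m$ (the indicator does not involve $\vtheta^m$), so the score $\frac{1}{n}\sum_{t=1}^{n}\Mm_t(\widehat{\vtheta}_n^m,\widehat{\vtheta}_n^v)$ vanishes exactly at the minimizer, making the $\sqrt{n}$-scaled quantity zero w.p.a.~1 rather than merely $o_{\P}(1)$. The paper reaches the identical conclusion by running the Ruppert--Carroll one-sided-derivative argument (to parallel Lemma~V.\ref{lem:2}) and letting $\xi\to0$, whereas you invoke the interior first-order condition directly and, usefully, make explicit the step---consistency plus Assumption~\ref{ass:1}~\ref{it:1iii}---that guarantees $\widehat{\vtheta}_n^m\in\inter(\mTheta^m)$ w.p.a.~1, which the paper leaves implicit.
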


\begin{proof}
	The proof is similar to that of Lemma~V.\ref{lem:2}.
	Recall from Assumption~\ref{ass:1}~\ref{it:1ii} that $\mTheta^m\subset\mathbb{R}^{q}$, such that $\vtheta^m$ is a $q$-dimensional parameter vector. Let $\ve_1,\ldots,\ve_q$ denote the standard basis of $\mathbb{R}^{q}$. Define
	\[
	S_{j,n}^{\MES}(a) := \frac{1}{\sqrt{n}}\sum_{t=1}^{n}S^{\MES}\Big(\big(v_t(\widehat{\vtheta}_n^{v}), m_t(\widehat{\vtheta}_n^{m}+a \ve_j)\big)^\prime, (X_t,Y_t)^\prime\Big),\qquad j=1,\ldots,q,
	\]
	where $a\in\mathbb{R}$. Let $G_{j,n}(a)$ be the right partial derivative of $S_{j,n}^{\MES}(a)$, such that (see \eqref{eq:(12+)})
	\[
	G_{j,n}(a)=\frac{1}{\sqrt{n}}\sum_{t=1}^{n}\1_{\{X_t> v_t(\widehat{\vtheta}_n^{v})\}}\nabla_j m_t(\widehat{\vtheta}_n^{m}+a \ve_j)\big[m_t(\widehat{\vtheta}_n^{m}+a \ve_j)-Y_t\big]
	,
	\]
	where $\nabla_j m_t(\cdot)$ is the $j$-th component of $\nabla m_t(\cdot)$. Then, $G_{j,n}(0)=\lim_{\xi\downarrow0}G_{j,n}(\xi)$ is the right partial derivative of
	\[
	S_n^{\MES}(\vtheta^{m}) := \frac{1}{\sqrt{n}}\sum_{t=1}^{n}S^{\MES}\Big(\big(v_t(\widehat{\vtheta}_n^v), m_t(\vtheta^m)\big)^\prime, (X_t,Y_t)^\prime\Big)
	\]
	at $\widehat{\vtheta}_n^{m}$ in the direction $\theta_j^{m}$, where $\vtheta^{m}=(\theta_1^{m},\ldots,\theta_q^{m})^\prime$. Correspondingly, $\lim_{\xi\downarrow0}G_{j,n}(-\xi)$ is the left partial derivative. Because $S_n^{\MES}(\cdot)$ achieves its minimum at $\widehat{\vtheta}_n^{m}$, the left derivative must be non-positive and the right derivative must be non-negative. Thus,
	\begin{align*}
		\big|G_{j,n}(0)\big| &\leq G_{j,n}(\xi) - G_{j,n}(-\xi)\\
		&= \frac{1}{\sqrt{n}}\sum_{t=1}^{n}\1_{\{X_t> v_t(\widehat{\vtheta}_n^{v})\}}\nabla_j m_t(\widehat{\vtheta}_n^{m}+\xi \ve_j)\big[m_t(\widehat{\vtheta}_n^{m}+\xi \ve_j)-Y_t\big]\\
		&\hspace{2cm} - \frac{1}{\sqrt{n}}\sum_{t=1}^{n}\1_{\{X_t> v_t(\widehat{\vtheta}_n^{v})\}}\nabla_j m_t(\widehat{\vtheta}_n^{m}-\xi \ve_j)\big[m_t(\widehat{\vtheta}_n^{m}-\xi \ve_j)-Y_t\big].
	\end{align*}
	By continuity of $m_t(\cdot)$ and $\nabla m_t(\cdot)$ (see Assumption~\ref{ass:4} \ref{it:4i}) it follows upon letting $\xi\to0$ that $|G_{j,n}(0)|=0$; in particular, $G_{j,n}(0)=o_{\P}(1)$. As this holds for every $j=1,\ldots,q$, we get that 
	\[
	\frac{1}{\sqrt{n}}\sum_{t=1}^{n}\Mm_{t}(\widehat{\vtheta}_n^{m}, \widehat{\vtheta}_n^{v})\overset{\eqref{eq:(12+)}}{=}\frac{1}{\sqrt{n}}\sum_{t=1}^{n}\1_{\{X_t> v_t(\widehat{\vtheta}_n^{v})\}}\nabla m_t(\widehat{\vtheta}_n^{m})\big[m_t(\widehat{\vtheta}_n^m) - Y_t\big]=o_{\P}(1),
	\]
	which is just the conclusion.
\end{proof}

For Lemma~M.\ref{lem:3 tilde}, we require the following preliminary result as an analog to Lemma~\ref{lem:type IV Vv}.

\begin{lem}\label{lem:type IV Mm}
	Suppose Assumptions~\ref{ass:1}--\ref{ass:7} hold. Then, $\Mm_t(\cdot,\cdot)$ forms a type IV class in the sense of \citet[p.~2278]{And94} with index $p=2r$ (where $r>1$ is from Assumption~\ref{ass:2}).
\end{lem}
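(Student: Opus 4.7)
The plan is to mirror the proof of Lemma \ref{lem:type IV Vv}, with two structural complications: the argument $\vtheta = (\vtheta^{m\prime}, \vtheta^{v\prime})^\prime$ now lives in $\mTheta = \mTheta^m \times \mTheta^v$, and the summand $\Mm_t$ is a product of three $\vtheta$-dependent factors (an indicator, a gradient, and a residual), rather than two. Writing $\Mm_t(\cdot,\cdot) = (\mathsf{m}_t^{(1)}(\cdot,\cdot), \ldots, \mathsf{m}_t^{(q)}(\cdot,\cdot))^\prime$, the goal is to show that for each $i$ and for $\vtheta \in \mTheta$, $\delta > 0$ small enough,
\[
\sup_{t\in\mathbb{N}}\bigg\Vert\sup_{\Vert\widetilde{\vtheta} - \vtheta\Vert\leq\delta}\big|\mathsf{m}_t^{(i)}(\widetilde{\vtheta}) - \mathsf{m}_t^{(i)}(\vtheta)\big|\bigg\Vert_{2r}\leq C\delta^\psi
\]
for some $C, \psi > 0$.

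First I would decompose the increment by inserting/removing intermediate arguments, yielding two terms: a ``smooth'' piece $A_t$ in which the indicator is frozen at $\widetilde{\vtheta}^v$ and only the MES arguments vary, and a ``jump'' piece $B_t$ in which only the indicator varies while $\nabla_i m_t(\vtheta^m)[m_t(\vtheta^m) - Y_t]$ is held fixed. For $A_t$, a further splitting writes the bracketed difference as $[\nabla_i m_t(\widetilde{\vtheta}^m) - \nabla_i m_t(\vtheta^m)](m_t(\widetilde{\vtheta}^m) - Y_t) + \nabla_i m_t(\vtheta^m)[m_t(\widetilde{\vtheta}^m) - m_t(\vtheta^m)]$. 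Applying the mean value theorem and Assumption~\ref{ass:4} bounds these pointwise by $M_2(\mZ_t)(M(\mZ_t) + |Y_t|)\delta$ and $M_1^2(\mZ_t)\delta$, respectively. Taking $L^{2r}$ norms and invoking (generalized) H\"older's inequality with Assumption~\ref{ass:5} then gives $\Vert \sup A_t \Vert_{2r} \leq C\delta$; the $Y_t$-term works because $\Vert Y_t\Vert_{4r+\iota}, \Vert M_2(\mZ_t)\Vert_{4r}$ are finite.

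For $B_t$, I would introduce as in the proofs of Lemmas~\ref{lem:Type IV} and \ref{lem:type IV Vv} the $\mZ_t$-measurable extremizers $\underline{\vtheta}^v, \overline{\vtheta}^v$ from \eqref{eq:theta v ast}, so that $\sup_{\Vert \widetilde{\vtheta} - \vtheta\Vert \le \delta}|B_t|$ is dominated by $|\1_{\{X_t > v_t(\overline{\vtheta}^v)\}} - \1_{\{X_t > v_t(\underline{\vtheta}^v)\}}| \cdot M_1(\mZ_t)\,[M(\mZ_t) + |Y_t|]$. H\"older's inequality with exponents $p = 2r(4r+\iota)/\iota$ and $q = 2r + \iota/2$ (so that $1/p + 1/q = 1/(2r)$) then separates the indicator difference from the product. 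The $L^p$-norm of the indicator difference is $O(\delta^{1/p})$ by the same conditioning and mean value argument used to obtain \eqref{eq:help ind1} (using Assumption~\ref{ass:3}\ref{it:3ii} and Assumption~\ref{ass:4}\ref{it:4iii}), while the $L^q$-norm of $M_1(\mZ_t)[M(\mZ_t) + |Y_t|]$ is finite because $\Vert M_1(\mZ_t)\Vert_{4r+\iota}, \Vert M(\mZ_t)\Vert_{4r+\iota}, \Vert Y_t\Vert_{4r+\iota}$ are all bounded by Assumption~\ref{ass:5}. Hence $\Vert \sup B_t \Vert_{2r} \leq C\delta^{1/p}$.

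Combining the two bounds yields the type IV property with $\psi = 1/p$. The main obstacle is not any single step but the bookkeeping of moments: one must choose the H\"older exponents so that the indicator-difference factor (which only permits $L^p$ control for $p$ not too large) and the heavy-tailed product $M_1(\mZ_t)[M(\mZ_t) + |Y_t|]$ (whose highest available moment is $4r+\iota$) fit simultaneously into the bound $1/p + 1/q = 1/(2r)$. The slack $\iota > 0$ in Assumption~\ref{ass:5} is precisely what makes this feasible; this is the same mechanism that powers the $C_{3t}$-estimate in Lemma~\ref{lem:Type IV}, and I would simply recycle it here.
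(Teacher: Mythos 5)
Your proposal is correct and follows essentially the same route as the paper's proof: the paper just splits the increment into four terms ($A_{8t}$--$D_{8t}$) rather than your two, but the indicator-difference pieces are handled with exactly the same extremizers from \eqref{eq:theta v ast}, the bound \eqref{eq:help ind1}, and the Hölder exponents $p=2r(4r+\iota)/\iota$, $q$ with $1/p+1/q=1/(2r)$, while the smooth pieces are bounded by the same MVT argument with $M_1,M_2,M,Y_t$ moments, yielding the identical exponent $\psi=1/p$.
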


\begin{proof}
	Write $\Mm_t(\cdot,\cdot)=\big(\mathsf{m}_t^{(1)}(\cdot,\cdot), \ldots, \mathsf{m}_t^{(q)}(\cdot,\cdot)\big)^\prime$.
	To prove the lemma, we have to show that there exist constants $C>0$ and $\psi>0$, such that
	\begin{equation}\label{eq:TSL4}
		\sup_{t\in\mathbb{N}}\bigg\Vert\sup_{\Vert\widetilde{\vtheta} - \vtheta\Vert\leq\delta}\big|\mathsf{m}_t^{(i)}(\widetilde{\vtheta}^m,\widetilde{\vtheta}^v) - \mathsf{m}_t^{(i)}(\vtheta^m,\vtheta^v)\big|\bigg\Vert_{2r}\leq C\delta^\psi
	\end{equation}
	for all $i=1,\ldots,q$, all $\vtheta=(\vtheta^{m\prime}, \vtheta^{v\prime})^\prime\in\mTheta$ and all $\delta>0$ in some neighborhood of 0.
	Fix $i\in\{1,\ldots,q\}$, $\vtheta\in\mTheta$ and $\delta>0$ such that $\{\widetilde{\vtheta}\in\mathbb{R}^{p+q}\colon \Vert\widetilde{\vtheta}-\vtheta\Vert\leq\delta\}\subset\mTheta$. Then,
	\begin{align*}
		&\bigg\Vert\sup_{\Vert\widetilde{\vtheta} - \vtheta\Vert\leq\delta}\big|\mathsf{m}_t^{(i)}(\widetilde{\vtheta}^m, \widetilde{\vtheta}^v) - \mathsf{m}_t^{(i)}(\vtheta^m,\vtheta^v)\big|\bigg\Vert_{2r} \\
		&=  \bigg\Vert\sup_{\Vert\widetilde{\vtheta} - \vtheta\Vert\leq\delta}\Big|\1_{\{X_t> v_t(\widetilde{\vtheta}^v)\}}\nabla_i m_t(\widetilde{\vtheta}^m)\big[m_t(\widetilde{\vtheta}^m)-Y_t\big] - \1_{\{X_t> v_t(\vtheta^v)\}}\nabla_i m_t(\vtheta^m)\big[m_t(\vtheta^m)-Y_t\big]\Big|\bigg\Vert_{2r} \\
		&= \bigg\Vert\sup_{\Vert\widetilde{\vtheta} - \vtheta\Vert\leq\delta}\Big| \big[\1_{\{X_t> v_t(\widetilde{\vtheta}^v)\}}-\1_{\{X_t> v_t(\vtheta^v)\}}\big]\nabla_i m_t(\widetilde{\vtheta}^m)m_t(\widetilde{\vtheta}^m)\\
		&\hspace{3cm} + \big[\1_{\{X_t> v_t(\widetilde{\vtheta}^v)\}}-\1_{\{X_t> v_t(\vtheta^v)\}}\big]\nabla_i m_t(\widetilde{\vtheta}^m)Y_t\\
		&\hspace{3cm} + \1_{\{X_t> v_t(\vtheta^v)\}}\Big\{\nabla_i m_t(\widetilde{\vtheta}^m)m_t(\widetilde{\vtheta}^m) - \nabla_i m_t(\vtheta^m)m_t(\vtheta^m)\Big\} \\
		&\hspace{3cm}+ \1_{\{X_t> v_t(\vtheta^v)\}}\Big\{\nabla_i m_t(\widetilde{\vtheta}^m) - \nabla_i m_t(\vtheta^m)\Big\} Y_t \Big|\bigg\Vert_{2r} \\	
		&\leq \bigg\Vert\sup_{\Vert\widetilde{\vtheta} - \vtheta\Vert\leq\delta}\Big|\big[\1_{\{X_t> v_t(\widetilde{\vtheta}^v)\}}-\1_{\{X_t> v_t(\vtheta^v)\}}\big]\nabla_i m_t(\widetilde{\vtheta}^m)m_t(\widetilde{\vtheta}^m)\Big|\bigg\Vert_{2r}\\
		&\hspace{3cm} + \bigg\Vert\sup_{\Vert\widetilde{\vtheta} - \vtheta\Vert\leq\delta}\Big|\big[\1_{\{X_t> v_t(\widetilde{\vtheta}^v)\}}-\1_{\{X_t> v_t(\vtheta^v)\}}\big]\nabla_i m_t(\widetilde{\vtheta}^m)Y_t\Big|\bigg\Vert_{2r}\\
		&\hspace{3cm} + \bigg\Vert\sup_{\Vert\widetilde{\vtheta} - \vtheta\Vert\leq\delta}\Big|\1_{\{X_t> v_t(\vtheta^v)\}}\Big\{\nabla_i m_t(\widetilde{\vtheta}^m)m_t(\widetilde{\vtheta}^m) - \nabla_i m_t(\vtheta^m)m_t(\vtheta^m)\Big\}\Big|\bigg\Vert_{2r}\\
		&\hspace{3cm} + \bigg\Vert\sup_{\Vert\widetilde{\vtheta} - \vtheta\Vert\leq\delta}\Big|\1_{\{X_t> v_t(\vtheta^v)\}}\Big\{\nabla_i m_t(\widetilde{\vtheta}^m) - \nabla_i m_t(\vtheta^m)\Big\} Y_t\Big|\bigg\Vert_{2r}\\
		&=:A_{8t} + B_{8t} + C_{8t} + D_{8t},
	\end{align*}
	where the third step follows from Minkowski's inequality.
	
	Consider the four terms separately. For the first term, we obtain from the generalized H\"{o}lder inequality that
	\begin{eqnarray*}
		A_{8t} &\leq& \bigg\Vert\sup_{\Vert\widetilde{\vtheta}^v - \vtheta^v\Vert\leq\delta}\big|\1_{\{X_t> v_t(\widetilde{\vtheta}^v)\}}-\1_{\{X_t> v_t(\vtheta^v)\}}\big|\sup_{\vtheta^m\in\mTheta^m}\norm{\nabla m_t(\vtheta^m)}\sup_{\vtheta^m\in\mTheta^m}\big|m_t(\vtheta^m)\big|\bigg\Vert_{2r}\\
		&\overset{\eqref{eq:theta v ast}}{\leq}& \bigg\Vert \big(\1_{\{X_t> v_t(\overline{\vtheta}^{v})\}}-\1_{\{X_t> v_t(\underline{\vtheta}^v)\}}\big)M_1(\mZ_t)M(\mZ_t)\bigg\Vert_{2r}\\
		&\leq& \big\Vert\1_{\{X_t> v_t(\overline{\vtheta}^{v})\}}-\1_{\{X_t> v_t(\underline{\vtheta}^v)\}}\big\Vert_{p}\big\Vert M_1(\mZ_t)\big\Vert_{4r+\iota}\big\Vert M(\mZ_t)\big\Vert_{4r+\iota}\\
		&\overset{\eqref{eq:help ind1}}{\leq}& C \delta^{1/p} \big\Vert M_1(\mZ_t)\big\Vert_{4r+\iota}\big\Vert M(\mZ_t)\big\Vert_{4r+\iota}\\
		&\leq& C \delta^{1/p},
	\end{eqnarray*}
	where $p=2r(4r+\iota)/\iota$.
	Arguing similarly for the second term, we get that 
	\begin{align*}
		B_{8t} &\leq \bigg\Vert \sup_{\Vert\widetilde{\vtheta}^v - \vtheta^v\Vert\leq\delta}\big|\1_{\{X_t> v_t(\widetilde{\vtheta}^v)\}}-\1_{\{X_t> v_t(\vtheta^v)\}}\big|\sup_{\vtheta^m\in\mTheta^m}\norm{\nabla m_t(\vtheta^m)}Y_t\bigg\Vert_{2r}\\
		&\leq \bigg\Vert\big(\1_{\{X_t> v_t(\overline{\vtheta}^{v})\}}-\1_{\{X_t> v_t(\underline{\vtheta}^v)\}}\big)M_1(\mZ_t)Y_t\bigg\Vert_{2r}\\
		&\leq \big\Vert\1_{\{X_t> v_t(\overline{\vtheta}^{v})\}}-\1_{\{X_t> v_t(\underline{\vtheta}^v)\}}\big\Vert_{p}\big\Vert M_1(\mZ_t)\big\Vert_{4r+\iota}\big\Vert Y_t\big\Vert_{4r+\iota} \\
		&\leq C \delta^{1/p} \big\Vert M_1(\mZ_t)\big\Vert_{4r+\iota} \big\Vert Y_t\big\Vert_{4r+\iota}\\
		&\leq C\delta^{1/p}.
	\end{align*}
	Moreover, from Minkowski's inequality,
	\begin{align*}
		C_{8t} &\leq \bigg\Vert\sup_{\Vert\widetilde{\vtheta}^m - \vtheta^m\Vert\leq\delta}\big|\nabla m_t(\widetilde{\vtheta}^m)m_t(\widetilde{\vtheta}^m) - \nabla m_t(\vtheta^m)m_t(\vtheta^m)\big|\bigg\Vert_{2r}\\
		&=\bigg\Vert\sup_{\Vert\widetilde{\vtheta}^m - \vtheta^m\Vert\leq\delta}\Big|\big\{\nabla m_t(\widetilde{\vtheta}^m) - \nabla m_t(\vtheta^m)\big\}m_t(\widetilde{\vtheta}^m) + \nabla m_t(\vtheta^m)\big\{m_t(\widetilde{\vtheta}^m) - m_t(\vtheta^m)\big\}\Big|\bigg\Vert_{2r}\\
		&\leq \bigg\Vert\sup_{\Vert\widetilde{\vtheta}^m - \vtheta^m\Vert\leq\delta}\Big|\big\{\nabla m_t(\widetilde{\vtheta}^m) - \nabla m_t(\vtheta^m)\big\}m_t(\widetilde{\vtheta}^m)\Big|\bigg\Vert_{2r}\\
		&\hspace{2cm} + \bigg\Vert\sup_{\Vert\widetilde{\vtheta}^m - \vtheta^m\Vert\leq\delta}\Big|\nabla m_t(\vtheta^m)\big\{m_t(\widetilde{\vtheta}^m) - m_t(\vtheta^m)\big\}\Big|\bigg\Vert_{2r}\\
		&\leq \big\Vert M_2(\mZ_t)M(\mZ_t)\big\Vert_{2r}\delta\\
		&\hspace{2cm} + \big\Vert M_1(\mZ_t)M_1(\mZ_t)\big\Vert_{2r}\delta\\
		&\leq C\delta.
	\end{align*}
	Finally, from H\"{o}lder's inequality and Assumption~\ref{ass:5},
	\begin{align*}
		D_{8t} &\leq \bigg\Vert Y_t \sup_{\Vert\widetilde{\vtheta}^m - \vtheta^m\Vert\leq\delta}\Big|\nabla m_t(\widetilde{\vtheta}^m) - \nabla m_t(\vtheta^m)\Big|\bigg\Vert_{2r}\\
		&\leq  \big\Vert Y_t M_2(\mZ_t)\big\Vert_{2r}\delta\\
		&\leq \big\Vert Y_t\big\Vert_{4r} \big\Vert M_2(\mZ_t)\big\Vert_{4r}\delta\\
		&\leq C\delta.
	\end{align*}
	Overall, \eqref{eq:TSL4} follows with $\psi=1/p$.
\end{proof}

\begin{lemM}\label{lem:3 tilde}
	Suppose Assumptions~\ref{ass:1}--\ref{ass:7} hold. Then, (on a possibly enlarged probability space) there exists a sequence of zero-mean Gaussian processes $\mB^{(n)}(\cdot,\cdot)$ with a.s.~continuous sample paths and covariance function $\mGamma(\cdot,\cdot)$, such that
	\begin{equation*}
		\sup_{\vtheta=(\vtheta^{m\prime},\vtheta^{v\prime})^\prime\in\mTheta}\Big\Vert\sqrt{n} \big[\widehat{\vlambda}_n(\vtheta^m,\vtheta^v) - \vlambda(\vtheta^m,\vtheta^v)\big]- \mB^{(n)}(\vtheta^m,\vtheta^v)\Big\Vert=o_{\P}(1),
	\end{equation*}
	where the covariance function satisfies
	\begin{multline*}
		\mGamma(\vtheta,\vtheta)=\E\big[\Mm_1(\vtheta^m,\vtheta^v)\Mm_1^\prime(\vtheta^m,\vtheta^v)\big] \\
		+ \sum_{j=1}^{\infty}\Big\{\E\big[\Mm_1(\vtheta^m,\vtheta^v)\Mm_{1+j}^\prime(\vtheta^m,\vtheta^v)\big] + \E\big[\Mm_{1+j}(\vtheta^m,\vtheta^v)\Mm_1^\prime(\vtheta^m,\vtheta^v)\big]\Big\}.
	\end{multline*}
\end{lemM}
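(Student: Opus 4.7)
The proof plan is to invoke the functional central limit theorem of \citet{DMR95} (Theorem~1 and Application~1), in exactly the same fashion as in Lemma~V.\ref{lem:3}, now applied to the empirical process indexed by $\vtheta=(\vtheta^{m\prime}, \vtheta^{v\prime})^\prime\in\mTheta$ with summands $\Mm_t(\vtheta^m, \vtheta^v)$ defined in \eqref{eq:(12+)}. Three ingredients must be verified: a uniform $L^{2r}$-moment bound on the summand class, ``Ossiander's $L^{2r}$-entropy condition'' in the sense of \citet{And94}, and the summability of the $\beta$-mixing coefficients.

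First, the uniform $2r$-th moment bound follows from the $c_r$-inequality combined with Assumptions~\ref{ass:4}~\ref{it:4iii} and \ref{ass:5}: since
\[
\big\Vert\Mm_t(\vtheta^m,\vtheta^v)\big\Vert \leq \sup_{\vtheta^m\in\mTheta^m}\norm{\nabla m_t(\vtheta^m)}\Big\{\sup_{\vtheta^m\in\mTheta^m}|m_t(\vtheta^m)| + |Y_t|\Big\}\leq M_1(\mZ_t)\{M(\mZ_t)+|Y_t|\},
\]
applying the generalized Hölder inequality and the moment bounds $\E[M_1^{4r+\iota}(\mZ_t)]\le K$, $\E[M^{4r+\iota}(\mZ_t)]\le K$ and $\E|Y_t|^{4r+\iota}\le K$ yields $\sup_{t\in\mathbb{N}} \E\big[\sup_{\vtheta\in\mTheta}\Vert\Mm_t(\vtheta)\Vert^{2r}\big] \le C < \infty$, so the envelope of the class has a finite $2r$-th moment.

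Second, I will invoke Lemma~\ref{lem:type IV Mm}, which already establishes that $\{\Mm_t(\cdot,\cdot)\}$ forms a type IV class with index $p=2r$. By \citet[Theorem~5]{And94}, this implies Ossiander's $L^{2r}$-entropy condition with the supremum serving as the $L^{2r}$-envelope (whose finite moment was just verified). Third, by Assumption~\ref{ass:2} the process $\{(X_t,Y_t,\mZ_t^\prime)^\prime\}$ is $\beta$-mixing of size $-r/(r-1)$, so $\sum_{t=1}^{\infty}t^{1/(r-1)}\beta(t)<\infty$; since $\Mm_t(\vtheta)$ is a measurable function of $(X_t,Y_t,\mZ_t^\prime)^\prime$, standard mixing inequalities \citep{Bra05} imply its $\beta$-mixing coefficients inherit the same decay, so the mixing condition required by \citet{DMR95} holds.

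With all three ingredients in place, Theorem~1 and Application~1 of \citet{DMR95} deliver a sequence of mean-zero Gaussian processes $\mB^{(n)}(\cdot)$ with a.s.~continuous sample paths such that the claimed strong approximation holds uniformly over $\vtheta\in\mTheta$. The covariance structure $\mGamma(\vtheta,\vtheta)$ then follows by direct computation as the long-run variance of the stationary sequence $\{\Mm_t(\vtheta^m,\vtheta^v)\}_{t\in\mathbb{N}}$. I do not anticipate a major obstacle beyond routine verification; the technical heavy lifting has already been done in Lemma~\ref{lem:type IV Mm} (establishing the type IV property through the delicate Hölder-type decomposition of the indicator increment), and the present argument is essentially a verbatim analog of the proof of Lemma~V.\ref{lem:3}.
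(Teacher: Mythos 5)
Your proposal is correct and follows essentially the same route as the paper's proof: a uniform $L^{2r}$-envelope bound via the $c_r$/H\"{o}lder inequalities and Assumption~\ref{ass:5}, Ossiander's entropy condition via Lemma~\ref{lem:type IV Mm} and Theorem~5 of \citet{And94}, the mixing summability from Assumption~\ref{ass:2}, and then Theorem~1 and Application~1 of \citet{DMR95}. The only cosmetic difference is the form of the envelope bound ($M_1(\mZ_t)\{M(\mZ_t)+|Y_t|\}$ versus the paper's $M_1(\mZ_t)|M(\mZ_t)-Y_t|$ followed by the $c_r$-split), which is immaterial.
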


\begin{proof}
	The arguments resemble those used to prove Lemma~V.\ref{lem:3}. Note that $\Mm_t(\cdot,\cdot)$ has uniformly bounded $2r$-th moments, because (using \eqref{eq:(12+)} and the $c_r$-inequality)
	\begin{align*}
		\E\bigg[\sup_{\vtheta\in\mTheta}\big\Vert\Mm_t(\vtheta^m,\vtheta^v)\big\Vert^{2r}\bigg] &\leq \E\Big[M_1^{2r}(\mZ_t)\big\{\big|M(\mZ_t)\big|+|Y_t|\big\}^{2r}\Big]\\
		&\leq C\Big\{\E\big[M_1^{2r}(\mZ_t)M^{2r}(\mZ_t)\big] + \E\big[M_1^{2r}(\mZ_t)|Y_t|^{2r}\big]\Big\}\\
		&\leq C <\infty.
	\end{align*}
	
	From Lemma~\ref{lem:type IV Mm} and Theorem~5 of \citet{And94} it follows that $\Mm_t(\cdot,\cdot)$ satisfies ``Ossiander's $L^{2r}$-entropy condition'' (with ``$L^{2r}$-envelope'' given by the supremum $\sup_{\vtheta\in\mTheta}\Vert\Mm_t(\vtheta^m,\vtheta^v)\Vert$). 
	
	Moreover, the mixing condition of Theorem~1 in \citet{DMR95} is satisfied due to $\sum_{t=1}^{\infty}t^{1/(r-1)}\beta(t)<\infty$ from Assumption~\ref{ass:2} (and standard mixing inequalities from, e.g., \citet{Bra05}).
	
	Therefore, the desired FCLT once again follows from Theorem~1 and Application~1 of \citet{DMR95}.
\end{proof}

\subsection{Joint Asymptotic Normality of the VaR and MES Parameter Estimator}\label{Joint Asymptotic Normality of the VaR and MES Parameter Estimator}

\begin{proof}[{\textbf{Proof of Theorem~\ref{thm:an} (Joint asymptotic normality of $(\widehat{\vtheta}_n^v, \widehat{\vtheta}_n^m)$):}}]
	Rewrite \eqref{eq:exp VaR} using the subsequent items (i)--(iv) to get that
	\begin{equation}\label{eq:decomp VaR simple}
		\sqrt{n}(\widehat{\vtheta}_n^v-\vtheta_0^v)=\big[-\mLambda^{-1}+o_{\P}(1)\big]\Big\{\sqrt{n}\big[\widehat{\vlambda}_n(\vtheta_0^v)-\vlambda(\vtheta_0^v)\big] + o_{\P}(1)\Big\}.
	\end{equation}
	Similarly, we obtain from \eqref{eq:expan theta hat m} and the subsequent items (i)--(v) that
	\begin{align*}
		\sqrt{n}(\widehat{\vtheta}_n^m-\vtheta_0^m) & = \mLambda_{(1)}^{-1}\mLambda_{(2)}\mLambda^{-1}\sqrt{n}\big[\widehat{\vlambda}_n(\vtheta_0^v)-\vlambda(\vtheta_0^v)\big] \\
		&\hspace{2cm} - \mLambda_{(1)}^{-1}\sqrt{n}\big[\widehat{\vlambda}_n(\vtheta_0^m,\vtheta_0^v)-\vlambda(\vtheta_0^m,\vtheta_0^v)\big] + o_{\P}(1).
	\end{align*}
	Combining this with \eqref{eq:decomp VaR simple} yields that
	\begin{align*}
		\sqrt{n}\begin{pmatrix}\widehat{\vtheta}_n^v-\vtheta_0^v\\
			\widehat{\vtheta}_n^m-\vtheta_0^m\end{pmatrix}&=
		\begin{pmatrix}
			-\mLambda^{-1} & \vzeros\\
			\mLambda_{(1)}^{-1}\mLambda_{(2)}\mLambda^{-1} & -\mLambda_{(1)}^{-1}
		\end{pmatrix}
		\sqrt{n}
		\begin{pmatrix}
			\widehat{\vlambda}_n(\vtheta_0^v)-\vlambda(\vtheta_0^v)\\
			\widehat{\vlambda}_n(\vtheta_0^m,\vtheta_0^v)-\vlambda(\vtheta_0^m,\vtheta_0^v)
		\end{pmatrix}+o_{\P}(1)\\
		&=(-\mGamma)
		\sqrt{n}
		\begin{pmatrix}
			\widehat{\vlambda}_n(\vtheta_0^v)-\vlambda(\vtheta_0^v)\\
			\widehat{\vlambda}_n(\vtheta_0^m,\vtheta_0^v)-\vlambda(\vtheta_0^m,\vtheta_0^v)
		\end{pmatrix}+o_{\P}(1)
	\end{align*}
	The conclusion now follows from \eqref{eq:joint lambda} and the CMT.
\end{proof}

\section{Estimating the Asymptotic Variance of Theorem~\ref{thm:an}}
\label{sec:thm3}

\renewcommand{\theequation}{B.\arabic{equation}}	
\setcounter{equation}{0}

For the asymptotic normality result of Theorem~\ref{thm:an} to be of use in inference, we have to estimate the asymptotic variance-covariance matrix $\mGamma \mM \mGamma^\prime$ consistently. 
To do so, we introduce the scalar bandwidth $c_n>0$, whose convergence rate is specified in Assumption~\ref{ass:8} below. 
We propose the following estimators:
\begin{align*}
	\widehat{\mV}_n &= \frac{\b(1-\b)}{n}\sum_{t=1}^{n}\nabla v_{t}(\widehat{\vtheta}_n^v)\nabla^\prime v_{t}(\widehat{\vtheta}_n^v),\\
	\widehat{\mLambda}_n &= \frac{1}{n}\sum_{t=1}^{n}(2c_n)^{-1}\1_{\{|X_t-v_t(\widehat{\vtheta}_n^v)|<c_n\}}\nabla v_t(\widehat{\vtheta}_n^v)\nabla^\prime v_t(\widehat{\vtheta}_n^v),\\
	\widehat{\mM}_n^{\ast} &= \frac{1}{n}\sum_{t=1}^{n}\big\{Y_t-m_t(\widehat{\vtheta}_n^m)\big\}^2\1_{\{X_t>v_t(\widehat{\vtheta}_n^v)\}}  \nabla m_t(\widehat{\vtheta}_n^m)\nabla^\prime m_t(\widehat{\vtheta}_n^m),\\
	\widehat{\mLambda}_{n,(1)} &= \frac{(1-\b)}{n}\sum_{t=1}^{n}\nabla m_t(\widehat{\vtheta}_n^m)\nabla^\prime m_t(\widehat{\vtheta}_n^m),\\
	\widehat{\mLambda}_{n,(2)} &=\frac{1}{n}\sum_{t=1}^{n}\big\{Y_t - m_t(\widehat{\vtheta}_n^m)\big\}(2c_n)^{-1}\1_{\{|X_t-v_t(\widehat{\vtheta}_n^v)|<c_n\}}\nabla m_t(\widehat{\vtheta}_n^m)\nabla^\prime v_t(\widehat{\vtheta}_n^v).
\end{align*}
The estimators $\widehat{\mV}_n$ and $\widehat{\mLambda}_n$ are standard estimators in the (non-linear) quantile regression literature, known from, e.g., \citet[Theorem~3]{EM04} and \citet[Theorem~4]{OH16}. The remaining estimators are specific to MES regressions, and are reasonably straightforward variations of the aforementioned estimators.

To prove consistency of these estimators, we require the following additional regularity conditions.

\setcounter{assumption}{7}	
\begin{assumption}\label{ass:9}
	\renewcommand{\theenumi}{(\roman{enumi})}
	\begin{enumerate}
		
		\item\label{it:9i} $\frac{1}{n}\sum_{t=1}^{n}\nabla v_t(\vtheta_0^v)\nabla^\prime v_t(\vtheta_0^v)f_t^{X}\big(v_t(\vtheta_0^v)\big) \overset{\P}{\longrightarrow} \E\big[\nabla v_t(\vtheta_0^v)\nabla^\prime v_t(\vtheta_0^v)f_t^{X}\big(v_t(\vtheta_0^v)\big)\big]$.
		
		\item\label{it:9ii} $\frac{1}{n}\sum_{t=1}^{n}\nabla m_t(\vtheta_0^m)\nabla^\prime v_t(\vtheta_0^v)\big\{\int_{-\infty}^{\infty} y f_t\big(v_t(\vtheta_0^v),y\big)\D y - m_t(\vtheta_0^m)f_t^{X}\big(v_t(\vtheta_0^v)\big)\big\} \overset{\P}{\longrightarrow}$\\	
		$\E\Big[\nabla m_t(\vtheta_0^v)\nabla^\prime v_t(\vtheta_0^v)\big\{\int_{-\infty}^{\infty} y f_t\big(v_t(\vtheta_0^v),y\big)\D y - m_t(\vtheta_0^m)f_t^{X}\big(v_t(\vtheta_0^v)\big)\big\}\Big]$.
		
	\end{enumerate}
\end{assumption}

\begin{assumption}\label{ass:10}
	$\sup_{x\in\mathbb{R}}\int_{-\infty}^{\infty}|y|^{2} f_t(x,y)\D y\leq F_{2}(\mathcal{F}_t)$ for some $\mathcal{F}_t$-measurable random variable $F_{2}(\mathcal{F}_t)$.
\end{assumption}

\begin{assumption}\label{ass:11}
	$\E\big[V_1^2(\mZ_{t})M^2(\mZ_{t})M_1^2(\mZ_{t})\big]\leq K$, $\E\big[M_1^2(\mZ_t)V_1^2(\mZ_t)F_2(\mathcal{F}_t)\big]\leq K$.
\end{assumption}

\begin{assumption}\label{ass:8} 
	The bandwidth $c_{n}>0$  satisfies that $c_{n}=o(1)$ and $c_{n}^{-1}=o\big(n^{1/2}\big)$, as $n\to\infty$.
\end{assumption}

The serial dependence in (functionals of) the conditional density $f_t(\cdot,\cdot)$ is restricted by ensuring that the laws of large numbers (LLN) in items~\ref{it:9i} and \ref{it:9ii} of Assumption~\ref{ass:9} hold.
To see this more clearly, note that, as a function of $\mZ_t$, $\big\{\nabla v_t(\vtheta_0^v)\nabla^\prime v_t(\vtheta_0^v)\big\}$ is $\beta$-mixing of size $-r/(r-1)$.
Therefore, $\big\{\nabla v_t(\vtheta_0^v)\nabla^\prime v_t(\vtheta_0^v)\big\}$ obeys a standard mixing LLN.
Now, Assumption~\ref{ass:9}~\ref{it:9i} restricts the serial dependence in $f_t^{X}\big(v_t(\vtheta_0^v)\big)$ by requiring that a LLN also holds for $\big\{\nabla v_t(\vtheta_0^v)\nabla^\prime v_t(\vtheta_0^v)f_t^{X}\big(v_t(\vtheta_0^v)\big)\big\}$.
A similar argument also applies to Assumption~\ref{ass:9}~\ref{it:9ii}.
Therefore, Assumption~\ref{ass:9} may be interpreted as bounding the time series dependence of $f_t(\cdot,\cdot)$.
Note that $f_t(\cdot,\cdot)$ does not have to be known for the above estimators to be consistent.
Related assumptions in the literature are Assumption VC3 in \citet{EM04} and Assumption~3 in \citet{PZC19}.
Assumption~\ref{ass:10} provides a bound for the conditional density, similar to Assumption~\ref{ass:3}~\ref{it:3iv}.
The final Assumption~\ref{ass:11} strengthens the moment conditions of Assumption~\ref{ass:5}.
Finally, Assumption~\ref{ass:8} is a standard condition for the bandwidth in quantile regression contexts \citep{EM04,PZC19}. 

\setcounter{rem}{3}	
\begin{rem}[Choice of Bandwidth $c_n$]
	The bandwidth $c_n$ from Assumption~\ref{ass:8} is only specified asymptotically.
	In finite samples, we suggest to choose $c_n$ following \citet{Koenker2005book} and \citet{MachadoSantosSilva2013}. 
	Formally, we let 
	\[
	c_n = \text{MAD} \Big[\big\{X_t-v_t(\widehat{\vtheta}_n^v) \big\}_{t=1,\ldots,n}\Big] \left[ \Phi^{-1} \big( \beta + m(n,\beta) \big) - \Phi^{-1} \big( \beta - m(n,\beta) \big) \right]
	\]
	with 
	\[
	m(n, \tau) =  n^{-1/3} \left( \Phi^{-1} (0.975)\right)^{2/3} \left( \frac{1.5 (\phi(\Phi^{-1}(\tau)))^2}{2(\Phi^{-1}(\tau))^2 +1} \right)^{1/3},
	\]
	where $\text{MAD}(\cdot)$ refers to the sample median absolute deviation, and $\phi(\cdot)$ and $\Phi^{-1}(\cdot)$ denote the density and quantile functions of the standard normal distribution. 
	In the simulations and the applications in the main paper (Sections~\ref{sec:Simulations} and \ref{sec:EmpiricalApplication}), we adopt the above choice of $c_n$.
\end{rem}

\setcounter{thm}{1}	
\begin{thm}\label{thm:lrv}
	Suppose Assumptions~\ref{ass:1}--\ref{ass:8} hold. Then, as $n\to\infty$, $\widehat{\mV}_n \overset{\P}{\longrightarrow}\mV$, $\widehat{\mLambda}_n \overset{\P}{\longrightarrow}\mLambda$, $\widehat{\mM}_n^\ast \overset{\P}{\longrightarrow}\mM^\ast$, $\widehat{\mLambda}_{n,(1)} \overset{\P}{\longrightarrow}\mLambda_{(1)}$ and $\widehat{\mLambda}_{n,(2)} \overset{\P}{\longrightarrow}\mLambda_{(2)}$.
\end{thm}

\begin{proof}[{\textbf{Proof of Theorem~\ref{thm:lrv} ($\widehat{\mLambda}_n\overset{\P}{\longrightarrow}\mLambda$, $\widehat{\mV}_n\overset{\P}{\longrightarrow}\mV$, $\widehat{\mLambda}_{n,(1)}\overset{\P}{\longrightarrow}\mLambda_{(1)}$):}}]
	We first show that $\widehat{\mLambda}_n\overset{\P}{\longrightarrow}\mLambda$. Define
	\begin{equation*}
		\widetilde{\mLambda}_n = \frac{1}{n}\sum_{t=1}^{n}(2c_n)^{-1}\1_{\{|X_t-v_t(\vtheta_0^v)|<c_n\}}\nabla v_t(\vtheta_0^v)\nabla^\prime v_t(\vtheta_0^v).
	\end{equation*}
	Then, it suffices to show that
	\begin{align}
		\widehat{\mLambda}_n - \widetilde{\mLambda}_n&=o_{\P}(1),\label{eq:Lambda1}\\
		\widetilde{\mLambda}_n - \mLambda&=o_{\P}(1).\label{eq:Lambda2}
	\end{align}
	First, we establish \eqref{eq:Lambda1}. Use the MVT with $\vtheta^{\ast}$ some mean value on the line connecting $\vtheta_0^v$ and $\widehat{\vtheta}_n^v$ to write
	\begin{align*}
		\big\Vert\widehat{\mLambda}_n - \widetilde{\mLambda}_n\big\Vert &= \bigg\Vert\frac{1}{n}\sum_{t=1}^{n}(2c_n)^{-1}\1_{\{|X_t-v_t(\vtheta_0^v)|<c_n\}}\Big\{ \big[\nabla v_t(\widehat{\vtheta}_n^v) - \nabla v_t(\vtheta_0^v)\big]\nabla^\prime v_t(\widehat{\vtheta}_n^v) \\
		&\hspace{6cm} + \nabla v_t(\vtheta_0^v)\big[\nabla^\prime v_t(\widehat{\vtheta}_n^v) - \nabla^\prime v_t(\vtheta_0^v)\big] \Big\} \\
		&\hspace{0.6cm} +\frac{1}{n}\sum_{t=1}^{n}(2c_n)^{-1}\big[\1_{\{|X_t-v_t(\widehat{\vtheta}_n^v)|<c_n\}} - \1_{\{|X_t-v_t(\vtheta_0^v)|<c_n\}}\big]\nabla v_t(\widehat{\vtheta}_n^v)\nabla^\prime v_t(\widehat{\vtheta}_n^v) \bigg\Vert\\
		&\leq \frac{1}{n}\sum_{t=1}^{n}(2c_n)^{-1}\1_{\{|X_t-v_t(\vtheta_0^v)|<c_n\}}\Big\Vert \nabla^2 v_t(\vtheta^{\ast})\big(\widehat{\vtheta}_n^v - \vtheta_0^v\big)\nabla^\prime v_t(\widehat{\vtheta}_n^v) \\
		&\hspace{6cm} + \nabla v_t(\vtheta_0^v)\big(\widehat{\vtheta}_n^v - \vtheta_0^v\big)^\prime \nabla^{2\prime} v_t(\vtheta^{\ast}) \Big\Vert \\
		&\hspace{0.6cm} +\frac{1}{n}\sum_{t=1}^{n}(2c_n)^{-1}\big|\1_{\{|X_t-v_t(\widehat{\vtheta}_n^v)|<c_n\}} - \1_{\{|X_t-v_t(\vtheta_0^v)|<c_n\}}\big|\cdot\big\Vert\nabla v_t(\widehat{\vtheta}_n^v)\nabla^\prime v_t(\widehat{\vtheta}_n^v) \big\Vert\\
		& \leq \frac{1}{n}\sum_{t=1}^{n}(2c_n)^{-1}\1_{\{|X_t-v_t(\vtheta_0^v)|<c_n\}}2V_2(\mZ_t) V_1(\mZ_t)\big\Vert \widehat{\vtheta}_n^v - \vtheta_0^v \big\Vert \\
		&\hspace{0.6cm} +\frac{1}{n}\sum_{t=1}^{n}(2c_n)^{-1}\big|\1_{\{|X_t-v_t(\widehat{\vtheta}_n^v)|<c_n\}} - \1_{\{|X_t-v_t(\vtheta_0^v)|<c_n\}}\big|V_1^2(\mZ_t)\\
		&=: A_{9n} + B_{9n}.
	\end{align*}
	To prove $A_{9n}=o_{\P}(1)$, we use the LIE to obtain that
	\begin{align*}
		\E&\bigg[\frac{1}{n}\sum_{t=1}^{n}(2c_n)^{-1}\1_{\{|X_t-v_t(\vtheta_0^v)|<c_n\}}2V_2(\mZ_t) V_1(\mZ_t)\bigg]\\
		&= \frac{1}{c_n n}\sum_{t=1}^{n}\E\Big\{V_2(\mZ_t) V_1(\mZ_t)\E_{t}\big[\1_{\{-c_n<X_t-v_t(\vtheta_0^v)<c_n\}}\big]\Big\}\\
		&=\frac{1}{c_n n}\sum_{t=1}^{n}\E\Big\{V_2(\mZ_t) V_1(\mZ_t)\big[F_t^{X}(v_t(\vtheta_0^v) + c_n) - F_t^{X}(v_t(\vtheta_0^v) - c_n)\big]\Big\}\\
		&\leq\frac{1}{c_n n}\sum_{t=1}^{n}\E\Big\{V_2(\mZ_t) V_1(\mZ_t)\sup_{x\in\mathbb{R}}f_t^{X}(x)2c_n\Big\}\\
		&\leq \frac{2K}{n}\sum_{t=1}^{n}\big\Vert V_2(\mZ_t)\big\Vert_2 \big\Vert V_1(\mZ_t)\big\Vert_2\\
		&\leq C<\infty.
	\end{align*}
	Hence, by Markov's inequality, $\frac{1}{n}\sum_{t=1}^{n}(2c_n)^{-1}\1_{\{|X_t-v_t(\vtheta_0^v)|<c_n\}}2V_2(\mZ_t) V_1(\mZ_t)=O_{\P}(1)$. Combining this with $\big\Vert \widehat{\vtheta}_n^v - \vtheta_0^v \big\Vert=o_{\P}(1)$ (from Proposition~\ref{prop:cons}), $A_{9n}=O_{\P}(1)o_{\P}(1)=o_{\P}(1)$ follows.
	
	To show $B_{9n}=o_{\P}(1)$, it is easy to check that
	\begin{multline}\label{eq:ind ineq}
		\big|\1_{\{|X_t-v_t(\widehat{\vtheta}_n^v)|<c_n\}} - \1_{\{|X_t-v_t(\vtheta_0^v)|<c_n\}}\big|\leq \1_{\{|X_t-v_t(\vtheta_0^v)-c_n|<|v_t(\widehat{\vtheta}_n^v) - v_t(\vtheta_0^v)|\}}\\
		+\1_{\{|X_t-v_t(\vtheta_0^v)+c_n|<|v_t(\widehat{\vtheta}_n^v) - v_t(\vtheta_0^v)|\}}.
	\end{multline}
	Theorem~\ref{thm:an} and Assumption~\ref{ass:8} imply that for any $d>0$ the event $\mathsf{E}_{1n}:=\big\{c_n^{-1}\Vert\widehat{\vtheta}_n^v - \vtheta_0^v\Vert\leq d\big\}$ occurs w.p.a.~1, as $n\to\infty$. Therefore, 
	\begin{align}
		&\P\big\{B_{9n}>\varepsilon\big\}\notag\\
		&\leq \P\big\{B_{9n}>\varepsilon,\, \mathsf{E}_{1n}\big\} + \P\big\{\mathsf{E}_{1n}^{C}\big\}\notag\\
		&\leq \P\bigg\{\frac{1}{n}\sum_{t=1}^{n}(2c_n)^{-1}\big|\1_{\{|X_t-v_t(\widehat{\vtheta}_n^v)|<c_n\}} - \1_{\{|X_t-v_t(\vtheta_0^v)|<c_n\}}\big|V_1^2(\mZ_t)>\varepsilon,\, \mathsf{E}_{1n}\bigg\} + o(1)\notag\\
		&\leq \frac{1}{2\varepsilon c_n n}\sum_{t=1}^{n}\bigg\{\E\Big[\1_{\{|X_t-v_t(\vtheta_0^v)-c_n|<V_1(\mZ_t)dc_n\}}V_1^2(\mZ_t)\Big] + \E\Big[\1_{\{|X_t-v_t(\vtheta_0^v)+c_n|<V_1(\mZ_t)dc_n\}}V_1^2(\mZ_t)\Big]\bigg\}\notag\\
		&\hspace{13.5cm}+o(1),\label{eq:B9n}
	\end{align}
	where we used \eqref{eq:ind ineq} for the final inequality and $\mathsf{E}_{1n}^{C}$ denotes the complement of $\mathsf{E}_{1n}$.
	Use the LIE to get that
	\begin{align*}
		\E&\Big[\1_{\{|X_t-v_t(\vtheta_0^v)-c_n|<V_1(\mZ_t)dc_n\}}V_1^2(\mZ_t)\Big]\\
		&= \E\Big[V_1^2(\mZ_t) \E_{t}\big\{\1_{\{-V_1(\mZ_t)dc_n<X_t-v_t(\vtheta_0^v)-c_n<V_1(\mZ_t)dc_n\}}\big\}\Big]\\
		&= \E\Big[V_1^2(\mZ_t) \Big\{F_t^{X}\big(v_t(\vtheta_0^v)+c_n+V_1(\mZ_t)dc_n\big) - F_t^{X}\big(v_t(\vtheta_0^v)+c_n-V_1(\mZ_t)dc_n\big)\Big\}\Big]\\
		&\leq \E\Big[V_1^2(\mZ_t) \sup_{x\in\mathbb{R}}\big\{f_t^{X}(x)\big\}2V_1(\mZ_t)dc_n \Big]\\
		&\leq 2Kd c_n\E\big[V_1^3(\mZ_t)\big]\\
		&\leq Cdc_n,
	\end{align*}
	with an identical bound in force for the other expectation in \eqref{eq:B9n}. Therefore, from \eqref{eq:B9n}, 
	\[
	\P\big\{B_{9n}>\varepsilon\big\}\leq Cd + o(1).
	\]
	Since $d>0$ can be chosen arbitrarily small, $B_{9n}=o_{\P}(1)$ follows. This establishes \eqref{eq:Lambda1}.
	
	It remains to prove \eqref{eq:Lambda2}. Write
	\begin{align*}
		\widetilde{\mLambda}_n - \mLambda &= \frac{1}{n}\sum_{t=1}^{n}\nabla v_t(\vtheta_0^v)\nabla^\prime v_t(\vtheta_0^v) (2c_n)^{-1}\Big\{\1_{\{|X_t-v_t(\vtheta_0^v)|<c_n\}} - \E_t\big[\1_{\{|X_t-v_t(\vtheta_0^v)|<c_n\}}\big] \Big\}\\
		&\hspace{0.5cm} +\frac{1}{n}\sum_{t=1}^{n}\nabla v_t(\vtheta_0^v)\nabla^\prime v_t(\vtheta_0^v) \Big\{(2c_n)^{-1}\E_t\big[\1_{\{|X_t-v_t(\vtheta_0^v)|<c_n\}}\big] - f_t^{X}\big(v_t(\vtheta_0^v)\big) \Big\}\\
		&\hspace{0.5cm} +\frac{1}{n}\sum_{t=1}^{n}\Big\{\nabla v_t(\vtheta_0^v)\nabla^\prime v_t(\vtheta_0^v)f_t^{X}\big(v_t(\vtheta_0^v)\big) - \E\big[\nabla v_t(\vtheta_0^v)\nabla^\prime v_t(\vtheta_0^v)f_t^{X}\big(v_t(\vtheta_0^v)\big)\big]  \Big\} \\
		&=:A_{10n} + B_{10n} + C_{10n}.
	\end{align*}
	We show that each of these terms is $o_{\P}(1)$. 
	
	By the LIE, $\E\big[A_{10n}\big]=\vzeros$. Hence, for any $(i,j)$-th element of $A_{10n}$, denoted by $A_{10n,ij}$, we have that
	\begin{align*}
		\Var\big(A_{10n,ij}\big) &= \E\big[A_{10n,ij}^2\big]\\
		&= (2c_n n)^{-2}\E\bigg[\sum_{t=1}^{n}\nabla_i v_t(\vtheta_0^v)\nabla_j v_t(\vtheta_0^v) (2c_n)^{-1}\Big\{\1_{\{|X_t-v_t(\vtheta_0^v)|<c_n\}} - \E_t\big[\1_{\{|X_t-v_t(\vtheta_0^v)|<c_n\}}\big] \Big\}\bigg]^2\\
		&= (2c_n n)^{-2}\sum_{t=1}^{n}\E\bigg[\Big\{\nabla_i v_t(\vtheta_0^v)\nabla_j v_t(\vtheta_0^v)\Big\}^2 \Big\{\1_{\{|X_t-v_t(\vtheta_0^v)|<c_n\}} - \E_t\big[\1_{\{|X_t-v_t(\vtheta_0^v)|<c_n\}}\big] \Big\}^2\bigg]\\
		&\leq (2c_n n)^{-2}\sum_{t=1}^{n}\E\big[V_1^4(\mZ_t)\big]\\
		&\leq C c_n^{-2} n^{-1}=o(1),
	\end{align*}
	where we used for the third equality that all covariance terms are zero by the LIE, and the final equality follows from Assumption~\ref{ass:8}. Chebyshev's inequality now implies that $A_{10n}=o_{\P}(1)$.
	
	For $B_{10n}$, note that by the MVT for some $c^{\ast}\in[-c_n, c_n]$
	\begin{align*}
		\E_t\big[\1_{\{|X_t-v_t(\vtheta_0^v)|<c_n\}}\big] &= \E_t\big[\1_{\{-c_n<X_t-v_t(\vtheta_0^v)<c_n\}}\big]\\
		&= F_t^{X}\big(v_t(\vtheta_0^v) + c_n\big) - F_t^{X}\big(v_t(\vtheta_0^v) - c_n\big)\\
		&=  f_t^X\big(v_t(\vtheta_0^v) + c^{\ast}\big) 2c_n,
	\end{align*}
	such that (also using Assumption~\ref{ass:3}~\ref{it:3ii})
	\begin{align*}
		\E\big\Vert B_{10n}\big\Vert &\leq \frac{1}{n}\sum_{t=1}^{n}\E\bigg[V_1^2(\mZ_t) \Big|(2c_n)^{-1}\E_t\big[\1_{\{|X_t-v_t(\vtheta_0^v)|<c_n\}}\big] - f_t^{X}\big(v_t(\vtheta_0^v)\big) \Big|\bigg]\\
		&\leq \frac{1}{n}\sum_{t=1}^{n}\E\bigg[V_1^2(\mZ_t) \Big|f_t^X\big(v_t(\vtheta_0^v) + c^{\ast}\big) - f_t^{X}\big(v_t(\vtheta_0^v)\big) \Big|\bigg]\\
		&\leq\frac{1}{n}\sum_{t=1}^{n}\E\big[V_1^2(\mZ_t) \big] Kc_n\\
		&\leq C c_n=o(1).
	\end{align*}
	Therefore, by Markov's inequality, $B_{10n}=o_{\P}(1)$ follows. 
	
	Finally, $C_{10n}=o_{\P}(1)$ is immediate from Assumption~\ref{ass:9}~\ref{it:9i}.
	
	This establishes \eqref{eq:Lambda2}, such that $\widehat{\mLambda}_n\overset{\P}{\longrightarrow}\mLambda$ follows.
	
	
	The proofs that $\widehat{\mV}_n\overset{\P}{\longrightarrow}\mV$, $\widehat{\mLambda}_{n,(1)}\overset{\P}{\longrightarrow}\mLambda_{(1)}$ are similar and, hence, omitted.
\end{proof}

\begin{proof}[{\textbf{Proof of Theorem~\ref{thm:lrv} ($\widehat{\mM}_n^{\ast}\overset{\P}{\longrightarrow}\mM^\ast$):}}]
	For $\widehat{\mM}_n^{\ast}\overset{\P}{\longrightarrow}\mM^\ast$ it suffices to show that
	\begin{align*}
		\widehat{\mM}_{1n}^{\ast}&:=\frac{1}{n}\sum_{t=1}^{n}\1_{\{X_t>v_t(\widehat{\vtheta}_n^v)\}} Y_t^2 \nabla m_t(\widehat{\vtheta}_n^m)\nabla^\prime m_t(\widehat{\vtheta}_n^m)\\
		&\hspace{4cm}\overset{\P}{\longrightarrow}\E\big[\1_{\{X_t>v_t(\vtheta_0^v)\}} Y_t^2 \nabla m_t(\vtheta_0^m)\nabla^\prime m_t(\vtheta_0^m)\big]=:\mM_{1}^\ast,\\
		\widehat{\mM}_{2n}^{\ast}&:=\frac{1}{n}\sum_{t=1}^{n}\1_{\{X_t>v_t(\widehat{\vtheta}_n^v)\}} Y_t m_t(\widehat{\vtheta}_n^m)\nabla m_t(\widehat{\vtheta}_n^m)\nabla^\prime m_t(\widehat{\vtheta}_n^m)\\
		&\hspace{4cm}\overset{\P}{\longrightarrow}\E\big[\1_{\{X_t>v_t(\vtheta_0^v)\}} Y_t m_t(\vtheta_0^m) \nabla m_t(\vtheta_0^m)\nabla^\prime m_t(\vtheta_0^m)\big]=:\mM_{2}^\ast,\\
		\widehat{\mM}_{3n}^{\ast}&:=\frac{1}{n}\sum_{t=1}^{n}\1_{\{X_t>v_t(\widehat{\vtheta}_n^v)\}} m_t^2(\widehat{\vtheta}_n^m)\nabla m_t(\widehat{\vtheta}_n^m)\nabla^\prime m_t(\widehat{\vtheta}_n^m)\\
		&\hspace{4cm}\overset{\P}{\longrightarrow}\E\big[\1_{\{X_t>v_t(\vtheta_0^v)\}} m_t^2(\vtheta_0^m) \nabla m_t(\vtheta_0^m)\nabla^\prime m_t(\vtheta_0^m)\big]=:\mM_{3}^\ast.
	\end{align*}
	We only prove $\widehat{\mM}_{1n}^{\ast}\overset{\P}{\longrightarrow}\mM_{1}^\ast$, because the other convergences follow along similar lines.
	Define
	\begin{equation*}
		\widetilde{\mM}_{1n}^{\ast} := \frac{1}{n}\sum_{t=1}^{n}\1_{\{X_t>v_t(\vtheta_0^v)\}} Y_t^2 \nabla m_t(\vtheta_0^m)\nabla^\prime m_t(\vtheta_0^m).
	\end{equation*}
	Then, it suffices to show that
	\begin{align}
		\widehat{\mM}_{1n}^{\ast} - \widetilde{\mM}_{1n}^{\ast}&=o_{\P}(1),\label{eq:M1}\\
		\widetilde{\mM}_{1n}^{\ast} - \mM_1^{\ast}&=o_{\P}(1).\label{eq:M2}
	\end{align}
	First, we establish \eqref{eq:M1}. Use the MVT with $\vtheta^{\ast}$ some mean value on the line connecting $\vtheta_0^m$ and $\widehat{\vtheta}_n^m$ to write
	\begin{align*}
		\big\Vert\widehat{\mM}_{1n}^{\ast} - \widetilde{\mM}_{1n}^{\ast}\big\Vert &= \bigg\Vert\frac{1}{n}\sum_{t=1}^{n}\1_{\{X_t>v_t(\vtheta_0^v)\}}Y_t^2\Big\{ \big[\nabla m_t(\widehat{\vtheta}_n^m) - \nabla m_t(\vtheta_0^m)\big]\nabla^\prime m_t(\widehat{\vtheta}_n^m) \\
		&\hspace{6cm} + \nabla m_t(\vtheta_0^m)\big[\nabla^\prime m_t(\widehat{\vtheta}_n^m) - \nabla^\prime m_t(\vtheta_0^m)\big] \Big\} \\
		&\hspace{0.6cm} +\frac{1}{n}\sum_{t=1}^{n}\big[\1_{\{X_t>v_t(\widehat{\vtheta}_n^v)\}} - \1_{\{X_t>v_t(\vtheta_0^v)\}}\big]Y_t^2\nabla m_t(\widehat{\vtheta}_n^m)\nabla^\prime m_t(\widehat{\vtheta}_n^m) \bigg\Vert\\
		&\leq \frac{1}{n}\sum_{t=1}^{n}\1_{\{X_t>v_t(\vtheta_0^v)\}}Y_t^2\Big\Vert \nabla^2 m_t(\vtheta^{\ast})\big(\widehat{\vtheta}_n^m - \vtheta_0^m\big)\nabla^\prime m_t(\widehat{\vtheta}_n^m) \\
		&\hspace{6cm} + \nabla m_t(\vtheta_0^m)\big(\widehat{\vtheta}_n^m - \vtheta_0^m\big)^\prime \nabla^{2\prime} m_t(\vtheta^{\ast}) \Big\Vert \\
		&\hspace{0.6cm} +\frac{1}{n}\sum_{t=1}^{n}\big|\1_{\{X_t>v_t(\widehat{\vtheta}_n^v)\}} - \1_{\{X_t>v_t(\vtheta_0^v)\}}\big|Y_t^2\big\Vert\nabla m_t(\widehat{\vtheta}_n^m)\nabla^\prime m_t(\widehat{\vtheta}_n^m) \big\Vert\\
		& \leq \frac{1}{n}\sum_{t=1}^{n}\1_{\{X_t>v_t(\vtheta_0^v)\}}Y_t^2 2M_2(\mZ_t) M_1(\mZ_t)\big\Vert \widehat{\vtheta}_n^m - \vtheta_0^m \big\Vert \\
		&\hspace{0.6cm} +\frac{1}{n}\sum_{t=1}^{n}\big|\1_{\{X_t>v_t(\widehat{\vtheta}_n^v)\}} - \1_{\{X_t>v_t(\vtheta_0^v)\}}\big|Y_t^2 M_1^2(\mZ_t)\\
		&=: A_{11n} + B_{11n}.
	\end{align*}
	To prove $A_{11n}=o_{\P}(1)$, use that
	\begin{align*}
		\E\bigg[\frac{1}{n}\sum_{t=1}^{n}\1_{\{X_t>v_t(\vtheta_0^v)\}}Y_t^2 2 M_2(\mZ_t) M_1(\mZ_t)\bigg]
		&\leq \frac{2}{n}\sum_{t=1}^{n}\E\big[Y_t^2M_2(\mZ_t) M_1(\mZ_t)\big]\\
		&\leq\frac{2}{n}\sum_{t=1}^{n}\big\Vert Y_t^2\big\Vert_{2} \big\Vert M_2(\mZ_t)\big\Vert_{4}  \big\Vert  M_1(\mZ_t)\big\Vert_{4}\\
		&\leq C <\infty.
	\end{align*}
	Hence, by Markov's inequality, $\frac{1}{n}\sum_{t=1}^{n}\1_{\{X_t>v_t(\vtheta_0^v)\}}Y_t^2 2 M_2(\mZ_t) M_1(\mZ_t)=O_{\P}(1)$. Combining this with $\big\Vert \widehat{\vtheta}_n^m - \vtheta_0^m \big\Vert=o_{\P}(1)$ (from Proposition~\ref{prop:cons}), $A_{11n}=O_{\P}(1)o_{\P}(1)=o_{\P}(1)$ follows.
	
	To show $B_{11n}=o_{\P}(1)$, consider the event $\mathsf{E}_{2n}:=\big\{\Vert\widehat{\vtheta}_n^v - \vtheta_0^v\Vert\leq \delta\big\}$ for $\delta>0$, which occurs w.p.a.~1, as $n\to\infty$, by Proposition~\ref{prop:cons}. Therefore, 
	\begin{align}
		&\P\big\{B_{11n}>\varepsilon\big\}\notag\\
		&\leq \P\big\{B_{11n}>\varepsilon,\, \mathsf{E}_{2n}\big\} + \P\big\{\mathsf{E}_{2n}^{C}\big\}\notag\\
		&\leq \P\bigg\{\frac{1}{n}\sum_{t=1}^{n}\big|\1_{\{X_t>v_t(\widehat{\vtheta}_n^v)\}} - \1_{\{X_t>v_t(\vtheta_0^v)\}}\big|Y_t^2 M_1^2(\mZ_t)>\varepsilon,\, \mathsf{E}_{2n}\bigg\} + o(1)\notag\\
		&\leq \frac{1}{\varepsilon n}\sum_{t=1}^{n}\E\bigg\{\sup_{\Vert\vtheta^v-\vtheta_0^v\Vert\leq\delta}\big|\1_{\{X_t>v_t(\vtheta^v)\}} - \1_{\{X_t>v_t(\vtheta_0^v)\}}\big| Y_t^2 M_1^2(\mZ_t)\bigg\}+o(1).\label{eq:B11n}
	\end{align}
	Use (an obvious adaptation of) \eqref{eq:theta v ast}, H\"{o}lder's inequality (with $p=(4r+\iota)/\iota$) and \eqref{eq:help ind1} to get that
	\begin{align*}
		\E&\bigg\{ \sup_{\Vert\vtheta^v-\vtheta_0^v\Vert\leq\delta}\big|\1_{\{X_t>v_t(\vtheta^v)\}} - \1_{\{X_t>v_t(\vtheta_0^v)\}}\big| Y_t^2 M_1^2(\mZ_t)\bigg\}\\
		&\leq \E\bigg\{ \big[\1_{\{X_t>v_t(\overline{\vtheta}^{v})\}} - \1_{\{X_t>v_t(\underline{\vtheta}^v)\}}\big] Y_t^2 M_1^2(\mZ_t)\bigg\}\\
		&\leq \big\Vert\1_{\{X_t>v_t(\overline{\vtheta}^{v})\}} - \1_{\{X_t>v_t(\underline{\vtheta}^v)\}}\big\Vert_{p} \big\Vert Y_t^2\big\Vert_{2+\iota/(2r)} \big\Vert M_1^2(\mZ_t)\big\Vert_{2+\iota/(2r)}\\
		&\leq C\delta^{1/p} \big\Vert Y_t^2\big\Vert_{2+\iota/2} \big\Vert M_1^2(\mZ_t)\big\Vert_{2+\iota/2}\\
		&\leq C\delta^{1/p}.
	\end{align*}
	Therefore, from \eqref{eq:B11n}, 
	\[
	\P\big\{B_{11n}>\varepsilon\big\}\leq C\delta^{1/p} + o(1).
	\]
	Since $\delta>0$ can be chosen arbitrarily small, $B_{11n}=o_{\P}(1)$ follows and, hence, also \eqref{eq:M1}.
	
	It remains to prove \eqref{eq:M2}, i.e.,
	\[
	\frac{1}{n}\sum_{t=1}^{n}\1_{\{X_t>v_t(\vtheta_0^v)\}} Y_t^2 \nabla m_t(\vtheta_0^m)\nabla^\prime m_t(\vtheta_0^m)\overset{\P}{\longrightarrow}\E\big[\1_{\{X_t>v_t(\vtheta_0^v)\}} Y_t^2 \nabla m_t(\vtheta_0^m)\nabla^\prime m_t(\vtheta_0^m)\big].
	\]
	This can be deduced from the ergodic theorem \citep[e.g.,][Theorem 3.34]{Whi01} as follows. Recall from the proof of Proposition~\ref{prop:cons} that $\{(X_t, Y_t, \mZ_t^\prime)^\prime\}_t$ is $\alpha$-mixing of size $-r/(r-1)$. Then, since $\1_{\{X_t<v_t(\vtheta_0^v)\}} Y_t^2 \nabla m_t(\vtheta_0^m)\nabla^\prime m_t(\vtheta_0^m)$ is $(X_t,Y_t,\mZ_t)$-measurable and (by Assumption~\ref{ass:2}) stationary, $\big\{\1_{\{X_t<v_t(\vtheta_0^v)\}} Y_t^2 \nabla m_t(\vtheta_0^m)\nabla^\prime m_t(\vtheta_0^m)\big\}_{t\in\mathbb{N}}$ is also $\alpha$-mixing of size $-r/(r-1)$ as a function of $\alpha$-mixing random variables of size $-r/(r-1)$ \citep[e.g.,][Theorem~3.49]{Whi01}. By \citet[Proposition~3.44]{Whi01}, this implies ergodicity of 
	\[
	\big\{\1_{\{X_t>v_t(\vtheta_0^v)\}} Y_t^2 \nabla m_t(\vtheta_0^m)\nabla^\prime m_t(\vtheta_0^m)\big\}_{t\in\mathbb{N}}.
	\]
	Since also 
	\[
	\E\big\Vert \1_{\{X_t>v_t(\vtheta_0^v)\}} Y_t^2 \nabla m_t(\vtheta_0^m)\nabla^\prime m_t(\vtheta_0^m)\big\Vert\leq \big\Vert Y_t^2\big\Vert_{2}\big\Vert M_1^2(\mZ_t)\big\Vert_{2}\leq C<\infty,
	\]
	the ergodic theorem of \citet[Theorem 3.34]{Whi01} immediately implies \eqref{eq:M2}. Thus, $\widehat{\mM}_{1n}^{\ast}\overset{\P}{\longrightarrow}\mM_{1}^{\ast}$. Overall, we obtain that $\widehat{\mM}_n^{\ast}\overset{\P}{\longrightarrow}\mM^\ast$.
\end{proof}

\begin{proof}[{\textbf{Proof of Theorem~\ref{thm:lrv} ($\widehat{\mLambda}_{n,(2)}\overset{\P}{\longrightarrow}\mLambda_{(2)}$):}}]
	Since the proof of
	\begin{multline*}
		\frac{1}{n}\sum_{t=1}^{n}m_t(\widehat{\vtheta}_n^m) (2c_n)^{-1} \1_{\{|X_t-v_t(\widehat{\vtheta}_n^v)|<c_n\}}\nabla m_t(\widehat{\vtheta}_n^m)\nabla^\prime v_t(\widehat{\vtheta}_n^v)\\
		\overset{\P}{\longrightarrow}\E\Big[m_t(\vtheta_0^m) f_{t}^{X}\big(v_t(\vtheta_0^v)\big) \nabla m_t(\vtheta_0^m)\nabla^\prime v_t(\vtheta_0^v)\Big]
	\end{multline*}
	follows along similar lines as that of $\widehat{\mLambda}_n\overset{\P}{\longrightarrow}\mLambda$, we only show that
	\begin{multline*}
		\widehat{\mLambda}_{22n}:=\frac{1}{n}\sum_{t=1}^{n}Y_t(2c_n)^{-1}\1_{\{|X_t-v_t(\widehat{\vtheta}_n^v)|<c_n\}} \nabla m_t(\widehat{\vtheta}_n^m)\nabla^\prime v_t(\widehat{\vtheta}_n^v)\\
		\overset{\P}{\longrightarrow}\E\bigg[\int_{-\infty}^{\infty} y f_t\big(v_t(\vtheta_0^v),y\big)\D y\nabla m_t(\vtheta_0^m)\nabla^\prime v_t(\vtheta_0^v)\bigg]=:\mLambda_{22}.
	\end{multline*}
	Define
	\begin{equation*}
		\widetilde{\mLambda}_{22n} := \frac{1}{n}\sum_{t=1}^{n}Y_t(2c_n)^{-1}\1_{\{|X_t-v_t(\vtheta_0^v)|<c_n\}}\nabla m_t(\vtheta_0^m)\nabla^\prime v_t(\vtheta_0^v).
	\end{equation*}
	Then, it suffices to show that
	\begin{align}
		\widehat{\mLambda}_{22n} - \widetilde{\mLambda}_{22n}&=o_{\P}(1),\label{eq:Lambda11}\\
		\widetilde{\mLambda}_{22n} - \mLambda_{22}&=o_{\P}(1).\label{eq:Lambda12}
	\end{align}
	First, we establish \eqref{eq:Lambda11}. Use the MVT with $\vtheta^{m\ast}$ ($\vtheta^{v\ast}$) some mean value on the line connecting $\vtheta_0^m$ and $\widehat{\vtheta}_n^m$ ($\vtheta_0^v$ and $\widehat{\vtheta}_n^v$) to write
	\begin{align*}
		\big\Vert\widehat{\mLambda}_{22n} - \widetilde{\mLambda}_{22n}\big\Vert &= \bigg\Vert\frac{1}{n}\sum_{t=1}^{n}Y_t(2c_n)^{-1}\1_{\{|X_t-v_t(\vtheta_0^v)|<c_n\}}\Big\{ \big[\nabla m_t(\widehat{\vtheta}_n^m) - \nabla m_t(\vtheta_0^m)\big]\nabla^\prime v_t(\widehat{\vtheta}_n^v) \\
		&\hspace{6cm} + \nabla m_t(\vtheta_0^v)\big[\nabla^\prime v_t(\widehat{\vtheta}_n^v) - \nabla^\prime v_t(\vtheta_0^v)\big] \Big\} \\
		&\hspace{0.6cm} +\frac{1}{n}\sum_{t=1}^{n}Y_t(2c_n)^{-1}\big[\1_{\{|X_t-v_t(\widehat{\vtheta}_n^v)|<c_n\}} - \1_{\{|X_t-v_t(\vtheta_0^v)|<c_n\}}\big]\nabla m_t(\widehat{\vtheta}_n^v)\nabla^\prime v_t(\widehat{\vtheta}_n^v) \bigg\Vert\\
		&\leq \frac{1}{n}\sum_{t=1}^{n}|Y_t|(2c_n)^{-1}\1_{\{|X_t-v_t(\vtheta_0^v)|<c_n\}}\Big\Vert \nabla^2 m_t(\vtheta^{m\ast})\big(\widehat{\vtheta}_n^m - \vtheta_0^m\big)\nabla^\prime v_t(\widehat{\vtheta}_n^v) \\
		&\hspace{6cm} + \nabla m_t(\vtheta_0^m)\big(\widehat{\vtheta}_n^v - \vtheta_0^v\big)^\prime \nabla^{2\prime} v_t(\vtheta^{v\ast}) \Big\Vert \\
		&\hspace{0.6cm} +\frac{1}{n}\sum_{t=1}^{n}|Y_t|(2c_n)^{-1}\big|\1_{\{|X_t-v_t(\widehat{\vtheta}_n^v)|<c_n\}} - \1_{\{|X_t-v_t(\vtheta_0^v)|<c_n\}}\big|\big\Vert\nabla m_t(\widehat{\vtheta}_n^m)\nabla^\prime v_t(\widehat{\vtheta}_n^v) \big\Vert\\
		& \leq \frac{1}{n}\sum_{t=1}^{n}|Y_t|(2c_n)^{-1}\1_{\{|X_t-v_t(\vtheta_0^v)|<c_n\}}M_2(\mZ_t) V_1(\mZ_t)\big\Vert \widehat{\vtheta}_n^m - \vtheta_0^m \big\Vert \\
		&\hspace{0.6cm} +\frac{1}{n}\sum_{t=1}^{n}|Y_t|(2c_n)^{-1}\1_{\{|X_t-v_t(\vtheta_0^v)|<c_n\}}M_1(\mZ_t) V_2(\mZ_t)\big\Vert \widehat{\vtheta}_n^v - \vtheta_0^v \big\Vert \\
		&\hspace{0.6cm} +\frac{1}{n}\sum_{t=1}^{n}|Y_t|(2c_n)^{-1}\big|\1_{\{|X_t-v_t(\widehat{\vtheta}_n^v)|<c_n\}} - \1_{\{|X_t-v_t(\vtheta_0^v)|<c_n\}}\big|M_1(\mZ_t) V_1(\mZ_t)\\
		&=: A_{12n} + B_{12n} + C_{12n}.
	\end{align*}
	The proofs that $A_{12n}=o_{\P}(1)$ and $B_{12n}=o_{\P}(1)$ are similar, so we only show the former result.
	Note for this that by the MVT there exists some $c^{\ast}\in[-c_n, c_n]$, such that
	\begin{align}
		\E_t\big[|Y_t|\1_{\{|X_t-v_t(\vtheta_0^v)|<c_n\}}\big] &= \E_t\big[|Y_t|\1_{\{-c_n<X_t-v_t(\vtheta_0^v)<c_n\}}\big]\notag\\
		&= \int_{v_t(\vtheta_0^v) - c_n}^{v_t(\vtheta_0^v) + c_n}\int_{-\infty}^{\infty}|y| f_t(x,y)\D y\D x\notag\\
		&=\int_{-\infty}^{\infty} |y| \int_{v_t(\vtheta_0^v) - c_n}^{v_t(\vtheta_0^v) + c_n}f_t(x,y)\D x\D y\notag\\
		&=\int_{-\infty}^{\infty} |y| f_t\big(v_t(\vtheta_0^v) +c^{\ast},y\big)2c_n\D y.\label{eq:(33.1)}
	\end{align}
	Use the LIE, \eqref{eq:(33.1)} and Assumption~\ref{ass:3}~\ref{it:3iv} to deduce that
	\begin{align*}
		\E&\bigg[\frac{1}{n}\sum_{t=1}^{n}|Y_t|(2c_n)^{-1}\1_{\{|X_t-v_t(\vtheta_0^v)|<c_n\}}M_2(\mZ_t) V_1(\mZ_t)\bigg]\\
		&= \frac{1}{2c_n n}\sum_{t=1}^{n}\E\Big\{ M_2(\mZ_t) V_1(\mZ_t)\E_{t}\big[|Y_t|\1_{\{-c_n<X_t-v_t(\vtheta_0^v)<c_n\}}\big]\Big\}\\
		&=\frac{1}{2c_n n}\sum_{t=1}^{n}\E\Big\{ M_2(\mZ_t) V_1(\mZ_t)\int_{-\infty}^{\infty} |y| f_t\big(v_t(\vtheta_0^v) +c^{\ast},y\big)2c_n\D y\Big\}\\
		&\leq\frac{1}{n}\sum_{t=1}^{n}\E\bigg\{ M_2(\mZ_t) V_1(\mZ_t)\sup_{x\in\mathbb{R}}\Big[\int_{-\infty}^{\infty} |y| f_t\big(x,y\big)\D y\Big]\bigg\}\\
		&\leq \frac{1}{n}\sum_{t=1}^{n} \big\Vert M_2(\mZ_t)\big\Vert_{4r} \big\Vert V_1(\mZ_t)\big\Vert_{4r} \big\Vert F_1(\mathcal{F}_{t})\big\Vert_{2r/(2r-1)}\\
		&\leq C<\infty.
	\end{align*}
	Hence, by Markov's inequality, $\frac{1}{n}\sum_{t=1}^{n}|Y_t|(2c_n)^{-1}\1_{\{|X_t-v_t(\vtheta_0^v)|<c_n\}}M_2(\mZ_t) V_1(\mZ_t)=O_{\P}(1)$. Combining this with $\big\Vert \widehat{\vtheta}_n^m - \vtheta_0^m \big\Vert=o_{\P}(1)$ (from Proposition~\ref{prop:cons}), $A_{12n}=O_{\P}(1)o_{\P}(1)=o_{\P}(1)$ follows.
	
	To show $C_{12n}=o_{\P}(1)$, we use the inequality in \eqref{eq:ind ineq} and (once again) the event $\mathsf{E}_{1n}=\big\{c_n^{-1}\Vert\widehat{\vtheta}_n^v - \vtheta_0^v\Vert\leq d\big\}$, which occurs w.p.a.~1, as $n\to\infty$. Doing so, we obtain that (similarly as for \eqref{eq:B9n})
	\begin{align}
		&\P\big\{C_{12n}>\varepsilon\big\}\notag\\
		&\leq \P\big\{C_{12n}>\varepsilon,\, \mathsf{E}_{1n}\big\} + \P\big\{\mathsf{E}_{1n}^{C}\big\}\notag\\
		&\leq \P\bigg\{\frac{1}{n}\sum_{t=1}^{n}|Y_t|(2c_n)^{-1}\big|\1_{\{|X_t-v_t(\widehat{\vtheta}_n^v)|<c_n\}} - \1_{\{|X_t-v_t(\vtheta_0^v)|<c_n\}}\big|M_1(\mZ_t) V_1(\mZ_t)>\varepsilon,\, \mathsf{E}_{1n}\bigg\} + o(1)\notag\\
		&\leq \frac{1}{2\varepsilon c_n n}\sum_{t=1}^{n}\bigg\{\E\Big[|Y_t|\1_{\{|X_t-v_t(\vtheta_0^v)-c_n|<V_1(\mZ_t)dc_n\}}M_1(\mZ_t) V_1(\mZ_t)\Big]\notag\\
		&\hspace{4cm}+ \E\Big[|Y_t|\1_{\{|X_t-v_t(\vtheta_0^v)+c_n|<V_1(\mZ_t)dc_n\}}M_1(\mZ_t) V_1(\mZ_t)\Big]\bigg\}+o(1).\label{eq:B12n}
	\end{align}
	Use the LIE to derive that
	\begin{align*}
		\E&\Big[|Y_t|\1_{\{|X_t-v_t(\vtheta_0^v)-c_n|<V_1(\mZ_t)dc_n\}}M_1(\mZ_t) V_1(\mZ_t)\Big]\\
		&= \E\Big[M_1(\mZ_t) V_1(\mZ_t) \E_{t}\big\{|Y_t|\1_{\{-V_1(\mZ_t)dc_n<X_t-v_t(\vtheta_0^v)-c_n<V_1(\mZ_t)dc_n\}}\big\}\Big]\\
		&= \E\Big[M_1(\mZ_t) V_1(\mZ_t) \int_{v_t(\vtheta_0^v)+c_n-V_1(\mZ_t)dc_n}^{v_t(\vtheta_0^v)+c_n+V_1(\mZ_t)dc_n}\int_{-\infty}^{\infty} |y|f_t(x,y)\D y\D x\Big]\\
		&\leq \E\Big[M_1(\mZ_t) V_1(\mZ_t) 2V_1(\mZ_t)dc_n F_1(\mathcal{F}_t) \Big]\\
		&\leq 2d c_n\E\big[M_1(\mZ_t)  V_1^2(\mZ_t) F_1(\mathcal{F}_t)\big]\\
		&\leq 2d c_n\big\Vert M_1(\mZ_t)\big\Vert_{4r} \big\Vert V_1^2(\mZ_t)\big\Vert_{2r} \big\Vert F_1(\mathcal{F}_t)\big\Vert_{4r/(4r-3)}\\
		&\leq C d c_n,
	\end{align*}
	where we used Assumption~\ref{ass:3}~\ref{it:3iv} for the first inequality.
	An identical bound obtains for the other expectation in \eqref{eq:B12n}.
	Therefore, from \eqref{eq:B12n}, 
	\[
	\P\big\{C_{12n}>\varepsilon\big\}\leq Cd + o(1).
	\]
	Since $d>0$ can be chosen arbitrarily small, $C_{12n}=o_{\P}(1)$ follows. This establishes \eqref{eq:Lambda11}.
	
	It remains to prove \eqref{eq:Lambda12}. Write
	\begin{align*}
		\widetilde{\mLambda}_{22n} - \mLambda_{22} &= \frac{1}{n}\sum_{t=1}^{n}(2c_n)^{-1}\Big\{Y_t\1_{\{|X_t-v_t(\vtheta_0^v)|<c_n\}} - \E_t\big[Y_t\1_{\{|X_t-v_t(\vtheta_0^v)|<c_n\}}\big] \Big\}\nabla m_t(\vtheta_0^m)\nabla^\prime v_t(\vtheta_0^v)\\
		&\hspace{0.5cm} +\frac{1}{n}\sum_{t=1}^{n}\Big\{(2c_n)^{-1}\E_t\big[Y_t\1_{\{|X_t-v_t(\vtheta_0^v)|<c_n\}}\big] - \int_{-\infty}^{\infty} y f_t\big(v_t(\vtheta_0^v),y\big)\D y\Big\}\nabla m_t(\vtheta_0^m)\nabla^\prime v_t(\vtheta_0^v)\\
		&\hspace{0.5cm} +\frac{1}{n}\sum_{t=1}^{n}\bigg\{\int_{-\infty}^{\infty} y f_t\big(v_t(\vtheta_0^v),y\big)\D y \nabla m_t(\vtheta_0^m)\nabla^\prime v_t(\vtheta_0^v) \\
		&\hspace{6.5cm}- \E\Big[\int_{-\infty}^{\infty} y f_t\big(v_t(\vtheta_0^v),y\big)\D y \nabla m_t(\vtheta_0^v)\nabla^\prime v_t(\vtheta_0^v)\Big]\bigg\}\\
		&=:A_{13n} + B_{13n} + C_{13n}.
	\end{align*}
	We show that each of these terms is $o_{\P}(1)$. 
	
	By the LIE, $\E\big[A_{13n}\big]=\vzeros$. Now, for any $(i,j)$-th element of $A_{13n}$, denoted by $A_{13n,ij}$, we have that
	\begin{align*}
		\Var&\big(A_{13n,ij}\big) = \E\big[A_{13n,ij}^2\big]\\
		&= (2c_n n)^{-2}\E\bigg[\sum_{t=1}^{n}\nabla_i m_t(\vtheta_0^m)\nabla_j v_t(\vtheta_0^v) \Big\{Y_t\1_{\{|X_t-v_t(\vtheta_0^v)|<c_n\}} - \E_t\big[Y_t\1_{\{|X_t-v_t(\vtheta_0^v)|<c_n\}}\big] \Big\}\bigg]^2\\
		&= (2c_n n)^{-2}\sum_{t=1}^{n}\E\bigg[\Big\{\nabla_i m_t(\vtheta_0^m)\nabla_j v_t(\vtheta_0^v)\Big\}^2 \Big\{Y_t\1_{\{|X_t-v_t(\vtheta_0^v)|<c_n\}} - \E_t\big[Y_t\1_{\{|X_t-v_t(\vtheta_0^v)|<c_n\}}\big] \Big\}^2\bigg]\\
		&=(2c_n n)^{-2}\sum_{t=1}^{n}\E\bigg[\Big\{\nabla_i m_t(\vtheta_0^m)\nabla_j v_t(\vtheta_0^v)\Big\}^2 \E_t\Big\{Y_t\1_{\{|X_t-v_t(\vtheta_0^v)|<c_n\}} - \E_t\big[Y_t\1_{\{|X_t-v_t(\vtheta_0^v)|<c_n\}}\big] \Big\}^2\bigg]\\
		&\leq (2c_n n)^{-2}\sum_{t=1}^{n}\E\bigg[M_1^2(\mZ_t)V_1^2(\mZ_t) \E_t\big[Y_t^2\1_{\{|X_t-v_t(\vtheta_0^v)|<c_n\}}\big] \bigg]\\
		&= (2c_n n)^{-2}\sum_{t=1}^{n}\E\bigg[M_1^2(\mZ_t)V_1^2(\mZ_t) \int_{v_t(\vtheta_0^v) - c_n}^{v_t(\vtheta_0^v) + c_n}\int_{-\infty}^{\infty}y^2 f_t(x,y)\D y\D x\bigg]\\
		&\leq (2c_n n)^{-2}\sum_{t=1}^{n}\E\big[M_1^2(\mZ_t)V_1^2(\mZ_t)F_2(\mathcal{F}_t)\big]2c_n\\
		&\leq C c_n^{-1} n^{-1}=o(1),
	\end{align*}
	where we have used for the third equality that all covariance terms are zero by the LIE, the first inequality uses the $c_r$-inequality, the second inequality exploits Assumption~\ref{ass:10}, the third inequality uses Assumption~\ref{ass:11}, and the final equality follows from Assumption~\ref{ass:8}.
	
	For $B_{13n}$, note that, similarly as in \eqref{eq:(33.1)},
	\begin{equation*}
		\E_t\big[Y_t\1_{\{|X_t-v_t(\vtheta_0^v)|<c_n\}}\big] =\int_{-\infty}^{\infty} y f_t\big(v_t(\vtheta_0^v) +c^{\ast},y\big)2c_n\D y
	\end{equation*}
	for some $c^{\ast}\in[-c_n, c_n]$, such that (using Assumption~\ref{ass:3}~\ref{it:3iii})
	\begin{align*}
		\E\big\Vert B_{13n}\big\Vert &\leq \frac{1}{n}\sum_{t=1}^{n}\E\bigg[M_1(\mZ_t) V_1(\mZ_t) \Big|(2c_n)^{-1}\int_{-\infty}^{\infty} y f_t\big(v_t(\vtheta_0^v) +c^{\ast},y\big)2c_n\D y - \int_{-\infty}^{\infty} y f_t\big(v_t(\vtheta_0^v),y\big)\D y \Big|\bigg]\\
		&= \frac{1}{n}\sum_{t=1}^{n}\E\bigg[M_1(\mZ_t) V_1(\mZ_t) \Big|\int_{-\infty}^{\infty} y f_t\big(v_t(\vtheta_0^v) +c^{\ast},y\big)\D y - \int_{-\infty}^{\infty} y f_t\big(v_t(\vtheta_0^v),y\big)\D y \Big|\bigg]\\
		&\leq \frac{1}{n}\sum_{t=1}^{n}\E\bigg[M_1(\mZ_t) V_1(\mZ_t) \sup_{x\in\mathbb{R}}\Big|\int_{-\infty}^{\infty} y \partial_1 f_t\big(x,y\big)\D y\Big| \bigg] c_n\\
		&\leq \frac{1}{n}\sum_{t=1}^{n}\E\big[M_1(\mZ_t) V_1(\mZ_t) F(\mathcal{F}_t)\big]c_n\\
		&\leq\frac{c_n}{n}\sum_{t=1}^{n}\big\Vert M_1(\mZ_t)\big\Vert_{4r} \big\Vert V_1(\mZ_t)\big\Vert_{4r}\big\Vert F(\mathcal{F}_t)\big\Vert_{2r/(2r-1)}\\
		&\leq C c_n=o(1).
	\end{align*}
	Therefore, $B_{13n}=o_{\P}(1)$ follows by Markov's inequality
	
	Finally, $C_{13n}=o_{\P}(1)$ is immediate from Assumption~\ref{ass:9} \ref{it:9ii}. 
	
	Overall, \eqref{eq:Lambda12} follows, in turn establishing $\widehat{\mLambda}_{n,(2)}\overset{\P}{\longrightarrow}\mLambda_{(2)}$.
\end{proof}

\section{Verification of Assumptions~\ref{ass:1}--\ref{ass:11} for a linear model}
\label{sec:ex verif}

\renewcommand{\theequation}{C.\arabic{equation}}	
\setcounter{equation}{0}

In this section, we verify Assumptions~\ref{ass:1}--\ref{ass:11} for the linear model of Example~\ref{ex:1}. To do so, we have to impose some additional conditions beyond those introduced in Example~\ref{ex:1}.
We do not verify Assumption~\ref{ass:8} on the bandwidth choice here as this assumption is not related to a specific (linear or nonlinear) model.


\begin{assumptionM}\label{M:1}
	For all $t\in\mathbb{N}$, the matrices $\E\big[\mZ_t^{v}\mZ_t^{v\prime}\big]$ and $\E\big[\mZ_t^{m}\mZ_t^{m\prime}\big]$ are positive definite.
\end{assumptionM}

\begin{assumptionM}\label{M:2}
	$(\vtheta_0^{v\prime},\vtheta_0^{m\prime})^\prime\in\inter(\mTheta)$, where $\mTheta=\mTheta^v\times\mTheta^m\subset\mathbb{R}^{p+q}$ is compact.
\end{assumptionM}

\begin{assumptionM}\label{M:3}
	$\big\{(X_t,Y_t,\mZ_t^{v\prime},\mZ_t^{m\prime})^\prime\big\}_{t\in\mathbb{N}}$ is strictly stationary and $\beta$-mixing with coefficients $\beta(\cdot)$ of size $-r/(r-1)$ for $r>1$.
\end{assumptionM}

\begin{assumptionM}\label{M:4}
	$\E|X_t|\leq K$, $\E|Y_t|^{4r+\iota}\leq K$, $\E\big[\Vert\mZ_t^v\Vert^{2}\Vert\mZ_t^m\Vert^{4}\big]\leq K$, $\E\Vert\mZ_t^v\Vert^{4r}\leq K$ and $\E\Vert\mZ_t^m\Vert^{4r+\iota}\leq K$ for some $\iota>0$ and $r>1$ from Assumption~LM.\ref{M:3}.
\end{assumptionM}

\begin{assumptionM}\label{M:5}
	The density $f_{\varepsilon_X,\varepsilon_Y}(\cdot,\cdot)$ of $(\varepsilon_{X,t},\varepsilon_{Y,t})^\prime$ is differentiable in the first argument, with derivative denoted by $\partial_1 f_{\varepsilon_X,\varepsilon_Y}(\cdot,\cdot)$. Moreover,
	\begin{align*}
		& f_{\varepsilon_X,\varepsilon_Y}(x,y)>0\ \text{for all}\ (x,y)^\prime\in\mathbb{R}^2\ \text{such that}\ F_{\varepsilon_X,\varepsilon_Y}(x,y)\in(0,1),\\
		& \big|f_{\varepsilon_X}(x)-f_{\varepsilon_X}(x^\prime)\big|\leq K|x-x^\prime|,\\
		&\sup_{x\in\mathbb{R}} f_{\varepsilon_X}(x)\leq K,\quad f_{\varepsilon_X}(0)>0,\\
		&\sup_{x\in\mathbb{R}} \int_{-\infty}^{\infty} |y|^{i} f_{\varepsilon_X,\varepsilon_Y}(x,y)\D y\leq K\text{ for } i=0,1,2,\\
		&\sup_{x\in\mathbb{R}}\Big| \int_{-\infty}^{\infty} y^{i}\partial_1 f_{\varepsilon_X,\varepsilon_Y}(x,y)\D y\Big|\leq K\text{ for } i=0,1.
	\end{align*}
\end{assumptionM}

\begin{assumptionM}\label{M:6}
	The distribution of $\varepsilon_{X,t}\mid\mathcal{F}_t$ possesses a continuous Lebesgue density.
\end{assumptionM}

The first Assumption~LM.\ref{M:1} is the usual regression condition ruling out linearly dependent regressors.
Assumption~LM.\ref{M:2} ensures the validity of Assumption~\ref{ass:1}~\ref{it:1ii}--\ref{it:1iii}. 
Assumption~LM.\ref{M:3} restricts the serial dependence in the observables and imposes a weak regularity condition on their joint distribution, while Assumption~LM.\ref{M:4} restricts their moments.
Of all the conditions, Assumption~LM.\ref{M:3} and LM.\ref{M:4} may be the hardest to check.
Therefore, below we give an example and references to the literature, where mixing and moment properties have been verified for a range of time series models.
Assumption~LM.\ref{M:5} imposes certain smoothness conditions on the distribution of the errors $(\varepsilon_{X,t}, \varepsilon_{Y,t})^\prime$. Finally, Assumption~LM.\ref{M:6} also imposes a sufficiently regular conditional distribution of $\varepsilon_{X,t}$. Note that Assumption~LM.\ref{M:6} is redundant if $\varepsilon_{X,t}$ is independent of $\mathcal{F}_t$, such that the conditional and unconditional distribution coincide and Assumption~LM.\ref{M:6} already holds by virtue of the Lipschitz continuity of $f_{\varepsilon_X}(\cdot)$ from Assumption~LM.\ref{M:5}.

\begin{prop}
	Suppose Assumptions~LM.\ref{M:1}--LM.\ref{M:6} hold for the linear model of Example~\ref{ex:1}. Then, Assumptions~\ref{ass:1}--\ref{ass:11} are in force.
\end{prop}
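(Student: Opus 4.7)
The plan is to go through Assumptions~\ref{ass:1}--\ref{ass:7} in order, exploiting linearity to make most verifications routine and treating the harder cases (uniqueness and Assumption~\ref{ass:7}) with more care. First, Assumption~\ref{ass:1}~\ref{it:1ii}--\ref{it:1iii} and Assumption~\ref{ass:2} follow immediately from LM.\ref{M:2} and LM.\ref{M:3}. For uniqueness in \ref{ass:1}~\ref{it:1i}, suppose $\widetilde{\vtheta}_0^v\in\mTheta^v$ also satisfies \eqref{eqn:TrueModelParameters}. Then $\mZ_t^{v\prime}\big(\widetilde{\vtheta}_0^v-\vtheta_0^v\big)=0$ a.s., hence $\big(\widetilde{\vtheta}_0^v-\vtheta_0^v\big)^\prime\E\big[\mZ_t^v\mZ_t^{v\prime}\big]\big(\widetilde{\vtheta}_0^v-\vtheta_0^v\big)=0$, which forces $\widetilde{\vtheta}_0^v=\vtheta_0^v$ by positive definiteness in LM.\ref{M:1}. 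The analogous argument (using the MES-analog of this identity for $m_t(\cdot)$) gives uniqueness of $\vtheta_0^m$.

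For Assumptions~\ref{ass:4}--\ref{ass:5}, linearity yields $\nabla v_t(\vtheta^v)=\mZ_t^v$, $\nabla m_t(\vtheta^m)=\mZ_t^m$, $\nabla^2 v_t=\mZeros$, $\nabla^2 m_t=\mZeros$, so we can take $V_2\equiv V_3\equiv M_2\equiv M_3\equiv 0$, $V(\mZ_t)=C\Vert\mZ_t^v\Vert$, $M(\mZ_t)=C\Vert\mZ_t^m\Vert$ (with $C=\sup_{\vtheta\in\mTheta}\Vert\vtheta\Vert<\infty$ by LM.\ref{M:2}), and $V_1(\mZ_t)=\Vert\mZ_t^v\Vert$, $M_1(\mZ_t)=\Vert\mZ_t^m\Vert$. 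All moment bounds in Assumption~\ref{ass:5} on $V,V_1,M,M_1$ then reduce to those in LM.\ref{M:4}. For the density-related bounds on $F,F_1$ in Assumption~\ref{ass:3}~\ref{it:3iii}--\ref{it:3iv} and the corresponding moments in Assumption~\ref{ass:5}, I would exploit the change-of-variable identity $f_t(x,y)=f_{\varepsilon_X,\varepsilon_Y}\big(x-\mZ_t^{v\prime}\vtheta_0^v,\,y-\mZ_t^{m\prime}\vtheta_0^m\big)$ (viewing LM.\ref{M:5} as providing the relevant bounds on the conditional-on-$\mathcal{F}_t$ density of the errors, which under the classical independence-of-regressors setup is simply $f_{\varepsilon_X,\varepsilon_Y}$). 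Under this identity, the Lipschitz/boundedness condition on $f_t^X$ and the uniform-in-$x$ bounds $\sup_x\int|y|^i f_t(x,y)\D y$ and $\sup_x|\int y^i\partial_1 f_t(x,y)\D y|$ transfer directly from LM.\ref{M:5} as deterministic constants, so $F$ and $F_1$ can be taken equal to a constant and the associated moments in Assumption~\ref{ass:5} are trivial. Positivity of $f_t(x,y)$ wherever $F_t(x,y)\in(0,1)$ in Assumption~\ref{ass:3}~\ref{it:3i} follows from the corresponding positivity of $f_{\varepsilon_X,\varepsilon_Y}$ in LM.\ref{M:5}. For Assumption~\ref{ass:6}, note that $\mLambda_{(1)}=(1-\beta)\E[\mZ_t^m\mZ_t^{m\prime}]$ is positive definite by LM.\ref{M:1}, and $\mLambda=f_{\varepsilon_X}(0)\E[\mZ_t^v\mZ_t^{v\prime}]$ (since $f_t^X(v_t(\vtheta_0^v))=f_{\varepsilon_X}(0)$ under the quantile restriction of Example~\ref{ex:1}) is positive definite by LM.\ref{M:1} together with $f_{\varepsilon_X}(0)>0$ from LM.\ref{M:5}.

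The main obstacle is Assumption~\ref{ass:7}, where I aim to show $K=p=\dim(\vtheta^v)$. The argument follows the classical linear quantile regression reasoning \citep[cf.][]{KB78}: for any $p+1$ indices $t_1<\cdots<t_{p+1}$ in $\{1,\ldots,n\}$, the event that there exists $\vtheta^v$ with $X_{t_i}=\mZ_{t_i}^{v\prime}\vtheta^v$ for all $i=1,\ldots,p+1$ is covered by the union, over $i\in\{1,\ldots,p+1\}$, of the events that $X_{t_i}$ lies in a fixed ($\mathcal{F}_{t_i}$-measurable) affine subspace of $\mathbb{R}$ determined by $\{(X_{t_j},\mZ_{t_j}^v)\}_{j\neq i}$ and $\mZ_{t_i}^v$. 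By LM.\ref{M:6}, the conditional distribution of $X_{t_i}$ given $\mathcal{F}_{t_i}$ is continuous, so each such event has probability zero; union-bounding over the finite collection of index subsets gives that a.s.~at most $p$ residuals vanish simultaneously, uniformly in $\vtheta^v\in\mTheta^v$. Combining all the above yields the proposition.
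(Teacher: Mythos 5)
Your overall route coincides with the paper's: verify each assumption in turn, using the change-of-variable identity $f_t(x,y)=f_{\varepsilon_X,\varepsilon_Y}(x-\mZ_t^{v\prime}\vtheta_0^v,\,y-\mZ_t^{m\prime}\vtheta_0^m)$ for Assumption~\ref{ass:3}, the explicit gradients/Hessians for Assumptions~\ref{ass:4}--\ref{ass:5}, and $\mLambda=f_{\varepsilon_X}(0)\E[\mZ_t^v\mZ_t^{v\prime}]$ for Assumption~\ref{ass:6}. Your uniqueness argument via positive definiteness of $\E[\mZ_t^v\mZ_t^{v\prime}]$ is a legitimate (population-level) variant of the paper's argument via $\rk(\mZ^v)=p$; both are licensed by LM.\ref{M:1}. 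However, two of your steps are not correct as stated. First, in Assumption~\ref{ass:3}~\ref{it:3iii}--\ref{it:3iv} you claim that the bounds ``transfer directly from LM.\ref{M:5} as deterministic constants'' so that $F$ and $F_1$ may be taken constant. They may not: after substituting $u=y-\mZ_t^{m\prime}\vtheta_0^m$, the factor $y=u+\mZ_t^{m\prime}\vtheta_0^m$ splits the integral into two pieces, e.g.
\[
\int_{-\infty}^{\infty}y\,\partial_1 f_t(x,y)\D y=\int_{-\infty}^{\infty}u\,\partial_1 f_{\varepsilon_X,\varepsilon_Y}(x-\mZ_t^{v\prime}\vtheta_0^v,u)\D u+\mZ_t^{m\prime}\vtheta_0^m\int_{-\infty}^{\infty}\partial_1 f_{\varepsilon_X,\varepsilon_Y}(x-\mZ_t^{v\prime}\vtheta_0^v,u)\D u,
\]
so the correct envelopes are of the form $F(\mathcal{F}_t)=K\big(1+|\mZ_t^{m\prime}\vtheta_0^m|\big)$ and likewise for $F_1$; they are genuinely random, and the moment conditions $\E[F^{4r/(4r-3)}(\mathcal{F}_t)]\leq K$, $\E[F_1^{4r/(4r-3)}(\mathcal{F}_t)]\leq K$ must then be checked against $\E\Vert\mZ_t^m\Vert^{4r+\iota}\leq K$ from LM.\ref{M:4} (they do hold, since $4r/(4r-3)\leq 4\leq 4r+\iota$, but this step cannot be waved away in a proposition whose entire content is such verifications).

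Second, your Assumption~\ref{ass:7} argument has a real gap. The ``affine subspace of $\mathbb{R}$'' in which $X_{t_i}$ must lie is either empty, a single point, or all of $\mathbb{R}$; in the degenerate case where the remaining design vectors $\{\mZ_{t_j}^v\}_{j\neq i}$ fail to pin down $\vtheta^v$ along the direction $\mZ_{t_i}^v$, it is all of $\mathbb{R}$ and the corresponding event has conditional probability one, not zero, so the union bound as you state it does not close. Moreover, for a non-maximal index $t_i$ the set you condition on is determined by $(X_{t_j},\mZ_{t_j}^v)$ with $t_j>t_i$, hence is \emph{not} $\mathcal{F}_{t_i}$-measurable, and LM.\ref{M:6} (continuity of the law of $\varepsilon_{X,t}\mid\mathcal{F}_t$) gives you nothing about the conditional law of $X_{t_i}$ given future observations. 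A correct argument writes $\1_{\{X_t=v_t(\vtheta^v)\}}=\1_{\{\varepsilon_{X,t}=\mZ_t^{v\prime}(\vtheta^v-\vtheta_0^v)\}}$, always conditions on the \emph{largest} time index among the chosen $p+1$, and disposes of the degenerate (rank-deficient) configurations by a separate induction; this is exactly the content of Lemma~A.1 of \citet{RC80}, which the paper invokes at this point. Your bound $K=p$ is the right answer, but the path you sketch to it does not work as written.
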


\begin{proof}
	We verify each assumption separately.
	
	\textbf{Assumption~\ref{ass:1}~\ref{it:1i}:} The existence of $\vtheta_0^v$ and $\vtheta_0^m$ follows from Example~\ref{ex:1}. 
	To prove uniqueness of $\vtheta_0^v$, suppose that $\P\big\{v_{t}(\vtheta^v)=v_{t}(\vtheta_0^v)\big\}=1$ for some $t\in\mathbb{N}$. 
	However, by definition of $v_{t}(\cdot)$, this is equivalent to
	\begin{align*}
		v_{t}(\vtheta^v)\overset{\text{a.s.}}{=}v_{t}(\vtheta_0^v) \quad&\Longleftrightarrow\quad \mZ_t^{v\prime}\vtheta^v\overset{\text{a.s.}}{=}\mZ_t^{v\prime}\vtheta_0^v\\
		&\Longleftrightarrow\quad \mZ_t^{v\prime}(\vtheta^v-\vtheta_0^v)\overset{\text{a.s.}}{=}0\\
		&\Longleftrightarrow\quad (\vtheta^v-\vtheta_0^v)^\prime\mZ_t^{v}\mZ_t^{v\prime}(\vtheta^v-\vtheta_0^v)\overset{\text{a.s.}}{=}0.
	\end{align*}
	The latter implies that
	\[
	(\vtheta^v-\vtheta_0^v)^\prime\E\big[\mZ_t^{v}\mZ_t^{v\prime}\big](\vtheta^v-\vtheta_0^v)=0.
	\]
	Since the matrix $\E\big[\mZ_t^{v}\mZ_t^{v\prime}\big]$ is positive definite by Assumption~LM.\ref{M:1}, it must be the case that $\vtheta^v=\vtheta_0^v$.
	
	The uniqueness of $\vtheta_0^m$ may be shown similarly.
	
	\textbf{Assumption~\ref{ass:1}~\ref{it:1ii}--\ref{it:1iii}:} These follow directly from Assumption~LM.\ref{M:2}.
	
	\textbf{Assumption~\ref{ass:2}:} Immediate from Assumption~LM.\ref{M:3}.

	\textbf{Assumption~\ref{ass:3}~\ref{it:3i}:} Write
	\begin{align*}
		F_t(x,y) &= \P_t\big\{X_t\leq x,\ Y_t\leq y\big\}\\
		&= \P_t\big\{\mZ_t^{v\prime}\vtheta_0^v+\varepsilon_{X,t}\leq x,\ \mZ_t^{m\prime}\vtheta_0^m+\varepsilon_{Y,t}\leq y\big\}\\
		&=\P_t\big\{\varepsilon_{X,t}\leq x-\mZ_t^{v\prime}\vtheta_0^v,\ \varepsilon_{Y,t}\leq y-\mZ_t^{m\prime}\vtheta_0^m\big\}\\
		&=F_{\varepsilon_X,\varepsilon_Y}(x-\mZ_t^{v\prime}\vtheta_0^v,\ y-\mZ_t^{m\prime}\vtheta_0^m).
	\end{align*}
	Hence, 
	\begin{equation}\label{eq:dens trafo}
		f_t(x,y)=f_{\varepsilon_X,\varepsilon_Y}(x-\mZ_t^{v\prime}\vtheta_0^v,\ y-\mZ_t^{m\prime}\vtheta_0^m),
	\end{equation}
	Therefore, if $F_t(x,y)=F_{\varepsilon_X,\varepsilon_Y}(x-\mZ_t^{v\prime}\vtheta_0^v,\ y-\mZ_t^{m\prime}\vtheta_0^m)\in(0,1)$, then---by Assumption~LM.\ref{M:5}---it follows that $f_t(x,y)=f_{\varepsilon_X,\varepsilon_Y}(x-\mZ_t^{v\prime}\vtheta_0^v,\ y-\mZ_t^{m\prime}\vtheta_0^m)>0$, as required.
	
	\textbf{Assumption~\ref{ass:3}~\ref{it:3ii}:} To show $\sup_{x\in\mathbb{R}}f_t^{X}(x)\leq K$, write
	\begin{align*}
		F_t^{X}(x) &= \P_t\{X_t\leq x\}\\
		&= \P_t\{\mZ_t^{v\prime}\vtheta_0^v+\varepsilon_{X,t}\leq x\}\\
		&= \P_t\{\varepsilon_{X,t}\leq x-\mZ_t^{v\prime}\vtheta_0^v\}\\
		&= F_{\varepsilon_X}(x-\mZ_t^{v\prime}\vtheta_0^v).
	\end{align*}
	Therefore, for all $x\in\mathbb{R}$,
	\begin{align}
		f_t^{X}(x)&=f_{\varepsilon_X}(x-\mZ_t^{v\prime}\vtheta_0^v)\label{eq:(5.11)}\\
		&\leq K\notag
	\end{align}
	by Assumption~LM.\ref{M:5}. 
	From \eqref{eq:(5.11)} and Assumption~LM.\ref{M:5},
	\begin{align*}
		\big|f_t^{X}(x)-f_t^{X}(x^\prime)\big| &=\big|f_{\varepsilon_X}(x-\mZ_t^{v\prime}\vtheta_0^v)-f_{\varepsilon_X}(x^\prime-\mZ_t^{v\prime}\vtheta_0^v)\big|\\
		&\leq K\big|(x-\mZ_t^{v\prime}\vtheta_0^v) - (x^\prime-\mZ_t^{v\prime}\vtheta_0^v)\big|\\
		& = K|x-x^\prime|.
	\end{align*}

	\textbf{Assumption~\ref{ass:3}~\ref{it:3iii}:} The differentiability requirement is satisfied by \eqref{eq:dens trafo} and Assumption~LM.\ref{M:5}.

	\textbf{Assumption~\ref{ass:3}~\ref{it:3iv}:} 
	Relation \eqref{eq:dens trafo} allows us to write (using an obvious substitution in the integral)
	\begin{align*}
		\int_{-\infty}^{\infty}y\partial_1 f_t(x,y)\D y &= \int_{-\infty}^{\infty}y\partial_1 f_{\varepsilon_X,\varepsilon_Y}(x-\mZ_t^{v\prime}\vtheta_0^v, y-\mZ_t^{m\prime}\vtheta_0^m)\D y\\
		&=\int_{-\infty}^{\infty}(y+\mZ_t^{m\prime}\vtheta_0^m)\partial_1 f_{\varepsilon_X,\varepsilon_Y}(x-\mZ_t^{v\prime}\vtheta_0^v, y)\D y\\
		&=\int_{-\infty}^{\infty}y\partial_1 f_{\varepsilon_X,\varepsilon_Y}(x-\mZ_t^{v\prime}\vtheta_0^v, y)\D y\\
		&\hspace{3cm}+\int_{-\infty}^{\infty}\mZ_t^{m\prime}\vtheta_0^m\partial_1 f_{\varepsilon_X,\varepsilon_Y}(x-\mZ_t^{v\prime}\vtheta_0^v, y)\D y.
	\end{align*}
	Thus,
	\begin{align}
		\sup_{x\in\mathbb{R}}\Big|\int_{-\infty}^{\infty}y\partial_1 f_t(x,y)\D y\Big| &\leq\sup_{x\in\mathbb{R}}\Big|\int_{-\infty}^{\infty}y\partial_1 f_{\varepsilon_X,\varepsilon_Y}(x, y)\D y\Big|\notag\\
		&\hspace{3cm}+\big|\mZ_t^{m\prime}\vtheta_0^m\big|\cdot\sup_{x\in\mathbb{R}}\Big|\int_{-\infty}^{\infty}\partial_1 f_{\varepsilon_X,\varepsilon_Y}(x, y)\D y\Big|\notag\\
		&=:F(\mathcal{F}_t).\label{eq:(8.11)}
	\end{align}
	
	The verification of the bound for $\int_{-\infty}^{\infty}|y| f_t(x,y)\D y$ follows along similar lines.
	Write using \eqref{eq:dens trafo} that
	\begin{align*}
		\int_{-\infty}^{\infty}|y| f_t(x,y)\D y &= \int_{-\infty}^{\infty}|y| f_{\varepsilon_X,\varepsilon_Y}(x-\mZ_t^{v\prime}\vtheta_0^v, y-\mZ_t^{m\prime}\vtheta_0^m)\D y\\
		&=\int_{-\infty}^{\infty}\big|y+\mZ_t^{m\prime}\vtheta_0^m\big|f_{\varepsilon_X,\varepsilon_Y}(x-\mZ_t^{v\prime}\vtheta_0^v, y)\D y\\
		&=\int_{-\infty}^{\infty}|y|f_{\varepsilon_X,\varepsilon_Y}(x-\mZ_t^{v\prime}\vtheta_0^v, y)\D y\\
		&\hspace{3cm}+\int_{-\infty}^{\infty}\big|\mZ_t^{m\prime}\vtheta_0^m\big| f_{\varepsilon_X,\varepsilon_Y}(x-\mZ_t^{v\prime}\vtheta_0^v, y)\D y.
	\end{align*}
	Therefore,
	\begin{align}
		\sup_{x\in\mathbb{R}}\int_{-\infty}^{\infty}|y| f_t(x,y)\D y &\leq\sup_{x\in\mathbb{R}}\int_{-\infty}^{\infty}|y| f_{\varepsilon_X,\varepsilon_Y}(x, y)\D y\notag\\
		&\hspace{3cm}+\big|\mZ_t^{m\prime}\vtheta_0^m\big|\cdot\sup_{x\in\mathbb{R}}\int_{-\infty}^{\infty}f_{\varepsilon_X,\varepsilon_Y}(x, y)\D y\notag\\
		&=:F_1(\mathcal{F}_t).\label{eq:(C.3+)}
	\end{align}

	\textbf{Assumption~\ref{ass:4}~\ref{it:4i}:} The differentiability of $v_t(\vtheta^v)=\mZ_t^{v\prime}\vtheta^v$ and $m_t(\vtheta^m)=\mZ_t^{m\prime}\vtheta^m$ with gradients
	\begin{equation}\label{eq:(4.1)}
		\nabla v_t(\vtheta^v)=\mZ_t^{v}\qquad\text{and}\qquad \nabla m_t(\vtheta^m)=\mZ_t^{m}
	\end{equation}
	and Hessians
	\begin{equation}\label{eq:(D.3+)}
		\nabla^2 v_t(\vtheta^v)=\vzeros\qquad\text{and}\qquad \nabla^2 m_t(\vtheta^m)=\vzeros
	\end{equation}
	is obvious.

	\textbf{Assumption~\ref{ass:4}~\ref{it:4ii}:} It holds that
	\begin{align*}
		\sup_{\vtheta^v\in\mTheta^v}\big|v_t(\vtheta^v)\big| &= \sup_{\vtheta^v\in\mTheta^v}\big|\mZ_t^{v\prime}\vtheta^v\big|\\
		&\leq \big\Vert\mZ_t^{v}\big\Vert \sup_{\vtheta^v\in\mTheta^v}\big\Vert\vtheta^v\big\Vert\\
		&\leq C \big\Vert\mZ_t^{v}\big\Vert=:V(\mZ_t),
	\end{align*}
	since $\mTheta^v$ is compact (by Assumption~LM.\ref{M:2}) and, hence, bounded.
	
	That $\sup_{\vtheta^m\in\mTheta^m}\big|m_t(\vtheta^m)\big|\leq C \big\Vert\mZ_t^{m}\big\Vert=:M(\mZ_t)$ follows similarly.

	\textbf{Assumption~\ref{ass:4}~\ref{it:4iii}:} From \eqref{eq:(4.1)}, we obtain that
	\begin{align}
		\sup_{\vtheta^v\in\mTheta^v}\big\Vert\nabla v_t(\vtheta^v)\big\Vert&=\big\Vert\mZ_t^v\big\Vert=:V_1(\mZ_t),\label{eq:(4.2)}\\
		\sup_{\vtheta^m\in\mTheta^m}\big\Vert\nabla m_t(\vtheta^m)\big\Vert&=\big\Vert\mZ_t^m\big\Vert=:M_1(\mZ_t).\label{eq:(4.3)}
	\end{align}
	
	\textbf{Assumption~\ref{ass:4}~\ref{it:4iv}--\ref{it:4v}:} By \eqref{eq:(D.3+)}, the assumptions are trivially satisfied for
	\begin{align}
		V_2(\mZ_t) &= M_2(\mZ_t) = 0,\label{eq:(6.1)}\\
		V_3(\mZ_t) &= M_3(\mZ_t) = 0.\label{eq:(6.2)}
	\end{align}
	
	\textbf{Assumption~\ref{ass:5}:} From \eqref{eq:(4.2)}--\eqref{eq:(6.2)} combined with Assumption~LM.\ref{M:4},
	\begin{align*}
		\E|X_t|&\leq K,\\
		\E|Y_t|^{4r+\iota}&\leq K,\\
		\E\big[V(\mZ_t)\big] &= C\E\big[\Vert\mZ_t^v\Vert\big] \leq K,\\
		\E\big[M^{4r+\iota}(\mZ_t)\big] &= C\E\big[\Vert\mZ_t^m\Vert^{4r+\iota}\big] \leq K,\\
		\E\big[V_1^{4r}(\mZ_t)\big]&=\E\big[\Vert\mZ_t^v\Vert^{4r}\big]\leq K,\\
		\E\big[M_1^{4r+\iota}(\mZ_t)\big]&=\E\big[\Vert\mZ_t^m\Vert^{4r+\iota}\big]\leq K,\\
		\E\big[V_2^{2r}(\mZ_t)\big]&=\E\big[V_3(\mZ_t)\big]=0\leq K,\\
		\E\big[M_2^{4r}(\mZ_t)\big]&=\E\big[M_3^{4r/(4r-1)}(\mZ_t)\big]=0\leq K.
	\end{align*}
	Finally, $\E\big[F^{4r/(4r-3)}(\mathcal{F}_t)\big]\leq K$ and $\E\big[F_1^{4r/(4r-3)}(\mathcal{F}_t)\big]\leq K$ follow from \eqref{eq:(8.11)}--\eqref{eq:(C.3+)} and Assumption~LM.\ref{M:4}.

	\textbf{Assumption~\ref{ass:6}:} From \eqref{eq:(5.11)},
	\begin{align}
		f_t^X\big(v_t(\vtheta_0^v)\big) &= f_{\varepsilon_X}\big(v_t(\vtheta_0^v)-\mZ_t^{v\prime}\vtheta_0^v\big)\notag\\
		&=f_{\varepsilon_X}\big(\mZ_t^{v\prime}\vtheta_0^v-\mZ_t^{v\prime}\vtheta_0^v\big)\notag\\
		&=f_{\varepsilon_X}(0).\label{eq:(p.59)}
	\end{align}
	Using this and \eqref{eq:(4.1)}, we obtain that
	\begin{align*}
		\mLambda &=\E\Big[f_t^{X}\big(v_t(\vtheta_0^v)\big)\nabla v_t(\vtheta_0^v)\nabla^\prime v_t(\vtheta_0^v)\Big]\\
		&= f_{\varepsilon_X}(0)\E\big[\mZ_t^{v}\mZ_t^{v\prime}\big].
	\end{align*}
	Since $f_{\varepsilon_X}(0)>0$ (by Assumption~LM.\ref{M:5}) and $\E\big[\mZ_t^{v}\mZ_t^{v\prime}\big]$ is positive definite (by Assumption~LM.\ref{M:1}), it follows that $\mLambda$ is positive definite.
	
	Again using \eqref{eq:(4.1)}, we get that
	\begin{equation*}
		\mLambda_{(1)} = (1-\b)\E\big[\nabla m_t(\vtheta_0^m)\nabla^\prime m_t(\vtheta_0^m)\big]=(1-\b)\E\big[\mZ_t^{m}\mZ_t^{m\prime}\big],
	\end{equation*}
	which is once again positive definite by Assumption~LM.\ref{M:1}.

	\textbf{Assumption~\ref{ass:7}:} We now show that
	$\sup_{\vtheta^v\in\mTheta^v}\sum_{t=1}^{n}\1_{\{X_t=v_t(\vtheta^v)\}}\leq p$ a.s.~for all $n\in\mathbb{N}$, where $p$ is the dimension of $\mTheta^v$.
	To do so, we proceed by induction over $n$.
	
	The induction hypothesis is satisfied because the statement is trivial for $n\in\{1,\ldots,p\}$.
	
	Now, for the induction step, we assume that the statement is true for some $n-1\geq p$, such that $\sup_{\vtheta^v\in\mTheta^v}\sum_{t=1}^{n-1}\1_{\{X_t=v_t(\vtheta^v)\}}\leq p$ a.s..
	Then, we need to show that a.s.,
	\begin{equation}\label{eq:to show}
		\sup_{\vtheta^v\in\mTheta^v}\sum_{t=1}^{n}\1_{\{X_t=v_t(\vtheta^v)\}}\leq p.
	\end{equation}
	If $\sup_{\vtheta^v\in\mTheta^v}\sum_{t=1}^{n-1}\1_{\{X_t=v_t(\vtheta^v)\}}< p$ a.s., then
	\[
	\sup_{\vtheta^v\in\mTheta^v}\sum_{t=1}^{n}\1_{\{X_t=v_t(\vtheta^v)\}}\leq \sup_{\vtheta^v\in\mTheta^v}\sum_{t=1}^{n-1}\1_{\{X_t=v_t(\vtheta^v)\}} + 1\leq p,
	\]
	such that \eqref{eq:to show} is immediate.
	
	So suppose now that $\sup_{\vtheta^v\in\mTheta^v}\sum_{t=1}^{n-1}\1_{\{X_t=v_t(\vtheta^v)\}}= p$ on some set $A \in \mathcal{F}$ with positive probability  $\P(A) > 0$, while $\sup_{\vtheta^v\in\mTheta^v}\sum_{t=1}^{n-1}\1_{\{X_t=v_t(\vtheta^v)\}} < p$ outside of $A$.
	Then, on $A$, there exist some $\mathcal{F}_n$-measurable random times $t_1 < \dots < t_p \le n-1$ and random variables $\vtheta_{\ast}^{v}  = \vtheta_{\ast}^{v}(t_1, \dots,t_p)  \in \mTheta^v$ such that $X_{t_i} = \mZ_{t_i}^{v\prime} \vtheta_{\ast}^{v}$ for all $i=1,\ldots,p$.
	Here, we use the compactness of $\mTheta^v$ to guarantee that $\vtheta_{\ast}^{v} \in \mTheta^v$.
	Further notice that at most ${n-1}\choose{p}$ combinations exist to choose $p$ time points out of the set  $\{1,\dots, n-1\}$.
	(Intuitively, the $\vtheta_{\ast}^{v}(t_1, \dots,t_p)$ are the regression coefficients that generate $p$-dimensional linear maps that pass through exactly $p$ (random) data points, which have ordered indices $t_1 < \dots < t_p$.)

	We now define $\overline{\mZ}_{\mathbf{t}_{1:p}} := (\mZ_{t_1}, \dots, \mZ_{t_p})' \in \mathbb{R}^{p \times p}$, $\overline{X}_{\mathbf{t}_{1:p}} := (X_{t_1}, \dots, X_{t_p})' \in \mathbb{R}^{p \times 1}$ and $\overline{\varepsilon}_{\mathbf{t}_{1:p}}  := (\varepsilon_{X,t_1}, \dots, \varepsilon_{X,t_p})' \in \mathbb{R}^{p \times 1}$ such that (given $t_1, \dots, t_p$), the random vector $\vtheta_{\ast}^{v}  = \vtheta_{\ast}^{v}(t_1, \dots,t_p)$ satisfies the linear system (on the set $A$ with positive probability $\P(A) >0$)
	\begin{align}
		\label{eqn:LinSystemProof}
		\overline{X}_{\mathbf{t}_{1:p}}  = \overline{\mZ}_{\mathbf{t}_{1:p}} \vtheta_{\ast}^{v}.
	\end{align}
	
	We now show by contradiction that equality in \eqref{eqn:LinSystemProof} implies that the matrix $\overline{\mZ}_{\mathbf{t}_{1:p}}$ has full rank $p$.
	For this, assume that $\rank \big( \overline{\mZ}_{\mathbf{t}_{1:p}} \big) < p$ with positive probability (formally: on some $\tilde A \subseteq A$ with $\tilde A \in \mathcal{F}$ and $\P(\tilde A)>0$).
	The Rouch\'e–Capelli theorem and the existence of a solution $\vtheta_{\ast}^{v}$ of the linear system in  \eqref{eqn:LinSystemProof} imply that $\rank \big( \overline{\mZ}_{\mathbf{t}_{1:p}} \big) = \rank \big( \overline{\mZ}_{\mathbf{t}_{1:p}} \mid  \overline{X}_{\mathbf{t}_{1:p}} \big)$, where the latter is the rank of the ``augmented coefficient matrix''.
	Thus, there exists a linear map such that $\overline{X}_{\mathbf{t}_{1:p}}  = \overline{\mZ}_{\mathbf{t}_{1:p}} \vbeta$ for some $\vbeta \in \mathbb{R}^p$, as otherwise, $\overline{X}_{\mathbf{t}_{1:p}}$ would be linearly independent of the vectors in $\overline{\mZ}_{\mathbf{t}_{1:p}}$, implying $\rank \big( \overline{\mZ}_{\mathbf{t}_{1:p}} \big) < \rank \big( \overline{\mZ}_{\mathbf{t}_{1:p}} \mid  \overline{X}_{\mathbf{t}_{1:p}} \big)(\leq p)$.
	As $\overline{X}_{\mathbf{t}_{1:p}} = \overline{\mZ}_{\mathbf{t}_{1:p}} \vtheta_0^v + \overline{\varepsilon}_{\mathbf{t}_{1:p}}$ however, and the unconditional distribution of $\overline{\varepsilon}_{\mathbf{t}_{1:p}}$ is absolutely continuous, such a linear map can only exist with probability zero such that on $A$ (except on a null set within $A$), it holds that $\rank \big( \overline{\mZ}_{\mathbf{t}_{1:p}} \big) \not= \rank \big( \overline{\mZ}_{\mathbf{t}_{1:p}} \mid  \overline{X}_{\mathbf{t}_{1:p}} \big)$.
	This, however, contradicts the rank equality, and we can conclude that the matrix $\overline{\mZ}_{\mathbf{t}_{1:p}}$ must have full rank on $A$ (except on a null set within $A$).
	Hence, on $A$ we can write
	\begin{align*}
		\vtheta_{\ast}^{v} = \vtheta_{0}^{v} + \big(\overline{\mZ}_{\mathbf{t}_{1:p}}\big)^{-1} \overline{\varepsilon}_{\mathbf{t}_{1:p}}.
	\end{align*}
	As  $\big(\overline{\mZ}_{\mathbf{t}_{1:p}}\big)^{-1}$ and $\overline{\varepsilon}_{\mathbf{t}_{1:p}}$ are $\mathcal{F}_n$-measurable, so is $\vtheta_{\ast}^{v} = \vtheta_{\ast}^{v}(t_1, \dots, t_p)$.

	For each of these finitely many $\vtheta_{\ast}^{v} = \vtheta_{\ast}^{v}(t_1, \dots,t_p)$, it now holds that
	\begin{align*}
		\P\big\{X_n =  \mZ_{n}^{v\prime} \vtheta_{\ast}^{v},\, A \big\}
		&= \P\big\{ \mZ_{n}^{v\prime}\vtheta_{0}^{v} + \varepsilon_{X,n} = \mZ_{n}^{v\prime}\vtheta_{\ast}^{v},\, A  \big\} \notag \\
		&= \P\big\{ \varepsilon_{X,n} = \mZ_{n}^{v\prime}(\vtheta_{\ast}^{v}-\vtheta_{0}^{v})  ,\, A \big\}\notag\\
		&=  \E\big[\1_{\{\varepsilon_{X,n} = \mZ_{n}^{v\prime}(\vtheta_{\ast}^{v}-\vtheta_{0}^{v}),\, A \}} \big]\notag\\
		&=  \E\Big[ \E_{n}\big\{\1_{\{\varepsilon_{X,n} = \mZ_{n}^{v\prime}(\vtheta_{\ast}^{v}-\vtheta_{0}^{v}),\, A \}}\big\} \Big]\notag\\
		&= \E\Big[ \P_{n}\big\{ \varepsilon_{X,n} = \mZ_{n}^{v\prime}(\vtheta_{\ast}^{v}-\vtheta_{0}^{v}),\, A  \big\} \Big] \\
		&= 0 
	\end{align*}
	by Assumption~LM.\ref{M:6}, where we have used that $\mZ_n^{v}$ and $\vtheta_{\ast}^{v}$ are $\mathcal{F}_{n}$-measurable.
	As there are only finitely many $\vtheta_{\ast}^{v}(t_1, \dots,t_p)$, this also implies that
	\begin{align*}
		\P \bigg\{ \bigcup_{t_1 < \cdots < t_p} \big\{X_n =  \mZ_{n}^{v\prime} \vtheta_{\ast}^{v}(t_1, \dots,t_p)\big\},\, A \bigg\} &= \P \bigg\{ \bigcup_{t_1 < \cdots < t_p} \big\{X_n =  \mZ_{n}^{v\prime} \vtheta_{\ast}^{v}(t_1, \dots,t_p),\, A\big\} \bigg\}\\
		&\leq \sum_{t_1 < \cdots < t_p}\P\big\{ X_n =  \mZ_{n}^{v\prime} \vtheta_{\ast}^{v}(t_1, \dots,t_p),\, A \big\} = 0.
	\end{align*}
	Therefore,
	\begin{align*}
		\P&\bigg\{\sup_{\vtheta^v\in\mTheta^v} \sum_{t=1}^{n}\1_{\{X_t=v_t(\vtheta^v)\}} > p\bigg\}\\
		&=\P\bigg\{\sup_{\vtheta^v\in\mTheta^v} \sum_{t=1}^{n}\1_{\{X_t=v_t(\vtheta^v)\}} > p,\, A\bigg\} + \P\bigg\{\sup_{\vtheta^v\in\mTheta^v} \sum_{t=1}^{n}\1_{\{X_t=v_t(\vtheta^v)\}} > p,\, A^{\complement}\bigg\}\\
		&=\P\bigg\{ \bigcup_{t_1 < \cdots < t_p} \big\{X_n =  \mZ_{n}^{v\prime} \vtheta_{\ast}^{v}(t_1, \dots,t_p)\big\},\, A \bigg\} + 0\\
		&=0
	\end{align*}
	or, in other words,
	\begin{equation*}
		\sup_{\vtheta^v\in\mTheta^v} \sum_{t=1}^{n}\1_{\{X_t=v_t(\vtheta^v)\}} \leq p\quad\text{a.s.},
	\end{equation*}
	such that the induction step is complete and the desired result, $\sup_{\vtheta^v\in\mTheta^v}\sum_{t=1}^{n}\1_{\{X_t=v_t(\vtheta^v)\}}\leq p$, follows.

	\textbf{Assumption~\ref{ass:9}~\ref{it:9i}:} From \eqref{eq:(4.1)} and \eqref{eq:(p.59)},
	\[
	\frac{1}{n}\sum_{t=1}^{n}\nabla v_t(\vtheta_0^v)\nabla^\prime v_t(\vtheta_0^v)f_t^{X}\big(v_t(\vtheta_0^v)\big) = f_{\varepsilon_X}(0)\bigg(\frac{1}{n}\sum_{t=1}^{n}\mZ_t^v\mZ_t^{v\prime}\bigg).
	\]
	By Assumptions~LM.\ref{M:3}--\ref{M:4}, a standard mixing LLN applies for $\frac{1}{n}\sum_{t=1}^{n}\mZ_t^v\mZ_t^{v\prime}$ \citep[e.g.,][Corollary~3.48]{Whi01}, such that Assumption~\ref{ass:9}~\ref{it:9i} is satisfied.

	\textbf{Assumption~\ref{ass:9}~\ref{it:9ii}:} Use \eqref{eq:dens trafo} and the definition of $v_t(\vtheta_0^v)$ to write
	\begin{align*}
		f_t\big(v_t(\vtheta_0^v), y\big) &= f_{\varepsilon_X,\varepsilon_Y}\big(v_t(\vtheta_0^v) - \mZ_t^{v\prime}\vtheta_0^v, y-\mZ_t^{m\prime}\vtheta_0^m\big)\\
		&= f_{\varepsilon_X,\varepsilon_Y}(0, y-\mZ_t^{m\prime}\vtheta_0^m).
	\end{align*}
	Therefore, using an obvious substitution,
	\begin{align*}
		\int_{-\infty}^{\infty} y f_t\big(v_t(\vtheta_0^v), y\big)\D y &= \int_{-\infty}^{\infty} y f_{\varepsilon_X,\varepsilon_Y}(0, y-\mZ_t^{m\prime}\vtheta_0^m)\D y\\
		&= \int_{-\infty}^{\infty} \big(y+\mZ_t^{m\prime}\vtheta_0^m\big) f_{\varepsilon_X,\varepsilon_Y}(0, y)\D y\\
		&= \int_{-\infty}^{\infty} y f_{\varepsilon_X,\varepsilon_Y}(0, y)\D y + \mZ_t^{m\prime}\vtheta_0^m\int_{-\infty}^{\infty} f_{\varepsilon_X,\varepsilon_Y}(0, y)\D y\\
		&= \int_{-\infty}^{\infty} y f_{\varepsilon_X,\varepsilon_Y}(0, y)\D y + \mZ_t^{m\prime}\vtheta_0^m f_{\varepsilon_X}(0).
	\end{align*}
	With this, \eqref{eq:(4.1)} and \eqref{eq:(p.59)},
	\begin{align*}
		\frac{1}{n}&\sum_{t=1}^{n}\nabla m_t(\vtheta_0^m)\nabla^\prime v_t(\vtheta_0^v)\bigg\{\int_{-\infty}^{\infty} y f_t\big(v_t(\vtheta_0^v),y\big)\D y - m_t(\vtheta_0^m)f_t^{X}\big(v_t(\vtheta_0^v)\big)\bigg\}\\
		&=\frac{1}{n}\sum_{t=1}^{n}\mZ_t^{m}\mZ_t^{v\prime}\bigg\{\int_{-\infty}^{\infty} y f_{\varepsilon_X,\varepsilon_Y}(0, y)\D y + \mZ_t^{m\prime}\vtheta_0^m f_{\varepsilon_X}(0) - \mZ_t^{m\prime}\vtheta_0^m f_{\varepsilon_X}(0)\bigg\}\\
		&= \bigg\{\int_{-\infty}^{\infty} y f_{\varepsilon_X,\varepsilon_Y}(0, y)\D y\bigg\}\frac{1}{n}\sum_{t=1}^{n}\mZ_t^{m}\mZ_t^{v\prime}.
	\end{align*}
	By Assumption~LM.\ref{M:3} and standard mixing inequalities \citep[e.g.,][Theorem~3.49]{Whi01}, the sequence $\big\{\mZ_t^{m}\mZ_t^{v\prime}\big\}_{t\in\mathbb{N}}$ is $\beta$-mixing of size $-r/(r-1)$ and, hence, in particular $\alpha$-mixing of size $-r/(r-1)$.
	Since also
	\[
	\E\big[\Vert\mZ_t^{m}\mZ_t^{v\prime}\Vert^{2r}\big]\leq \sqrt{\E\Vert\mZ_t^{m}\Vert^{4r}\E\Vert\mZ_t^{v}\Vert^{4r}}<\infty
	\]
	by Assumption~LM.\ref{M:4} and the Cauchy--Schwarz inequality, the $\alpha$-mixing LLN of \citet[Corollary~3.48]{Whi01} applies for $\frac{1}{n}\sum_{t=1}^{n}\mZ_t^{m}\mZ_t^{v\prime}$.
	This easily establishes Assumption~\ref{ass:9}~\ref{it:9ii}.

	\textbf{Assumption~\ref{ass:10}:} The verification is similar to that of Assumption~\ref{ass:3}~\ref{it:3iv}.
	It follows from \eqref{eq:dens trafo} and the Cauchy--Schwarz inequality that
	\begin{align}
		\int_{-\infty}^{\infty} |y|^2f_t(x,y)\D y &= \int_{-\infty}^{\infty} |y|^2f_{\varepsilon_X,\varepsilon_Y}(x-\mZ_t^{v\prime}\vtheta_0^v, y-\mZ_t^{m\prime}\vtheta_0^m)\D y \notag\\
		&= \int_{-\infty}^{\infty} (y+\mZ_t^{m\prime}\vtheta_0^m)^2f_{\varepsilon_X,\varepsilon_Y}(x-\mZ_t^{v\prime}\vtheta_0^v, y)\D y \notag\\
		&\leq 2 \int_{-\infty}^{\infty} y^2 f_{\varepsilon_X,\varepsilon_Y}(x-\mZ_t^{v\prime}\vtheta_0^v, y)\D y \notag\\
		&\hspace{2cm} + 2 \big|\mZ_t^{m\prime}\vtheta_0^m\big|^2 \int_{-\infty}^{\infty} f_{\varepsilon_X,\varepsilon_Y}(x-\mZ_t^{v\prime}\vtheta_0^v, y)\D y \notag\\
		&\leq 2\sup_{x\in\mathbb{R}} \int_{-\infty}^{\infty} y^2 f_{\varepsilon_X,\varepsilon_Y}(x, y)\D y + 2 \big|\mZ_t^{m\prime}\vtheta_0^m\big|^2 \sup_{x\in\mathbb{R}} \int_{-\infty}^{\infty} f_{\varepsilon_X,\varepsilon_Y}(x, y)\D y\notag\\
		&=:F_2(\mathcal{F}_t).\label{eq:(p.3.1)}
	\end{align}

	\textbf{Assumption~\ref{ass:11}:} From \eqref{eq:(4.2)} and \eqref{eq:(4.3)},
	\[
	\E\big[V_1^2(\mZ_{t})M^2(\mZ_{t})M_1^2(\mZ_{t})\big]=C\E\big[\Vert\mZ_t^{v}\Vert^2 \Vert\mZ_t^{m}\Vert^4\big]\leq C
	\]
	by Assumption~LM.\ref{M:4}.
	Similarly, from \eqref{eq:(4.2)}, \eqref{eq:(4.3)} and \eqref{eq:(p.3.1)},
	\[
	\E\big[M_1^2(\mZ_t)V_1^2(\mZ_t)F_2(\mathcal{F}_t)\big]\leq \E\big[\Vert\mZ_t^{v}\Vert^2 \Vert\mZ_t^{m}\Vert^2(C + C \Vert\mZ_t^{m}\Vert^2)\big]\leq C.
	\]

	We have now verified Assumptions~\ref{ass:1}--\ref{ass:11}, therefore ending the proof.
\end{proof}

We now show that Assumptions~LM.\ref{M:3}--LM.\ref{M:4} hold for $\{X_t\}$ and $\{Y_t\}$ generated by an empirically relevant AR--GARCH model.
In Example~\ref{ex:AR-GARCH}, we use the generic $Z_t$ to denote either $X_t$ or $Y_t$.

\setcounter{example}{1}
\begin{example}\label{ex:AR-GARCH}
	For $t\in\mathbb{Z}$, consider the AR($\overline{p}$)--GARCH($1,1$) model
	\begin{align*}
		Z_t &= \mu_t + \sigma_t\varepsilon_t,\\
		\mu_t &= \phi_0 + \phi_1 Z_{t-1} + \ldots + \phi_{\overline{p}} Z_{t-\overline{p}},\\
		\sigma_t^2 &= \omega^{\circ} + \alpha^{\circ} u_{t-1}^2 + \beta^{\circ} \sigma_{t-1}^2,
	\end{align*}
	where $u_t=Z_t - \mu_t$ is the (conditionally) demeaned observation and $\varepsilon_t$ is a sequence of i.i.d.~random variables with $\E[\varepsilon_t]=0$ and $\Var(\varepsilon_t)=1$, such that $\varepsilon_t$ is independent of $\{Z_s\}_{s<t}$.
	By construction, $\mu_t$ and $\sigma_t^2$ are the conditional mean and conditional variance of $Z_t$, i.e., $\E[Z_{t}\mid Z_{t-1},Z_{t-2}\ldots]=\mu_t$ and $\Var(Z_{t}\mid Z_{t-1},Z_{t-2}\ldots)=\sigma_t^2$.
	Now, impose the following conditions:
	\begin{enumerate}[(i)]
		
		\item\label{it:Lebesgue} The $\varepsilon_t$ have a Lebesgue density, which is positive and lower semicontinuous on $\mathbb{R}$.
		Furthermore, $\E\big[|\varepsilon_t|^{2s}\big]<\infty$ for some $s>0$.
		
		\item\label{it:ft} The spectral radius $\rho(\mA)$ of
		\[
		\mA=\begin{pmatrix}
			\phi_1 & \phi_{2} & \ldots & \phi_{\overline{p}-1} & \phi_{\overline{p}}\\
			1 &0 &\ldots & 0& 0\\
			0 &1 &\ldots & 0& 0\\
			\vdots & \vdots & \ddots & \vdots & \vdots \\
			0 &0& \ldots & 1 & 0
		\end{pmatrix}
		\]
		satisfies that $\rho(\mA)<1$.
		
		\item\label{it:exp} $\E\big[(\beta^{\circ} + \alpha^{\circ}\varepsilon_t^2)^s\big] < 1$ for $s>0$ from item \eqref{it:Lebesgue}.
		
	\end{enumerate}
	Item~\eqref{it:Lebesgue} imposes assumptions on the innovations $\{\varepsilon_t\}$, item~\eqref{it:ft} on the conditional mean function $\mu_t$, and item~\eqref{it:exp} on the conditional variance process $\sigma_t^2$.
	Proposition~1 of \citet{MS08} shows that under assumptions \eqref{it:Lebesgue}--\eqref{it:exp} the conditions of their Theorem~1 are satisfied for $Z_t$.
	By the discussion following that theorem, $Z_t$ is strictly stationary and geometrically $\beta$-mixing, such that Assumption~LM.\ref{M:3} holds for any mixing size.
	Moreover, $\E\big[|Z_t|^{2s}\big]<\infty$ for $s>0$ from item \eqref{it:Lebesgue}.
	Therefore, by a suitable choice of $s>0$, the existence of moments of any order can be ensured to satisfy Assumption~LM.\ref{M:4}.
\end{example}

Clearly, Proposition~1 of \citet{MS08} used in the above example also holds for nonlinear AR--GARCH models.
However, to simplify the exposition, we have presented the conditions for the linear case in Example~\ref{ex:AR-GARCH}.
There is a wealth of further results that can be used to verify mixing and moment conditions in time series models, such as \citet{CC02}, \citet{Lie05}, \citet{MS08a} and \citet{FS11}.
The work of \citet{Lie05} may be particularly helpful to establish mixing conditions for multivariate time series models appropriate for $\mZ_t^{v}$ and $\mZ_t^{m}$, such as certain vector-ARCH models; see in particular Section~4 in \citet{Lie05}.

\section{Relation to Regressions Containing the Distress Variable as a Covariate}
\label{sec:RelationOLSXCovariate}

One might wonder how the linear special case of our MES regressions from Example~\ref{ex:1} relates to OLS regressions of $Y_t$ containing $X_t$ as an additional explanatory variable,
\begin{align}
	\label{eq:MeanReg}
	Y_t  = X_t \eta + \mZ_t^{m\prime}\vtheta_0^{m} + u_t,
\end{align}
where $\eta \in \mathbb{R}$, $\E [ u_t \mid \mathcal{G}_t ]=0$, and $\mathcal{G}_t$ is an information set containing $X_t$,  $\mZ_t^m$ and possibly lagged values of  $X_t$, $\mZ_t^m$ and $Y_t$.
Instead of employing our MES regression, one could use \eqref{eq:MeanReg} to predict the conditional mean of $Y_t$ given that $X_t$ is ``extreme''. This would crucially differ from our MES regression in Example~\ref{ex:1} as follows:
First, \eqref{eq:MeanReg} treats $X_t$ as known at the time of prediction, whereas our MES regression treats $X_t \not\in \mathcal{F}_t$ as an outcome variable.
This is a critical difference when conditioning on future (extreme) outcomes of $X_t$, such as unfavorable economic conditions that are not observable at the time of prediction.
Second, \eqref{eq:MeanReg} only allows to condition on a fixed value $X_t = x \in \mathbb{R}$, as opposed to $X_t$ exceeding its conditional quantile.
Third, \eqref{eq:MeanReg} imposes stricter assumptions: Besides the imposed linearity of the conditional expectation of $Y_t$ given $X_t$, the linearity of $Y_t$ in $\mZ_t^m$ must hold for all values of $X_t$, whereas the MES regression of Example~\ref{ex:1} only imposes linearity of $Y_t$ in $\mZ_t^m$ in the tail of $X_t$.

\section{Managing and Allocating Portfolio Risks Using MES Regressions}
\label{sec:add appl}

\renewcommand{\theequation}{D.\arabic{equation}}	
\setcounter{equation}{0}

Here, we illustrate the usefulness of our regressions under adverse conditions in managing portfolio risks.
Consider losses (i.e., negative gains) on a portfolio $X_t^{\boldsymbol{w}} = \sum_{d=1}^D w_{t,d} Y_{t,d}$ consisting of $D \in \mathbb{N}$ individual assets with possibly time-varying and non-negative portfolio weights $\boldsymbol{w}_t = (w_{t,1}, \dots, w_{t,D})^\prime \ge 0$ summing to one (i.e., $\Vert\boldsymbol{w}_t\Vert_1 = \sum_{d=1}^{D} w_{t,d} = 1$ for all $t=1,\dots, n$).
Of course, the weights $\boldsymbol{w}_t$ have to be chosen in advance at time $t-1$. 
For instance, we use equal weights (i.e., $w_{t,d}=1/D$), weights for the global minimum variance portfolio (GMVP), and weights such that the forecasted risk contribution for each portfolio component is equalized.

Under the Basel framework of the \citet{BCBS19}, market risks have to be measured by the conditional ES at level $\beta = 0.975$, i.e., $\ES_{t,\beta} = \ES_{\beta}(X_t^{\boldsymbol{w}} \mid \mathcal{F}_t) = \mathbb{E}_t [X_t^{\boldsymbol{w}} \mid X_t^{\boldsymbol{w}} \ge \VaR_{t,\beta}]$.
We analyze how the overall risk $\ES_{\beta} (X_t^{\boldsymbol{w}} \mid \mathcal{F}_t) = \sum_{d=1}^{D} \RC_{t,d}$ disaggregates into additive risk contributions $\RC_{t,d}$, which are attributable to the different components $Y_{t,d}$ \citep[Section 8.5]{MFE15}.
These risk contributions quantify the risk exposure of the portfolio with respect to its individual components.
We show in Section \ref{sec:ERC_Portfolio} how the risk contributions can be used to construct a diversified portfolio where all components contribute equally to its overall riskiness.

The risk contribution of a single investment within a portfolio (i.e., $\RC_{t,d}$)  is not simply the stand-alone risk of that investment in isolation (i.e., $\ES_{\beta}(Y_{t,d} \mid \mathcal{F}_t)$).
This would neglect diversification benefits and give an inaccurate measure of the performance of an investment within the larger portfolio. 
Instead, the risk contribution for an investment should reflect the contribution of that investment to the overall riskiness of the portfolio. 
The \textit{Euler allocation rule} \citep[see][Eq.~(8.64)]{MFE15} states that for the risk measure ES, the risk contributions are 
\begin{equation}
	\label{eq:RiskContribution}
	\RC_{t,d} = \E_t [ w_{t,d} Y_{t,d}\mid X_t^{\boldsymbol{w}} \ge \VaR_{t,\b} ],
\end{equation}
which equals the $\beta$-MES of the relative value of investment $d$ in the portfolio, $w_{t,d} Y_{t,d}$.\footnote{\citet[Section 8.5]{MFE15} illustrate the advantages of the Euler allocation rule in \eqref{eq:RiskContribution}. It is essentially the only allocation rule that guarantees diversification benefits and it is compatible with the RORAC (return on risk-adjusted capital) approach:
	Define the overall (conditional) $\RORAC(X_t^{\boldsymbol{w}}) := \frac{\E_t[-X_t^{\boldsymbol{w}}]}{\ES_{\beta}(X_t^{\boldsymbol{w}} \mid \mathcal{F}_t)}$ and the component-related (conditional)  $\RORAC(Y_{t,d}\mid X_t^{\boldsymbol{w}}):= \frac{\E_t[-Y_{t,d}]}{\RC_{t,d}}$. Then, the Euler allocation rule guarantees diversification benefits by ensuring that if component $d$ performs better than the portfolio in the RORAC metric, then the RORAC of the portfolio is increased (i.e., improved) if one slightly increases the weight of component $d$. Under weak conditions, this implication \textit{only} holds for $\RC_{t,d}$ chosen according to the Euler allocation rule, which---in this sense---is compatible with the RORAC approach.} 
Notice that in contrast  to \citet{MFE15} and much of the related literature, we consider \emph{conditional} (on $\mathcal{F}_{t}$) risk contributions that we estimate based on covariates using our MES regression.

We now estimate the conditional risk contributions of the equally weighted portfolio in Section \ref{sec:EW_Portfolio}, and construct a portfolio with equal risk contributions in Section \ref{sec:ERC_Portfolio}.


\subsection{The Equally Weighted Portfolio}
\label{sec:EW_Portfolio}

We consider an equally weighted (EW) portfolio $X_t^{\boldsymbol{w}}$ (with $w_{t,d} = 1/5$) consisting of the five large cap assets Boeing (BA), JPMorgan Chase (JPM), Eli Lilly and Co.~(LLY), Microsoft (MSFT), and  Exxon Mobil (XOM).
Each company represents a different economic sector.
Their weekly log-losses between January 2nd, 1990 and May 29th, 2023 are denoted by $Y_{t,d}$ ($d=1,\dots,5$), resulting in a total of  $T = 1744$ trading weeks.
Despite its simplicity, the EW portfolio is known to perform well empirically, mainly due to its stability \citep{deMiguel2009}.

We model the risk contributions in \eqref{eq:RiskContribution} conditionally on an intercept and the lagged weekly average of the VIX index, $\mZ_t^v = \mZ_t^m = \mZ_t = (1, \VIX_{t-1})$. Specifically, we use the linear MES regression
\begin{align}
	\VaR_{t, \beta} = \VaR_{\beta}(X_t^{\boldsymbol{w}} \mid \mathcal{F}_t ) &= \theta_1^v + \theta_2^v \VIX_{t-1},\notag \\
	\RC_{t,d} = \E_t \big[ w_{t,d} Y_{t,d}\mid X_t^{\boldsymbol{w}} \ge \VaR_{t,\b} \big] &= \theta_1^m + \theta_2^m \VIX_{t-1}.\label{eq:RegModelVIX}
\end{align}
We forecast $\widehat{\RC}_{t,d}$ for all weeks $t \in \mathfrak{T} := \{S,\dots,T\}$ in the out-of-sample window
by using the model in \eqref{eq:RegModelVIX}, 
which we re-estimate every week with an expanding window approach, i.e., using all data with time index $s < t$.
The first estimation window consists of data up to December 31, 2006, hence implying $S=888$.

\begin{figure}
\centering
\includegraphics[width=\linewidth]{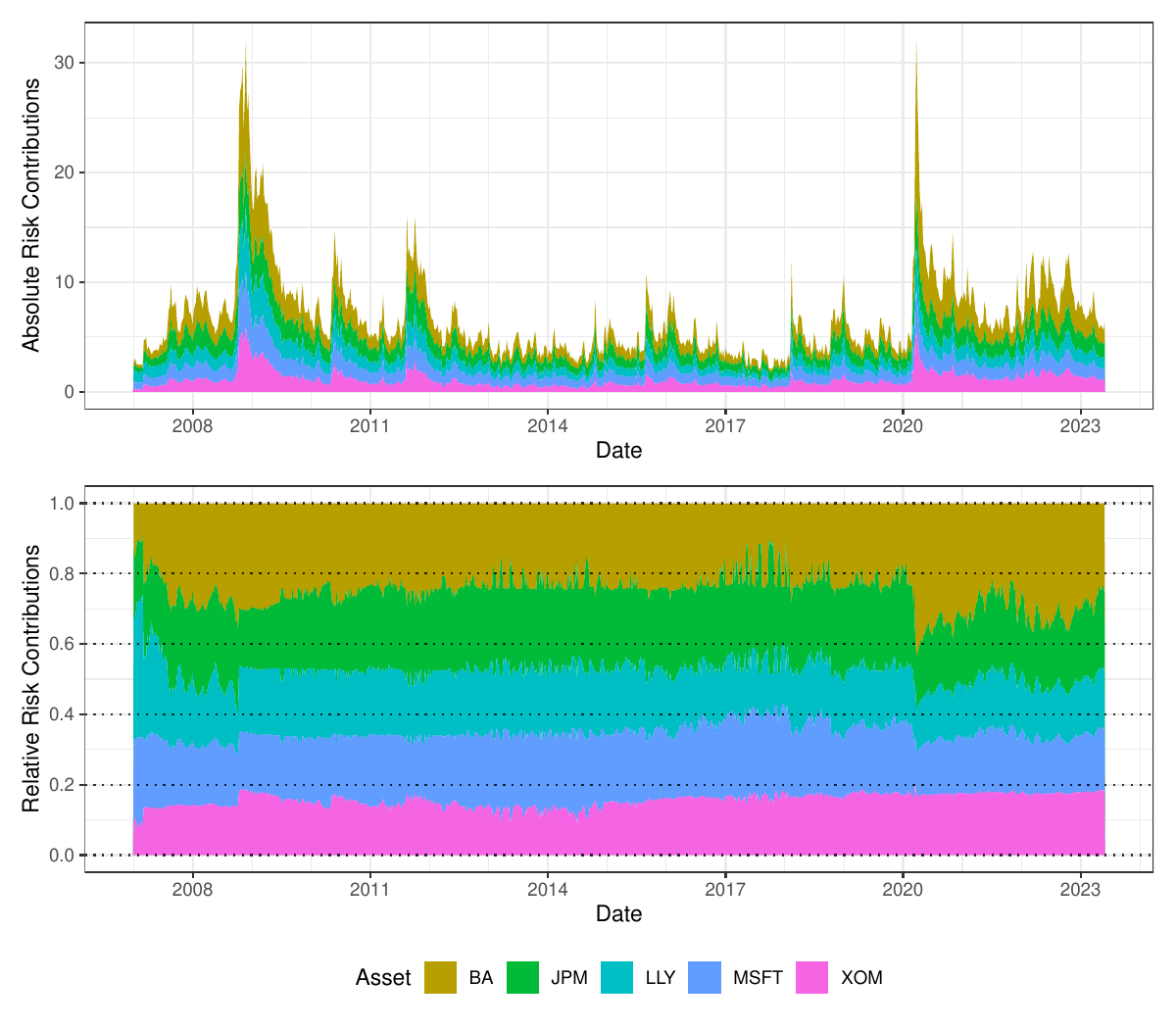}
\caption{Absolute (upper panel) and relative (lower panel) weekly risk contributions (MES forecasts) for the equally weighted portfolio during the out-of-sample period $\mathfrak{T}$.}
\label{fig:MES_EWPortfolio}
\end{figure}

We display the resulting  absolute and relative values of the forecasted MES contributions in Figure \ref{fig:MES_EWPortfolio}.
Besides the absolute time-variation of the overall portfolio risk (measured in terms of ES) that can be seen in the upper panel of the figure, the relative risk contribution of the individual assets changes substantially over the sample period.
While the EW portfolio is balanced in terms of the invested money (the weights), it is unbalanced with respect to the implied risk contributions.
Not surprisingly, the forecasted risk contribution of JPMorgan Chase clearly increases during the financial crisis of 2007--08, whereas with the start of the COVID pandemic, the risk contribution of the aircraft manufacturer Boeing almost doubles, while the risk of Microsoft as a technology corporation decreases.

\subsection{The Equal Risk Contribution Portfolio}
\label{sec:ERC_Portfolio}

The unequal and time-varying risk contributions to the overall portfolio risk observed in Figure \ref{fig:MES_EWPortfolio} motivate the construction of a portfolio with balanced risk contributions.
\citet{Maillard2010} introduce the so-called risk parity or equal risk contribution (ERC) portfolio.
In the ERC portfolio, the weights are chosen such that each investment contributes equally to the overall portfolio risk.
For example, the relative risk contributions of the five assets in Figure~\ref{fig:MES_EWPortfolio} would be 0.2 in the ERC portfolio, instead of being time-varying.
Among many others, we refer to \citet{AFD12}, \citet{Bruder2012}, \citet{Roncalli2013Book} and \citet{Bea18} for the ERC and risk parity portfolio.
While the previously mentioned authors use the variance as the target risk measure, we follow the Basel framework of the \citet{BCBS19} and focus on the ES risk measure.
For ES, very little work has been carried out in constructing ERC portfolios; see \citet{CesaroneColucci2017} and \citet{Mausser2018} for some exceptions. 
We crucially deviate from these papers by considering MES \emph{forecasts} issued from our covariate-based MES regressions.

In formulae, for all weeks in the out-of-sample window, $t \in \mathfrak{T}$, the weight vector $\boldsymbol{w}_t = (w_{t,1}, \dots, w_{t,D})^\prime \ge 0$ with $\Vert \boldsymbol{w}_t \Vert_1 = 1$  of the ERC portfolio $X_t^{\boldsymbol{w}} = \sum_{d=1}^D w_{t,d} Y_{t,d}$ is chosen such that forecasts for the risk contributions $\RC_{t,d}^w = \E_t [ w_{t,d} Y_{t,d} \mid X_t^{\boldsymbol{w}} \ge \VaR_{t,\b} ]$ are equal for all $d=1,\dots,D$.
As the portfolio weights must be allocated in advance, we forecast $\widehat{\RC}_{t,d}^w$ for all  $t \in \mathfrak{T}$ with the expanding window described above by estimating the MES regression in \eqref{eq:RegModelVIX}.
We iterate over the possible weights $\boldsymbol{w}_t$ until the forecasted risk contributions $\widehat{\RC}_{t,d}^w$ are approximately equal by following Algorithm \ref{alg:ERCportfolio}.\footnote{We update the weights by an average of the reciprocal of the forecasted MES and the previous weights in lines 6 and 7 to avoid an ``overshooting'' that was found to occur without the averaging.}

\begin{algorithm}
\caption{ERC Portfolio Weights}
\label{alg:ERCportfolio}
\begin{algorithmic}[1]
	\For{$t \in \mathfrak{T}$}
	\State $\boldsymbol{w}_t \gets (1/D, \dots, 1/D)$
	\State $\widehat{\RC}_{t,d}^w \gets (1,0,\dots,0)$
	\While{$\max_{d,e \in \{1,\dots,D\} } \big| \widehat{\RC}_{t,d}^w -  \widehat{\RC}_{t,e}^w \big| > 0.01$}
	\State Estimate the MES model in \eqref{eq:RegModelVIX} based on all data $s < t$ and forecast $\widehat{\RC}_{t,d}^w$
	\State $\widetilde{w}_{t,d} \gets w_{t,d} / (2 \widehat{\RC}_{t,d}^w) + w_{t,d}/2$ for all $d$
	\State $w_{t,d} \gets \widetilde{w}_{t,d} / \Vert \widetilde{\boldsymbol{w}}_t \Vert_1$ for all $d$
	\EndWhile
	\EndFor
	\State \textbf{return} $(w_{t,d})_{t \in \mathfrak{T}}$
\end{algorithmic}
\end{algorithm}

Besides the EW portfolio, we also use the GMVP for comparison, which is an attractive portfolio strategy that is still actively investigated \citep{SSX23+, Reh2023}. The GMVP minimizes the overall (forecasted) risk---measured in terms of the variance---while ignoring expected returns entirely.
As expected returns are inherently difficult to estimate \citep{Jagannathan2003, Welch2008}, the latter can be interpreted as a robustness property of the GMVP, which it shares with the EW portfolio.
However, the overall risk minimization of the GMVP often leads to a drastic concentration in portfolio weight (and also risk contribution) on certain assets \citep{Maillard2010}. 
In our empirical analysis, we compute the GMVP weights based on the variance-covariance matrix forecasts from a state-of-the-art rolling window DCC--GARCH model of \cite{Eng02} with (estimated) Student's $t$-distributed innovations.\footnote{We normalize the GMVP weights to be non-negative by setting negative weights to zero and by re-balancing the remaining weights such that they add to one. The effect of this normalization on our results is almost negligible.}

\begin{figure}[tb]
\centering
\includegraphics[width=\linewidth]{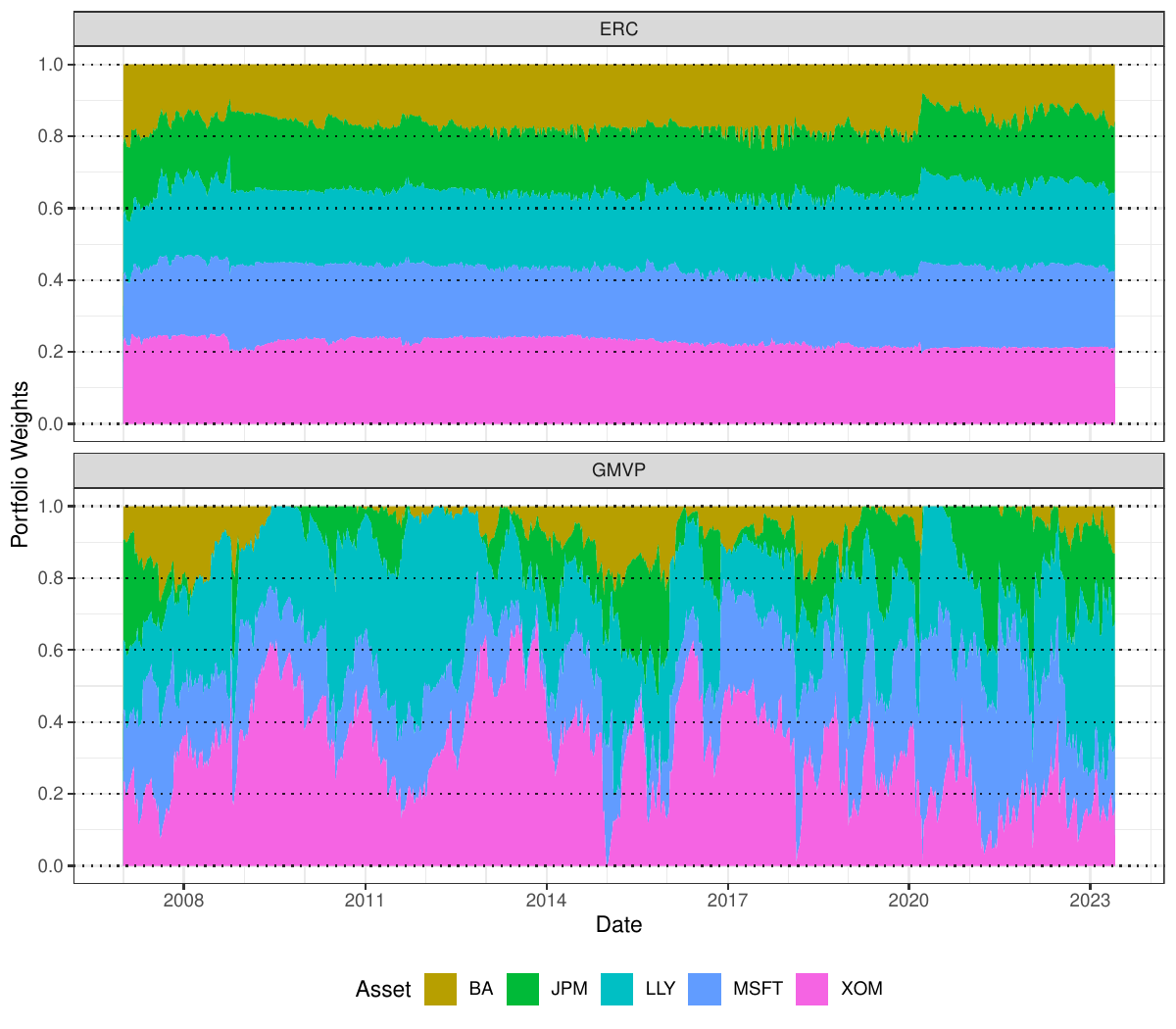}
\caption{Portfolio weights of the ERC portfolio (upper panel) and GMVP (lower panel) during the out-of-sample period $\mathfrak{T}$.}
\label{fig:ERCweights}
\end{figure}

Figure \ref{fig:ERCweights} shows the evolution of the GMVP and ERC portfolio weights over the out-of-sample window $\mathfrak{T}$.
Overall, the ERC portfolio contains relatively little shares in the risky stocks of Boeing and JPMorgan Chase, and more shares of the remaining three assets.
We can clearly see the effects of the financial crisis of 2008 and the COVID crisis in 2020, especially on the particularly exposed firms such as Boeing, whose share is drastically reduced in the beginning of 2020.
The ERC weights are however relatively stable over time, especially in comparison with the GMVP that reacts very strongly to a changing covariance structure within the assets, and shows a severe concentration on particular firms such that---at certain times---over 60\% of the money is invested in a single asset (Exxon Mobil).
Besides generating substantial re-balancing costs, this leads to a risk concentration in very few assets, thus defeating possible diversification benefits.
In contrast, the ERC portfolio weights are strikingly more stable (as opposed to the GMVP), while still being able to react to varying risks (as opposed to the EW portfolio).
Hence, the ERC portfolio can be interpreted as a balanced middle ground between the GMVP and the EW portfolio.

\begin{table}[tb]
\centering
\begin{tabular}{l c rrrrrr}
	\toprule
	&& \multicolumn{6}{c}{Empirical Portfolio Return Measures} \\
	\cmidrule{3-8}   
	Portfolio &&  Avg return   & Std &    VaR   &   ES  & Sharpe Ratio  &  RORAC \\
	\midrule
	ERC && 0.155 &  2.91 & 5.71  & 9.16  & 0.0532 & 0.0169 \\
	EW && 0.158   & 3.06 & 5.86 & 9.42 & 0.0517 & 0.0168 \\
	GMVP && 0.115  & 2.72 & 5.35 & 8.55 & 0.0424 & 0.0135 \\
	\bottomrule
\end{tabular}	
\caption{Performance results of the ERC, EW, and GMVP portfolios. The performance measures are computed as the respective functionals of the empirical distributions over the out-of-sample period $\mathfrak{T}$.}
\label{tab:ERCPortfolioPerformance}
\end{table}

Table~\ref{tab:ERCPortfolioPerformance} summarizes the overall performance of the GMVP, the EW and the ERC portfolios in terms of classical metrics that are computed as the respective functionals of the empirical distributions over the out-of-sample period $\mathfrak{T}$.
While the ERC and the EW portfolios achieve a comparable average return, the ERC portfolio is able to reduce its overall riskiness measured in terms of either standard deviation, the VaR or ES in comparison to the EW portfolio.
The further reduction in risk achieved by the GMVP comes at the cost of a substantially smaller average return.
Hence, the ERC portfolio achieves the highest Sharpe ratio and RORAC metric.
These results reconfirm the ERC portfolio as a robust middle ground between the GMVP and the EW portfolio, which combines the properties of achieving high returns while reducing (and in particular balancing) risks.
In this application, our regressions under adverse conditions prove to be a very valuable tool in assessing risk contributions and in constructing ERC portfolios.

\section{Additional Empirical Results}
\label{sec:AddEmpResults}

Here, we present additional details and results for the two applications in Section~\ref{sec:EmpiricalApplication} of the main article.

In particular, Table~\ref{tab:PredictiveRegression_AddBanks} reports results of our systemic risk regression analogous  to Table~\ref{tab:PredictiveRegression} in the main article, but for five additional systemically relevant US banks according to \cite{FSB22}.
These results qualitatively match the ones in the main article:
Most importantly, the apparent significance (in the ad hoc MES regression) of the boom and bust indicators cannot be verified by our MES regression model.

\begin{table}[tb]
\centering
\footnotesize
\resizebox{\columnwidth}{!}{
	\begin{tabular}{c c l c rrr c rrr c rrr}
		\toprule
		&&&& \multicolumn{3}{c}{VaR Model} &&  \multicolumn{3}{c}{MES Model} &&  \multicolumn{3}{c}{Ad Hoc MES  Regression} \\
		\cmidrule{5-7}   	\cmidrule{9-11}   	\cmidrule{13-15}  
		Bank & & Covariate && Est. & SE & $p$-val  &&  Est. & SE & $p$-val &&  Est. & SE & $p$-val   \\
		\midrule
		\addlinespace		
		\multirow{6}{*}{BK} &  & Intercept &  & $-$1.734 & 0.185 & 0.000 &  & $-$2.025 & 0.748 & 0.007 &  & $-$0.388 & 0.064 & 0.000 \\ 
		&  & ChangeSpread &  & 0.591 & 0.107 & 0.000 &  & 0.455 & 0.369 & 0.217 &  & 1.226 & 0.037 & 0.000 \\ 
		&  & TEDSpread &  & 1.751 & 0.182 & 0.000 &  & 3.532 & 1.269 & 0.005 &  & $-$1.013 & 0.051 & 0.000 \\ 
		&  & VIX &  & 0.092 & 0.015 & 0.000 &  & 0.135 & 0.040 & 0.001 &  & 0.057 & 0.004 & 0.000 \\ 
		&  & ST\_boom &  & $-$0.194 & 0.125 & 0.120 &  & $-$0.411 & 0.409 & 0.315 &  & 0.402 & 0.054 & 0.000 \\ 
		&  & ST\_bust &  & $-$0.571 & 0.169 & 0.001 &  & $-$1.520 & 0.587 & 0.010 &  & 0.381 & 0.074 & 0.000 \\ 
		\midrule
		\addlinespace		 
		\multirow{6}{*}{GS} &  & Intercept &  & $-$1.711 & 0.212 & 0.000 &  & $-$1.242 & 1.220 & 0.309 &  & 0.051 & 0.080 & 0.526 \\ 
		&  & ChangeSpread &  & 0.433 & 0.144 & 0.003 &  & 0.326 & 0.544 & 0.549 &  & 1.138 & 0.046 & 0.000 \\ 
		&  & TEDSpread &  & 1.796 & 0.403 & 0.000 &  & 2.114 & 0.638 & 0.001 &  & $-$0.300 & 0.055 & 0.000 \\ 
		&  & VIX &  & 0.115 & 0.018 & 0.000 &  & 0.148 & 0.041 & 0.000 &  & 0.021 & 0.004 & 0.000 \\ 
		&  & ST Boom &  & $-$0.411 & 0.333 & 0.218 &  & $-$0.048 & 1.179 & 0.967 &  & 0.593 & 0.099 & 0.000 \\ 
		&  & ST Bust &  & $-$0.772 & 0.174 & 0.000 &  & $-$1.133 & 0.794 & 0.154 &  & 0.336 & 0.079 & 0.000 \\ 
		\midrule
		\addlinespace		 
		\multirow{6}{*}{MS} &  & Intercept &  & $-$1.734 & 0.185 & 0.000 &  & $-$2.486 & 0.893 & 0.005 &  & $-$1.376 & 0.083 & 0.000 \\ 
		&  & ChangeSpread &  & 0.591 & 0.107 & 0.000 &  & 0.838 & 0.468 & 0.073 &  & 1.839 & 0.048 & 0.000 \\ 
		&  & TEDSpread &  & 1.751 & 0.182 & 0.000 &  & 2.657 & 1.358 & 0.050 &  & $-$1.115 & 0.066 & 0.000 \\ 
		&  & VIX &  & 0.092 & 0.015 & 0.000 &  & 0.170 & 0.050 & 0.001 &  & 0.075 & 0.005 & 0.000 \\ 
		&  & ST Boom &  & $-$0.194 & 0.125 & 0.120 &  & -0.629 & 0.566 & 0.267 &  & 0.614 & 0.070 & 0.000 \\ 
		&  & ST Bust &  & $-$0.571 & 0.169 & 0.001 &  & 0.053 & 0.876 & 0.951 &  & 0.343 & 0.096 & 0.000 \\ 
		\midrule
		\addlinespace		 
		\multirow{6}{*}{STT} &  & Intercept &  & $-$1.734 & 0.185 & 0.000 &  & $-$5.643 & 3.512 & 0.108 &  & $-$0.647 & 0.125 & 0.000 \\ 
		&  & ChangeSpread &  & 0.591 & 0.107 & 0.000 &  & 1.616 & 1.414 & 0.253 &  & 1.219 & 0.073 & 0.000 \\ 
		&  & TEDSpread &  & 1.751 & 0.182 & 0.000 &  & 2.840 & 2.254 & 0.208 &  & $-$1.951 & 0.100 & 0.000 \\ 
		&  & VIX &  & 0.092 & 0.015 & 0.000 &  & 0.199 & 0.046 & 0.000 &  & 0.109 & 0.007 & 0.000 \\ 
		&  & ST Boom &  & $-$0.194 & 0.125 & 0.120 &  & 0.187 & 0.935 & 0.842 &  & $-$0.360 & 0.105 & 0.001 \\ 
		&  & ST Bust &  & $-$0.571 & 0.169 & 0.001 &  & $-$1.312 & 1.143 & 0.251 &  & 0.657 & 0.145 & 0.000 \\ 
		\midrule
		\addlinespace		 
		\multirow{6}{*}{WFC} &  & Intercept &  & $-$1.734 & 0.185 & 0.000 &  & $-$4.271 & 1.053 & 0.000 &  & $-$0.295 & 0.120 & 0.014 \\ 
		&  & ChangeSpread &  & 0.591 & 0.107 & 0.000 &  & 1.706 & 0.460 & 0.000 &  & 1.105 & 0.069 & 0.000 \\ 
		&  & TEDSpread &  & 1.751 & 0.182 & 0.000 &  & 2.714 & 1.096 & 0.013 &  & $-$0.959 & 0.095 & 0.000 \\ 
		&  & VIX &  & 0.092 & 0.015 & 0.000 &  & 0.106 & 0.039 & 0.007 &  & 0.062 & 0.007 & 0.000 \\ 
		&  & ST Boom &  & $-$0.194 & 0.125 & 0.120 &  & 0.157 & 0.461 & 0.734 &  & $-$0.063 & 0.100 & 0.527 \\ 
		&  & ST Bust &  & $-$0.571 & 0.169 & 0.001 &  & $-$0.799 & 0.704 & 0.256 &  & 0.286 & 0.138 & 0.039 \\ 			
		\bottomrule
	\end{tabular}
}
\caption{VaR and MES regression results for $\beta = 0.95$ for $X_t$ being the log-losses of the S\&P~500 Financials and $Y_t$ being the log-losses of The Bank of New York Mellon Corporation (BK), Goldman Sachs (GS), Morgan Stanley (MS), State Street Corporation (STT) and Wells Fargo (WFC) in the respective panels. 
	Notice that the underlying data for GS has a reduced sample span starting on May 5th, 1999 when GS was first listed on the NYSE.
	The vertical panel entitled ``Ad Hoc MES  Regression'' reports the results of the method described around  \eqref{eq:Y_MES_transform}. The respective parameter estimates are given in the columns ``Est.'', their standard errors in the columns ``SE'', and associated $t$-test $p$-values in the columns ``$p$-val''.}
\label{tab:PredictiveRegression_AddBanks}
\end{table}

\begin{figure}[tb]
\centering
\includegraphics[width=\linewidth]{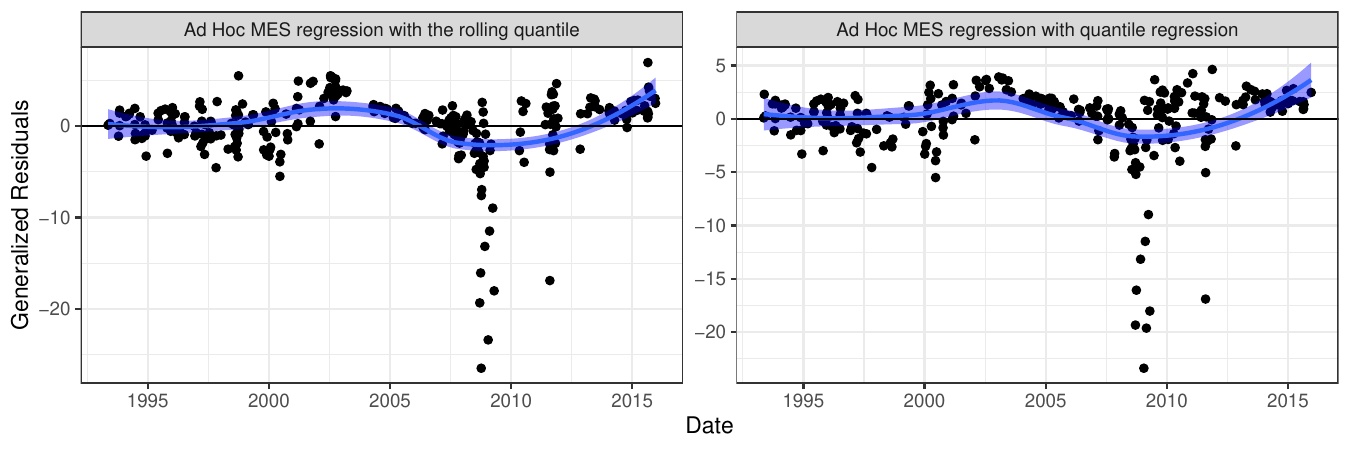}
\includegraphics[width=\linewidth]{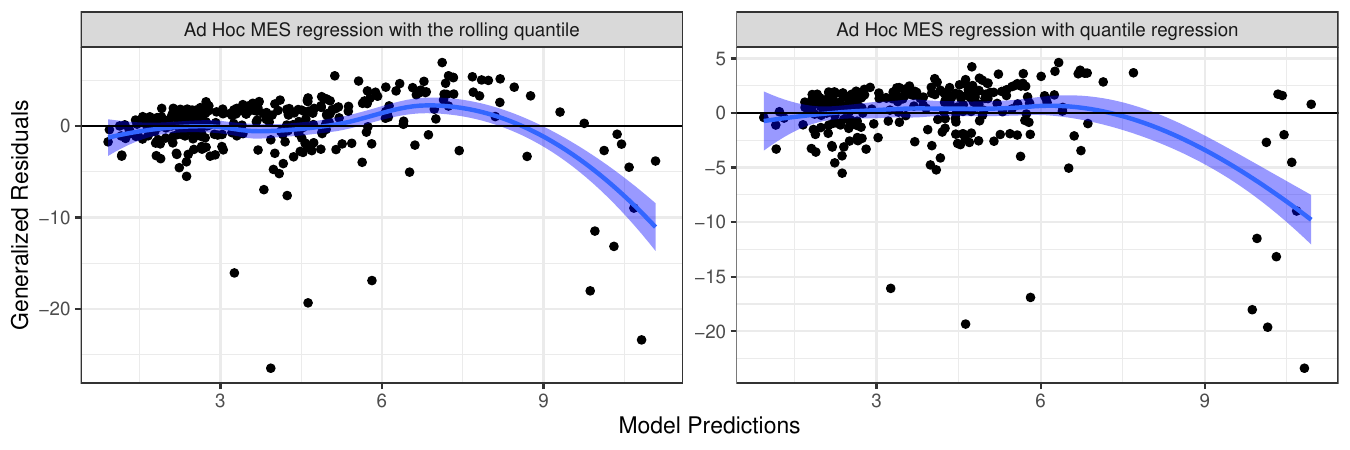}
\caption{In-sample model diagnostics of the MES (and VaR) predictions of the ``ad hoc MES regression'' used in \citet{BRS20, BDP20} for $Y_t$ being negative returns of the Bank of America (BAC), $X_t$  negative returns of the S\&P\,500 Financials and $\beta = 0.95$. 
	The generalized MES residuals are plotted and non-parametrically regressed (in blue) against time (upper row) and the model predictions (lower row).
	The plots on the left-hand side use the rolling  quantile $\widehat{Q}_\beta(X_{(t-S):t})$ for $v$ in the  generalized model residuals $V\big( (v,m)', (x,y)' \big) = \big(\1_{\{x\leq v\}}-\b, \, \1_{\{x>v\}}(m-y) \big)'$, whereas the plots on the right-hand side use a standard quantile regression of $X_t$ on the employed covariates.}
\label{fig:ModelDiagnostics_BRSDetails}
\end{figure}

Figure~\ref{fig:ModelDiagnostics_BRSDetails} provides additional model diagnostics of the ad hoc MES  regression.
While both components of the generalized model residuals $V\big( (v,m)', (x,y)' \big) = \big(\1_{\{x\leq v\}}-\b, \, \1_{\{x>v\}}(m-y) \big)'$ (that coincide with the identification functions in \citet{FH24}) necessarily require model predictions for the VaR, the ad hoc MES regression is unfortunately ambiguous about its VaR model.
The left-hand side of Figure~\ref{fig:ModelDiagnostics_BRSDetails} shows the generalized MES residuals of the  ad hoc MES  regression as in Figure~\ref{fig:ModelDiagnostics_MES_BRS_regressions} of the main article, i.e., based on the rolling  quantile $\widehat{Q}_\beta(X_{(t-S):t})$ for $v$ (which is however not conditional on the covariates).
The right-hand side of Figure~\ref{fig:ModelDiagnostics_BRSDetails} instead uses a quantile regression of $X_t$ on the employed covariates \citep{KB78}, which coincides with the first stage quantile estimation of our MES regression.
While the individual generalized residuals (shown as black points) of course differ between the two panels, the non-parametrically estimated conditional expectation functions are very similar for both variants in Figure~\ref{fig:ModelDiagnostics_BRSDetails}, hence underpinning the finding of a substantial model misspecification of the ad hoc MES  regression.

\begin{figure}[tb]
\centering
\includegraphics[width=\linewidth]{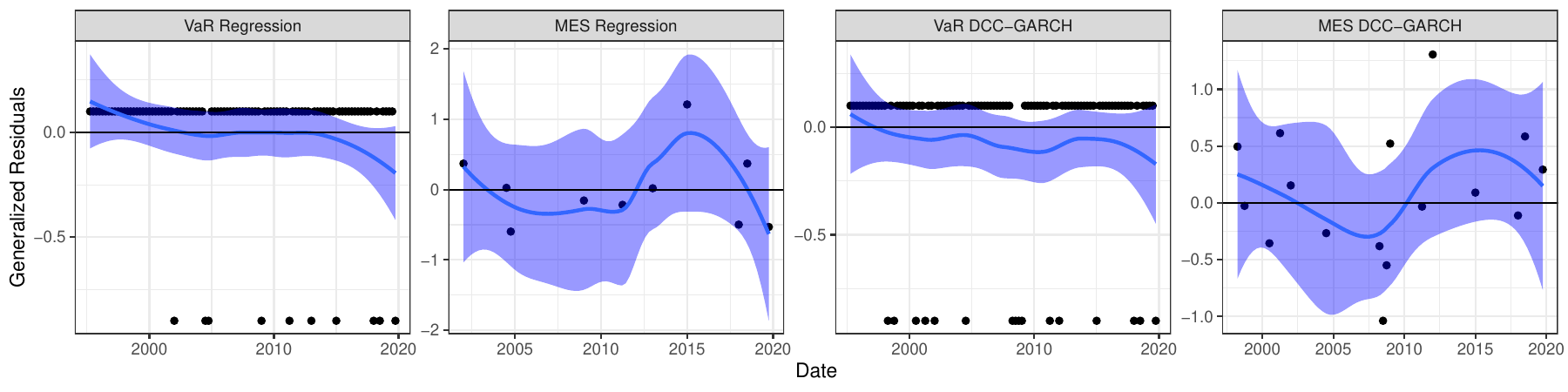}
\includegraphics[width=\linewidth]{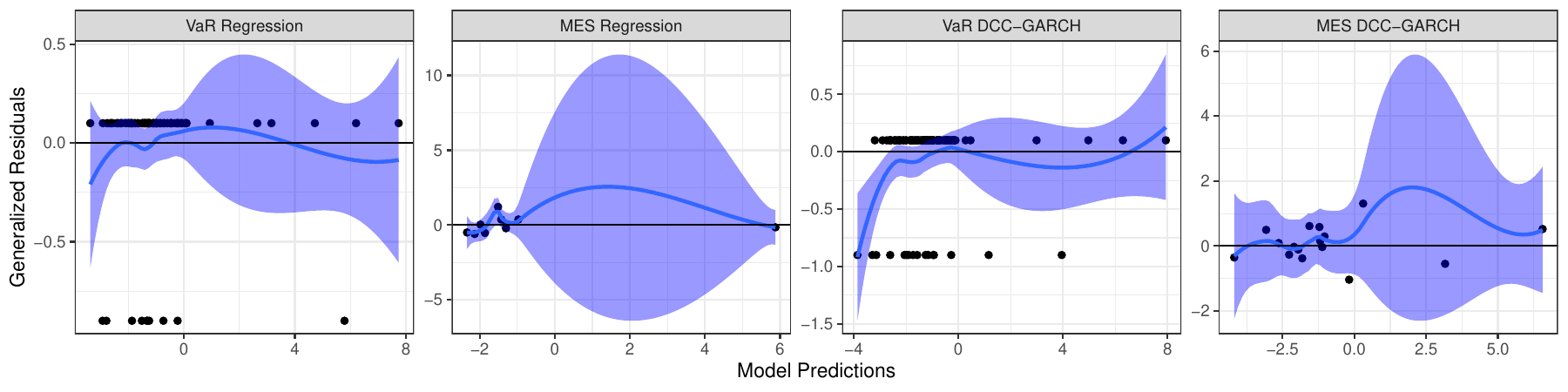}

\vspace{0.8cm}

\includegraphics[width=\linewidth]{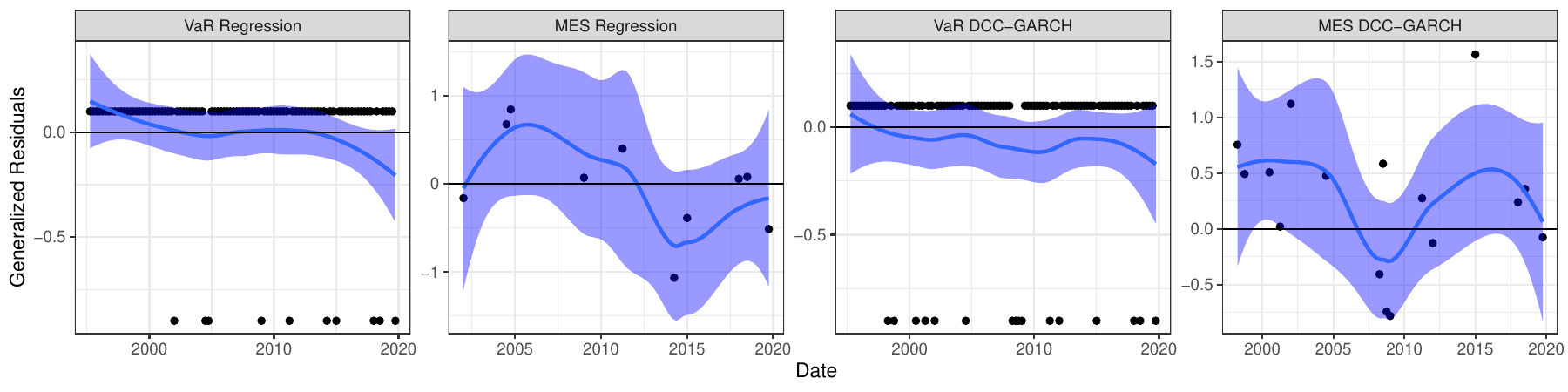}
\includegraphics[width=\linewidth]{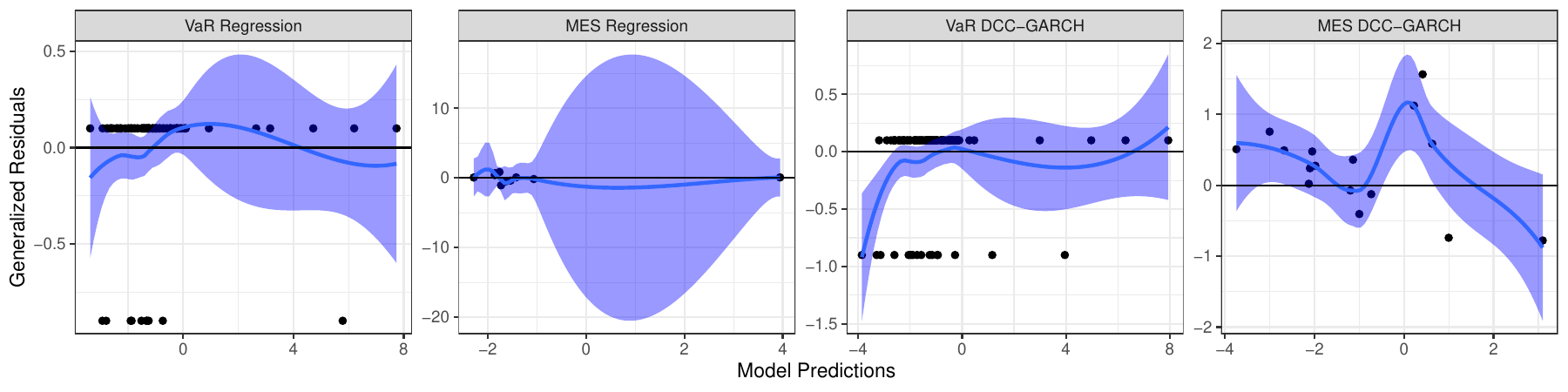}
\caption{In-sample model diagnostics of the VaR and MES predictions of our regression and a DCC--GARCH model for $Y_t$ being negative GDP growth of the United Kingdom (upper two rows) and France (lower two rows), $X_t$ the joint economic region and $\beta = 0.9$. 
	The generalized residuals are plotted and non-parametrically regressed (in blue) against time (upper row) and the model predictions (lower row).}
\label{fig:ModelDiagnostics_GDP_MES_DCC_FRA_GBR}
\end{figure}

Figure~\ref{fig:ModelDiagnostics_GDP_MES_DCC_FRA_GBR} visualizes VaR and MES model diagnostics in the vulnerable growth application for our regression and a DCC-GARCH model for the United Kingdom (upper two rows) and France (lower two rows), analogous to Figure~\ref{fig:ModelDiagnostics_GDP_MES_DCC} in the main article.
Similar to the model diagnostics for Germany shown in the main article, there is no meaningful difference in model fit between our MES regression and the DCC-GARCH model.

We finally note that by definition, the generalized MES residuals follow a mixture distribution that consists of approximately $\beta \times 100 \%$ zeros together with the remaining non-zero values $m-y$. 
In all figures in the main article as well as the supplementary material involving regression diagnostics for the generalized MES residuals, we model these non-zero values of the generalized residuals non-parametrically (with automated parameter choices in the \texttt{loess} function of the statistical software \texttt{R}).


\FloatBarrier
\singlespacing
\small 
\setstretch{1.0}

\bibliographystyle{jaestyle2}
\bibliography{bibCoQR}

\end{document}